%% file: main.tex
\documentclass[aps,pra,twocolumn,reprint,groupedaddress,longbibliography,balancelastpage]{revtex4-2}
\usepackage{graphicx}
\usepackage{amsmath}
\usepackage{amssymb}
\usepackage{amsthm}
\usepackage{booktabs}
\usepackage{totcount}
\usepackage{xcolor}
\usepackage{tikz}
\usepackage{quantikz}
\definecolor{blueviolet}{rgb}{0.2, 0.2, 0.6}
\definecolor{webgreen}{rgb}{0,.5,0}
\definecolor{webbrown}{rgb}{.6,0,0}
\usepackage[pdftex,
	bookmarks=false,
	colorlinks=true,
	urlcolor=webbrown,
	linkcolor=blueviolet,
	citecolor=webgreen,
	pdfstartpage=1,
	pdfstartview={FitH},
	bookmarksopen=false
	]{hyperref}

\usepackage{eczoo}

\theoremstyle{plain}
\newtheorem{theorem}{Theorem}[section]
\newtheorem{lemma}[theorem]{Lemma}
\newtheorem{corollary}[theorem]{Corollary}
\newtheorem{proposition}[theorem]{Proposition}
\theoremstyle{remark}

\input{macros}

\makeatletter
\renewcommand{\paragraph}{%
  \@startsection{paragraph}{4}{\z@}%
    {3.25ex \@plus 1ex \@minus .2ex}%
    {-1em}%
    {\normalfont\normalsize\bfseries}%
}
\makeatother

\begin{document}

\title{Beyond transversality: structure of Clifford circuits for CSS codes}

\author{Victor V. Albert}
\affiliation{Joint Center for Quantum Information and Computer Science, NIST/University of Maryland, College Park, Maryland 20742, USA}

\date{\today}

\begin{abstract}
We characterize four groups of Clifford circuits for Calderbank--Shor--Steane (CSS) codes that are relevant to fault-tolerant logical operations.
First, we show that every code-preserving Clifford circuit is a product of Z-diagonal circuits, composed of S and CZ gates, and their X-basis analogues.
Second, we define the two-fold transversal group, generated by depth-one two-local code-preserving circuits, and show that each of its elements can be expressed as a product of layers consisting of either Z-diagonal, X-diagonal, or CNOT gates.
As a corollary, every transversal gate is a product of three transversal diagonal circuits; for connected non-self-dual codes, two such circuits suffice.

We further show that every code-preserving automorphism circuit, consisting of single-qubit Clifford gates and permutations, has a normal form comprising a Hadamard layer, a permutation, and two diagonal circuits.
We also define a two-fold automorphism group, in which a depth-one two-local circuit may be code-preserving up to a permutation, and show that its logical image can be larger than that of the two-fold transversal group.

For \Nallcodes\ CSS codes, we provide explicit generators and determine the logical image of the two-fold transversal group.
We find \Nfullcodes\ codes whose full logical Clifford group is generated by two-fold-transversal circuits, including codes of distances \(3\), \(4\), \(5\), \(6\), \(8\), and \(12\), with respective rates \(2/5\), \(3/4\), \(1/9\), \(1/5\), \(2/5\), and \(3/56\).
We construct three families of CSS codes from bipartite grids, cut-complements, and quadrics, many of which realize the full logical Clifford group in this way.
More generally, the induced logical group can be large even when it is not full logical Clifford group: it has order at least 460\,800 for the gross code and roughly \(10^{26}\) for a clustered-cyclic code.

\end{abstract}

\maketitle

\input{fig_fourclasses}

\input{toc}

\input{sec_intro}

\input{sec_classes}

\input{sec_general}

\input{sec_transversal}

\input{sec_depthone}

\input{sec_depth1twolocal}

\input{sec_ldpcdepthone}

\input{sec_permutations}

\input{sec_extension}

\input{sec_conclusion}

\begin{acknowledgments}
	The author acknowledges inspiring discussions with
	K.\@ Agarwal, A.\@ Barg, N.\@ Berthusen, D.\@ Elimelech, M.J.\@ Gullans, D.\@ Hangleiter, J.M.\@ Koh, J.C.\@ Magdalena de la Fuente, S.\@ Majidy, M.\@ McEwen, and K.\@ Sahay.
	This manuscript was prepared with the assistance of ChatGPT, developed by OpenAI, and Claude, developed by Anthropic. These tools were used to calculate code properties, help with proofs, improve clarity, and generate figures, in accordance with the author's instructions. All content, claims, and conclusions have been reviewed and verified by the author to ensure accuracy and originality.
Certain products, commercial and otherwise, are mentioned in this publication. These mentions are for informational purposes only, and do not imply recommendation or endorsement by NIST.
\end{acknowledgments}

\appendix
\renewcommand{\thesubsection}{\thesection.\arabic{subsection}}
\makeatletter
\renewcommand{\p@subsection}{}
\makeatother

\input{app_codes}
\input{app_setup}
\input{app_general}
\input{app_depthone}
\input{app_transversal}
\input{app_permutations}
\input{app_twolocalaut}

\bibliography{refs}

\end{document}

%% file: macros.tex
\newcommand{\Ftwo}{\mathbb{F}_{2}}                
\newcommand{\tp}{{\mathsf{T}}}                    
\newcommand{\itp}{{-\mathsf{T}}}                  
\newcommand{\Id}{I}                               
\newcommand{\Jm}{J}                               
\newcommand{\ones}{\boldsymbol{1}}                    
\newcommand{\ind}[1]{\boldsymbol{1}_{#1}}             
\newcommand{\pw}{\odot}                           
\newcommand{\suppo}{\operatorname{supp}}          
\newcommand{\diago}{\operatorname{diag}}          
\newcommand{\spano}{\operatorname{span}}
\newcommand{\Homo}{\operatorname{Hom}}
\newcommand{\ranko}{\operatorname{rank}}

\newcommand{\Vsp}{\Ftwo^{2n}}                     
\newcommand{\Xh}{\mathsf{X}}                      
\newcommand{\Zh}{\mathsf{Z}}                      
\newcommand{\sfm}[2]{\langle #1,#2\rangle}        
\newcommand{\Sp}[1]{\mathrm{Sp}(#1)}              
\newcommand{\GLg}[1]{\mathrm{GL}(#1)}             
\newcommand{\Symm}[1]{\mathrm{Sym}_{#1}}           
\newcommand{\Perm}[1]{S_{#1}}                     
\newcommand{\Stb}{\mathrm{Stab}}                  
\newcommand{\Endl}{\mathrm{End}_{\lab}}           

\newcommand{\CX}{C_{X}}                           
\newcommand{\CZ}{C_{Z}}                           
\newcommand{\qcode}[1]{[\![#1]\!]}          
\newcommand{\qcodeb}[1]{\bigl[\!\bigl[#1\bigr]\!\bigr]}  
\newcommand{\lab}{\mathcal{C}}                    
\newcommand{\dst}{\mathcal{D}}                     
\newcommand{\logsp}{\mathcal{Q}}                  
\newcommand{\stabgp}{\mathcal{S}}                 
\newcommand{\Pau}[2]{P(#1|#2)}                    
\newcommand{\IX}{\mathcal{I}_{X}}                 
\newcommand{\IZ}{\mathcal{I}_{Z}}                 
\newcommand{\IL}{\mathcal{I}_{L}}                 

\newcommand{\Ush}[1]{U_{Z}(#1)}                    
\let\Lsh\relax                                    
\newcommand{\Lsh}[1]{U_{X}(#1)}                    
\newcommand{\Uupg}{U^{Z}}                         
\newcommand{\Udng}{U^{X}}                         
\newcommand{\Hd}[1]{H(#1)}                        
\newcommand{\sw}{\sigma}                          
\newcommand{\PX}{\Pi}                             
\newcommand{\pperm}[1]{P_{#1}}                    

\newcommand{\Nfull}{\boldsymbol{N}}                  
\newcommand{\Eup}{\boldsymbol{S}^{Z}}                
\newcommand{\Edn}{\boldsymbol{S}^{X}}                
\newcommand{\Ezx}{\boldsymbol{S}^{Z,X}}              
\newcommand{\Lev}{\boldsymbol{L}}                    
\newcommand{\Mt}{M}                               
\newcommand{\DM}{\Sp{2n}_{\Mt}}                   
\newcommand{\NM}{\boldsymbol{N}_{\Mt}}               
\newcommand{\EupM}{\boldsymbol{S}^{Z}_{\Mt}}           
\newcommand{\EdnM}{\boldsymbol{S}^{X}_{\Mt}}           
\newcommand{\LevM}{\Lev_{\Mt}}                    
\newcommand{\Ntr}{\boldsymbol{N}_{\emptyset}}        
\newcommand{\Ndep}{\boldsymbol{N}_{2\textnormal{fold}}}       
\newcommand{\Eupo}{\boldsymbol{S}^{Z}_{\emptyset}}     
\newcommand{\Edno}{\boldsymbol{S}^{X}_{\emptyset}}     
\newcommand{\Pup}{\mathcal{A}^{Z}}                
\newcommand{\Pdn}{\mathcal{A}^{X}}                
\newcommand{\Gaut}{\mathrm{Aut}(\lab)}                  
\newcommand{\Gaux}{\mathrm{Aut}_{0}(\lab)}              
\newcommand{\PDgp}{\boldsymbol{W}}                   
\newcommand{\rhom}{\rho}                          

\newcommand{\Urad}{\boldsymbol{U}}                   
\newcommand{\gau}{\Gamma}                         
\newcommand{\lgc}{\Lambda}                        
\newcommand{\radF}{\Phi}                          
\newcommand{\radT}{\Xi}                           
\newcommand{\lact}{\lambda}                       

\newcommand{\shp}[1]{\Theta^{Z}(#1)}              
\newcommand{\shm}[1]{\Theta^{X}(#1)}              
\newcommand{\Spz}{S^{Z}}                          
\newcommand{\Spx}{S^{X}}                          
\newcommand{\Spzx}{S^{Z,X}}                       
\newcommand{\Gterm}{G^{\star}}                    
\newcommand{\Lterm}{L^{\star}}                    
\newcommand{\Word}{\Omega}                        
\newcommand{\cell}{b}                             
\newcommand{\eps}{\varepsilon}                    

\newcounter{fullcode}
\regtotcounter{fullcode}
\newcounter{allcodes}
\regtotcounter{allcodes}
\newcommand{\zoolink}[2]{%
  \href{https://errorcorrectionzoo.org/c/#1}%
       {\textcolor{black}{\texttt{#2}}\,{\EczooIconSetScale{1.15}\eczooUseIcon}}}
\newcommand{\fcode}[3][]{\stepcounter{fullcode}\stepcounter{allcodes}%
  $#2$\,\if\relax\detokenize{#1}\relax\texttt{#3}\else\zoolink{#1}{#3}\fi}
\newcommand{\Nfullcodes}{\total{fullcode}}
\newcounter{ldpccode}
\regtotcounter{ldpccode}
\newcommand{\lcode}[3][]{\stepcounter{ldpccode}\stepcounter{allcodes}%
  $#2$\,\if\relax\detokenize{#1}\relax\texttt{#3}\else\zoolink{#1}{#3}\fi}
\newcommand{\Nldpccodes}{\total{ldpccode}}
\newcommand{\Nallcodes}{\total{allcodes}}

%% file: fig_fourclasses.tex
\definecolor{clSp}{HTML}{0072B2}   
\definecolor{clSm}{HTML}{D55E00}   
\definecolor{clLv}{HTML}{009E73}   
\definecolor{clWy}{HTML}{7B52C7}   

\newcommand{\gtband}[5]{%
  \noindent
  \begin{minipage}[t]{0.052\textwidth}\raggedright\scriptsize\textsc{#1}\end{minipage}%
  \hfill\begin{minipage}[t]{0.220\textwidth}\centering #2\end{minipage}%
  \hfill\begin{minipage}[t]{0.220\textwidth}\centering #3\end{minipage}%
  \hfill\begin{minipage}[t]{0.220\textwidth}\centering #4\end{minipage}%
  \hfill\begin{minipage}[t]{0.220\textwidth}\centering #5\end{minipage}%
  \par}
\newcommand{\gtbandc}[5]{%
  \noindent
  \begin{minipage}[c]{0.052\textwidth}\raggedright\scriptsize\textsc{#1}\end{minipage}%
  \hfill\begin{minipage}[c]{0.220\textwidth}\centering #2\end{minipage}%
  \hfill\begin{minipage}[c]{0.220\textwidth}\centering #3\end{minipage}%
  \hfill\begin{minipage}[c]{0.220\textwidth}\centering #4\end{minipage}%
  \hfill\begin{minipage}[c]{0.220\textwidth}\centering #5\end{minipage}%
  \par}
\newcommand{\gtrule}{\vspace{1pt}{\color{black!55}\rule{\linewidth}{0.4pt}}\vspace{1pt}\par}

\begin{figure*}[!t]
\centering

\setlength{\fboxsep}{2.5pt}%

\vspace{5pt}

\gtband{}{%
  \color{clSp}\textbf{(a)}\ $Z$-diagonal ($\Eup$)}{%
  \color{clSm}\textbf{(b)}\ $X$-diagonal ($\Edn$)}{%
  \color{clLv}\textbf{(c)}\ CNOT ($\Lev$)}{%
  \color{clWy}\textbf{(d)}\ partial duality ($\PDgp$)}
\gtrule
\gtband{$\Sp{2n}$}{%
  $\left(\begin{smallmatrix}\Id&{ S}\\[1.5pt]0&\Id\end{smallmatrix}\right)$,\ $S\in\Symm{n}$\\[1pt]{\scriptsize\color{clSp}block $B$}}{%
  $\left(\begin{smallmatrix}\Id&0\\[1.5pt]{ T}&\Id\end{smallmatrix}\right)$,\ $T\in\Symm{n}$\\[1pt]{\scriptsize\color{clSm}block $C$}}{%
  $\left(\begin{smallmatrix}{K}&0\\[1.5pt]0&{K^{\itp}}\end{smallmatrix}\right)$,~{\scriptsize$K{=}\displaystyle\prod_{m}(\Id{+}E_{c_{m}t_{m}})$}\\[1pt]{\scriptsize\color{clLv}blocks $A,D$}}{%
  $\Hd{a}\,\pperm{\pi}$,\ $a\in\Ftwo^{n}$, $\pi\in\Perm{n}$\\[1pt]{\scriptsize\color{clWy}$X\!\leftrightarrow\!Z$ duality}}
\gtrule
\gtband{condition}{%
  {$\CX S\subseteq\CZ$}}{%
  {$\CZ T\subseteq\CX$}}{%
  {$\CX K=\CX$, $\CZ K^{\itp}=\CZ$}}{%
  {$\CX \overset{\pi}{\longleftrightarrow} \CZ$ ($a=\ones$ case)}}
\gtrule
\gtband{gates}{%
  \colorbox{clSp!12}{\makebox[0.86\linewidth]{$\sqrt{Z}$\ and\ $\mathrm{CZ}$}}}{%
  \colorbox{clSm!12}{\makebox[0.86\linewidth]{$\sqrt{X}$,\ $\mathrm{CZ}^{H}$ ($XX$-type)}}}{%
  \colorbox{clLv!12}{\makebox[0.86\linewidth]{\phantom{$C^{H}$}$\mathrm{CNOT}$\phantom{$C^{H}$}}}}{%
  \colorbox{clWy!12}{\makebox[0.86\linewidth]{\phantom{$C^{H}$}Hadamard\ and\ swap\phantom{$C^{H}$}}}}
\gtrule
\gtband{circuit}{%
  {\small$\displaystyle\prod_{i<j\,:\,S_{ij}=1}\mathrm{CZ}_{ij}\;\prod_{i\,:\,S_{ii}=1}\sqrt{Z}_{i}$}}{%
  {\small$\displaystyle\prod_{i<j\,:\,T_{ij}=1}\mathrm{CZ}^{H}_{ij}\;\prod_{i\,:\,T_{ii}=1}\sqrt{X}_{i}$}}{%
  {\small$\displaystyle\prod_{m}\mathrm{CNOT}_{c_{m}\to t_{m}}$}}{%
  {\small$\displaystyle\pperm{\pi}\prod_{i\,:\,a_{i}=1}H_{i}$}}
\vspace{1pt}{\color{black!75}\rule{\linewidth}{0.9pt}}\vspace{2pt}\par

\gtbandc{example}{%
  \begin{quantikz}[ampersand replacement=\&, row sep={15pt,between origins}, column sep=7pt]
    \lstick{$0$}\&\qw\&\qw\&\qw\&\qw\\
    \lstick{$1$}\&\qw\&\qw\&\qw\&\qw\\
    \lstick{$2$}\&\gate{\sqrt{Z}}\&\qw\&\qw\&\qw\\
    \lstick{$3$}\&\gate{\sqrt{Z}}\&\ctrl{1}\&\ctrl{2}\&\qw\\
    \lstick{$4$}\&\qw\&\control{}\&\qw\&\qw\\
    \lstick{$5$}\&\qw\&\qw\&\control{}\&\qw
  \end{quantikz}}{%
  \begin{quantikz}[ampersand replacement=\&, row sep={15pt,between origins}, column sep=4pt]
    \lstick{$0$}\&\gate{H}\&\qw\&\qw\&\qw\&\gate{H}\&\qw\\
    \lstick{$1$}\&\gate{H}\&\qw\&\qw\&\qw\&\gate{H}\&\qw\\
    \lstick{$2$}\&\gate{H}\&\gate{\sqrt{Z}}\&\qw\&\qw\&\gate{H}\&\qw\\
    \lstick{$3$}\&\gate{H}\&\gate{\sqrt{Z}}\&\ctrl{1}\&\ctrl{2}\&\gate{H}\&\qw\\
    \lstick{$4$}\&\gate{H}\&\qw\&\control{}\&\qw\&\gate{H}\&\qw\\
    \lstick{$5$}\&\gate{H}\&\qw\&\qw\&\control{}\&\gate{H}\&\qw
  \end{quantikz}}{%
  \begin{quantikz}[ampersand replacement=\&, row sep={15pt,between origins}, column sep=7pt]
    \lstick{$0$}\&\ctrl{1}\&\targ{}\&\qw\\
    \lstick{$2$}\&\targ{}\&\ctrl{-1}\&\qw\\
    \lstick{$1$}\&\targ{}\&\ctrl{1}\&\qw\\
    \lstick{$4$}\&\ctrl{-1}\&\targ{}\&\qw\\
    \lstick{$3$}\&\ctrl{1}\&\targ{}\&\qw\\
    \lstick{$5$}\&\targ{}\&\ctrl{-1}\&\qw
  \end{quantikz}}{%
  \begin{quantikz}[ampersand replacement=\&, row sep={15pt,between origins}, column sep=8pt]
    \lstick{$0$}\&\gate{H}\&\swap{1}\&\qw\\
    \lstick{$1$}\&\gate{H}\&\targX{}\&\qw\\
    \lstick{$2$}\&\gate{H}\&\swap{1}\&\qw\\
    \lstick{$3$}\&\gate{H}\&\targX{}\&\qw\\
    \lstick{$4$}\&\gate{H}\&\swap{1}\&\qw\\
    \lstick{$5$}\&\gate{H}\&\targX{}\&\qw
  \end{quantikz}}
\gtrule

\caption{\label{fig:four-classes}\textbf{Four Clifford gate families of a CSS
code.} 
The first five rows list the name, symplectic embedding, code-preserving
condition, physical gates, and circuit realization
[the right-hand sides of Eqs.~\eqref{eq:classes-circ-sp},
\eqref{eq:classes-circ-sm}, \eqref{eq:classes-circ-levi},
and~\eqref{eq:classes-circ-weyl}] of each family.
The control--target sequence $c_{m}\to t_{m}$ of a CNOT network is read off a
Gaussian elimination of $K$, one CNOT per row addition, giving the
factorization of $K$ shown in the embedding row.
The bottom row depicts circuits that realize the families on the doubly-even self-dual $\qcode{6,2,2}$ code
$\CX{=}\CZ{=}\spano\{111100,110011\}$, with symplectic logical
basis $L_X{=}L_Z{=}\{100101,010101\}$.
\textbf{(a)} $\sqrt Z_2\,\sqrt Z_3\,\mathrm{CZ}_{34}\,\mathrm{CZ}_{35}$ acts as the
$Z$-diagonal circuit $\bar S_1\bar S_2\,\overline{\mathrm{CZ}}_{12}$;
it is depth two since two $\mathrm{CZ}$'s meet at qubit~$3$.
\textbf{(b)} is the $H^{\otimes6}$-conjugate of (a) and acts as an $X$-diagonal circuit.
The circuit inside the Hadamard layers is the same as that of (a) because the example is self-dual, 
but can be different in the general case.
\textbf{(c)} is a $\mathrm{CNOT}$ network on the matching
$\{\{0,2\},\{1,4\},\{3,5\}\}$.
\textbf{(d)} is $H^{\otimes6}$ followed by the swap $\pperm{\tau}$,
$\tau{=}(0\,1)(2\,3)(4\,5)$, acting as a fold-transversal $H_\tau$.
The condition listed for this family in row three is only for the special case of
such gates, which contain a Hadamard on all qubits.
}
\end{figure*}

%% file: toc.tex
\onecolumngrid
\begingroup
\setlength{\parindent}{0pt}

\newlength{\tocnumwd}\settowidth{\tocnumwd}{\textbf{VIII.}}
\newlength{\tocindent}\setlength{\tocindent}{\dimexpr\tocnumwd+0.6em\relax}

\newcommand{\tocsec}[2]{%
  \makebox[\tocnumwd][r]{\textbf{\ref{#1}.}}\hspace{0.6em}%
  \textbf{\hyperref[#1]{#2}}\dotfill\textbf{\pageref{#1}}\\[0.9ex]}
\newcommand{\tocsub}[2]{\hyperref[#1]{#2}}
\newcommand{\tocsubs}[1]{%
  \hspace*{\tocindent}%
  \begin{minipage}{\dimexpr\linewidth-\tocindent\relax}\small #1\end{minipage}\\[1.9ex]}
\newcommand{\tocdot}{\,\,$\bullet$\,\,}

\rule{\linewidth}{0.8pt}\\[0.4ex]
{\large\textbf{Contents}}\\[1.0ex]

\tocsec{sec:intro}{Introduction \& summary}
\tocsubs{\tocsub{sec:intro-summary}{Summary of results}}
\tocsec{sec:classes}{Four families of code-preserving circuits}
\tocsubs{\tocsub{sec:classes-sp}{$Z$-diagonal circuits}\tocdot
         \tocsub{sec:classes-sm}{$X$-diagonal circuits}\tocdot
         \tocsub{sec:classes-levi}{CNOT circuits}\tocdot
         \tocsub{sec:classes-weyl}{Partial dualities}}
\tocsec{sec:general}{Code-preserving circuit group}
\tocsubs{\tocsub{sec:general-structure}{Structure of $\Nfull$}\tocdot
         \tocsub{sec:general-shear}{Generation by diagonal circuits}}
\tocsec{sec:transversal}{Transversal gate group}
\tocsubs{\tocsub{sec:transversal-nf}{Three-layer decomposition}\tocdot
         \tocsub{sec:transversal-algo}{The size of the transversal group}}
\tocsec{sec:depthone}{Two-fold transversal group}
\tocsubs{\tocsub{sec:depthone-fixed}{Depth-one circuits at a fixed qubit matching}\tocdot
         \tocsub{sec:depthone-union}{Definition and structure of the two-fold transversal group}}
\tocsec{sec:depth1twolocal}{Codes with full-Clifford two-fold transversal gates}
\tocsubs{\tocsub{sec:depth1twolocal-census}{Select full codes}\tocdot
         \tocsub{sec:depth1twolocal-families}{Families of full codes}}
\tocsec{sec:ldpcdepthone}{Other codes with two-fold transversal gates}
\tocsubs{\tocsub{sec:ldpcdepthone-gross}{The gross code}\tocdot
         \tocsub{sec:ldpcdepthone-addressable}{Codes with addressable gates}}
\tocsec{sec:permutations}{Including permutations}
\tocsubs{\tocsub{sec:permutations-nf}{Normal form and how to calculate it}}
\tocsec{sec:extensions}{Possible extensions}
\tocsubs{\tocsub{sec:ext-depth}{Optimizing depth}\tocdot
         \tocsub{sec:ext-homology}{Homology}\tocdot
         \tocsub{sec:ext-noncss}{Extension to non-CSS codes and other alphabets}\tocdot
         \tocsub{sec:ext-perm}{Two-fold automorphism group}}
\tocsec{sec:conclusion}{Conclusion}
\rule{\linewidth}{0.8pt}
\endgroup
\newpage
\twocolumngrid

%% file: sec_intro.tex
\section{Introduction \& summary}
\label{sec:intro}

Large-scale industrial efforts have begun to scale up quantum computers to hundreds of physical qubits and are developing full-stack software for fault-tolerant computation.
Calderbank--Shor--Steane (CSS)~\cite{calderbank1996good,steane1996error} codes are the first choice for many such efforts (Refs.~\cite{bluvstein2024logical,paetznick2024demonstration,bravyi2024high,low2026denser,cain2026shor}, to name a few) because they typically come with simple and naturally fault-tolerant syndrome extraction gadgets and logical gates.

Typical fault-tolerant logical operations for CSS codes can be implemented using circuits whose gate ingredients come from the Clifford group, and much work has been dedicated to finding such circuits for CSS codes of interest.
These circuits can be classified by their depth (number of layers) and the locality of their ingredients (one-local ``transversal'' gates, two-local entangling gates, etc.).

The ``most'' fault-tolerant is the transversal gate set, whose depth-one one-local elements are implemented by applying single-qubit gates to each physical qubit~\cite{gottesman1997stabilizer,zeng2011transversality,anderson2016classification,dasu2025classification,tansuwannont2025clifford,jain2025transversal,barg2026geometric,campsmoreno2026transversal,haruna2026homological}.
This set is also, unfortunately, the most restrictive in terms of the number of logical operations it can realize~\cite{eastin2009restrictions,zeng2011transversality}.

Increasing locality yields depth-one two-local (a.k.a.~``two-fold transversal'') circuits consisting of two-qubit entangling gates defined on a qubit matching~\cite{moussa2016transversal,quintavalle2023partitioning,breuckmann2024fold,eberhardt2024logical,li2025poincare,berthusen2025concatenated,cao2026lego,zheng2026canonical,hong2026ldpc}.
While more powerful than transversal gates, two-fold transversal circuits on a fixed matching remain limited in their set of possible logical operations~\cite{chakraborty2026nogo}.

Increasing depth yields two-local circuits whose layers act on \textit{different} qubit matchings (read: partitions into blocks of size 1 or 2), thereby spreading entanglement and allowing for richer sets of gates that can realize the full logical Clifford group~\cite{tiew2025dehn,zhu2025nonclifford,breuckmann2026cups,gulshen2025symmetric,holmes2026quantum,benhemou2026automated}.
Qubit permutation circuits, whose initial locality is determined by the qubits each permutation moves but which can always be compiled to depth-two SWAP networks~\cite{alon1994routing}, can be added to the mix for architectures where they are cheap to implement~\cite{grassl2013leveraging,sayginel2024autqec,sayginel2025faulttolerant,berthusen2025automorphism,koh2026phantom,chen2026transversal}.

Despite all this progress on specific circuits for specific code families, studies on the structure of the \textit{full} group of code-preserving Clifford circuits, and of how the specific circuits \textit{relate} to this group, have been relatively sparse \cite{dehaene2003clifford,rengaswamy2018synthesis,rengaswamy2020logical,liu2026systematic}.

\subsection{Summary of results}
\label{sec:intro-summary}

We determine a set of minimal ingredients, or \textit{generators}, sufficient to construct code-preserving Clifford circuits with various notions of locality.
This allows us to perform numerical searches that identify logical gates implementable using such circuits for various codes.

\paragraph*{Generators of circuit groups.}
We study four groups of circuits: the group of all code-preserving circuits, the transversal group, the depth-one two-local group on a fixed matching, and a new group we call the \textit{two-fold transversal group}, defined by stacking one or more depth-one two-local layers over arbitrary matchings [Figure~\ref{fig:layers}(c)].
This last group is relevant to fault-tolerant computation because it can spread correlations and compile arbitrary logical Clifford operations~\cite{holmes2026quantum,berthusen2025concatenated}, yet its depth-one two-local circuits can be interleaved with error correction so that errors are contained.

Our main technical result is a presentation of each group in terms of a fixed set of generator circuits.
Each generator circuit comes from one of four families defined by their ingredients: \(Z\)-basis diagonal gates ($\sqrt{Z}$ and CZ), \(X\)-basis diagonal gates, CNOT/swap networks, or ``partial dualities'' consisting of Hadamard blocks and permutations (see Fig.~\ref{fig:four-classes}).
Code-preserving circuits in the two diagonal families are determined by linear conditions~\cite{webster2023transversal,campsmoreno2026transversal,koh2026phantom,bauer2026finding}, forming a vector space that can be found by Gaussian elimination.

Using established results in representation theory and geometry of binary symplectic space~\cite{artin1957geometric,carter1972simple,taylor1992geometry,grove2002classical,malle2011linear}, we show that the two diagonal families suffice to generate the entire group of code-preserving Clifford gates.
In other words, every code-preserving Clifford circuit can be decomposed into a product of $Z$-diagonal and $X$-diagonal circuits.
And since linear conditions entirely determine both diagonal families, the entire group is thus efficiently determinable.

The same result holds for the transversal gate subgroup [Figure~\ref{fig:layers}(a)], which consequently is generated by $Z$-diagonal and $X$-diagonal circuits that are transversal (i.e., consisting of phase gates and dual-phase gates, respectively).
We provide a normal form for transversal gates in terms of such diagonal circuits and a second unique normal form that determines the group's size.
The general efficiency condition reduces to finding two binary linear codes corresponding to the two transversal diagonal sets.

Extending to two-local circuits, we show that \textit{every} code-preserving depth-one two-local layer on a fixed matching of the physical qubits [Figure~\ref{fig:layers}(b)] is generated by diagonal circuits and CNOT layers supported on that matching.
The two-fold transversal group [Figure~\ref{fig:layers}(c)] is then generated by the union of these generator families over all matchings.
This allows us to obtain lower bounds on the number of distinct two-fold transversal logical gates by sampling from the spaces of depth-one two-local diagonal circuits and CNOT matchings.

\input{fig_layers}

\paragraph*{Numerical experiments.}
We perform numerical experiments on multiple code families and obtain \Nfullcodes\ codes whose \textit{full} logical Clifford group can be realized by two-fold transversal circuits; most of these ``full'' codes admit low-weight and low-degree generators (see Table~\ref{tab:full-depth1}).
Notable entries include the $\qcode{18,4,4}$ color code on a torus~\cite{bombin2006topological,bombin2007exact}, the $\qcode{16,6,4}$ tesseract code~\cite{tansuwannont2026full}, the $\qcode{14,3,3}$ phantom/constant-excitation code~\cite{koh2026phantom,lai2025faulttolerant}, the $\qcode{12,2,4}$ carbon code~\cite{paetznick2024demonstration}, and the $\qcode{10,2,3}$ rotated toric code~\cite{albert2026handbook}.
The list also includes two two-block group-algebra codes~\cite{lin2023twoblock}, one directional toric code~\cite{gu2026nearest}, and one copy-cup code~\cite{tiew2026copycup}.

Some full codes also \emph{address} their logical gates.
A single depth-one two-local layer acts as one canonical logical generator, and as the identity on every other logical qubit.
No compilation of several layers is necessary.
The smallest complete example is the $\qcode{10,2,3}$ rotated toric code \texttt{eczoo:dd1f5a7f}.
One layer realizes each of its six canonical generators: both $\overline{\sqrt{Z}}_{i}$, both $\overline{\sqrt{X}}_{i}$, the $\overline{\mathrm{CZ}}$, and its $X$-basis mirror.
The $\qcode{15,4,3}$ code \texttt{cons:blk45k12} addresses every $X$-diagonal gate.
One layer realizes each of its ten $X$-diagonal generators: all four $\overline{\sqrt{X}}_{i}$ and all six $\overline{\mathrm{CZ}^{H}}_{ij}$ (Section~\ref{sec:depth1twolocal-census}).

We cover six infinite code families that contain many full-Clifford codes, three of which were introduced previously: the ``middle'' Reed--Muller~\cite{tansuwannont2026full}, quantum logic~\cite{holmes2026quantum}, and concatenated symplectic double~\cite{berthusen2025concatenated} codes.
The three other families are constructed from bipartite grids, cut-complements, and quadrics.

We determine a set of two-fold transversal generators for all full-Clifford codes and another set of \Nldpccodes\ codes, mostly with low-density parity-check properties (see Table~\ref{tab:ldpc-depth1}).
We lower-bound the logical image of each such code.
The bounds span $41$ decimal orders of magnitude, from $48$ to about $6\times10^{42}$ gates.
The gross code~\cite{bravyi2024high} reaches at least $460\,800$ logical gates.
A clustered-cyclic $\qcode{24,8,3}$ code~\cite{gu2026qgpu} reaches at least $1.5\times10^{26}$.

\paragraph*{Adding permutations.}
The fourth, duality-based, gate family [Fig.~\ref{fig:four-classes}(d)] becomes relevant when we add permutations to our gate groups.
We show that the code's automorphism group, formed by combinations of single-qubit Clifford gates and qubit permutations, has a normal form consisting of one partial duality and two transversal \(Z,X\)-diagonal circuits.
Once a permutation is fixed, its transversal dressing is, once again, determined by a linear condition.

Permutations can also be adjoined to depth-one two-local layers, yielding what we call the \emph{two-fold automorphism group}.
Its generators need not preserve the code individually: a depth-one two-local circuit that is not code-preserving on its own can become so once a qubit permutation compensates it.
We show that this group is strictly larger than the group generated by the automorphism and two-fold transversal groups put together, yielding more logical gates for architectures where permutations are readily available (see Sec.~\ref{sec:ext-perm}).
We leave further study of this group to future work.

\paragraph*{Outline.}
Section~\ref{sec:classes} introduces the four gate families and their
code-preserving conditions. Section~\ref{sec:general} describes the full group
and proves its generation by diagonal circuits. Sections~\ref{sec:transversal}
and~\ref{sec:depthone} treat transversal and depth-one two-local circuits,
respectively. Section~\ref{sec:depth1twolocal} identifies ``full'' codes, whose
full logical Clifford group is realized by depth-one two-local circuits, and
Section~\ref{sec:ldpcdepthone} lower-bounds the logical image of the depth-one
group for instances of various quantum low-density parity-check (QLDPC) families. Section~\ref{sec:permutations}
incorporates qubit permutations and derives the automorphism normal form.
Section~\ref{sec:extensions} lists possible extensions, and
Section~\ref{sec:conclusion} concludes.

%% file: fig_layers.tex
\begin{figure}[t]
\centering
\providecolor{clSp}{HTML}{0072B2}   
\providecolor{clSm}{HTML}{D55E00}   
\providecolor{clLv}{HTML}{009E73}   
\begin{tikzpicture}[
    line cap=round,
    wire/.style={line width=0.5pt, black!70},
    brick/.style={draw=black, line width=0.6pt, rounded corners=2.5pt,
                  fill=black!12},
    brickalt/.style={draw=black, line width=0.6pt, rounded corners=2.5pt,
                     fill=black!30},
  ]
  \def\dy{0.50}      
  \def\gw{0.42}      
  \def\gh{0.36}      
  \def\Wtot{8.35}    
  \def\ybar{-3.05}   
  \def\yZrow{-3.65}  
  \def\yXrow{-4.13}  
  \def\yCrow{-4.61}  
  \def\yref{-5.25}   
  \def\bone#1#2#3{\draw[#1] ({#2-\gw/2},{-#3*\dy-\gh/2})
                  rectangle ({#2+\gw/2},{-#3*\dy+\gh/2});}
  \def\btwo#1#2#3{\draw[#1] ({#2-\gw/2},{-(#3+1)*\dy-\gh/2})
                  rectangle ({#2+\gw/2},{-#3*\dy+\gh/2});}
  \pgfmathsetmacro\cA{0.61}
  \pgfmathsetmacro\cD{\Wtot-0.61}
  \pgfmathsetmacro\cB{\cA+(\cD-\cA)/3}
  \pgfmathsetmacro\cC{\cA+2*(\cD-\cA)/3}
  \foreach \Xc/\Wh in {\cA/0.61, \cB/0.61, \cC/0.98, \cD/0.61}{
    \foreach \i in {0,...,5}{
      \draw[wire] ({\Xc-\Wh},{-\i*\dy}) -- ({\Xc+\Wh},{-\i*\dy});
    }
  }
  \node[font=\footnotesize, anchor=base] at (\cA,0.50) {\textbf{(a)} $\Ntr$};
  \node[font=\footnotesize, anchor=base] at (\cB,0.50) {\textbf{(b)} $\NM$};
  \node[font=\footnotesize, anchor=base] at (\cC,0.50) {\textbf{(c)} $\Ndep$};
  \node[font=\footnotesize, anchor=base] at (\cD,0.50) {\textbf{(d)} $\Nfull$};
  \foreach \i in {0,...,5}{\bone{brick}{\cA}{\i}}
  \btwo{brick}{\cB}{0}
  \btwo{brick}{\cB}{2}
  \bone{brick}{\cB}{4}
  \bone{brick}{\cB}{5}
  \btwo{brick}{\cC-0.39}{0}
  \btwo{brick}{\cC-0.39}{2}
  \bone{brick}{\cC-0.39}{4}
  \bone{brick}{\cC-0.39}{5}
  \bone{brickalt}{\cC+0.39}{0}
  \btwo{brickalt}{\cC+0.39}{1}
  \btwo{brickalt}{\cC+0.39}{3}
  \bone{brickalt}{\cC+0.39}{5}
  \draw[brick] ({\cD-\gw/2},{-5*\dy-\gh/2}) rectangle ({\cD+\gw/2},{\gh/2});
  \node[font=\footnotesize\itshape, inner sep=4pt] (genby)
    at ({0.5*\Wtot},\ybar) {generated by};
  \draw[wire] (0,\ybar) -- (genby.west);
  \draw[wire] (genby.east) -- (\Wtot,\ybar);
  \foreach \Xc in {\cA,\cB,\cC,\cD}{
    \node[font=\footnotesize, text=clSp, anchor=base] at (\Xc,\yZrow)
      {$Z$-diagonal};
    \node[font=\footnotesize, text=clSm, anchor=base] at (\Xc,\yXrow)
      {$X$-diagonal};
  }
  \foreach \Xc in {\cB,\cC}{
    \node[font=\footnotesize, text=clLv, anchor=base] at (\Xc,\yCrow) {CNOT};
  }
  \node[font=\scriptsize, anchor=base] at (\cA,\yref) {Eq.~\eqref{eq:T-gen}};
  \node[font=\scriptsize, anchor=base] at (\cB,\yref) {Eq.~\eqref{eq:dep-gen}};
  \node[font=\scriptsize, anchor=base] at (\cC,\yref) {Eq.~\eqref{eq:dep-genunion}};
  \node[font=\scriptsize, anchor=base] at (\cD,\yref) {Eq.~\eqref{eq:N-shear}};
\end{tikzpicture}
\caption{\label{fig:layers}\textbf{Gate groups and their generating sets.}
\textbf{(a)} A transversal circuit, consisting of a single-qubit gate on each of the six qubits in this sketch. Code-preserving circuits of this shape form the transversal
group $\Ntr$.
\textbf{(b)} A depth-one two-local layer on a fixed matching $\Mt$ of
the qubits; each brick is a generic Clifford gate on a pair of $\Mt$ or on an
unmatched qubit.
The code-preserving layers on
$\Mt$ form the group $\NM$.
\textbf{(c)} A stack of two depth-one layers on different matchings. Stacks of code-preserving layers over
arbitrary matchings generate the \textit{two-fold transversal group} $\Ndep$.
\textbf{(d)} A depth-one six-local layer. Every circuit can be compressed into this shape, and the code-preserving
ones form the group $\Nfull$ of all code-preserving circuits.
Below the divider, each panel lists the gate families that generate its group
--- $Z$-diagonal and $X$-diagonal circuits of the panel's shape, joined in the
two-local cases (b) and (c) by the CNOT circuits --- with the equation stating
each generation result.}
\end{figure}

%% file: sec_classes.tex
\section{Four families of code-preserving circuits}
\label{sec:classes}

We find it convenient to organize circuits into four gate families, which we introduce in this section along with basic data about CSS codes.
The families are distinguished by their \emph{ingredients}, meaning the elementary gates a circuit is built from.
In the main text, we refer to these families by the circuits that they contain --- $Z,X$-diagonal circuits, CNOT networks, and Hadamard-and-permutation circuits (partial dualities) --- but use letters for their corresponding groups in a way that pays tribute to early work on symplectic-group theory by $\boldsymbol{S}$iegel, $\Lev$evi, and $\PDgp$eyl. The appendices use those group-theoretic names throughout.

Two classical binary codes $\CX,\CZ\subseteq\Ftwo^{n}$ that are orthogonal to one another, $\CX\perp\CZ$, define a CSS code on $n$ qubits. 
$Z$-type Pauli operators drawn from $\CZ$ and $X$-type operators drawn from $\CX$ generate its stabilizer group. The two codes are orthogonal, which implies that all stabilizers commute.

In the symplectic formalism (i.e., modulo signs), a
Pauli operator is represented by a symplectic ``label''. The label is a pair $(x|z)$ of binary strings that record where the
operator acts by $X$ and by $Z$.
The stabilizers then form the ``label space'' of the code,
\begin{equation}
  \lab=(\CX|0)\oplus(0|\CZ),\qquad\text{(label space)}
  \label{eq:classes-lab}
\end{equation}
a subspace of the $2n$-dimensional phase space of the $n$ qubits. 
Let $r_X$ and $r_Z$ be the dimensions of the two classical codes, and let $r=r_X+r_Z$.
The code encodes $k=n-r$ logical qubits.

We study Clifford gates that preserve such a code, modulo Pauli operators and global phases. Such a gate reduces to an element of the symplectic group
$\Sp{2n}$~\cite{gottesman1997stabilizer}. That element acts on a label $(x|z)$ by a $2n\times 2n$ binary matrix $g$ that preserves the symplectic inner product.
We thus represent a circuit by the map it induces on symplectic labels and study maps that preserve \(\lab\).

First, we show that every \(\lab\)-preserving symplectic map can be realized as a code-preserving, or logical, physical Clifford circuit via a suitable Pauli correction (Theorem~\ref{thm:dressing}).
We leave the compensating Pauli implicit and work strictly in the symplectic picture from now on.
Note that this approach collapses seemingly different gates to the same class.
For example, \(\sqrt{Z}\otimes\sqrt{Z}^{\dagger}\) is equivalent to \(\sqrt{Z}\otimes\sqrt{Z}\),
up to the Pauli \(I\otimes Z\).

Because a symplectic label map respects the division of phase space into $X$- and
$Z$-parts, we write it in a block form
$g=\bigl(\begin{smallmatrix}A&B\\ C&D\end{smallmatrix}\bigr)$, acting on a label by
\begin{equation}
  (x|z)\,\left(\begin{matrix}A&B\\ C&D\end{matrix}\right)=(xA+zC\,\vert\,xB+zD).
  \label{eq:classes-block}
\end{equation}
A CSS code respects the same $X$/$Z$ split. This split yields simple code-preservation conditions on the four blocks,
\begin{equation}
  \begin{gathered}
  \CX A\subseteq\CX,\qquad \CX B\subseteq\CZ,\\
  \CZ C\subseteq\CX,\qquad \CZ D\subseteq\CZ ,
  \end{gathered}
  \label{eq:classes-gen}
\end{equation}
each enforcing that a block carry the relevant classical code into its proper target.
The block $B$ that turns $X$-labels into $Z$-labels, for instance, must send $\CX$
into $\CZ$. 

Three of the four circuit families below are obtained by restricting some of the blocks to be zero.
Each inherits an even simpler special case of Eq.~\eqref{eq:classes-gen}.
The fourth family, the Hadamard-and-permutation gates,
generally stands apart, but in the special case where the Hadamard piece acts on
all qubits, the permutation must map each classical code into the other. 
Figure~\ref{fig:four-classes} summarizes all four families and depicts example
circuits of each one for the \(\qcode{6,2,2}\) doubly-even self-dual code.

\subsection{$Z$-diagonal circuits}
\label{sec:classes-sp}

The first and most popular family consists of the circuits
\begin{equation}
  \Ush{S}=\prod_{i<j\,:\,S_{ij}=1}\mathrm{CZ}_{ij}
          \;\prod_{i\,:\,S_{ii}=1}\sqrt{Z}_{i},
  \label{eq:classes-circ-sp}
\end{equation}
one controlled-$Z$ gate for each off-diagonal entry of a symmetric matrix $S$ and one
single-qubit phase gate $\sqrt{Z}$ for each diagonal entry. $\Ush{S}$ is a valid
symplectic map exactly when $S$ is
symmetric~\cite{dehaene2003clifford,rengaswamy2019unifying}.

The gates in such a circuit commute, so one can apply them in any order. The
circuit forms a single two-local layer, however, if and only if no two CZ gates act on the same physical qubit.

In the symplectic block picture, the parameter $S$ occupies the upper-right block,
\begin{equation}
  \Ush{S}=\begin{pmatrix}\Id&S\\ 0&\Id\end{pmatrix},
  \qquad (x|z)\mapsto(x\,\vert\,z+xS),
  \label{eq:classes-block-sp}
\end{equation}
so that the circuit adds $xS$ to the $Z$-labels and leaves the $X$-labels untouched.
Every gate in Eq.~\eqref{eq:classes-circ-sp} is diagonal in the computational basis,
so we call these \emph{$Z$-diagonal circuits}. Specializing the general
condition~\eqref{eq:classes-gen} leaves the single requirement $\CX S\subseteq\CZ$.
Applied to any $X$-stabilizer, the circuit must produce a $Z$-string already lying in
$\CZ$. The $Z$-diagonal circuits meeting this condition form the family $\Eup$
of the code.

Parameters add, $\Ush{S}\Ush{S'}=\Ush{S+S'}$. The family is
therefore not just an Abelian group, but also a binary vector space. 
The space of admissible patterns $S$ can thus be found by Gaussian elimination. 
This is the Clifford case of an algorithm
that finds the code-preserving diagonal gates at every level of the Clifford
hierarchy~\cite{webster2023transversal,campsmoreno2026transversal,koh2026phantom,bauer2026finding}.
In particular, every diagonal circuit is its own inverse.

A number of gates studied under separate names are individual members of $\Eup$:
\begin{itemize}
\item Circuits with a diagonal binary matrix $S$ are single layers of phase gates. Examples are
  the transversal phase gate that acts as a logical $\sqrt{Z}$ on a doubly-even
  code~\cite{zeng2011transversality,anderson2016classification,dasu2025classification}
  (the identity pattern), phase gates on a subset of qubits, the mixed
  $\sqrt{Z}/\sqrt{Z}^{\dagger}$ patterns that serve self-dual and weakly doubly-even
  codes~\cite{tansuwannont2025clifford,jain2025transversal}, the subcube phase gates
  of quantum Reed--Muller codes~\cite{barg2026geometric}, and the Clifford level of
  the diagonal transversal gates classified for monomial codes and by the homology
  of the CSS chain complex~\cite{campsmoreno2026transversal,haruna2026homological}.
\item A single layer applies a logical $\overline{\mathrm{CZ}}$ to the
  $L\times L$ toric code for every $L\geq3$~\cite{holmes2026quantum}. That layer
  is a diagonal circuit with $L^{2}+2L$ gates.
\item The fold-transversal $S$-gate built from a self-inverse symmetry of the
  code~\cite{moussa2016transversal,breuckmann2024fold} is a diagonal circuit.
  Its pattern places a $\mathrm{CZ}$ across each pair of qubits that the symmetry
  swaps and a $\sqrt{Z}$ on each qubit it fixes.
\item The same pattern, read off other symmetries, gives the phase-type gates of
  bivariate-bicycle, hypergraph product, and lifted product
  codes~\cite{eberhardt2024logical,quintavalle2023partitioning,zheng2026canonical,hong2026ldpc}. When the symmetry is
  fixed-point free, the diagonal vanishes and no $\sqrt{Z}$ appears at
  all~\cite{berthusen2025concatenated}. A free symmetry of any order contributes one
  $\mathrm{CZ}$ along each edge of its orbit cycles~\cite{gulshen2025symmetric}.
\item The cup-product circuits acting between two code
  blocks are \(Z\)-diagonal~\cite{breuckmann2026cups,li2025poincare}.
\item The $\mathrm{CZ}$ between two blocks of a
  code~\cite{dasu2025classification} is a diagonal circuit on the doubled code. Its pattern is
  the identity placed in the off-diagonal block, and restricting the gate to a subset
  of qubit pairs~\cite{cao2026lego} thins that block to a matching. 
\item The $\mathrm{CZ}$ circuits carrying a one-form symmetry of a
  homological code~\cite{zhu2025nonclifford} are also diagonal circuits with structured patterns.
\end{itemize}

\subsection{$X$-diagonal circuits}
\label{sec:classes-sm}

The lower-left symplectic block is a mirror of the upper-right block, generating a second independent family in a similar way.
Its circuits are the \(X\)-basis versions of the $Z$-diagonal ones, and every
gate in them is diagonal in the conjugate basis,
\begin{equation}
  \begin{aligned}
    \Lsh{T}&=\prod_{i<j\,:\,T_{ij}=1}\mathrm{CZ}^{H}_{ij}
             \;\prod_{i\,:\,T_{ii}=1}\sqrt{X}_{i}\\
           &=H^{\otimes n}\,\Ush{T}\,H^{\otimes n},
  \end{aligned}
  \label{eq:classes-circ-sm}
\end{equation}
built from $\sqrt{X}$ gates and the Hadamard-conjugated controlled-$Z$ gate
$\mathrm{CZ}^{H}=H^{\otimes2}\mathrm{CZ}\,H^{\otimes2}$.
The parameter $T$ now occupies the lower-left block,
\begin{equation}
  \Lsh{T}=\begin{pmatrix}\Id&0\\ T&\Id\end{pmatrix},
  \qquad (x|z)\mapsto(x+zT\,\vert\,z),
  \label{eq:classes-block-sm}
\end{equation}
so that the circuit adds $zT$ to the $X$-labels and leaves the $Z$-labels untouched.
Condition~\eqref{eq:classes-gen}
specializes to the mirror requirement $\CZ T\subseteq\CX$. The valid \emph{$X$-diagonal
circuits} form the family $\Edn$.

Everything about this family follows from the $Z$-diagonal one by the formal
$X\leftrightarrow Z$ swap that interchanges the two classical codes.
We use that duality throughout to avoid proving each statement twice. 
The family is, however, independent of the $Z$-diagonal one because it is obtained from a different condition.
The Hadamards
in Eq.~\eqref{eq:classes-circ-sm} are only a bookkeeping change of basis since a general CSS code need not have a 
code-preserving transversal Hadamard. 

On the other hand, when the two classical codes coincide, the
two diagonal families are genuine Hadamard-conjugates of one another. This is why they
appear identical in Figure~\ref{fig:four-classes}. For a code with $\CX\neq\CZ$, they
are different families.

\subsection{CNOT circuits}
\label{sec:classes-levi}

CNOT networks correspond to the two diagonal symplectic blocks [see Figure~\ref{fig:four-classes}(c)], mixing $X$-labels among themselves and $Z$-labels among themselves.
The word ``diagonal'' refers here to the position of the blocks inside the
symplectic matrix, not to the basis in which the gates act.
A CNOT network is
not a diagonal circuit, and the $Z$- and $X$-diagonal circuits of
Sections~\ref{sec:classes-sp} and~\ref{sec:classes-sm} occupy the two
\emph{off}-diagonal symplectic blocks.

A single controlled-NOT with control $c$ and target $t$ is represented by the elementary invertible
map $K=\Id+E_{ct}$, with $E_{ct}$ the matrix unit. 
Any invertible map $K$ factors into such elementary matrices: each row addition performed while Gaussian-eliminating $K$ is one factor $\Id+E_{ct}$, so the elimination itself writes $K$ as a CNOT circuit,
\begin{equation}
  U_{K}=\prod_{m}\mathrm{CNOT}_{c_{m}\to t_{m}},
  \qquad K=\prod_{m}\bigl(\Id+E_{c_{m}t_{m}}\bigr),
  \label{eq:classes-circ-levi}
\end{equation}
at a cost of order $n^{2}/ \log n$ gates~\cite{patel2008efficient}.
The network acts by
$(x|z)\mapsto(xK\,\vert\,zK^{\itp})$, that is, in blocks
$\bigl(\begin{smallmatrix}K&0\\ 0&K^{\itp}\end{smallmatrix}\bigr)$.
The two diagonal blocks must be the correlated pair $K$ and $K^{\itp}$ for the operation to be symplectic.

When the CNOT gates act on \emph{disjoint} pairs, they form an oriented
matching, $N=\sum_{k}E_{c_{k}t_{k}}$ with $N^{2}=0$. In this case the whole network
is a single depth-one layer, and its matrix $K=\Id+N$ squares to one.

A CNOT network preserves the code when its map $K$ preserves each classical code
separately. In this case, condition~\eqref{eq:classes-gen} reads $\CX K=\CX$ and
$\CZ K^{\itp}=\CZ$. These are equalities, not inclusions, because $K$ is invertible.
Although the first is a linear condition on $K$, just like the conditions on $S$ and $T$ for diagonal circuits, the fact that we are restricting to \textit{invertible} $K$ makes the set of valid CNOT networks a nonlinear space.
These gates form a non-Abelian group $\Lev$, a copy of the invertible binary matrices $\GLg{n}$
restricted to only those that preserve the code. 

Qubit permutations form a subgroup of the CNOT gate group since they are assembled from SWAP gates, with each SWAP gate in turn consisting of three CNOT gates.
The permutation symmetries of the code therefore sit inside this family.
These symmetries are the permutations that preserve both classical codes,
$\mathrm{Aut}(\CX)\cap\mathrm{Aut}(\CZ)$~\cite{grassl2013leveraging}.

CSS codes have long been known to admit CNOT gates between blocks~\cite{gottesman1997stabilizer}. More recent members of $\Lev$ include the following:
\begin{itemize}
\item The homomorphic CNOT gadgets that couple two distinct CSS codes through a chain map between their complexes~\cite{benhemou2026automated} are CNOT elements of the direct sum, with block-unitriangular $K$.
\item In a homological product, the permutation symmetries of the input codes lift to permutation symmetries of the product code~\cite{berthusen2025automorphism}.
\item Generalized Dehn twists on hypergraph and balanced products of cyclic codes are CNOT elements, realized by CNOT circuits of depth $O(d)$~\cite{tiew2025dehn}.
\item Phantom codes~\cite{koh2026phantom} realize a logical $\overline{\mathrm{CNOT}}$ between every ordered pair of logical qubits by a physical qubit permutation. For a phantom code, the permutation subgroup of physical CNOT gates realizes the entire logical Levi subgroup $\GLg{k}\subset \Sp{2k}$.
\end{itemize}

\subsection{Partial dualities}
\label{sec:classes-weyl}

The fourth family is generated by a Hadamard gate on an arbitrary subset
of the qubits, followed by a permutation,
\begin{equation}
  (\Hd{a};\pi)=\pperm{\pi}\prod_{i\,:\,a_{i}=1}H_{i},\qquad\text{(partial duality)}
  \label{eq:classes-circ-weyl}
\end{equation}
where the bit string $a\in\Ftwo^{n}$ marks which qubits are Hadamard-transformed, and where
$\pi$ permutes the qubits. A subset Hadamard exchanges $X$ and $Z$ on the chosen
qubits, corresponding to a ``partial'' duality, and the permutation then relabels the qubits. 

This family is not attached to any block of the decomposition, because the other
three families already use all four blocks. It therefore behaves differently from
them. 
For example, the families overlap. A depth-one network of disjoint SWAP gates, where \(a=0\) and \(\pi\) is an involution, is also a CNOT element.

The permutations these gates supply can compensate an otherwise code-breaking transversal gate such that the compensated product preserves the code. This is the mechanism behind
the fold-transversal gates and the code's automorphism group. 

A pure subset Hadamard (without a compensating permutation) can preserve a code only in specific cases. The
all-qubit Hadamard preserves the code precisely when the code is self-dual,
$\CX=\CZ$~\cite{dasu2025classification}. A Hadamard on a proper, nonempty subset can preserve the code only if the
code splits into two independent blocks, that is, only if the code is decomposable
(Proposition~\ref{prop:trans-split}; cf.~\cite{guyot2026addressability}).
For a connected code, then, the only useful
pure dualities are the trivial one and the full one. Every intermediate choice has to be
compensated by a permutation~(Corollary~\ref{cor:trans-connected}).

A CSS code is
\emph{$em$-symmetric} when its two underlying classical codes can be permuted into each other using the same permutation~\cite{su2024tapestry,english2026duality}. For those codes, there exists a permutation \(\pi\) such that
$\CX=\CZ\pperm{\pi}$ and $\CX\pperm{\pi}=\CZ$, corresponding to a full duality $(\Hd{\ones};\pi)$.
The two conditions agree when $\pi$ is an involution.

Partial duality gates appear in the literature in several forms:
\begin{itemize}
\item as gates on the folded surface code~\cite{moussa2016transversal};
\item as fold-transversal Hadamards~\cite{breuckmann2024fold,chen2026transversal};
\item as the Hadamard-SWAP gate of a hypergraph product
  code~\cite{quintavalle2023partitioning}, a full duality compensated by the twin
  involution;
\item as the $ZX$-duality used to address the second logical block of a
  bivariate-bicycle code~\cite{bravyi2024high};
\item as the generalized $ZX$-dualities catalogued by automorphism-based
  tools~\cite{sayginel2025faulttolerant,sayginel2024autqec}.
\end{itemize}

%% file: sec_general.tex
\section{Code-preserving circuit group}
\label{sec:general}

The four families of Section~\ref{sec:classes} are all subgroups of the
group of all Clifford circuits that preserve a given CSS code. 
We will show that only two families, the $Z$-diagonal and the $X$-diagonal circuits, generate this group.

As before, we work
modulo Pauli operators and global phases. The
code-preserving circuits therefore correspond exactly to the symplectic maps that
fix $\lab$ setwise,
\begin{align}
    \Nfull&=\Stb_{\Sp{2n}}(\lab) \label{eq:N-def}\\
          &=\{\,g\in\Sp{2n}:\lab g=\lab\,\} .\quad\text{(code-preserving group)} \nonumber
\end{align}
A physical circuit realizes every such map, and a compensating Pauli comes
automatically with it (Theorem~\ref{thm:dressing}). 
There are no restrictions on the depth or the locality of the circuit [Figure~\ref{fig:layers}(d)].

Each of the four families of Section~\ref{sec:classes} is
the intersection of $\Nfull$ with a code-independent group of circuits. The valid
$Z$-diagonal and $X$-diagonal circuits $\Eup$ and $\Edn$ come out of all $Z$- and all $X$-diagonal
circuits. The family $\Lev$ comes out of all CNOT networks. The code-preserving
partial dualities come out of the full group $\PDgp$ of Hadamard-and-permutation
gates.

A binary matrix lies in $\Nfull$ if it satisfies two properties. First, its label map
must be symplectic, which makes the circuit a Clifford operation. 
Second, this map
must preserve the code. The two conditions yield a rigid block
form that is true for general stabilizer codes.

\subsection{Structure of $\Nfull$}
\label{sec:general-structure}

The block form occurs in a basis that is matched to the code rather than to the symplectic
$X$/$Z$ split of Section~\ref{sec:classes}. 
Any stabilizer code sorts the Pauli
operators into a product of three kinds. The stabilizers are the operators that fix every
codeword, and a set of $r=n-k$ of them generates the rest. Each stabilizer
generator has a paired destabilizer, which anticommutes with that one
generator and commutes with the others. The $2k$ logical operators of the $k$
encoded qubits carry the remaining freedom.

For example, consider the trivial code that stores logical information in its first $k$ qubits. Its logicals are the $X$ and $Z$ operators of those
qubits. Its stabilizers are the $Z$ operators of the remaining $r$ qubits, and its
destabilizers are the $X$ operators of those same $r$ qubits.
A Clifford change of basis carries any other stabilizer code to this picture. The
change of basis relabels the operators, but it leaves their commutation
relations intact.

Code preservation constrains how a Clifford circuit may mix these three kinds of operators among themselves. The logical action is unconstrained: any Clifford on the logical
operators is allowed. The circuit may also relabel the stabilizer generators among
themselves in any invertible way, and the destabilizers then adjust to maintain the
commutation relations.
But there are other mixing operations the circuit can do.

A circuit in $\Nfull$ fixes the code, so it must send every
stabilizer to a stabilizer. 
In other words, a stabilizer cannot acquire a logical or a
destabilizer tail because it has to continue to commute with all stabilizers. 
However, the circuit
may pad a logical with a stabilizer, but never with a destabilizer,
because the result would anticommute with some stabilizer. 
No such constraint
applies to the destabilizers, and the circuit may pad them with logicals and/or stabilizers. 
These restrictions
yield patterns of zeros in the block form of the circuit, written in the stabilizer,
logical, and destabilizer basis (Lemma~\ref{lem:general-blockform}).

When we assemble the allowed moves, they exhibit $\Nfull$ as a semidirect product,
\begin{equation}
  \Nfull=\underbrace{\Urad}_{\text{padding}}\rtimes\Bigl(
    \underbrace{\GLg{r}}_{\substack{\text{stabilizer}\\\text{relabeling}}}
    \times
    \underbrace{\Sp{2k}}_{\substack{\text{logical}\\\text{action}}}\Bigr) .
  \label{eq:N-structure}
\end{equation}
The two direct factors are the intra-set actions: the invertible (general-linear-map) relabelings of the
$r$ stabilizer generators, and the logical Clifford group $\Sp{2k}$. The normal
factor $\Urad$ collects the padding operations: it adds stabilizers to logicals, and
it adds stabilizers and logicals to destabilizers. This factor acts trivially on
both the generators and the encoded qubits. The three factors are independent, so
the order of $\Nfull$ is the product of their orders. 
At $k=0$, Eq.~\eqref{eq:N-structure}
reduces to the symmetry group of the all-zeros state~\cite{bravyi2021hadamardfree}.

Since we mostly care about nontrivial logical gates, we will often restrict a code-preserving symplectic map to its action on the logical operators alone. This is done through the homomorphism
\begin{equation}
  \lact:\Nfull\twoheadrightarrow\Sp{2k}\qquad\text{(logical action)}
  \label{eq:N-lact}
\end{equation}
onto the Clifford group of the encoded qubits.
This map forgets the actions on the other factors, showing that many physical circuits realize the same logical gate.

\subsection{Generation by diagonal circuits}
\label{sec:general-shear}

A key structural result of the paper is that, for a
nontrivial CSS code, the two diagonal families alone build the entire group of code-preserving Clifford operations. 
If
the code has a nonempty $X$-part and a nonempty $Z$-part, every code-preserving
circuit is a product of $Z$-diagonal and $X$-diagonal circuits,
\begin{equation}
  \Nfull=\langle\,\Eup,\Edn\,\rangle\qquad\text{(generation)}
  \label{eq:N-shear}
\end{equation} 
(Theorem~\ref{thm:sheargen}). 
Another way to say this is that the two diagonal families \textit{generate} the full group.
The only condition required is that the CSS code be nontrivial, that is, \(\CX,\CZ \neq 0\).

A $Z$-diagonal circuit is only a layer of controlled-$Z$ and phase gates, and an
$X$-diagonal circuit is its $X$-basis mirror.
But products of the two families reach well beyond circuits diagonal in either
basis.
A $Z$-diagonal and an $X$-diagonal circuit compose into a Hadamard-type map. Longer products, that
is, group words, express the permutations of the partial dualities and CNOT
networks.

Applying the logical map $\lact$, which forgets about how the map acts on the stabilizers and destabilizers, we obtain the simple corollary
\begin{equation}
  \lact\bigl(\langle\Eup,\Edn\rangle\bigr)=\Sp{2k} ,\qquad\text{(logical completeness)}
  \label{eq:N-logcomplete}
\end{equation}
so the two diagonal families realize \emph{every} logical Clifford operation on the
encoded qubits.

The reverse correspondence is far from one-to-one. Very many physical circuits
realize a single logical
Clifford. Two circuits may agree on the
$\Sp{2k}$ factor and yet differ in the stabilizer relabeling or in the padding. The
number of circuits
that share a given logical action is the same for every logical Clifford. This
number is
the size of one fiber of $\lact$, and it equals the combined order of the other two
factors,
\begin{equation}
  |\Urad|\,|\GLg{r}|=2^{\,r(r+2k)}\prod_{i=1}^{r}\bigl(2^{i}-1\bigr),
  \label{eq:N-fibre}
\end{equation}
in the $r=n-k$ stabilizer generators and $k$ logical qubits (Appendix~\ref{app:general}).

This freedom is a resource: a prescribed logical action leaves the relabeling and padding of Eq.~\eqref{eq:N-fibre} entirely free, and one may optimize over that freedom without altering the logical operation~\cite{rengaswamy2018synthesis,popov2026optimized}. Section~\ref{sec:ext-depth} proposes how to exploit this and mentions related approaches.

The remainder of the paper accordingly restricts attention to circuits that run in a single layer. 
We characterize the groups generated by the two most useful families of such
gates. The transversal gates carry one single-qubit gate per qubit. The
depth-one two-local gates form a single layer that may also pair qubits.

%% file: sec_transversal.tex
\section{Transversal gate group}
\label{sec:transversal}

The first of the depth-one families is the well-known transversal gate group. We define a transversal circuit to consist of one (possibly different)
single-qubit Clifford on each qubit
[Figure~\ref{fig:layers}(a)]. 
Its label map is block
diagonal, with one two-by-two $\Sp{2}$ block per qubit. 

The code-preserving transversal
circuits form
\begin{equation}
  \Ntr=\Sp{2}^{\,n}\cap\Nfull ,\qquad\text{(transversal group)}
  \label{eq:T-def}
\end{equation}
the tuples of single-qubit Cliffords that preserve $\lab$. 
The four families of Sec.~\ref{sec:classes} reduce to three in this case: a transversal $\Lev$ gate acts on each
qubit by an invertible one-by-one matrix over $\Ftwo$, so the transversal $\Lev$ gates
are trivial.

Transversal \(Z,X\)-diagonal circuits are labeled by diagonal matrices $S$ and $T$.
Their binary diagonals determine whether the corresponding qubit carries a gate or not. A $Z$-diagonal circuit is a
$\sqrt{Z}$ on a chosen subset of qubits, and an $X$-diagonal circuit is a $\sqrt{X}$ on such a subset. 

The conditions on \(S\) and \(T\) in terms of the classical codes reduce to the following.
Code-preserving transversal \(Z,X\)-diagonal circuits can be read off from two \emph{parameter codes},
\begin{align}
    \Pup&=\{\,a\in\Ftwo^{n}:a\pw\CX\subseteq\CZ\,\}, \label{eq:T-params}\\
    \Pdn&=\{\,b\in\Ftwo^{n}:b\pw\CZ\subseteq\CX\,\},\qquad\text{(parameter codes)} \nonumber
\end{align}
where $\pw$ is the pointwise product of bit strings. The code $\Pup$ ($\Pdn$) lists the qubit sets
that may carry a $\sqrt{Z}$ ($\sqrt{X}$). Each
defining condition asks that a linear image of the parameter land in a subspace. Both
parameter codes are therefore ordinary binary linear codes.
We define $\Eupo$ and $\Edno$ to be the resulting groups
of valid transversal $Z$-diagonal and $X$-diagonal circuits.

Since the $\Lev$ gates are trivial,
our general depth-one theorem 
(Theorem~\ref{thm:depthone}) reduces to
\begin{equation}
  \Ntr=\langle\Eupo,\Edno\rangle,\qquad\text{(transversal generation)}
  \label{eq:T-gen}
\end{equation}
meaning that the entire transversal gate group is generated only by the above diagonal circuits (Corollary~\ref{cor:transgen}).
In other words, one need only calculate the above two parameter codes to determine the group.

\subsection{Three-layer decomposition}
\label{sec:transversal-nf}

We have so far established that diagonal circuits are sufficient to generate the transversal gate group, but we do not yet know \textit{how many} diagonal circuits are required to express a general group word.
A key result of this section is that three layers of diagonal circuits suffice to realize every transversal gate (Theorem~\ref{thm:transnf}).
In other words, every transversal gate $g$ factors (non-uniquely) as a $Z$-diagonal
circuit, an $X$-diagonal circuit, and a $Z$-diagonal circuit,
\begin{equation}
  g=\Ush{a}\,\Lsh{b}\,\Ush{a'},\qquad a,a'\in\Pup,\; b\in\Pdn.
  \label{eq:T-nf}
\end{equation}

We also provide a way to calculate the above decomposition for a general transversal gate.
Writing the four diagonal
blocks of a general gate $g$ as bit strings $A,B,C,D$, we have
\begin{equation}
  b=C,\qquad a=A\pw B+B\pw C,\qquad a'=B\pw C+B\pw C\pw D.
  \label{eq:T-closed}
\end{equation}
Moreover, two layers suffice exactly when the parameter codes intersect trivially,
\begin{equation}
  \Ntr=\Eupo\cdot\Edno
  \quad\Longleftrightarrow\quad
  \Pup\cap\Pdn=\{0\}
  \label{eq:T-twoblock}
\end{equation}
(Proposition~\ref{prop:twoblock}). 
In that case, every valid
$Z$-diagonal circuit and every valid $X$-diagonal circuit have disjoint supports. They therefore
commute, and
each side of a product collapses to a single layer. 

On the other hand, a nontrivial
$t\in\Pup\cap\Pdn$ yields a subset
Hadamard $\Hd{t}=\Ush{t}\Lsh{t}\Ush{t}$, a partial exchange of $X$ and $Z$ on the
support of $t$. Its shortest diagonal-circuit form is three layers and not two.

A common parameter $t\in\Pup\cap\Pdn$ with
$0\neq t\neq\ones$ implies that the code consists of multiple disconnected code blocks.
In that case, both classical codes split along the partition of the qubits
into
the support of $t$ and its complement, and the stabilizer code is the tensor product
of its restrictions to the two parts (Proposition~\ref{prop:trans-split}). 
A code
that admits no such splitting for any nontrivial partition is \emph{connected}, or
indecomposable. On a
connected code, the only possible common parameters are therefore the empty string and
the all-ones string. The all-ones string implies that the code is self-dual, i.e., $\CX=\CZ$ (Corollary~\ref{cor:trans-connected}).

To summarize, the parameter codes of a connected non-self-dual CSS code intersect trivially, so every valid $Z$-diagonal circuit commutes with every valid $X$-diagonal circuit. By Eq.~\eqref{eq:T-twoblock}, two layers of gates then already suffice to express a general transversal gate, $\Ush{a}\Lsh{b}$.

\subsection{The size of the transversal group}
\label{sec:transversal-algo}

We can determine the \emph{size} of the transversal gate group using a second normal form in terms of diagonal circuits and Hadamard gates $\Hd{a}$ for qubit subsets $a$
(Proposition~\ref{prop:permkernel}). Note that the subset Hadamard is itself a
diagonal-circuit product $\Hd{a}=\Ush{a}\Lsh{a}\Ush{a}$, and hence not a new generator, but it is convenient for our purposes. 
Unlike the
decomposition above, this form is \emph{unique}:
\begin{equation}
\begin{gathered}
  g=\Hd{a}\,\Ush{q}\,\Lsh{p},\\
  a=B\pw C,\quad q=B\pw D,\quad p=A\pw C.
\end{gathered}
  \label{eq:T-uniquenf}
\end{equation}
Here the two diagonal circuits have disjoint support ($q\pw p=0$). We again read the parameters
straight off the diagonal blocks $A,B,C,D$ of $g$.

As a product of sets,
$\Ntr=\{\Hd{a}:a\in\Pup\cap\Pdn\}\cdot\Eupo\cdot\Edno$. Uniqueness of the above normal form thus yields the group's order,
\begin{equation}
  |\Ntr|=\#\{(a,q,p)\in(\Pup\cap\Pdn)\times\Pup\times\Pdn:q\pw p=0\}.
  \label{eq:T-count}
\end{equation}

%% file: sec_depthone.tex
\section{Two-fold transversal group}
\label{sec:depthone}

We extend the transversal gate group and define a new group that allows for two-qubit entangling
Clifford gates. 
First we fix which qubits are
paired and characterize the code-preserving depth-one two-local layers compatible with that choice.
Then we let the pairing vary and characterize the group generated by all such layers
together.
We call this the \emph{two-fold transversal group}, following~\cite{chakraborty2026nogo}: a transversal circuit is one-fold transversal, and a depth-one two-local layer is two-fold transversal with respect to its matching. The fold-transversal gates of Section~\ref{sec:classes} are the special case in which the pairing comes from a $ZX$ duality.

\subsection{Depth-one circuits at a fixed qubit matching}
\label{sec:depthone-fixed}

A choice of which qubits are paired in a depth-one two-local circuit is a \emph{matching} $\Mt$ of the $n$
qubits. 
In each such matching, some qubits are paired up, and the rest are left as ``singletons''.  
The ``cells'' of a matching partition the qubits
into blocks of size one or two, say $m_{1}$ singletons and $m_{2}$ pairs with
$n=m_{1}+2m_{2}$. A depth-one two-local layer on $\Mt$ applies an arbitrary
two-qubit Clifford to each pair and an arbitrary single-qubit Clifford to each
singleton. 

We clarify that \emph{two-local} means at most two-local, so a single layer may
mix one-qubit gates and two-qubit gates [Figure~\ref{fig:layers}(b)]. In the label
picture these are the symplectic maps
that are block diagonal for the cells of $\Mt$, with one $\Sp{4}$ block per pair
and one $\Sp{2}$ block per singleton.
There are six possible singleton maps, and 720 two-qubit maps, corresponding to the orders of $\Sp{2}$ and $\Sp{4}$, respectively.
We write $\DM$ for the group formed by all possible combinations of such maps over all cells of a fixed matching $\Mt$.

The code-preserving layers on $\Mt$ are those that in addition fix the label
space [Figure~\ref{fig:layers}(b)],
\begin{equation}
  \NM=\DM\cap\Nfull ,
  \label{eq:dep-NM}
\end{equation}
where \(\Nfull\) is the group of all code-preserving Clifford gates.
The case of empty $\Mt$, where every
cell is a singleton and no qubits are paired, reduces this group to the
transversal group $\Ntr$ of Section~\ref{sec:transversal}.
The group $\NM$ of every other matching contains this group as a subgroup.

We may also restrict each family of Section~\ref{sec:classes}
to its $\Mt$-compatible part. The valid ones form
\begin{equation}
  \Ezx_{\Mt}=\Ezx\cap\DM ,
  \qquad
  \LevM=\Lev\cap\DM .
  \label{eq:dep-families}
\end{equation}
Valid diagonal circuits are those whose matrices $S$ or $T$ are block diagonal in the decomposition induced by the matching $\Mt$. That
is, every controlled-$Z$ they carry must sit inside a pair of qubits in $\Mt$.
CNOT networks include any CNOT or swap circuits compatible with the matching.
The partial dualities $\PDgp\cap\DM$ on the matching consist of any Hadamards
and compatible permutations.
Any permutations in the CNOT or partial-duality families necessarily have to be involutions (read: square to one) since only disjoint SWAP gates are allowed.

The key structural result of this work is that the $Z$-diagonal, $X$-diagonal, and CNOT gate families generate the entire group,
\begin{equation}
  \NM=\bigl\langle\,\EupM,\ \EdnM,\ \LevM\,\bigr\rangle ,
  \label{eq:dep-gen}
\end{equation}
for every CSS code and every matching (Theorem~\ref{thm:depthone}). 
In words, every
code-preserving depth-one two-local gate on a matching \(\Mt\) can be expressed as a product of depth-one diagonal circuits and
CNOT circuits on \(\Mt\).

The proof is an exhaustive reduction by multiplication. Start from an arbitrary layer
$G\in\NM$. One multiplies it on the left and the right by valid diagonal circuits (i.e., ``moves'') on the same
matching, and this simplifies it cell by cell until it reaches a fixed
\emph{terminal}
form. The moves are not fixed in advance, and explicit formulas build
each move from
the blocks of the current symplectic matrix. That the moves reach a terminal
form is the one finite point of the proof. Two-locality caps each cell at two
qubits, so a cell ranges over the $720$ elements of $\Sp{4}$, or over the six
elements of
$\Sp{2}$. We exhaust these finitely many possibilities, and they show that every
block
becomes terminal in at most three moves. Terminality is absorbing, so we clear the
cells one at a time. Inspection shows that each terminal block is either a CNOT
network or a short diagonal-circuit product $\Ush{B}\Lsh{C}\Ush{B}$. We then undo the moves,
and this writes $G$ as a word in the three families. Appendix~\ref{app:depthone}
demonstrates this algorithm on 
the $\qcode{6,2,2}$ code.

The generators in Eq.~\eqref{eq:dep-gen} lie inside $\NM$ for each $\Mt$. This
control at every fixed matching then yields a similar result for the two-fold transversal group.

\subsection{Definition and structure of the two-fold transversal group}
\label{sec:depthone-union}

A transversal gate group can yield the full logical Clifford group only when the code has \(k=1\) logical qubits, and depth-one two-local layers of any single fixed
matching can only realize such a group at \(k\leq 2\)~\cite{chakraborty2026nogo}.
A change of the matching $\Mt$
between layers lifts these restrictions, so we define the group generated by such gates.

The two-fold transversal group is the subgroup of the code-preserving Clifford group
$\Nfull$ generated by all depth-one two-local layers at once,
\begin{equation}
  \Ndep=\Bigl\langle\,\bigcup_{\Mt}\NM\,\Bigr\rangle .\qquad\text{(two-fold transversal group)}
  \label{eq:dep-cup}
\end{equation}
Equivalently, this group contains the code-preserving action of every finite-depth circuit whose layers
are each depth-one, code-preserving, and two-local on some matching. Its elements are group
words, i.e., products of depth-one layers that live on various matchings
[Figure~\ref{fig:layers}(c)]. 
Such a
word is generally not itself a single depth-one two-local layer since its factors pair the qubits in incompatible ways.
 
Our key result from the previous subsection shows that, for a fixed matching \(\Mt\), the corresponding group $\NM$ is generated by $Z$-diagonal, $X$-diagonal, and CNOT depth-one two-local gates.
Since the two-fold transversal group is generated by the union of such groups over all matchings, we obtain our main result (Theorem~\ref{thm:depthone} and Corollary~\ref{cor:depthone}),
\begin{equation}
  \Ndep=\Bigl\langle\,\bigcup_{\Mt}\,\EupM,\ \EdnM,\ \LevM\,\Bigr\rangle .
  \label{eq:dep-genunion}
\end{equation}
In words, this result states that every product of depth-one two-local code-preserving gates on \textit{arbitrary} matchings can be expressed as a product of depth-one two-local $Z$-diagonal circuits, $X$-diagonal circuits, and CNOT circuits.
Conversely, if we can enumerate these three families of generating circuits, then we can generate all possible two-fold transversal circuits.

Since this group includes two-local gates on arbitrary matchings, there is no restriction on implementing arbitrary logical Clifford gates.
In other words, its logical image
$\lact(\Ndep)$ [Eq.~\eqref{eq:N-lact}] can be the full logical Clifford group of the encoded qubits.
This holds, for example, for a subfamily of
quantum Reed--Muller codes~\cite{tansuwannont2026full}, and we describe many more ``full'' codes of this type in the next section. 
By the above result, we merely need to determine the
logical images of the three depth-one families, taken over all matchings, to obtain the logical image.

To reach the full \emph{logical} group is not the same as to reach the full group of code-preserving circuits.
Indeed, we show that $\Ndep$ is in general a proper
subgroup of $\Nfull$.
In other words, some code-preserving gates cannot be decomposed as products of depth-one
two-local layers (see Figure~\ref{fig:hasse}).
If instead $\Ndep$ were equal to $\Nfull$, then every code would be ``full''.

\input{fig_hasse}

%% file: fig_hasse.tex
\begin{figure}[t]
\centering
\begin{tikzpicture}[
    every node/.style={font=\footnotesize},
    grp/.style={draw, rounded corners, fill=black!4, inner sep=3.5pt, align=center},
    inc/.style={thick, black!70},
  ]
  \node[grp] (N)   at (0,3.3)    {$\Nfull=\Stb_{\Sp{2n}}(\lab)$\\[1pt]
                                  {\scriptsize all code-preserving gates}};
  \node[grp] (cup) at (-2.05,2.0){$\Ndep$\\[1pt]
                                  {\scriptsize$\langle\,\NM:\text{all }\Mt\,\rangle$}};
  \node[grp] (aut) at (2.15,1.45){$\Gaut$\\[1pt]
                                  {\scriptsize$(\Sp{2}\wr\Perm{n})\cap\Nfull$}};
  \node[grp] (M)   at (-2.05,0.75){$\NM$\\[1pt]
                                  {\scriptsize fixed matching $\Mt$}};
  \node[grp] (tr)  at (0,-0.35)  {$\Ntr$\\[1pt]
                                  {\scriptsize transversal group}};
  \draw[inc] (tr)  -- (M);
  \draw[inc] (M)   -- (cup);
  \draw[inc] (cup) -- (N);
  \draw[inc] (tr)  -- (aut);
  \draw[inc] (aut) -- (N);
\end{tikzpicture}

\caption{\label{fig:hasse}\textbf{Gate-group inclusion.} All groups are taken modulo Pauli operators and lie inside
$\Nfull$, the group of code-preserving Cliffords. Lines are
subgroup inclusions, read upward; the two chains meet at the transversal group
$\Ntr$ of Section~\ref{sec:transversal}. Every inclusion drawn is strict in
general, and the groups $\Ndep$ and $\Gaut$ are incomparable. 
Allowing for qubit permutations to compensate non-code-preserving transversal gates gives the automorphism group $\Gaut$.
The transversal and automorphism groups carry no entangling two-local gates, while the group of all matchings includes gates
for all possible matchings $\Mt$.
Some permutations \textit{cannot} be expressed as layers
of code-preserving two-local circuits ($\Gaut\not\subseteq\Ndep$):
the $\qcode{9,3,2:186339}$ code~\cite{cross2025small,crossqiskitqec2025} carries an
order-four permutation $\pperm{\pi=(0\,5\,8\,2)(1\,6)(3\,4)}\in\Gaut$ that is not a group word in depth-one
two-local layers. 
To show this, we exhaust all $9\cdot7!!=945$ maximum matchings to show that
every valid depth-one layer fixes the \(X\)-type stabilizer \(100110011\) that $\pperm{\pi}$
moves.
Since the permutation is outside of the two-fold transversal group, this also implies that it \textit{need not} generate 
all code-preserving Cliffords (\emph{$\Ndep\subsetneq\Nfull$}).
}
\end{figure}

%% file: sec_depth1twolocal.tex
\section{Codes with full-Clifford two-fold transversal gates}
\label{sec:depth1twolocal}

We established in the previous section that a code can, in principle, realize any logical Clifford operation as a product of depth-one two-local code-preserving circuits.
Here, we report results of numerical experiments that identify such codes.

Recalling that \(\Ndep\) is the group generated by all depth-one two-local circuits, we identify codes whose logical image is
\begin{equation}
  \lact(\Ndep)=\Sp{2k} ,\qquad\text{(full code)}
  \label{eq:full-def}
\end{equation}
where $\lact$ sends a code-preserving gate to its action on the $k$ logical
qubits, and $\Sp{2k}$ is the logical Clifford group modulo Pauli operators and
phases. We call a code with this property \emph{full}.

The benefit of full codes is that all their logical Clifford operations are realizable by depth-one two-local circuits interspersed with error-correcting rounds.
However, two-local gates do spread errors, unlike transversal ones. One layer sends an
error on $e$ qubits to an error on at most $2e$ qubits. After such a layer a
code of distance $d$ therefore corrects an error on at most $\lfloor
d/4\rfloor$ qubits, and it detects an error on at most $\lfloor d/2\rfloor$
qubits.
The smallest sensible distance to consider for full codes is thus $d=3$, at which errors can at least be post-selection fault tolerant in the sense of Ref.~\cite{gottesman2016small}.

The above count is the worst case, and a particular matching often does better. 
Some errors heavier than the distance may stay
correctable, and a
finer analysis must decide whether a given partition really yields a fatal error. 
For example, a CNOT layer between blocks of Steane codes doubles
the weight of a single-qubit error from one to two, but the two halves of the
error sit in different blocks, so each block corrects its own half.
For a more non-trivial example, the $\qcode{5,1,2}$ rotated surface code admits a fault-tolerant two-local gate despite being only distance-two~\cite{vasmer2022morphing}.

We study only codes with $k>1$. Full codes at $k=1$ are easy to find since they need not require two-local gates. A self-dual doubly-even code with
one logical qubit, for instance, has a transversal $\sqrt{Z}$ and a transversal
Hadamard, and those two gates already generate $\Sp{2}$~\cite{jain2025transversal}.

\subsection{Select full codes}
\label{sec:depth1twolocal-census}

\begin{table*}[!t]
\caption{\label{tab:full-depth1}%
A list of \Nfullcodes\ indecomposable CSS codes whose full logical Clifford
group can be realized by stacking depth-one two-local code-preserving
circuits.  For each we give the check weight and qubit degree of a
generating set of the stabilizer group.
Codes are
ordered by distance, then by $k$, then by $n$, then by weight and
degree.  Thick lines separate distance groups and thin lines separate
$k$ groups within a distance.  Each code is named by its
parameters followed by its source database and the leading octet of its
uuid or tag; Appendix~\ref{app:codes} expands the database names and discusses the algorithms used.}
\begin{ruledtabular}
\footnotesize
\setlength{\tabcolsep}{3pt}
\setlength{\aboverulesep}{0pt}\setlength{\belowrulesep}{0pt}
\renewcommand{\arraystretch}{1.25}
\begin{tabular}{lrr@{\hspace{4em}}lrr}
Code & Wt & Deg & Code & Wt & Deg \\
\hline
\fcode[xzzx_10_2_3]{\qcode{10,2,3}}{eczoo:dd1f5a7f} & 4 & 4 & \fcode{\qcode{24,4,4}}{ml:b5b4e98e} & 8 & 8 \\
\cmidrule[0.4pt]{4-6}
\fcode[stab_10_2_3]{\qcode{10,2,3}}{eczoo:1cdbcad5} & 5 & 6 & \fcode{\qcode{16,6,4}}{copycup:66fe2995} & 8 & 8 \\
\cmidrule[0.4pt]{1-3}
\fcode[phantom_14_3_3]{\qcode{14,3,3}}{eczoo:880eb7d7} & 8 & 6 & \fcode[stab_16_6_4]{\qcode{16,6,4}}{ml:e1c9cb94} & 8 & 8 \\
\cmidrule[0.4pt]{1-3}
\fcode{\qcode{15,4,3}}{cons:blk45k12} & 8 & 8 & \fcode{\qcode{22,6,4}}{se:dbd8e9e5} & 10 & 6 \\
\fcode{\qcode{16,4,3}}{se:85abc767} & 6 & 8 & \fcode{\qcode{24,6,4}}{ml:bbd153de} & 8 & 8 \\
\cmidrule[0.4pt]{4-6}
\fcode{\qcode{18,4,3}}{toricdir:c3785f6e} & 7 & 5 & \fcode{\qcode{22,8,4}}{se:a5e71e83} & 10 & 6 \\
\cmidrule[0.4pt]{1-3}
\fcode{\qcode{15,6,3}}{cons:blk45k18} & 8 & 7 & \fcode{\qcode{24,8,4}}{ml:6457a562} & 8 & 8 \\
\fcode{\qcode{18,6,3}}{ml:cdfc0fae} & 8 & 8 & \fcode{\qcode{24,8,4}}{ml:e35b5235} & 12 & 4 \\
\cmidrule[1.1pt]{1-3}
\fcode[carbon]{\qcode{12,2,4}}{eczoo:9a9707b2} & 6 & 6 & \fcode{\qcode{28,8,4}}{ml:55b90f76} & 12 & 6 \\
\fcode{\qcode{16,2,4}}{2bga:e4db8568} & 4 & 4 & \fcode{\qcode{32,8,4}}{qecdb:2827eae6} & 8 & 8 \\
\cmidrule[0.4pt]{4-6}
\fcode{\qcode{16,2,4}}{se:94f20715} & 6 & 6 & \fcode{\qcode{32,12,4}}{dd422:bc8a14c4} & 16 & 4 \\
\cmidrule[0.4pt]{4-6}
\fcode{\qcode{18,2,4}}{se:3a3c1311} & 6 & 8 & \fcode{\qcode{28,14,4}}{cons:quadric3} & 12 & 6 \\
\cmidrule[0.4pt]{4-6}
\fcode{\qcode{20,2,4}}{se:7642d51a} & 6 & 8 & \fcode{\qcode{40,16,4}}{cons:grid4x10} & 12 & 6 \\
\fcode{\qcode{22,2,4}}{se:105def7f} & 6 & 6 & \fcode{\qcode{40,16,4}}{dd422:a132b525} & 20 & 4 \\
\cmidrule[0.4pt]{4-6}
\fcode{\qcode{22,2,4}}{se:17f1e257} & 6 & 7 & \fcode{\qcode{48,24,4}}{cons:grid6x8} & 12 & 8 \\
\cmidrule[0.4pt]{4-6}
\fcode{\qcode{22,2,4}}{se:34fb3f41} & 6 & 8 & \fcode{\qcode{80,48,4}}{cons:grid8x10} & 16 & 8 \\
\cmidrule[0.4pt]{4-6}
\fcode{\qcode{22,2,4}}{se:fcb10eac} & 6 & 8 & \fcode{\qcode{120,90,4}}{cons:johnson16} & 56 & 15 \\
\cmidrule[1.1pt]{4-6}
\fcode{\qcode{22,2,4}}{se:d8b634ac} & 8 & 6 & \fcode[stab_18_2_5]{\qcode{18,2,5}}{ml:7274f0a8} & 8 & 8 \\
\cmidrule[1.1pt]{4-6}
\fcode{\qcode{22,2,4}}{se:2d2eeecc} & 8 & 8 & \fcode[stab_20_2_6]{\qcode{20,2,6}}{ml:7020a6c3} & 8 & 8 \\
\fcode{\qcode{22,2,4}}{se:d1138262} & 8 & 8 & \fcode{\qcode{21,2,6}}{qecdb:52f802ba} & 8 & 8 \\
\fcode{\qcode{22,2,4}}{se:f370ebef} & 8 & 8 & \fcode{\qcode{22,2,6}}{se:a00c8506} & 8 & 8 \\
\fcode{\qcode{22,2,4}}{se:4688b36a} & 10 & 5 & \fcode{\qcode{28,2,6}}{dd422:a44384e0} & 8 & 6 \\
\cmidrule[0.4pt]{1-3}\cmidrule[0.4pt]{4-6}
\fcode{\qcode{16,4,4}}{se:99fa739a} & 6 & 6 & \fcode{\qcode{36,6,6}}{dd422:5c721162} & 12 & 6 \\
\cmidrule[0.4pt]{4-6}
\fcode[stab_16_4_4]{\qcode{16,4,4}}{2bga:24768ca1} & 8 & 6 & \fcode{\qcode{40,8,6}}{dd422:03ce3e07} & 12 & 6 \\
\cmidrule[1.1pt]{4-6}
\fcode[stab_18_4_4]{\qcode{18,4,4}}{se:8b56945b} & 6 & 8 & \fcode{\qcode{64,4,8}}{dd422:24bc40c0} & 12 & 8 \\
\fcode{\qcode{18,4,4}}{se:339dd3da} & 8 & 8 & \fcode{\qcode{72,4,8}}{dd422:cebe3fac} & 12 & 8 \\
\fcode{\qcode{20,4,4}}{se:68e0b96f} & 6 & 8 & \fcode{\qcode{80,4,8}}{dd422:7b0834e2} & 12 & 8 \\
\cmidrule[0.4pt]{4-6}
\fcode{\qcode{20,4,4}}{se:b9a66cba} & 6 & 8 & \fcode{\qcode{64,8,8}}{dd422:b81453b9} & 12 & 8 \\
\fcode{\qcode{20,4,4}}{ml:2586f00e} & 8 & 8 & \fcode{\qcode{64,8,8}}{dd422:0bcf147c} & 16 & 7 \\
\fcode{\qcode{22,4,4}}{se:791cbb1a} & 8 & 6 & \fcode{\qcode{72,8,8}}{dd422:f4e61519} & 12 & 8 \\
\fcode{\qcode{22,4,4}}{se:06fce95a} & 8 & 8 & \fcode{\qcode{72,8,8}}{dd422:f9925f97} & 16 & 8 \\
\fcode{\qcode{22,4,4}}{se:0ae369f5} & 8 & 8 & \fcode{\qcode{80,8,8}}{dd422:5c0525b8} & 12 & 9 \\
\fcode{\qcode{22,4,4}}{se:0e9b93fb} & 8 & 8 & \fcode{\qcode{80,8,8}}{dd422:e4335f61} & 16 & 9 \\
\fcode{\qcode{22,4,4}}{se:71bb36b3} & 8 & 8 & \fcode{\qcode{80,8,8}}{dd422:9c1f0fe8} & 16 & 10 \\
\cmidrule[0.4pt]{4-6}
\fcode{\qcode{24,4,4}}{qecdb:4192d9c0} & 6 & 8 & \fcode{\qcode{64,12,8}}{dd422:2dd3762b} & 16 & 8 \\
\cmidrule[0.4pt]{4-6}
\fcode{\qcode{24,4,4}}{ml:04df81f0} & 8 & 8 & \fcode{\qcode{64,20,8}}{cons:qrm6} & 16 & 12 \\
\cmidrule[0.4pt]{4-6}
\fcode{\qcode{24,4,4}}{ml:4083158e} & 8 & 8 & \fcode{\qcode{120,48,8}}{cons:quadric4} & 44 & 46 \\
\cmidrule[1.1pt]{4-6}
\fcode{\qcode{24,4,4}}{ml:6d309597} & 8 & 8 & \fcode{\qcode{80,4,12}}{dd422:c9a71361} & 16 & 9 \\
\cmidrule[0.4pt]{4-6}
\fcode{\qcode{24,4,4}}{ml:7b38cf82} & 8 & 8 & \fcode{\qcode{112,6,12}}{cons:concat112} & 24 & 11 \\
\end{tabular}
\end{ruledtabular}
\end{table*}

Table~\ref{tab:full-depth1} collects the \Nfullcodes\ full codes that a sweep of several code databases returned. Appendix~\ref{app:codes} expands the shorthand
that names the database of each entry. Every entry is indecomposable, so no entry is a
direct sum of smaller codes. Lengths run from $n=10$ to $n=120$, and logical
counts from $k=2$ to $k=90$. Distance $4$ dominates the table with $48$
entries, distances $8$ and $3$ follow with $13$ and $8$, and the remainder are
spread over distances $5$, $6$ and $12$.
The best distance is $12$, at the $\qcode{112,6,12}$ and $\qcode{80,4,12}$
entries, and the best rate is $3/4$, at $\qcode{120,90,4}$. Most entries are self-dual, $\CX=\CZ$. Most also lie
in the low-density regime of Section~\ref{sec:ldpcdepthone}, with check weight
at most $8$ and qubit degree at most $8$. Within that regime the
best distance drops to $6$ and the largest logical count drops to $8$. A high
distance and a high rate are thus both compatible with fullness, but not yet at
low check weight.

Full codes can produce every logical Clifford gate via several depth-one code-preserving physical gates, but it is additionally interesting to see whether they can compile the canonical ones --- logical $\overline{\sqrt{Z}}$, $\overline{\sqrt{X}}$, $\overline{\mathrm{CZ}}$, and $\overline{\mathrm{CNOT}}$ gates --- with only \textit{one} depth-one physical gate.
For a given basis of the logical Paulis, we say the code
\emph{addresses} a logical gate at depth one when a single layer acts as exactly that
gate, and as the identity on every other logical qubit. 
We discuss four entries of Table~\ref{tab:full-depth1} in this context.

The $\qcode{10,2,3}$ rotated toric code \texttt{eczoo:dd1f5a7f} is the smallest entry with a
\emph{complete} addressable set. Both $\overline{\sqrt{Z}}_{i}$, both
$\overline{\sqrt{X}}_{i}$, the single $\overline{\mathrm{CZ}}$, and its $X$-basis mirror are each realized in one depth-one physical circuit. 
Four of the six circuits use three $\mathrm{CZ}$ physical gates and
five single-qubit gates; the other two use five $\mathrm{CZ}$ gates.

The $\qcode{14,3,3}$ \texttt{eczoo:880eb7d7}  phantom/constant-excitation code admits a basis for which all three of its single-qubit logical $\overline{\sqrt{X}}$ gates, all three $\overline{\mathrm{CZ}}_{ij}$ gates, and all three of their $X$-basis mirrors are all addressable.

The $\qcode{16,6,4}$ \texttt{ml:e1c9cb94} ``tesseract'' quantum Reed--Muller code
$\mathrm{QRM}(4)$ addresses pairs but not single qubits. 
Twelve of its
$15$ logical pairs carry $\overline{\mathrm{CZ}}$ \emph{and} its dual $X$-basis gate.
Every one of the $24$ gates is realized by a circuit of exactly four
$\mathrm{CZ}$ physical gates and eight single-qubit gates.
The three logical-qubit pairs that carry neither gate are $(0,3)$, $(1,4)$ and $(2,5)$, and the remaining pairs can be associated with the edges of an octahedron. 
Reference~\cite{tansuwannont2026full}
obtains the same pattern of addressable pairs for quantum Reed--Muller codes
from transversal and fold-transversal circuits.

The $\qcode{15,4,3}$ code \texttt{cons:blk45k12} addresses every $X$-basis ($X$-diagonal) gate. All four $\overline{\sqrt{X}}_{i}$ and all six
$\overline{\mathrm{CZ}^{H}}_{ij}$ $X$-diagonal circuits are realized via depth-one physical circuits, forming the largest
complete addressable set in the table.

\subsection{Families of full codes}
\label{sec:depth1twolocal-families}

We recap existing families containing full codes and conjecture several new ones.
We describe each one so that a reader can rebuild it and state how far fullness is established.

\emph{Middle Reed--Muller codes.} 
Define the physical qubits to be in correspondence with the $2^{m}$ points of $\Ftwo^{m}$ for even $m\ge2$. 
A binary word is then the table of values of a Boolean
function. 
Let both classical codes be the Reed--Muller code of order
$m/2-1$, the evaluations of all polynomials of degree at most $m/2-1$. This gives
\begin{equation}
  \mathrm{QRM}(m)=\qcodeb{\,2^{m},\ \tbinom{m}{m/2},\ 2^{m/2}\,},
  \label{eq:fam-qrm}
\end{equation}
a self-dual doubly-even code with the affine group $\mathrm{AGL}(m,2)$ as a
symmetry. Every member is proven full in Ref.~\cite{tansuwannont2026full}. Table~\ref{tab:full-depth1} mentions
$\mathrm{QRM}(4)=\qcode{16,6,4}$ and $\mathrm{QRM}(6)=\qcode{64,20,8}$, while $\mathrm{QRM}(2)=\qcode{4,2,2}$
falls below our distance cut.
The $\qcode{112,6,12}$ code in the table is the concatenation lift: it places
$\mathrm{QRM}(4)$ inside the even $[7,3,4]$ subcode of the Hamming code, which
triples the distance and carries the whole outer gate group along.

\emph{Quantum logic codes.}
Holmes~\cite{holmes2026quantum} gives 
infinite families of codes that carry the full logical Clifford group with a
complete set of addressable gates. His codes grow from a
``core'' $\qcode{n_{0},2,d_{0}}$ code by two commuting operations: 
\textit{tiling} repeats the core
$r$ times, and \textit{concatenation} puts the code inside a self-dual doubly-even
$\qcode{n_{i},1,d_{i}}$ code. 
The result is the full family
\begin{equation}
  \qcodeb{\,n_{0}\,r\,n_{i}^{\ell},\;2r,\;d_{0}\,d_{i}^{\ell}\,}.
  \label{eq:qlc}
\end{equation}
The three cores used are $\qcode{4,2,2}$, $\qcode{18,2,5}$, and $\qcode{20,2,6}$.
The latter two cores are in Table~\ref{tab:full-depth1}, while the $\qcode{4,2,2}$ core falls below our distance cut.
Many of the codes in our table can be used as new cores for the quantum logic construction.
The tiled quantum logic codes are decomposable, so they are omitted from the table.

\emph{Symplectic double codes.}
The symplectic double mapping~\cite{bravyi2010majorana,kovalev2013hyperbicycle,liu2024subsystem,burton2024genons}
sends a stabilizer code $C$ to a CSS code
$\mathfrak{D}(C)$ on twice as many qubits. 
This is a genuinely new code when $C$ is not CSS, while for a CSS ``seed'' it decomposes to $C\oplus C$. 
The codes tabulated here are obtained from a
second step~\cite{bravyi2010majorana,berthusen2025concatenated} that removes that decomposability: each pair of qubits related by the
$ZX$-duality $\tau$ of $\mathfrak{D}(C)$ is encoded
into the two logical qubits of a single $\qcode{4,2,2}$ block. 
The composite map,
$\qcode{n,k,d}\mapsto\qcode{4n,2k,2d}$, yields a generally indecomposable self-dual code.

Reference~\cite{berthusen2025concatenated} studies several concatenated symplectic
doubles and verifies that the full logical Clifford group is reached two-fold transversally on the
$\qcode{16,4,4}$, $\qcode{20,2,6}$ (cf.~\cite{koh2026phantom}), $\qcode{24,8,4}$, $\qcode{36,6,6}$ and
$\qcode{40,8,6}$ codes.
We apply the map to most of their seed codes as well as our full codes at \(n \leq 20\), with several concatenated doublings yielding longer full codes as shown in the table.
We anticipate that applying the map to full codes at larger \(n\) will generate more full codes.

We observe that fullness of a concatenated symplectic double is not always inherited from fullness of the seed code. 
Of the doubles we
certified, roughly half reach $\Sp{2k}$ at depth one and half do not.
We postulate that
such selective fullness is determined by how the seed code's automorphism group (see Sec.~\ref{sec:permutations}) lifts to the doubled logical space.

\emph{Cut-complement codes.} 
To construct this family, associate the physical qubits with the $\binom{m}{2}$ edges of the complete graph
on $m=2^{r}$ vertices for $r\ge3$.
For a vertex set $V$ of even
size, let $\mathrm{cut}(V)$ be the set of edges with exactly one end in $V$. Let
both classical codes be the span of these even cuts together with the all-ones
word. Each cut has weight $|V|(m-|V|)$, which is a multiple of four, so the code
is doubly even and self-dual. The parameters are
\begin{equation}
  \qcodeb{\,\tfrac{m(m-1)}{2},\ \tfrac{(m-1)(m-4)}{2},\ 4\,},
  \qquad m=2^{r},
  \label{eq:fam-johnson}
\end{equation}
where the distance is four because the lightest undetectable pattern is a
four-cycle. 
The rate tends to one while the distance stays at four.
We conjecture that this family is full for every $r\ge3$.
Table~\ref{tab:full-depth1} lists
$\qcode{28,14,4}$ at $r=3$, listed there under its quadric-tower name, and
$\qcode{120,90,4}$ at $r=4$.

\emph{Bipartite grid codes.} To construct this family, take the $ab$ cells of an $a\times b$ grid as the
qubits.
For a row index set $I$ and a column index set $J$ of equal parity, $|I|\equiv|J|
\bmod 2$, let the corresponding word be the symmetric difference of the chosen
rows and columns. Let both classical codes be the span of those words, which is
also the span of the single sums $\mathrm{row}_{i}+\mathrm{col}_{j}$. The parity
condition is essential, and the plain even-cut space gives a different code that
is not full. When $a$ and $b$ are even and $a+b\equiv2\bmod4$, the code is
doubly even and self-dual, with
\begin{equation}
  \qcodeb{\,ab,\ (a-2)(b-2),\ 4\,} .\qquad\text{(grid codes)}
  \label{eq:fam-grid}
\end{equation}
We conjecture that every such code is full. Four members are certified:
$(a,b)=(4,6)$, $(4,10)$, $(6,8)$ and $(8,10)$, with the last three appearing in
Table~\ref{tab:full-depth1} as $\qcode{40,16,4}$, $\qcode{48,24,4}$ and $\qcode{80,48,4}$.
The 
$\qcode{24,8,4}$~\texttt{ml:6457a562} entry is the grid code with $(a,b)=(4,6)$. 

\emph{Quadric tower.} 
Fix
an elliptic quadratic form $Q_{0}$ on $\Ftwo^{2j}$, and take its zero set as the
qubits, of size $2^{2j-1}-2^{j-1}$. For both classical codes take the
Reed--Muller code of order $j-2$ in $2j$ variables, punctured to that zero set,
\begin{equation}
  \CX=\CZ=\mathrm{RM}(j-2,2j)\big|_{Q_{0}^{-1}(0)} .
  \label{eq:fam-quadric-C}
\end{equation}
The
resulting family is
\begin{equation}
  QT_{j}=\qcodeb{\,2^{2j-1}-2^{j-1},\ k_{j},\ 2^{j-1}\,},
  \label{eq:fam-quadric}
\end{equation}
for $j=3,4,5,6,\dots$ and $k_{j}=14,48,166,584,\dots$.

We verified that the first two members are full, shown in Table~\ref{tab:full-depth1}
as $QT_{3}=\qcode{28,14,4}$ and $QT_{4}=\qcode{120,48,8}$.
We do not believe the rest of the family is full, but note this code because its permutations form the minimal-permutation-degree representation of the $\Sp{2j}$ group.
Letting $B$ be the polar
form of $Q_{0}$, the map $a\mapsto Q_{0}+B(a,\cdot)$ then identifies the qubits
of $QT_{j}$ with one Arf class of quadratic forms on $\Ftwo^{2j}$, and that
class is the smallest set on which $\Sp{2j}$ can act faithfully~\cite{arf1941quadratische,cooperstein1978minimal}.
Its minimal
permutation degree is exactly $2^{j-1}(2^{j}-1)=2^{2j-1}-2^{j-1}$, the length
in Eq.~\eqref{eq:fam-quadric}. 
This saturates a bound discussed in Ref.~\cite{morris2026constraints}.

%% file: sec_ldpcdepthone.tex
\begin{table*}[t]
\caption{\label{tab:ldpc-depth1}%
A list of \Nldpccodes\ indecomposable codes, most of them instances of
low-density parity-check families, that
admit depth-one two-local code-preserving circuits without realizing
the full logical Clifford group.  For each we give the check weight and
qubit degree of a generating set of the stabilizer group,
and the order of the logical image 
of the group generated by the depth-one two-local circuits.  
That order is a lower
bound because our sampling of matchings is not exhaustive.  Codes are ordered by distance, then by $k$, then by the image
order; thick rules separate distance groups and thin rules separate $k$
groups within a distance. Each code is named by its
parameters followed by its source database and the leading octet of its
uuid or tag; Appendix~\ref{app:codes} expands the database names and discusses the algorithms used.}
\begin{ruledtabular}
\footnotesize
\setlength{\tabcolsep}{3pt}
\setlength{\aboverulesep}{0pt}\setlength{\belowrulesep}{0pt}
\renewcommand{\arraystretch}{1.25}
\begin{tabular}{lrrr@{\hspace{1.5em}}lrrr}
Code & Wt & Deg & \#Gates & Code & Wt & Deg & \#Gates \\
\hline
\lcode{\qcode{18,6,3}}{se:61280883} & 8 & 6 & 4423680 & \lcode{\qcode{36,4,6}}{coset2bga:c8dc53eb} & 6 & 6 & 60 \\
\lcode{\qcode{18,6,3}}{se:46ef7ce7} & 6 & 8 & 9.06e9 & \lcode{\qcode{54,4,6}}{kasai:e4d756a1} & 6 & 6 & 60 \\
\cmidrule[0.4pt]{1-4}
\lcode{\qcode{20,8,3}}{se:fa9a8d51} & 10 & 6 & 1.81e10 & \lcode{\qcode{108,4,6}}{toricdir:e0a5b7a6} & 5 & 5 & 120 \\
\lcode{\qcode{24,8,3}}{2bga:60fab289} & 8 & 8 & 3.40e25 & \lcode{\qcode{36,4,6}}{coset2bga:2f346b16} & 8 & 8 & 1536 \\
\cmidrule[0.4pt]{5-8}
\lcode{\qcode{24,8,3}}{cc:dbf27139} & 8 & 8 & 1.50e26 & \lcode{\qcode{40,6,6}}{cons:sdbb40} & 8 & 8 & 6144 \\
\cmidrule[0.4pt]{1-4}\cmidrule[0.4pt]{5-8}
\lcode{\qcode{36,12,3}}{2bga:b6c7c9b2} & 8 & 8 & 35389440 & \lcode{\qcode{42,8,6}}{coset2bga:a7b06b2a} & 8 & 8 & 3024 \\
\cmidrule[1.1pt]{1-4}
\lcode{\qcode{32,2,4}}{mm:f63d251a} & 4 & 4 & 48 & \lcode{\qcode{48,8,6}}{2bga:baa0c07b} & 8 & 8 & 3072 \\
\cmidrule[0.4pt]{1-4}
\lcode{\qcode{72,4,4}}{toricdir:28cebc04} & 5 & 5 & 720 & \lcode{\qcode{100,8,6}}{toricdir:9920a121} & 7 & 7 & 8160 \\
\lcode{\qcode{36,4,4}}{toricdir:f04e33b3} & 5 & 5 & 1440 & \lcode{\qcode{100,8,6}}{toricdir:dadba010} & 7 & 7 & 8160 \\
\lcode{\qcode{108,4,4}}{toricdir:1a18b6dd} & 5 & 5 & 1440 & \lcode{\qcode{48,8,6}}{coset2bga:3711a944} & 6 & 6 & 737280 \\
\lcode{\qcode{20,4,4}}{ml:f593c885} & 8 & 8 & 8847360 & \lcode{\qcode{48,8,6}}{2bga:83cff0f7} & 8 & 8 & 2.01e8 \\
\cmidrule[0.4pt]{1-4}
\lcode{\qcode{32,8,4}}{2bga:606c7fbd} & 8 & 8 & 25165824 & \lcode{\qcode{54,8,6}}{coset2bga:077dcc3e} & 6 & 6 & 1.97e9 \\
\cmidrule[1.1pt]{5-8}
\lcode{\qcode{32,8,4}}{2bga:2ee93e2b} & 8 & 8 & 7.25e10 & \lcode{\qcode{56,8,7}}{cc:5bb93516} & 8 & 8 & 48 \\
\lcode{\qcode{54,8,4}}{kasai:ec88c3e3} & 6 & 6 & 5.44e11 & \lcode{\qcode{56,8,7}}{2bga:eb2280f4} & 8 & 8 & 3072 \\
\lcode{\qcode{24,8,4}}{ml:50c0d47a} & 8 & 8 & 7.60e16 & \lcode{\qcode{56,8,7}}{2bga:8d5bc2dd} & 8 & 8 & 49152 \\
\cmidrule[0.4pt]{1-4}\cmidrule[0.4pt]{5-8}
\lcode{\qcode{48,12,4}}{2bga:c5f2a50c} & 8 & 8 & 6.44e42 & \lcode{\qcode{62,10,7}}{coset2bga:3fa86309} & 7 & 7 & 32736 \\
\cmidrule[0.4pt]{1-4}\cmidrule[0.4pt]{5-8}
\lcode{\qcode{64,16,4}}{2bga:502aeab6} & 8 & 8 & 1.29e10 & \lcode{\qcode{62,12,7}}{coset2bga:10961d90} & 8 & 8 & 196416 \\
\cmidrule[0.4pt]{1-4}\cmidrule[1.1pt]{5-8}
\lcode{\qcode{56,18,4}}{kasai:b824a37f} & 8 & 6 & 1524096 & \lcode{\qcode{40,4,8}}{2bga:25a18c5e} & 8 & 8 & 48 \\
\cmidrule[1.1pt]{1-4}
\lcode{\qcode{24,4,5}}{2bga:c64fd1aa} & 8 & 7 & 48 & \lcode{\qcode{48,4,8}}{coset2bga:3e15d372} & 6 & 6 & 60 \\
\lcode{\qcode{72,4,5}}{toricdir:66ff67b2} & 5 & 5 & 120 & \lcode{\qcode{60,4,8}}{kasai:98c88188} & 6 & 6 & 60 \\
\cmidrule[0.4pt]{5-8}
\lcode{\qcode{54,4,5}}{toricdir:62ccead5} & 5 & 5 & 720 & \lcode{\qcode{56,6,8}}{coset2bga:75935f40} & 6 & 6 & 504 \\
\cmidrule[0.4pt]{1-4}\cmidrule[0.4pt]{5-8}
\lcode{\qcode{40,8,5}}{2bga:b7c13b5b} & 8 & 8 & 3072 & \lcode{\qcode{64,8,8}}{2bga:b79b8ef6} & 8 & 8 & 196608 \\
\lcode{\qcode{40,8,5}}{coset2bga:396a167c} & 8 & 8 & 3072 & \lcode{\qcode{64,8,8}}{cons:sdbb64} & 8 & 9 & 3.22e9 \\
\cmidrule[1.1pt]{5-8}
\lcode{\qcode{40,8,5}}{2bga:58149515} & 8 & 8 & 49152 & \lcode{\qcode{54,6,9}}{2bga:4fc3bad7} & 8 & 8 & 360 \\
\cmidrule[1.1pt]{5-8}
\lcode{\qcode{40,8,5}}{cc:100b4262} & 8 & 8 & 786432 & \lcode{\qcode{56,4,10}}{2bga:38f3aaf6} & 8 & 8 & 48 \\
\cmidrule[1.1pt]{5-8}
\lcode{\qcode{50,8,5}}{toricdir:69165196} & 7 & 7 & 979200 & \lcode{\qcode{120,8,12}}{cons:sdbb120} & 8 & 8 & 34560 \\
\cmidrule[0.4pt]{1-4}\cmidrule[0.4pt]{5-8}
\lcode{\qcode{42,10,5}}{coset2bga:5d49d71b} & 7 & 7 & 30240 & \lcode[gross]{\qcode{144,12,12}}{cons:gross144} & 6 & 6 & 460800 \\
\cmidrule[0.4pt]{1-4}\cmidrule[1.1pt]{5-8}
\lcode{\qcode{60,12,5}}{2bga:6bce1312} & 8 & 8 & 17280 & \lcode{\qcode{152,6,16}}{cons:sdbb152} & 8 & 8 & 6144 \\
\cmidrule[1.1pt]{1-4}\cmidrule[0.4pt]{5-8}
\lcode{\qcode{22,2,6}}{ml:f660ad96} & 8 & 8 & 48 & \lcode{\qcode{160,8,16}}{cons:sdbb160} & 8 & 8 & 49152 \\
\end{tabular}
\end{ruledtabular}
\end{table*}

\section{Other codes with two-fold transversal gates}
\label{sec:ldpcdepthone}

This section provides lower bounds on the logical image of the two-fold transversal group for instances of several QLDPC code families, together with a few further database entries whose image falls short of the full logical Clifford group. 
We write
\begin{equation}
  \#\mathrm{Gates}=\bigl|\lact(\Ndep)\bigr|\qquad\text{(logical image size)}
  \label{eq:ldpc-gates}
\end{equation}
for the number of distinct logical Clifford actions that stacks of depth-one
two-local layers produce. The largest value this can take is the order of the
logical Clifford group,
\begin{equation}
  \bigl|\Sp{2k}\bigr|=2^{k^{2}}\prod_{i=1}^{k}\bigl(4^{i}-1\bigr),
  \label{eq:ldpc-spmax}
\end{equation}
which the full codes of Section~\ref{sec:depth1twolocal} attain.
Details of our numerical techniques are described in Appendix~\ref{app:codes}.

The lower bound on the logical reach of $\Ndep$, quantified by the above number of gates, varies wildly across
these instances. 
Table~\ref{tab:ldpc-depth1} lists a few codes with at least $48$ logical gates. This is not meant to be exhaustive, as we also found codes for which our sampling yielded no depth-one gates at all. 
The orders presented span
$41$ decimal orders of magnitude, from $48$ to about $6\times10^{42}$.

\subsection{The gross code}
\label{sec:ldpcdepthone-gross}

The bivariate-bicycle $\qcode{144,12,12}$ ``gross'' code~\cite{bravyi2024high} is a prominent instance of a QLDPC family; it is not full based on our estimates, yet it admits a large set of Clifford gates generated by depth-one two-local circuits.
The only gates we were able to find stem from a symmetry of the code. 
The code's $ZX$ duality $\sigma$ that exchanges the
two blocks of the construction is fixed-point free, so its diagonal-circuit pattern is a
perfect matching of $\mathrm{CZ}$ gates. All $72$ translates of $\sigma$ are
valid, for both the $Z$-diagonal and the $X$-diagonal circuits, and $71$ of the translations are permutation
automorphisms and hence CNOT gates.

Together, these generators produce a subgroup of order $460\,800$ inside the logical
image,
\begin{equation}
  \lact(\Ndep)\supseteq \Perm{5}\times\bigl((C_{2})^{6}\!:\!A_{5}\bigr).
  \label{eq:gross-group}
\end{equation}
Above, $A_{n}$ is the alternating group, while $\Perm{n}$ is the symmetric group.
The depth-one $Z$- and $X$-diagonal families each span only a
six-dimensional subspace of the $78$-dimensional space of symmetric
$12\times12$ matrices (corresponding to diagonal circuits of arbitrary depth). 
For
comparison, $|\Sp{24}|\approx1.4\times10^{90}$, about $3\times10^{84}$ times larger than the size of the group in Eq.~\eqref{eq:gross-group}.

The comparison with the automorphism group of
Section~\ref{sec:permutations} is instructive. That group has order
$|\Gaut|=288$ for this code, and its logical image has order
$144$~\cite{sayginel2025faulttolerant,sayginel2024autqec}. Both are subgroups of~\eqref{eq:gross-group}, because the two-fold transversal group includes any
transversal \(Z,X\)-diagonal circuits as well as, in the case of this code, the
permutations.

\subsection{Codes with addressable gates}
\label{sec:ldpcdepthone-addressable}

We highlight three entries of Table~\ref{tab:ldpc-depth1} in the context of addressable gates.

The $\qcode{24,8,4}$ code \texttt{ml:50c0d47a} addresses nine of its $28$ logical
pairs with $\overline{\mathrm{CZ}}$, and nine further pairs with the dual $X$-basis
gate. 
Ten of those $18$ layers are realized by a circuit with four $\mathrm{CZ}$ physical gates and
eight single-qubit gates. 
That is the same pattern as every gate of
$\mathrm{QRM}(4)$ in Section~\ref{sec:depth1twolocal}, reproduced here at check
weight $8$ and qubit degree $8$. 

The $\qcode{22,2,6}$ code \texttt{ml:f660ad96} has a small logical image with
$48$ gates, but addresses four
of its six targets: $\overline{\sqrt{Z}}_{1}$, $\overline{\sqrt{X}}_{0}$,
$\overline{\mathrm{CZ}}_{01}$ and $\overline{\mathrm{CZ}^{H}}_{01}$. 
All four gates are realized by depth-one two-local circuits on the \emph{same} seven-pair matching $\Mt$
and differ only in their single-qubit physical gates. 

In contrast, the $\qcode{24,8,3}$ code \texttt{2bga:60fab289} has a large depth-one logical image of order $3.4\times10^{25}$,
and its automorphism group (see Sec.~\ref{sec:permutations}) has order $48\,384$.
Nonetheless, we were not able to find any addressable gates in any Pauli basis.

%% file: sec_permutations.tex
\section{Including permutations}
\label{sec:permutations}

Single-qubit Clifford operations and permutations
preserve the weight of every Pauli operator. They thus do not change the distance of
any code.
These form the group $\Sp{2}\wr\Perm{n}$,
the wreath product of the single-qubit Clifford group with the permutation group.
Those transformations that preserve the code form the code's \emph{automorphism group}
\begin{equation}
  \Gaut=\bigl(\Sp{2}\wr\Perm{n}\bigr)\cap\Nfull .\qquad\text{(automorphism group)}
  \label{eq:aut-gaut}
\end{equation}
The automorphism group is traditionally \textit{the} symmetry group of a code,
and it is relevant to fault tolerance in settings where permutations are simple to implement.
This section places $\Gaut$ among the gate families of the
paper. Appendix~\ref{app:permutations} gives the proof details.

The automorphism group of a CSS code has two key subgroups. Its single-qubit Cliffords are the
transversal gates from a previous section. Its pure permutations are the
symmetries the two
classical codes share~\cite{grassl2013leveraging},
\begin{equation}
\begin{aligned}
  \Sp{2}^{\,n}\cap\Gaut &= \Ntr,\\
  \Perm{n}\cap\Gaut &= \mathrm{Aut}(\CX)\cap\mathrm{Aut}(\CZ).
\end{aligned}
  \label{eq:aut-parts}
\end{equation}
But $\Gaut$ is larger than these two
subgroups put together. It also contains gates in which a permutation compensates a
single-qubit Clifford. 
In other words, each factor alone does not preserve the code, but the
product is code-preserving.

Whether a permutation costs anything depends on the model of computation. Some
architectures allow one to move qubits efficiently~\cite{bluvstein2022transport,bluvstein2024logical}.
There a compensating
permutation is free as well, so adjoining permutations enlarges the set of
available transversal gates at no cost. 
Every permutation is a product of two involutions and hence of just two depth-one layers of disjoint SWAP gates~\cite{alon1994routing}. Those two layers, however, need not individually preserve the code, and depth-one two-local \emph{code-preserving} layers cannot construct some permutations at all (see Figure~\ref{fig:hasse}). The automorphism group and the two-fold transversal group are therefore different objects.

\subsection{Normal form and how to calculate it}
\label{sec:permutations-nf}

Recall the fourth family of Section~\ref{sec:classes}. A partial duality is a Hadamard on
a subset of the qubits followed by a permutation, and these elements form
a subgroup $\PDgp$ of $\Gaut$. The main result here is that Hadamards, together with
the two diagonal families, yield a normal form for the entire automorphism group.
In principle a permutation might be realizable only with diagonal circuits or other
single-qubit gates
as compensating elements. We show that a
Hadamard always suffices.

We show that every automorphism factors as a partial duality followed by a $Z$-diagonal
circuit and an $X$-diagonal circuit,
\begin{equation}
  \Gaut=\PDgp\cdot\Eupo\cdot\Edno\qquad\text{(normal form)}
  \label{eq:aut-nf}
\end{equation}
(Theorem~\ref{thm:autnf}). 
The three factors can be read off the local blocks
$A,B,C,D$ of the gate. 
Writing $(g;\pi)$ for the automorphism that dresses the
permutation $\pi$ with the single-qubit tuple $g$, we have
\begin{equation}
\begin{gathered}
  (g;\pi)=(\Hd{a};\pi)\,\Ush{q}\,\Lsh{p},\\
  a=B\pw C,\quad q=B\pw D,\quad p=A\pw C,
\end{gathered}
  \label{eq:aut-abc}
\end{equation}
and the factorization is unique once we require the two diagonal-circuit supports to be
disjoint. The Hadamard subset $a=B\pw C$ is a single pointwise product of blocks,
and the two remaining diagonal circuits are automatically code-preserving. As with
the transversal normal form of Section~\ref{sec:transversal}, three factors
suffice. But there all three were diagonal circuits, a subset Hadamard being a product of
three of them, whereas here the single
partial-duality factor carries the permutation content.

To create a set of generators for the automorphism group,
one need not carry the partial dualities as a whole family. It
suffices to adjoin to the two diagonal families one partial duality for each generator
of the permutations that actually occur, that is, the permutations preserving the
code up
to single-qubit Cliffords. 
The
lifting lemma behind the normal form (Lemma~\ref{lem:pureH}) is exactly the
guarantee that Hadamards alone can always lift a permutation occurring in $\Gaut$.

We cover two notable edge cases of the above normal form. When the Hadamard subset $a$ is empty, the permutation
preserves each classical code separately, an ordinary symmetry of $\CX$ and of
$\CZ$~\cite{grassl2013leveraging}. When the subset is everything, the
permutation carries $\CX$ onto $\CZ$ and back, a full duality.

The normal form splits the automorphism group into an efficiently obtainable transversal part and an expensive permutation-based part.
The two parameter codes of the transversal group from Sec.~\ref{sec:transversal} are nullspaces found by
Gaussian elimination, and the closed formula $a=B\pw C$ gives the Hadamard subset
lifting any permutation, with no search over single-qubit dressings. 
Everything expensive
sits in the permutations themselves. To determine them is to compute the code's
automorphisms modulo single-qubit Cliffords, a graph canonical-form problem~\cite{cross2025small}.
In this approach, the permutation search remains the computational bottleneck.

%% file: sec_extension.tex
\section{Possible extensions}
\label{sec:extensions}

We identify several natural extensions of the formalism: physical circuit depth optimization, interpretation of the results from the homological perspective, extension to non-CSS qubit stabilizer codes and to stabilizer codes over other alphabets, and a potential definition of a two-local analogue of the automorphism group $\Gaut$.

\subsection{Optimizing depth}
\label{sec:ext-depth}

Our general gate group results outline which circuits preserve a code, but they do not specify which of those
circuits has the lowest circuit volume.
One should further be able to optimize in the defined space to obtain desired circuits with favorable parameters.

The block
decomposition~\eqref{eq:N-structure} determines the free parameters of a circuit with fixed logical action.
The encoded qubits see only the $\Sp{2k}$ factor, so the padding $\Urad$
and the relabeling $\GLg{r}$ can vary freely. 
Each logical Clifford thus can be rendered in $|\Urad|\,|\GLg{r}|$ different ways as a physical circuit~\eqref{eq:N-fibre}.
We can therefore fix a desired logical block and optimize those remaining blocks to obtain minimal-depth representatives.
Specializing to diagonal circuits should simplify the optimization without losing the ability to do arbitrary logical Clifford gates (since the two diagonal families generate the whole code-preserving group, per our result).

We note several related but complementary approaches.
Reference~\cite{bauer2026finding} caps the ansatz gates per qubit and thus bounds the depth, but only over the prescribed ansatz.
Reference~\cite{benhemou2026automated} puts the depth cap into the solver
itself.
Fixing the physical representatives of the logical operators leaves the $2^{r(r+1)/2}$ destabilizer paddings of Ref.~\cite{rengaswamy2018synthesis}, a slice of the ``fiber'' \eqref{eq:N-fibre} over which Ref.~\cite{popov2026optimized} compiles logical Clifford circuits.
Reference~\cite{patra2025targeted} builds targeted logical Clifford circuits for hypergraph product codes while holding the stabilizers fixed pointwise, i.e., setting the relabeling factor $\GLg{r}$ to the identity, and notes that releasing it may shorten the circuits.
Reference~\cite{popov2026optimized} notes that most methods filter for
depth only after they find a particular circuit. We leave open how to combine these methods
with the parametrization above, or with a code-agnostic
optimizer~\cite{bravyi2021templates,bravyi2022optimal}.

\subsection{Homology}
\label{sec:ext-homology}

A CSS code is often presented as the middle term of a three-term chain complex
of binary vector spaces \cite{kitaev1997quantum,bombin2007homological,bravyi2014homological,breuckmann2018thesis,zou2025algebraic}.
One boundary map records the $X$-checks and the other records
the $Z$-checks. In that presentation, the supports of the $Z$-stabilizers are
the boundaries of the complex, and the supports of the $X$-stabilizers are the
coboundaries.
We can then interpret the gate families of Section~\ref{sec:classes} as maps
on this complex.
A valid $Z$-diagonal circuit is a symmetric mapping from coboundaries to boundaries.
A valid $X$-diagonal circuit is a symmetric mapping from boundaries to coboundaries.
A CNOT network is an
invertible map preserving the coboundaries and boundaries individually.
A partial duality exchanges the chain and
cochain descriptions.

Many constructions develop gates for codes using this language, for example through
cup products and Poincar\'e
duality~\cite{breuckmann2026cups,li2025poincare,zhu2025nonclifford,haruna2026homological}
and through chain maps between complexes~\cite{benhemou2026automated}.
Such constructions
select distinguished points inside the group of diagonal circuits, but it is not clear whether such specific formulations yield all possible diagonal circuits.
It will be useful to recast the present formalism in terms of geometric properties of the underlying chain complex and to
determine which gates admit sparse geometric representatives.

\subsection{Extension to non-CSS codes and other alphabets}
\label{sec:ext-noncss}

Several of our results are straightforwardly generalizable to non-CSS codes.
The Pauli dressing theorem
holds for an arbitrary stabilizer code (Theorem~\ref{thm:dressing}).
The structural description of the entire group $\Nfull$ of code-preserving Clifford circuits in
Section~\ref{sec:general} is equally general, since its only condition is that
stabilizers commute.
For any stabilizer code, linear conditions on the code's label space still determine
the groups of valid diagonal circuits, CNOT networks, and partial dualities.

However, diagonal families do not generate the full code-preserving group of more general codes.
For example, consider a single qubit stabilized by the Pauli $Y$. Its label space is
spanned by $(1|1)$, which does not split into $X$-type and $Z$-type components.
No nonzero diagonal circuit of either kind preserves this space, so both diagonal families are
trivial.
On the other hand, the code-preserving group is nontrivial. 
It contains the Hadamard,
which exchanges $X$ and $Z$ and fixes $Y$ up to a sign, and a Pauli correction
removes that sign. 
For such a code, the diagonal circuits generate a proper subgroup of
$\Nfull$.
A possible fix may be to add the duality generators, which are redundant for CSS codes but may become
essential for non-CSS codes.

Several of our results should generalize to qudit codes over Galois fields $\mathbb{F}_q$ for $q$ a prime power as well as other qudit and bosonic alphabets.
We expect the reduction modulo Pauli operators of Appendix~\ref{app:setup} to extend to any abelian group, not only $\Ftwo$~\cite{bauer2026quadratic}, since the stabilizer formalism and its normalizer gates already exist for such groups~\cite{hostens2005stabilizer,vandennest2013normalizer,bermejovega2014normalizer,bermejovega2016normalizer,moses2024clifford}.
The four families should survive that change of alphabet, because each is defined by the block it occupies in the symplectic matrix rather than by anything specific to $\Ftwo$. 
The two diagonal families are the off-diagonal blocks, the CNOT family the diagonal ones, and the partial dualities the block exchange.
What does not obviously survive is the generation result itself, which rests, in part, on every diagonal circuit being its own inverse.

\subsection{Two-fold automorphism group}
\label{sec:ext-perm}

The code automorphism group of Section~\ref{sec:permutations} is not merely the
transversal gates and the permutation symmetries put together. It also contains
products of such gates that are code-preserving only when combined.
We propose to extend this compensation mechanism to depth-one two-local layers.

Let $(g;\pi)$ be the permutation $\pi$ compensated by the depth-one layer $g$. We
now draw $g$ from the two-local layers of some matching $\Mt$.
We define the \textit{two-fold automorphism group},
\begin{equation}
  \bigl\langle\,(g;\pi):\;\pi\in\Perm{n},\ g\in\DM\ \text{for some }\Mt,\
  (g;\pi)\in\Nfull\,\bigr\rangle ,
  \label{eq:ext-perm-group}
\end{equation}
where two-local circuits now compensate the permutations.

Setting $\pi$ to be the trivial permutation implies that this group contains the two-fold transversal group $\Ndep$. Setting the matching to the empty one implies that the
group contains the automorphism group $\Gaut$.
But this group is not just a union of these two subgroups.
We show in Appendix~\ref{app:twolocalaut} that the $\qcode{9,3,2:186339}$ code~\cite{cross2025small,crossqiskitqec2025} admits a depth-one two-local circuit that preserves the code only when paired with a permutation.
The two-fold automorphism group of that code realizes the full-Clifford logical action, while the two-fold transversal group does not.
Such a circuit provides a new avenue for fault-tolerant logical gates in computation models where permutations are a cheap resource.

%% file: sec_conclusion.tex
\section{Conclusion}
\label{sec:conclusion}

We have characterized and, in some cases, defined several groups of Clifford circuits relevant to fault-tolerant computation with Calderbank--Shor--Steane (CSS) codes.
We sort most current literature on such gates into four families, each with its own ingredients and its own block of the binary symplectic matrix: $Z$- and $X$-diagonal circuits, CNOT circuits, and Hadamards+permutations (partial dualities).
Their code-preserving conditions can be stated simply in terms of the two classical codes defining the CSS code.

Importing long-established results from symplectic geometry and representation theory over binary spaces \cite{artin1957geometric,carter1972simple,taylor1992geometry,grove2002classical,malle2011linear}, we show that the two diagonal families generate every code-preserving physical Clifford for a CSS code with nonzero \(X\)- and \(Z\)-type stabilizer spaces. 
They consequently
realize every logical Clifford operation. 
The valid diagonal-circuit spaces can be obtained efficiently by Gaussian elimination~\cite{webster2023transversal,koh2026phantom,bauer2026finding}, and we quantify the physical representatives of the same logical gate that can be used for circuit optimization.

We then focus on subgroups consisting of circuits with restricted locality. We show that every depth-one one-local (read: transversal) gate
can be expressed using at most three diagonal-circuit layers, while two layers suffice for connected non-self-dual codes. 
Going beyond this group to depth-one two-local circuits on a fixed matching of qubits, we show that all code-preserving depth-one
two-local circuits are generated by diagonal circuits and CNOT layers on the same matching.
The group generated by such circuits over \textit{all} possible matchings --- the two-fold transversal group --- can therefore be studied by sampling depth-one diagonal circuits and CNOT gates on arbitrary matchings. 

The two-fold transversal group can be smaller than the full code-preserving
group, but its logical image can nevertheless be the complete symplectic group over the \(k\) logical qubits due to the availability of multiple matchings.
We identify \Nfullcodes\ codes that admit the full logical Clifford group via stacks of depth-one two-local code-preserving circuits, and study the size of this logical image for a sample of \Nldpccodes\ further codes, mostly instances of quantum low-density parity-check (QLDPC) families.

We also determine that the automorphism group --- the group of transversal gates and permutations --- is incomparable to the two-fold transversal group, i.e., neither group is a subgroup of the other.
We show that one partial duality and two transversal \(Z,X\)-diagonal circuits give a normal
form for elements of this group.

In summary, we provide an underlying structure
to the group of all available Clifford circuits, supply recipes for finding depth-one code-preserving circuits, and give simple circuit decompositions in the
transversal and automorphism-group settings. 
We leave the problem of optimizing the resulting circuits for gate count, depth, geometry, or fault propagation to future platform-dependent analyses.

%% file: app_codes.tex
\section{Code glossary and details about numerical methods}
\label{app:codes}

\begin{table*}[t]
\caption{\label{tab:databases}%
The code databases and constructions that supply the entries of
Tables~\ref{tab:full-depth1} and~\ref{tab:ldpc-depth1}. Database entries are
named by their parameters, a source shorthand, and the leading octet of their
uuid. A \texttt{cons:} tag names a construction and replaces the uuid.
The complete records behind Tables~\ref{tab:full-depth1}
and~\ref{tab:ldpc-depth1} are stored as machine-readable JSON files that will be
available in the certificate repository of Ref.~\cite{albert2026certificate}.}
\begin{ruledtabular}
\footnotesize
\renewcommand{\arraystretch}{1.15}
\begin{tabular}{lp{0.66\textwidth}}
Source & Description \\
\hline
\texttt{se}~\cite{bombin2006topological,bombin2007exact}
  & Self-dual even codes: $\CX=\CZ$, with every codeword of even weight. The $\qcode{18,4,4}$
    $6.6.6$ color code also appears in the Error Correction Zoo. \\
\texttt{ml}~\cite{hastings2017small,haah2017magic,delfosse2020short,prabhu2024distance,hastings2025cyclic,berthusen2025concatenated,koh2026phantom,reichardt2026fireandice}
  & Self-dual doubly-even codes: $\CX=\CZ$, with every codeword weight a
    multiple of four. Three entries also appear in the Error Correction Zoo:
    the $\qcode{16,6,4}$ tesseract color code, the $\qcode{18,2,5}$ BCC
    code, and the $\qcode{20,2,6}$ B\&C phantom code. \\
\texttt{eczoo}~\cite{albert2026handbook,kovalev2012improved,kovalev2013hyperbicycle,koh2026phantom,lai2025faulttolerant,paetznick2024demonstration}
  & The Error Correction Zoo, which covers the $\qcode{10,2,3}$
    rotated toric, $\qcode{10,2,3}$ binarized Galois-qudit,
    $\qcode{14,3,3}$ constant-excitation/phantom, and $\qcode{12,2,4}$
    carbon codes. \\
\texttt{qecdb}~\cite{qecdb}
  & An online database of stabilizer codes run by Simon Burton. \\
\texttt{2bga}~\cite{lin2023twoblock,goto2024manyhypercube,liang2025selfdual,berthusen2025concatenated}
  & Two-block group-algebra codes, which cover the $\qcode{16,4,4}$
    symplectic-double code. \\
\texttt{coset2bga}~\cite{aydin2026coset}
  & Coset two-block group-algebra codes.\\
\texttt{toricdir}~\cite{gu2026nearest}
  & Bivariate bicycle codes. \\
\texttt{kasai}~\cite{kasai2026breaking}
  & Block-circulant codes whose blocks are permutation matrices of affine
    $\mathbb{Z}_{P}$ maps $x\mapsto ax+b$. \\
\texttt{cc}~\cite{gu2026qgpu}
  & Clustered-cyclic codes. \\
\texttt{copycup}~\cite{tiew2026copycup}
  & Abelian balanced-product codes. \\
\texttt{mm}~\cite{mian2026multivariate,quantumclifford}
  & Multivariate multicycle codes, with some taken from the
    \texttt{QuantumClifford.jl} package. \\
\hline
\texttt{cons:blk45k12}, \texttt{cons:blk45k18}
  & The indecomposable blocks of two decomposable \texttt{qecdb}~\cite{qecdb}
    codes: $\qcode{45,12,3}$ and $\qcode{45,18,3}$ are direct sums of three copies of
    the $\qcode{15,4,3}$ and $\qcode{15,6,3}$ codes listed here. \\
\texttt{cons:qrm6}
  & The middle Reed--Muller code $\mathrm{QRM}(6)$ of Eq.~\eqref{eq:fam-qrm}. \\
\texttt{cons:johnson16}
  & The cut-complement code at $m=16$, Eq.~\eqref{eq:fam-johnson}. \\
\texttt{cons:grid}$a$\texttt{x}$b$
  & Bipartite grid codes of Eq.~\eqref{eq:fam-grid}, at $(a,b)=(4,10)$,
    $(6,8)$ and $(8,10)$. \\
\texttt{cons:quadric}$j$
  & The quadric tower members $QT_{3}$ and $QT_{4}$ of
    Eq.~\eqref{eq:fam-quadric}. \\
\texttt{cons:concat112}
  & The concatenation lift $\qcode{112,6,12}$. It places $\mathrm{QRM}(4)$ inside
    the even $[7,3,4]$ subcode of the Hamming code, one inner block per outer
    qubit, yielding the map $\qcode{n,k,d}\mapsto\qcode{7n,k,3d}$. \\
\texttt{dd422}~\cite{bravyi2010majorana,liu2024subsystem,burton2024genons,berthusen2025concatenated}
  & The $\qcode{4,2,2}$-concatenated symplectic doubles. \\
\texttt{cons:sdbb}$n$~\cite{liang2025selfdual}
  & Self-dual bivariate bicycle codes. \\
\texttt{cons:gross144}~\cite{bravyi2024high}
  & The gross code, discussed in Section~\ref{sec:ldpcdepthone-gross}. \\
\end{tabular}
\end{ruledtabular}
\end{table*}

\subsection{Code glossary.}
The code tables of Sections~\ref{sec:depth1twolocal} and~\ref{sec:ldpcdepthone}
name each database entry by its parameters, source shorthand, and the leading
octet of its uuid or tag.
Table~\ref{tab:databases}
gives the source or definition of each shorthand.
The \texttt{ml} and \texttt{se} shorthands denote databases currently in preparation.
Several other constructions are members of the infinite
families of Section~\ref{sec:depth1twolocal-families}, and the table points to
the equation that defines each one.
Some constructions are one-off
lifts, and the table describes them in place.

\subsection{Numerical methods}

The gate searches of Sections~\ref{sec:depth1twolocal}
and~\ref{sec:ldpcdepthone} combine exact calculations with bounded searches.

Since the number of matchings grows superexponentially with $n$, we
enumerate maximum-matching orbits exactly when 
feasible, but otherwise are forced to sample.
We can also generate matchings from code symmetries such as involutions in the simultaneous permutation automorphism group and involutive $ZX$ dualities.

We compute the simultaneous permutation automorphisms of $\CX$ and $\CZ$
from a colored incidence graph. Its vertices represent coordinates and
codewords. A codeword color records its CSS side and its weight. For each side,
we add complete weight classes in increasing order until their span is the full
code. We enumerate these classes with the Brouwer--Zimmermann method of
Refs.~\cite{zimmermann1996integral,grassl2006searching}. Its information-set
and residual bounds certify that each class is complete. The coordinate action
of the graph automorphism group is then exactly the simultaneous permutation
automorphism group.
For self-dual codes, if the low-weight search does not span within its budget,
we fall back to enumerating the full codeword set. This exact route is
exponential in the code dimension~\cite{leon1982computing}.
We use \textsc{bliss} and \textsc{nauty} for our calculations~\cite{junttila2007bliss,mckay2014practical}.

We use the same graphs for permutation equivalence and for $ZX$
dualities. For a $ZX$ duality, the graph map interchanges the two CSS codes.
We find such maps with the VF2 algorithm~\cite{cordella2004subgraph}.

For each matching, we construct code-preserving gates from the three
families in Eq.~\eqref{eq:dep-gen}. The conditions on the two diagonal families
are linear. One nullspace calculation for each family gives its generators
exactly. A CNOT network has the symplectic block form
$\left(\begin{smallmatrix}K&0\\0&K^{\itp}\end{smallmatrix}\right)$ of
Eq.~\eqref{eq:setup-lev}. For a fixed matching, the linear preservation
conditions define an $\Ftwo$-matrix algebra $\mathcal{A}_{\Mt}$. The invertible
elements $K\in\mathcal{A}_{\Mt}$ give the valid CNOT gates on that matching. We
search this algebra for invertible elements. For each
structured involution, we also solve the affine lifting equations exactly and
retain a compensating partial Hadamard lift when one is found.

We then calculate the logical image of each physical circuit and run a Schreier--Sims calculation~\cite{sims1970computational} to determine the logical image. We run it on the action on $\Ftwo^{2k}\setminus\{0\}$, falling back to the randomized stabilizer chains of the \textsc{genss} package~\cite{genss} once that degree becomes unwieldy. If the calculation finishes,
it gives the exact order of the subgroup generated by the gates found so far.
Equality with the order of $\Sp{2k}$ proves that the logical image is full. A
smaller order is only a lower bound on the full logical image.

For large logical spaces, we also use order-free tests. MeatAxe tests
irreducibility and absolute irreducibility
\cite{parker1984meataxe,holt1994testing}. A linear solve tests whether the group
preserves a quadratic form; such a form proves that the group is not full. An
explicit word that equals a transvection, together with a full orbit on
$\Ftwo^{2k}\setminus\{0\}$, proves fullness on its own, because the
transvections generate $\Sp{2k}$~\cite{taylor1992geometry}. The
classification of Ref.~\cite{mclaughlin1969subgroups} gives a positive result
only after one proves that the subgroup is generated by transvections. A
spanning connected set of transvections that admits no invariant quadratic form
supplies that hypothesis, once it is also larger than the $\binom{2k+2}{2}$
transvections that $\Perm{2k+2}$ carries in its natural module. Transitivity on
$\Ftwo^{2k}\setminus\{0\}$ together with absolute irreducibility leaves only
$\mathrm{SL}(2k,2)$, which is too large to lie in $\Sp{2k}$, and $\Sp{2k}$
itself~\cite{hering1974transitive}. The
recognition theorem of Ref.~\cite{guralnick1999linear} instead requires
primitive-prime divisors in element orders and the exclusion of its exceptional
cases. The one-sided algorithm of Ref.~\cite{niemeyer1998recognition},
implemented in the \textsc{recog} package~\cite{recogpackage} of
\textsc{gap}~\cite{gap4}, reports the same containment from random elements.
Sampling irreducible-factor degrees of minimal polynomials does not by
itself establish these hypotheses. We therefore treat that sampling as
diagnostic unless an independent certificate is available.

For the QLDPC tables, we search for generating bases with small check
weight and small combined qubit degree. We obtain candidate low-weight words
with the Brouwer--Zimmermann enumeration described above
\cite{zimmermann1996integral,grassl2006searching}. When a complete list through
a chosen weight spans the check space, its words form a linear
matroid~\cite{whitney1935abstract}. The matroid greedy algorithm gives a basis
of minimum total weight in this list~\cite{rado1957note,edmonds1971matroids}.
We then reduce the maximum degree with basis exchanges and simulated annealing
\cite{kirkpatrick1983optimization,metropolis1953equation}.

%% file: app_setup.tex
\section{Notation, conventions, and the reduction modulo Pauli operators}
\label{app:setup}

This appendix and the five that follow it examine one object: the group of
Clifford gates that preserve a fixed Calderbank--Shor--Steane (CSS) code.
Appendices~\ref{app:general}--\ref{app:permutations}
examine it from four points of view, each more restrictive than
the last. Appendix~\ref{app:general} treats the unrestricted case. It shows that
the two families of diagonal circuits introduced in
Sec.~\ref{app:setup-shears} below already generate the whole code-preserving
group $\Nfull$, provided both classical codes are nonzero. Under that
hypothesis, every other valid gate class is a group word in those two families.
Appendix~\ref{app:depthone} fixes a matching $\Mt$ of the $n$ qubits.
That appendix then studies the two-fold transversal slice $\NM=\DM\cap\Nfull$. That slice is
the part of $\Nfull$ that a single layer of gates supported on the cells of $\Mt$
is able to reach. Appendix~\ref{app:transversal} specializes to the
all-singleton matching. At that matching the slice becomes the transversal group
$\Ntr$, and the two diagonal families collapse onto the parameter codes $\Pup$ and
$\Pdn$.
Appendix~\ref{app:permutations} enlarges the transversal group by qubit
permutations, producing $\Gaut=(\Sp{2}\wr\Perm{n})\cap\Stb(\lab)$ together with
its subgroup $\PDgp$ of partial dualities.
Appendix~\ref{app:twolocalaut} applies this machinery to a single small
code and exhibits a two-fold automorphism beyond
$\langle\Ndep,\Gaut\rangle$.

All of these discussions work inside the symplectic group $\Sp{2n}$, not inside the
unitary group. Theorem~\ref{thm:dressing} below licenses this choice: a Clifford
unitary can be multiplied by a suitable Pauli operator so as to preserve a
stabilizer code, signs included, exactly when its symplectic action preserves
the label space of that code. Code preservation is therefore a purely
linear-algebraic condition on a $2n\times 2n$ matrix over $\Ftwo$. No
compensating Pauli operator ever has to be exhibited. The group of
code-preserving gates modulo Pauli operators is literally the setwise
stabilizer $\Nfull=\Stb_{\Sp{2n}}(\lab)$. The present appendix collects the
notation, sets up the conventions, and proves that theorem. From that theorem
the appendix derives the two elementary gate families out of which everything
later is built.

Tables~\ref{tab:setup-ambient}--\ref{tab:setup-depthone} form the master symbol
list for all six appendices. The list is grouped into ambient structures, the
code and its label space, elementary symplectic maps, groups of gates, the
parabolic structure of $\Nfull$, and the depth-one machinery. Symbols that are
introduced only later are collected there as well, each carrying a pointer to
the appendix in which it is defined. A reader who meets an unfamiliar letter
should look in these tables first.

\begin{table*}[t]
\caption{\label{tab:setup-ambient}\textit{Ambient structures.} Master symbol
list, part one of six. Entries carrying a pointer to a later appendix are defined
there and are listed here only for reference. All matrix groups and matrix spaces
are over $\Ftwo$, which we suppress from the notation: $\Sp{2n}$ is the standard
$\mathrm{Sp}(2n,2)$, $\GLg{m}$ is $\mathrm{GL}(m,2)$, and $\Symm{m}$ the symmetric
$m\times m$ matrices over $\Ftwo$.}
\begin{ruledtabular} 
\begin{tabular}{lp{0.62\textwidth}}
Symbol & Meaning \\
\colrule
$\Ftwo$ & the binary field; all linear algebra below is over $\Ftwo$ \\
$\Vsp$ & phase space, equivalently label space, of $n$ qubits \\
$\Xh$, $\Zh$ & the pure-$X$ and the pure-$Z$ half of the polarization
$\Vsp=\Xh\oplus\Zh$ \\
$\sfm{u}{v}$ & the symplectic form, Eq.~\eqref{eq:setup-form} \\
$\Id$, $\Jm$ & identity matrix; Gram matrix
$\Jm=\left(\begin{smallmatrix}0&\Id\\ \Id&0\end{smallmatrix}\right)$ of the
form \\
$g^{\tp}$, $g^{\itp}$ & transpose and inverse transpose of a matrix $g$ \\
$\Sp{m}$ & symplectic group of $\Ftwo^{m}$; $\Sp{2}\cong\Perm{3}$ (order $6$) and
$\Sp{4}\cong\Perm{6}$ (order $720$) \\
$\GLg{m}$ & general linear group of $\Ftwo^{m}$ \\
$\Symm{m}$ & the $m\times m$ symmetric matrices \\
$\Perm{m}$ & symmetric group on $m$ letters; $\Perm{n}$ permutes the qubits \\
$\Stb$ & setwise stabilizer of a subspace inside an ambient group \\
$\ones$, $\ind{A}$ & the all-ones vector; the indicator vector of a subset
$A\subseteq\{1,\dots,n\}$, so that $\ind{\{1,\dots,n\}}=\ones$ \\
$\pw$ & pointwise product, $(a\pw b)_{i}=a_{i}b_{i}$ \\
$\suppo$, $\diago$, $\spano$, $\Homo$, $\ranko$ & support of a vector; diagonal
matrix built from a vector; linear span; group of homomorphisms; rank \\
\end{tabular}
\end{ruledtabular}
\end{table*}

\begin{table*}[t]
\caption{\label{tab:setup-code}\textit{The code and its label space.} Master
symbol list, part two of six.}
\begin{ruledtabular}
\begin{tabular}{lp{0.62\textwidth}}
Symbol & Meaning \\
\colrule
$\CX$, $\CZ$ & the two classical codes of the CSS code, with $\CX\perp\CZ$ \\
$r_{X}$, $r_{Z}$, $r$, $k$ & $\dim\CX$, $\dim\CZ$, $r=r_{X}+r_{Z}=\dim\lab$, and
the number $k=n-r$ of logical qubits, Eq.~\eqref{eq:setup-params} \\
$\lab$ & label space $(\CX|0)\oplus(0|\CZ)\subseteq\Vsp$,
Eq.~\eqref{eq:setup-lab} \\
$\stabgp$ & the stabilizer group, a group of operators with label space $\lab$,
Eq.~\eqref{eq:setup-stab} \\
$\Pau{a}{b}$ & the Pauli operator with label $(a|b)$,
Eq.~\eqref{eq:setup-pauli} \\
$\dst$ & chosen destabilizer complement (Appendix~\ref{app:general}) \\
$\logsp$ & chosen logical complement, representing $\lab^{\perp}/\lab$
(Appendix~\ref{app:general}) \\
$\Pup$, $\Pdn$ & the parameter codes $\{a:a\pw\CX\subseteq\CZ\}$ and
$\{a:a\pw\CZ\subseteq\CX\}$ (Appendix~\ref{app:transversal}) \\
$\IX$, $\IZ$, $\IL$ & the qubit-index sets of the adapted basis
(Appendix~\ref{app:general}) \\
$\Endl$ & the algebra of linear maps $f$ of $\Vsp$ with $\lab f\subseteq\lab$ \\
\end{tabular}
\end{ruledtabular}
\end{table*}

\begin{table*}[t]
\caption{\label{tab:setup-elementary}\textit{Elementary symplectic maps.} Master
symbol list, part three of six.}
\begin{ruledtabular}
\begin{tabular}{lp{0.62\textwidth}}
Symbol & Meaning \\
\colrule
$\Ush{S}$ & $Z$-diagonal circuit
$\left(\begin{smallmatrix}\Id&S\\ 0&\Id\end{smallmatrix}\right)$ with
$S\in\Symm{n}$, Eqs.~\eqref{eq:setup-circuit} and
\eqref{eq:setup-shearlabel} \\
$\Lsh{T}$ & $X$-diagonal circuit
$\left(\begin{smallmatrix}\Id&0\\ T&\Id\end{smallmatrix}\right)$ with
$T\in\Symm{n}$, Corollary~\ref{cor:dualfamily} \\
$\Hd{a}$ & subset Hadamard: $\sw$ on the qubits of $\suppo(a)$ and the identity
elsewhere (Appendix~\ref{app:permutations}) \\
$\sw$ & the $X\leftrightarrow Z$ swap
$\left(\begin{smallmatrix}0&\Id\\ \Id&0\end{smallmatrix}\right)$,
Eq.~\eqref{eq:setup-swapblocks} \\
$\PX$ & the projector onto the $X$-half, $(x|z)\mapsto(x|0)$
(Appendix~\ref{app:depthone}) \\
$\pperm{\pi}$ & the qubit-permutation matrix of $\pi\in\Perm{n}$
(Appendix~\ref{app:permutations}) \\
\end{tabular}
\end{ruledtabular}
\end{table*}

\begin{table*}[t]
\caption{\label{tab:setup-groups}\textit{Groups of gates.} Master symbol list,
part four of six. A superscript $Z$ or $X$ names the family throughout, and a
subscript $\Mt$ (or $\emptyset$ for the empty
matching, whose cells are the $n$ singletons) restricts a family to the layers
supported on that matching.}
\begin{ruledtabular}
\begin{tabular}{lp{0.62\textwidth}}
Symbol & Meaning \\
\colrule
$\Nfull$ & $\Stb_{\Sp{2n}}(\lab)$, the code-preserving group modulo Pauli
operators, Eq.~\eqref{eq:setup-Ndef} \\
$\Eup$, $\Edn$ & the valid $Z$-diagonal and $X$-diagonal circuits, Eqs.~\eqref{eq:setup-eup} and \eqref{eq:setup-edn}, respectively \\
$\Lev$ & the valid Levi gates, Eq.~\eqref{eq:setup-lev} \\
$\Mt$ & a matching of the qubits, i.e.\ a set of disjoint pairs; its
\emph{cells} are those pairs together with the unmatched qubits as singletons, so
that $\Mt=\emptyset$ is the all-singleton case (Appendix~\ref{app:depthone}) \\
$\DM$ & the subgroup of $\Sp{2n}$ block-diagonal for the cells of $\Mt$
(Appendix~\ref{app:depthone}) \\
$\NM$ & $\DM\cap\Nfull$, the two-fold transversal slice at $\Mt$
(Appendix~\ref{app:depthone}) \\
$\Ndep$ & the two-fold transversal group $\langle\bigcup_{\Mt}\NM\rangle$ generated by
all slices (Appendix~\ref{app:depthone}) \\
$\Ntr$ & the transversal group, $\NM$ at the all-singleton matching
(Appendix~\ref{app:transversal}) \\
$\EupM$, $\EdnM$ & $\Eup\cap\DM$ and $\Edn\cap\DM$
(Appendix~\ref{app:depthone}) \\
$\LevM$ & $\Lev\cap\DM$, the Levi family at $\Mt$
(Appendix~\ref{app:depthone}) \\
$\Eupo$, $\Edno$ & the two diagonal families at the all-singleton matching
(Appendix~\ref{app:transversal}) \\
$\Gaut$ & the local-Clifford and permutation automorphism group $(\Sp{2}\wr\Perm{n})\cap\Stb(\lab)$
(Appendix~\ref{app:permutations}) \\
$\PDgp$ & the partial dualities, the elements $(\Hd{a};\pi)$ of $\Gaut$
(Appendix~\ref{app:permutations}) \\
$\rhom$ & the map $\Gaut\to\Perm{n}$ forgetting the local gates, with kernel
$\Ntr$ (Appendix~\ref{app:permutations}) \\
\end{tabular}
\end{ruledtabular}
\end{table*}

\begin{table*}[t]
\caption{\label{tab:setup-parabolic}\textit{Parabolic structure.} Master symbol
list, part five of six. All of these symbols are defined in
Appendix~\ref{app:general}.}
\begin{ruledtabular}
\begin{tabular}{lp{0.62\textwidth}}
Symbol & Meaning \\
\colrule
$\Urad$ & the unipotent radical of $\Nfull$ \\
$\gau$ & the stabilizer relabeling induced by $g\in\Nfull$ on $\lab$, an
element of $\GLg{r}$ (Appendix~\ref{app:general}) \\
$\lgc$ & the logical action of $g\in\Nfull$ on $\lab^{\perp}/\lab$, an element of
$\Sp{2k}$ \\
$\radF$, $\radT$ & the two parameters of the unipotent radical, whose elements
are written $u(\radF,\radT)$ \\
$\lact$ & the surjection $\Nfull\twoheadrightarrow\Sp{2k}$ onto the logical
action \\
\end{tabular}
\end{ruledtabular}
\end{table*}

\begin{table*}[t]
\caption{\label{tab:setup-depthone}\textit{Depth-one machinery.} Master symbol
list, part six of six. All of these symbols are defined in
Appendix~\ref{app:depthone}.}
\begin{ruledtabular}
\begin{tabular}{lp{0.62\textwidth}}
Symbol & Meaning \\
\colrule
$\shp{G}$, $\shm{G}$ & the two automatic parameter spaces of a symplectic
$G$, spanned respectively by $AB^{\tp}$, $D^{\tp}B$, $B+B^{\tp}$ and by
$CD^{\tp}$, $C^{\tp}A$, $C+C^{\tp}$ \\
$\cell$ & a cell of the matching $\Mt$; $g_{\cell}$ is the block of $G$ there \\
$\eps$ & a coefficient vector in $\Ftwo^{3}$ selecting a combination of the three
formulas \\
$\Gterm$ & a blockwise-terminal layer \\
$\Lterm$ & the Levi factor of $\Gterm$ \\
$\Word$ & the group word $\Ush{B}\Lsh{C}\Ush{B}$ \\
\end{tabular}
\end{ruledtabular}
\end{table*}

\subsection{Conventions}
\label{app:setup-conventions}

Vectors are rows and matrices act on them from the right, so that a composition
of linear maps is read from left to right. The phase space of $n$ qubits is
$\Vsp$, carrying the fixed polarization $\Vsp=\Xh\oplus\Zh$ into pure-$X$ and
pure-$Z$ labels. A vector of $\Vsp$ is accordingly written $(x|z)$ with
$x,z\in\Ftwo^{n}$. The symplectic form is
\begin{equation}
\sfm{(x|z)}{(x'|z')}=x\cdot z'+z\cdot x'=(x|z)\,\Jm\,(x'|z')^{\tp},
\label{eq:setup-form}
\end{equation}
all arithmetic being in $\Ftwo$, where
$\Jm=\left(\begin{smallmatrix}0&\Id\\ \Id&0\end{smallmatrix}\right)$ is the Gram
matrix of the form. The form is nondegenerate, so that
$v\mapsto\sfm{\,\cdot\,}{v}$ identifies $\Vsp$ with its dual space; we use this
identification in the proof of Theorem~\ref{thm:dressing}.

A symplectic matrix is written in the block form
$G=\left(\begin{smallmatrix}A&B\\ C&D\end{smallmatrix}\right)$, so that it acts
by
\begin{equation}
(x|z)\,G=(xA+zC\,|\,xB+zD),
\label{eq:setup-blocks}
\end{equation}
in which the four $n\times n$ blocks $A$, $B$, $C$, $D$ are reserved for this
purpose throughout and are never used for anything else. Being symplectic means
preserving \eqref{eq:setup-form}, so that $G\in\Sp{2n}$ if and only if
\begin{equation}
G\Jm G^{\tp}=\Jm,
\qquad\text{equivalently}\qquad
G^{\tp}\Jm G=\Jm .
\label{eq:setup-symplectic}
\end{equation}
The two matrix identities in Eq.~\eqref{eq:setup-symplectic} are equivalent because
$\Jm^{2}=\Id$: if $G\Jm G^{\tp}=\Jm$, then $(\Jm G\Jm)G^{\tp}=\Id$, so
$\Jm G\Jm$ is a two-sided inverse of $G^{\tp}$ and hence
$G^{\tp}\Jm G\Jm=\Id$, which is the second identity; the converse is the same
computation with $G$ and $G^{\tp}$ interchanged. Although the two identities are
equivalent as conditions on $G$, they say different things when written out in
blocks, and both readings are used below. The first gives
\begin{equation}
\begin{gathered}
AB^{\tp}=BA^{\tp},
\qquad
CD^{\tp}=DC^{\tp},\\
AD^{\tp}+BC^{\tp}=\Id,
\end{gathered}
\label{eq:setup-blockeqs}
\end{equation}
and the second gives
\begin{equation}
\begin{gathered}
A^{\tp}C=C^{\tp}A,
\qquad
B^{\tp}D=D^{\tp}B,\\
A^{\tp}D+C^{\tp}B=\Id .
\end{gathered}
\label{eq:setup-blockeqs2}
\end{equation}
In either reading the two diagonal products are symmetric matrices and the two
mixed products are complementary; what differs is which four products are
certified symmetric. Appendix~\ref{app:depthone} draws on all four, and builds
its generators out of exactly these products. We write $\lab g$ for the image of a subspace
$\lab\subseteq\Vsp$ under $g$, again in keeping with the right action.

Pauli operators are labeled by $\Vsp$ with the convention that $X$ stands to
the left of $Z$ on each qubit,
\begin{equation}
\Pau{a}{b}:=\bigotimes_{i=1}^{n}X_{i}^{a_{i}}Z_{i}^{b_{i}},
\qquad a,b\in\Ftwo^{n},
\label{eq:setup-pauli}
\end{equation}
where $X_i$ and $Z_i$ denote the two single-qubit Pauli operators on qubit $i$;
these operators are not to be confused with the two halves $\Xh$ and $\Zh$ of the
polarization, which are subspaces of $\Vsp$. The ordering convention in
\eqref{eq:setup-pauli} fixes the multiplication rule
\begin{equation}
\Pau{a}{b}\,\Pau{a'}{b'}=(-1)^{b\cdot a'}\,\Pau{a+a'}{b+b'},
\label{eq:setup-paulimult}
\end{equation}
from which everything we need about signs follows. In particular, taking
$(a',b')=(a,b)$ in Eq.~\eqref{eq:setup-paulimult} gives
$\Pau{a}{b}^{2}=(-1)^{a\cdot b}\Id$, and taking the two products in the two
orders gives the conjugation rule
\begin{equation}
\begin{split}
\Pau{v}{u}\Pau{a}{b}\Pau{v}{u}^{\dagger}
&=(-1)^{u\cdot a+v\cdot b}\Pau{a}{b}\\
&=(-1)^{\sfm{(v|u)}{(a|b)}}\Pau{a}{b} .
\end{split}
\label{eq:setup-dress}
\end{equation}
Conjugation by a Pauli operator therefore changes only a sign and leaves the
label $(a|b)$ unchanged. That observation is what makes
Theorem~\ref{thm:dressing} true. Note also that the label map
$\Pau{a}{b}\mapsto(a|b)$ turns operator multiplication into vector addition, by
\eqref{eq:setup-paulimult}. Hence the image of a group of Pauli operators is an
$\Ftwo$-subspace of $\Vsp$ and not merely a set.

A CSS code is specified by a pair of subspaces $\CX,\CZ\subseteq\Ftwo^{n}$ that
are orthogonal to one another,
\begin{equation}
a\cdot b=0
\qquad\text{for all }a\in\CX,\ b\in\CZ,
\label{eq:setup-css}
\end{equation}
a relation we abbreviate as $\CX\perp\CZ$.
Orthogonality \eqref{eq:setup-css} is exactly what makes the group generated by
$\{\Pau{a}{0}:a\in\CX\}$ and $\{\Pau{0}{b}:b\in\CZ\}$ a stabilizer group: by
\eqref{eq:setup-paulimult} the product of $\Pau{a}{b}$ and $\Pau{a'}{b'}$ with
$a,a'\in\CX$ and $b,b'\in\CZ$ carries the sign $(-1)^{b\cdot a'}=+1$, so that the
set
\begin{equation}
\stabgp=\bigl\{\,\Pau{a}{b}\ :\ a\in\CX,\ b\in\CZ\,\bigr\}
\label{eq:setup-stab}
\end{equation}
is closed under multiplication and every one of its elements carries the sign
$+1$. Each element is moreover Hermitian and squares to $+\Id$, since
$\Pau{a}{b}^{2}=(-1)^{a\cdot b}\Id=\Id$ by Eq.~\eqref{eq:setup-css}. The label space
of Eq.~\eqref{eq:setup-stab} is
\begin{equation}
\lab=(\CX|0)\oplus(0|\CZ)\subseteq\Vsp ,
\label{eq:setup-lab}
\end{equation}
and \eqref{eq:setup-css} says precisely that $\lab$ is isotropic,
$\sfm{\lab}{\lab}=0$, because
$\sfm{(a|b)}{(a'|b')}=a\cdot b'+b\cdot a'$ vanishes for $a,a'\in\CX$ and
$b,b'\in\CZ$. We set
\begin{equation}
\begin{gathered}
r_{X}=\dim\CX,
\qquad
r_{Z}=\dim\CZ,\\
r=r_{X}+r_{Z}=\dim\lab,
\qquad
k=n-r,
\end{gathered}
\label{eq:setup-params}
\end{equation}
$k$ being the number of logical qubits.

The only structural feature of Eq.~\eqref{eq:setup-lab} that the arguments below use,
beyond isotropy, is that $\lab$ is \emph{split} with respect to the
polarization,
\begin{equation}
\begin{gathered}
\lab=(\lab\cap\Xh)\oplus(\lab\cap\Zh),\\
\lab\cap\Xh=(\CX|0),
\qquad
\lab\cap\Zh=(0|\CZ).
\end{gathered}
\label{eq:setup-split}
\end{equation}
Splitness is the entire content of the CSS hypothesis as far as these appendices
are concerned: a stabilizer code is CSS precisely when its label space admits the
decomposition \eqref{eq:setup-split}, and no other consequence of the CSS
structure is invoked anywhere below. It is worth stressing that
\eqref{eq:setup-split} is a genuine restriction and not a normalization. A
$Y$-type stabilizer, for instance, has a label space meeting neither $\Xh$ nor
$\Zh$. The diagonal families of Sec.~\ref{app:setup-shears} cannot see such a label
space at all. The generation statement of Appendix~\ref{app:general} carries the
further standing hypothesis $r_{X},r_{Z}\ge 1$, that is $\CX\neq 0$ and
$\CZ\neq 0$. That appendix isolates the single step of the proof of that
statement at which the hypothesis is consumed.

Finally we fix what it means for a gate to preserve the code. A unitary $U$ is a
\emph{valid gate} if $U\stabgp U^{\dagger}=\stabgp$ as a set, signs included. A
unitary $U$ is \emph{valid up to Pauli dressing} if $U P$ is a valid gate for
some Pauli operator $P$. The distinction matters at the level of unitaries.
At the level of labels the distinction does not matter, as we now show: by
\eqref{eq:setup-dress} a dressing alters no symplectic datum whatsoever.

\subsection{Universal Pauli dressing}
\label{app:setup-dressing}

The following theorem (cf.~\cite{dehaene2003clifford}) is the reason the rest of
these appendices can dispense
with unitaries entirely. It holds for an arbitrary stabilizer group, with no
CSS or splitness hypothesis. The label space $\lab$ appearing in the theorem is
simply the image of $\stabgp$ under the label map, which is an isotropic subspace
of $\Vsp$.

\begin{theorem}[Universal Pauli dressing]
\label{thm:dressing}
Let $\stabgp$ be any stabilizer group on $n$ qubits, with label space
$\lab\subseteq\Vsp$, and let $U$ be any Clifford unitary whose action on labels
is the symplectic map $g\in\Sp{2n}$. Then
\begin{equation}
\begin{split}
&\text{some Pauli dressing makes }U\text{ normalize }\stabgp\\
&\qquad\qquad\Longleftrightarrow\qquad
\lab g=\lab .
\end{split}
\label{eq:setup-dressing}
\end{equation}
\end{theorem}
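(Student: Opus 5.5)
The forward direction is immediate from the label map. Suppose $UP$ normalizes $\stabgp$ for some Pauli $P$. Conjugation by $UP$ sends a Pauli with label $v$ to a Pauli with label $vg$, up to sign. The Pauli factor does not change labels, by Eq.~\eqref{eq:setup-dress}, so $UP$ has the same symplectic action $g$ as $U$. Applying the label map to $(UP)\stabgp(UP)^{\dagger}=\stabgp$ gives $\lab g\subseteq\lab$. Since $g$ is invertible and $\lab$ is finite-dimensional, this forces $\lab g=\lab$.

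The converse is the substance. Assume $\lab g=\lab$. For each $s\in\stabgp$ with label $v\in\lab$, $UsU^{\dagger}$ is a Pauli with label $vg\in\lab$. It is Hermitian and squares to $\Id$, being a conjugate of $s$. There is a unique element $s'\in\stabgp$ with label $vg$, because the label map restricted to $\stabgp$ is injective: $\stabgp$ contains no $-\Id$. So $UsU^{\dagger}=(-1)^{f(v)}s'$ for a well-defined sign function $f:\lab\to\Ftwo$.

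The key step is to show that $f$ is linear. I would argue as follows. Conjugation by $U$ is a group homomorphism, and the map $v\mapsto s'(vg)$ is the inverse of the label isomorphism $\stabgp\to\lab$ composed with $g$, so it is also a homomorphism. Their ratio $(-1)^{f(v)}$ is therefore a character of the elementary abelian group $\lab$, that is, $f\in\Homo(\lab,\Ftwo)$.

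Next, extend $f$ arbitrarily to a linear functional on $\Vsp$. By nondegeneracy of $\sfm{\cdot}{\cdot}$, there is a $w\in\Vsp$ with $f(v)=\sfm{w}{vg}$ for all $v\in\lab$. One could write this instead as $\sfm{w'}{v}$; the choice only reindexes $w$, since $g$ is symplectic. Now take the Pauli $P$ with label $w$ and dress on the appropriate side, say $PU$. The extra sign it contributes is $(-1)^{\sfm{w}{vg}}$ by Eq.~\eqref{eq:setup-dress}, which cancels $(-1)^{f(v)}$. This gives $(PU)s(PU)^{\dagger}=s'\in\stabgp$ for every $s$.

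If right-dressing $UP$ is required, as in the definition of a valid gate, I would set $P'=U^{\dagger}PU$, which is a Pauli with label $wg^{-1}$; then $PU=UP'$. Finally, the map $s\mapsto s'$ is a bijection of $\stabgp$ because $g$ permutes $\lab$, so the dressed unitary normalizes $\stabgp$ as a set, signs included.

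I expect the main obstacle to be the bookkeeping of signs, specifically verifying that $f$ is additive. This rests on the absence of $-\Id$ in $\stabgp$, so that label-to-operator is a well-defined homomorphism. The rest is routine.
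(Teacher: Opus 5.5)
Your proposal is correct and follows essentially the paper's proof. You compare $UsU^{\dagger}$ with the unique element of $\stabgp$ carrying the same label, show the resulting sign is a character of $\lab$, extend it to a functional on $\Vsp$, realize it as $\sfm{w}{\cdot}$ by nondegeneracy, cancel it with the Pauli of label $w$, and move the dressing across $U$ exactly as the paper does. Indexing the sign by the source label $v$ rather than the paper's target label $\ell=vg$ is an immaterial reparametrization, since $g$ permutes $\lab$. The one point to state explicitly is that injectivity of the label map on $\stabgp$ also requires $\pm i\Id\notin\stabgp$; this follows from $-\Id\notin\stabgp$ because $(\pm i\Id)^{2}=-\Id$.
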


\begin{proof}
Necessity is immediate. Suppose $P U$ normalizes $\stabgp$ for some Pauli
operator $P$. The label space of $U\stabgp U^{\dagger}$ is $\lab g$, since
conjugation by $U$ implements $g$ on labels; and conjugation by $P$ leaves every
label unchanged by Eq.~\eqref{eq:setup-dress}. Hence the label space of
$PU\stabgp U^{\dagger}P^{\dagger}=\stabgp$ is $\lab g$ on the one hand and
$\lab$ on the other, so $\lab g=\lab$. The same argument applies verbatim if the
dressing sits on the right of $U$.

For the converse, suppose $\lab g=\lab$. Both $\stabgp$ and
$U\stabgp U^{\dagger}$ are then stabilizer groups with the same label space
$\lab$. The group $U\stabgp U^{\dagger}$ is a stabilizer group because a Clifford
unitary conjugates Pauli operators to Pauli operators.

We first record that a stabilizer group contains exactly one operator per label.
Indeed, two of its elements with the same label differ by a scalar multiple of
the identity, and that scalar again lies in the group. The scalars occurring in a
stabilizer group form a subgroup of $\{\pm\Id,\pm i\,\Id\}$ that does not contain
$-\Id$, by the defining property of a stabilizer group. Since $(\pm i\,\Id)^{2}
=-\Id$, that subgroup cannot contain $\pm i\,\Id$ either, and is thus trivial.
Two elements with the same label thus coincide, and the label map is a bijection
from $\stabgp$ onto $\lab$, and likewise from $U\stabgp U^{\dagger}$ onto $\lab$.
Write $s_{\ell}$ and $s'_{\ell}$ for the unique elements of $\stabgp$ and of
$U\stabgp U^{\dagger}$ with label $\ell\in\lab$. Every element $s$ of a
stabilizer group moreover satisfies $s^{2}=\Id$. Indeed $s^{2}$ is a scalar
multiple of the identity lying in the group, and the only such scalar is $\Id$.
A unitary with $s^{2}=\Id$ obeys $s=s^{-1}=s^{\dagger}$ and is Hermitian.
Consequently $s_{\ell}$ and $s'_{\ell}$ are Hermitian operators with the same
label, and they differ by a real sign,
\begin{equation}
s'_{\ell}=\chi(\ell)\,s_{\ell},
\qquad
\chi(\ell)\in\{\pm 1\} .
\label{eq:setup-chi}
\end{equation}

The function $\chi$ is a character of the additive group $\lab$. To see this,
note that $s_{\ell}s_{\ell'}$ lies in $\stabgp$ and has label $\ell+\ell'$,
because the label map turns multiplication into addition. By uniqueness of the
element with a given label it follows that
$s_{\ell}s_{\ell'}=s_{\ell+\ell'}$, and in the same way
$s'_{\ell}s'_{\ell'}=s'_{\ell+\ell'}$. Substituting \eqref{eq:setup-chi} into the
latter and comparing with the former gives
\begin{equation}
\chi(\ell)\chi(\ell')=\chi(\ell+\ell')
\qquad\text{for all }\ell,\ell'\in\lab .
\label{eq:setup-character}
\end{equation}

Every character of a subspace extends to the ambient space. Writing
$\chi=(-1)^{\phi}$ with $\phi:\lab\to\Ftwo$ linear by
\eqref{eq:setup-character}, choose any complement of $\lab$ in $\Vsp$ and extend
$\phi$ by zero on it. The extension is a linear functional on all of $\Vsp$.
Since
the symplectic form \eqref{eq:setup-form} is nondegenerate, every linear
functional on $\Vsp$ is of the form $\sfm{p}{\,\cdot\,}$ for a unique
$p=(v|u)\in\Vsp$. Hence
\begin{equation}
\chi(\ell)=(-1)^{\sfm{p}{\ell}}
\qquad\text{for all }\ell\in\lab .
\label{eq:setup-pfound}
\end{equation}

Conjugation by the Pauli operator $\Pau{v}{u}$ with label $p$ now supplies
exactly the signs \eqref{eq:setup-pfound}: by Eq.~\eqref{eq:setup-dress} it
multiplies the element with label $\ell$ by $(-1)^{\sfm{p}{\ell}}=\chi(\ell)$.
Applying it to $U\stabgp U^{\dagger}$ and using
$\chi(\ell)^{2}=1$ together with Eq.~\eqref{eq:setup-chi},
\begin{equation}
\begin{split}
\Pau{v}{u}\,\bigl(U\stabgp U^{\dagger}\bigr)\,\Pau{v}{u}^{\dagger}
&=\bigl\{\chi(\ell)s'_{\ell}:\ell\in\lab\bigr\}\\
&=\bigl\{s_{\ell}:\ell\in\lab\bigr\}=\stabgp ,
\end{split}
\label{eq:setup-dressed}
\end{equation}
so that the dressed unitary $\Pau{v}{u}U$ normalizes $\stabgp$. Finally the
dressing may be moved to the other side. Since $U$ is Clifford,
$U^{\dagger}\Pau{v}{u}U$ is a Pauli operator up to an overall phase. The identity
$\Pau{v}{u}U=U\,(U^{\dagger}\Pau{v}{u}U)$ then exhibits the dressed unitary as
$U$ times a Pauli operator.
\end{proof}

\emph{Only labels matter.} Modulo Pauli operators, the code-preserving physical
Clifford group \emph{is} the setwise stabilizer
\begin{equation}
\begin{split}
\Nfull&=\Stb_{\Sp{2n}}(\lab)\\
&=\bigl\{\,g\in\Sp{2n}:\lab g=\lab\,\bigr\} .
\end{split}
\label{eq:setup-Ndef}
\end{equation}
Theorem~\ref{thm:dressing} says that a Clifford unitary admits a valid dressing
precisely when its symplectic action lies in $\Nfull$, and, once global phases
are discarded, the Pauli group is exactly the kernel of the symplectic action of
the Clifford group. The group theory of Appendices~\ref{app:general}
through~\ref{app:permutations} may therefore be conducted entirely on labels,
which is what we shall do. A statement about $\Nfull$ translates back into a
statement about physical gates with no loss and no additional hypotheses.

\emph{The Pauli frame is immaterial.} Equation \eqref{eq:setup-stab} fixes what
one may call the trivial frame, in which every stabilizer element carries the
sign $+1$. Conjugating $\stabgp$ by a Pauli operator changes only those signs,
by Eq.~\eqref{eq:setup-dress}, and leaves $\lab$, the group $\Nfull$, and the
criterion $\lab g=\lab$ unchanged. There are $2^{r}$ frames in all, one per
assignment of signs to a set of $r=\dim\lab$ generators, and Pauli conjugation
permutes them transitively. The dressing produced in the proof of
Theorem~\ref{thm:dressing} is simply the operator that undoes the assignment
that $U$ happens to induce. That operator has more structure than the proof
shows. Pauli operators commuting with all of $\stabgp$ leave every sign alone.
The dressing therefore matters only through its symplectic pairings with $\lab$:
$r$ bits, one per generator, recording which generators $U$ returns with a
flipped sign. Those bits are the syndrome. A Pauli operator realizing a
prescribed set of pairings with the generators is typically called a
destabilizer. The dressing may always be taken to be a product of
destabilizers, one factor for each flipped generator. Concretely, fix
destabilizers dual to the chosen generators. One may then take the compensating
Pauli operator to be the product of those destabilizers indexed by the flipped
generators. The symplectic pairings of that operator with the generators are
then the syndrome. The label of that operator is a vector in $\Ftwo^{2n}$,
whereas the syndrome is the $r$-bit functional it induces on $\lab$. The label
and the syndrome should not be identified. None of this is needed below, where
no compensating Pauli operator is ever computed.

\subsection{The two Siegel families}
\label{app:setup-shears}

We can now identify the two elementary families of physical gates that
Appendix~\ref{app:general} shows to generate all of $\Nfull$. Following the
correspondence of Sec.~\ref{app:setup-classes}, we call them the
\emph{$Z$-diagonal} and \emph{$X$-diagonal circuits}: both are products of
commuting gates, the first diagonal in the computational basis and the second
diagonal in the conjugate basis. In that correspondence these families are the
unipotent radicals of the two opposite Siegel parabolic subgroups fixed by the
polarization, and are the \emph{Siegel shears} of the classical
literature~\cite{artin1957geometric,taylor1992geometry}. The two families are not in
general circuits of depth one: the gates in a product commute, but they need not
act on disjoint qubits. The product becomes a single layer only when the support
graph described below is a matching. That restriction is imposed, and exploited,
in Appendix~\ref{app:depthone}.

Let $S\in\Symm{n}$ be a symmetric matrix over $\Ftwo$. Its off-diagonal entries
describe a graph on the qubit indices, with an edge $\{i,j\}$ for $i<j$ whenever
$S_{ij}=1$. The diagonal entries $S_{ii}$ mark a subset of qubits. The
associated circuit is the commuting product
\begin{equation}
\Ush{S}:=\prod_{i<j,\ S_{ij}=1}\mathrm{CZ}_{ij}
\ \cdot\!\!
\prod_{i,\ S_{ii}=1}\sqrt{Z}_{i} .
\label{eq:setup-circuit}
\end{equation}
Here $\sqrt{Z}=\diago(1,i)$ is the single-qubit phase gate. That gate is
usually written $S$. In these appendices, however, the letter $S$ denotes the
diagonal-circuit parameter of Eq.~\eqref{eq:setup-circuit}, so we write the gate as $\sqrt{Z}$
throughout. For the same reason we write $\sqrt{X}$ for the conjugate-basis
partner of $\sqrt{Z}$, the gate obtained from it by conjugation with a
Hadamard. We also allow ourselves to write $\Ush{S}$ both for the circuit
\eqref{eq:setup-circuit} and for the symplectic matrix it induces, an abuse
justified by Lemma~\ref{lem:shearaction}. When the parameter is a vector rather
than a matrix we use the width-one shorthand $\Ush{a}:=\Ush{\diago a}$ and
$\Lsh{a}:=\Lsh{\diago a}$ for $a\in\Ftwo^{n}$. The type of the argument tells
the two readings apart. In circuit terms $\Ush{a}$ is the transversal layer
$\prod_{i\in\suppo(a)}\sqrt{Z}_{i}$.

\begin{lemma}[Label action of a diagonal layer]
\label{lem:shearaction}
For every $S\in\Symm{n}$ the circuit \eqref{eq:setup-circuit} is Clifford, and
its action on labels is
\begin{equation}
\begin{gathered}
(a|b)\longmapsto(a\,|\,b+aS),\\
\text{that is}\qquad
\Ush{S}=\begin{pmatrix}\Id&S\\ 0&\Id\end{pmatrix},
\end{gathered}
\label{eq:setup-shearlabel}
\end{equation}
the $Z$-diagonal circuit with parameter $S$.
\end{lemma}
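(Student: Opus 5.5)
The lemma makes two claims: the circuit \eqref{eq:setup-circuit} is Clifford, and its label action is the block matrix in \eqref{eq:setup-shearlabel}. Since the gates in \eqref{eq:setup-circuit} commute, I will compute the label action of each gate separately and then compose. Two facts make this work. Every factor is diagonal, so each fixes every pure-$Z$ Pauli label $(0|b)$. And a product of shears $\bigl(\begin{smallmatrix}\Id&S_1\\0&\Id\end{smallmatrix}\bigr)\bigl(\begin{smallmatrix}\Id&S_2\\0&\Id\end{smallmatrix}\bigr)=\bigl(\begin{smallmatrix}\Id&S_1+S_2\\0&\Id\end{smallmatrix}\bigr)$ is again a shear, so the parameters simply add. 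Hence it suffices to verify the formula for the elementary parameters $S=E_{ii}$ (a single $\sqrt{Z}_i$) and $S=E_{ij}+E_{ji}$ (a single $\mathrm{CZ}_{ij}$), and then use $S=\sum_i S_{ii}E_{ii}+\sum_{i<j}S_{ij}(E_{ij}+E_{ji})$.

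The elementary checks are direct conjugation computations, carried out modulo signs and phases since only labels matter. For $\sqrt{Z}=\diago(1,i)$ we have $\sqrt{Z}X\sqrt{Z}^{\dagger}=\pm Y$ and $\sqrt{Z}Z\sqrt{Z}^{\dagger}=Z$. In label form this reads $(1|0)\mapsto(1|1)$ and $(0|1)\mapsto(0|1)$, which is $(a|b)\mapsto(a|b+aE_{ii})$ on qubit $i$; all other qubits are untouched. For $\mathrm{CZ}_{ij}$ the standard relations are $X_i\mapsto X_iZ_j$, $X_j\mapsto Z_iX_j$, with $Z_i$ and $Z_j$ fixed. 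These give $(e_i|0)\mapsto(e_i|e_j)$ and $(e_j|0)\mapsto(e_j|e_i)$, which is exactly $b\mapsto b+a(E_{ij}+E_{ji})$. In each case the images of the generators $X_\ell,Z_\ell$ are Pauli operators, so each gate is Clifford. The product of Clifford unitaries is Clifford, which settles the first claim.

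To finish, I need the label map of a composite to be the product of the label maps in circuit order, consistent with the right-action convention of Sec.~\ref{app:setup-conventions}. This is where sign and ordering conventions could in principle bite. Here, however, all the matrices commute and the convention affects only signs, which the label map discards, so the composite is $\bigl(\begin{smallmatrix}\Id&S\\0&\Id\end{smallmatrix}\bigr)$. A final check that this matrix is symplectic confirms consistency. Substituting $A=D=\Id$, $B=S$, $C=0$ into \eqref{eq:setup-blockeqs} gives $AB^{\tp}=S^{\tp}=S=BA^{\tp}$ and $AD^{\tp}+BC^{\tp}=\Id$, both of which hold precisely because $S$ is symmetric. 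I expect no real obstacle; the only care needed is bookkeeping the row-vector convention in the $\mathrm{CZ}$ computation.
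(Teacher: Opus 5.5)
Your proposal is correct and follows essentially the paper's argument: you conjugate the single-qubit generators through each $\mathrm{CZ}$ and $\sqrt{Z}$, observe that pure-$Z$ labels are fixed, assemble the result into $(a|b+aS)$, and confirm symplecticity through $AB^{\tp}=S$ being symmetric. The only cosmetic difference is in the assembly step. You add the elementary shear parameters, relying on the fact that the shears commute. The paper instead collects the $Z$-exponents coordinatewise as $\sum_{i\neq j}a_iS_{ij}+a_jS_{jj}=(aS)_j$.
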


\begin{proof}
The circuit \eqref{eq:setup-circuit} is a product of Clifford gates and is
therefore Clifford. Every factor in it is diagonal in the computational basis
and hence commutes with each $Z_{i}$, so pure-$Z$ labels are fixed,
$(0|b)\mapsto(0|b)$.

For the $X$-sector it suffices, by Eq.~\eqref{eq:setup-paulimult}, to follow the
generators $X_{i}$. A controlled-$Z$ gate acts by
$\mathrm{CZ}_{ij}X_{i}\mathrm{CZ}_{ij}^{\dagger}=X_{i}Z_{j}$ and fixes $X_{l}$
for $l\notin\{i,j\}$. Each edge $\{i,j\}$ of the graph of $S$ with
$i\in\suppo(a)$ hence contributes one factor $Z_{j}$. The phase gate acts by
$\sqrt{Z}_{i}X_{i}\sqrt{Z}_{i}^{\dagger}=i\,X_{i}Z_{i}$, so each
$i\in\suppo(a)$ with $S_{ii}=1$ contributes one factor $Z_{i}$. Collecting the
$Z$-output on a fixed coordinate $j$ therefore gives the exponent
\begin{equation}
\sum_{i\neq j}a_{i}S_{ij}+a_{j}S_{jj}=(aS)_{j},
\label{eq:setup-collect}
\end{equation}
where the first sum accounts for the edges and the second term for the diagonal
entries. Hence $\Pau{a}{0}$ is mapped to a phase times $\Pau{a}{aS}$. The
phases are immaterial here, since Theorem~\ref{thm:dressing} needs only the label
map. Combining the two sectors with Eq.~\eqref{eq:setup-paulimult} gives
$(a|b)\mapsto(a|b+aS)$, which is the right action of the displayed matrix by the
block convention \eqref{eq:setup-blocks}.

It remains to observe that this matrix is symplectic. With $A=\Id$, $B=S$,
$C=0$ and $D=\Id$, the block conditions \eqref{eq:setup-blockeqs} reduce to the
single requirement that $AB^{\tp}=S^{\tp}$ be symmetric, the other two being
automatic. The $Z$-diagonal circuit is thus symplectic precisely because its parameter is
symmetric, which is the reason for restricting to $S\in\Symm{n}$ in
\eqref{eq:setup-circuit}.
\end{proof}

\begin{corollary}[Validity criterion]
\label{cor:validity}
For $S\in\Symm{n}$, the circuit $\Ush{S}$ is valid up to Pauli dressing if and
only if
\begin{equation}
\CX S\subseteq\CZ ,
\label{eq:setup-criterion}
\end{equation}
that is, if and only if $aS\in\CZ$ for every $a\in\CX$.
\end{corollary}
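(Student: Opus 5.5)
By Theorem~\ref{thm:dressing}, $\Ush{S}$ is valid up to Pauli dressing exactly when its symplectic matrix $g=\Ush{S}$ satisfies $\lab g=\lab$. Lemma~\ref{lem:shearaction} says this matrix is $\left(\begin{smallmatrix}\Id&S\\ 0&\Id\end{smallmatrix}\right)$ and is symplectic for symmetric $S$. The plan is therefore to compute $\lab g$ using the split form $\lab=(\CX|0)\oplus(0|\CZ)$ of Eq.~\eqref{eq:setup-split}, and then compare it with $\lab$.

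First, the $Z$-half. Every $(0|b)$ with $b\in\CZ$ is fixed by $g$, so $(0|\CZ)g=(0|\CZ)\subseteq\lab$ always holds.

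Second, the $X$-half. A vector $(a|0)$ with $a\in\CX$ maps to $(a|aS)$. Since $\lab$ is split, $(a|aS)\in\lab$ if and only if $(a|0)\in\lab$ and $(0|aS)\in\lab$. The first holds automatically, and the second is equivalent to $aS\in\CZ$. Hence $\lab g\subseteq\lab$ if and only if $aS\in\CZ$ for all $a\in\CX$, which is condition~\eqref{eq:setup-criterion}.

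Finally, inclusion upgrades to equality. The map $g$ is invertible, so $\dim\lab g=\dim\lab$, and $\lab g\subseteq\lab$ forces $\lab g=\lab$. Conversely, $\lab g=\lab$ trivially gives the inclusion.

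There is no real obstacle here. The only point needing care is the use of splitness to separate the two components of $(a|aS)$, and that is exactly Eq.~\eqref{eq:setup-split}.
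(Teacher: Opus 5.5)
Your proof is correct and follows the paper's argument. Theorem~\ref{thm:dressing} reduces validity to $\lab g=\lab$, Lemma~\ref{lem:shearaction} supplies the matrix, splitness \eqref{eq:setup-split} separates the $X$- and $Z$-components, and invertibility upgrades $\lab g\subseteq\lab$ to equality. The only cosmetic difference is that you check the summands $(\CX|0)$ and $(0|\CZ)$ separately, whereas the paper writes the whole image $\lab g=\{(a|b+aS)\}$ at once.
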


\begin{proof}
By Lemma~\ref{lem:shearaction} the label action of the circuit is $g=\Ush{S}$, and by Eq.~\eqref{eq:setup-lab},
\begin{equation}
\lab g=\bigl\{(a\,|\,b+aS)\ :\ a\in\CX,\ b\in\CZ\bigr\} .
\label{eq:setup-image}
\end{equation}
Since $g$ is invertible and $\lab$ is finite-dimensional, $\lab g$ and $\lab$
have the same dimension, so $\lab g=\lab$ holds if and only if
$\lab g\subseteq\lab$. Now $(a|b+aS)$ lies in $\lab$ if and only if its $X$-part
lies in $\CX$ and its $Z$-part lies in $\CZ$, by the splitness
\eqref{eq:setup-split}. The condition on the $X$-part holds by hypothesis. The
condition on the $Z$-part says $b+aS\in\CZ$, which for $b\in\CZ$ is equivalent to
$aS\in\CZ$. Thus
$\lab g=\lab$ is equivalent to Eq.~\eqref{eq:setup-criterion}, and
Theorem~\ref{thm:dressing} converts this into validity up to Pauli dressing.
\end{proof}

Corollary~\ref{cor:validity} thus identifies the valid $Z$-diagonal circuits with
the family
\begin{equation}
\Eup:=\bigl\{\,\Ush{S}:S\in\Symm{n},\ \CX S\subseteq\CZ\,\bigr\},
\label{eq:setup-eup}
\end{equation}
which by that same corollary is a subgroup of $\Nfull$. The corollary speaks
only of the layers \eqref{eq:setup-circuit}. Level two of the Clifford hierarchy
supplies the converse, that every Clifford gate diagonal in the computational
basis is such a layer (cf.~\cite{rengaswamy2019unifying}). Only with that
converse is \eqref{eq:setup-eup} the whole valid $Z$-diagonal group.
Because diagonal-circuit parameters simply add, $\Ush{S}\Ush{S'}=\Ush{S+S'}$ and
$\Ush{S}^{2}=\Id$, the family \eqref{eq:setup-eup} is an elementary abelian
$2$-group isomorphic to the additive group of admissible parameters. At the level
of circuits the corresponding statement holds only modulo Pauli operators.
Indeed the square of Eq.~\eqref{eq:setup-circuit} is the Pauli operator
$\prod_{i:S_{ii}=1}Z_{i}$, which is exactly the sort of discrepancy that
Theorem~\ref{thm:dressing} instructs us to ignore.

The dressing of Theorem~\ref{thm:dressing} is not vacuous for diagonal circuits. The proof
of Lemma~\ref{lem:shearaction} discards the signs. If we keep them, we get
\begin{equation}
\Ush{S}\,\Pau{a}{b}\,\Ush{S}^{\dagger}
  =i^{\,aSa^{\tp}}\,\Pau{a}{b+aS},
\label{eq:setup-shearsign}
\end{equation}
with the exponent evaluated over the integers. A valid diagonal circuit can thus return a
stabilizer element with a flipped sign. Take $n=2$, $\CX=\CZ=\spano\{\ones\}$,
and the valid pattern $S=\Id$. The circuit $\sqrt{Z}_{1}\sqrt{Z}_{2}$ sends the
stabilizer $X_{1}X_{2}=\Pau{11}{00}$ to $-\Pau{11}{11}$, but $\stabgp$ contains
$\Pau{11}{11}$. The dressing $Z_{1}$ restores all the signs.

The second family is obtained from the first by a duality of the whole setup.
Let $\sw$ denote the $X\leftrightarrow Z$ swap, that is the symplectic map
$(x|z)\mapsto(z|x)$. As a matrix it happens to coincide with the Gram matrix
$\Jm$ of Eq.~\eqref{eq:setup-form}. The swap and the Gram matrix play quite
different roles, however, one being a group element and the other the matrix of
a bilinear form. Conjugation by $\sw$ acts on the block decomposition
\eqref{eq:setup-blocks} by
\begin{equation}
\sw
\begin{pmatrix}A&B\\ C&D\end{pmatrix}
\sw
=
\begin{pmatrix}D&C\\ B&A\end{pmatrix},
\label{eq:setup-swapblocks}
\end{equation}
as one checks by multiplying out, using $\sw^{2}=\Id$. It carries label spaces
of CSS form to label spaces of CSS form with the two classical codes
interchanged,
\begin{equation}
\lab\sw=(\CZ|0)\oplus(0|\CX),
\label{eq:setup-swaplab}
\end{equation}
which is again isotropic and split because $\CZ\perp\CX$ is the same condition as
\eqref{eq:setup-css}. In particular, by Eq.~\eqref{eq:setup-swapblocks},
conjugation by $\sw$ exchanges $Z$-diagonal circuits with $X$-diagonal circuits,
\begin{equation}
\sw\,\Ush{T}\,\sw=\Lsh{T}
=\begin{pmatrix}\Id&0\\ T&\Id\end{pmatrix} .
\label{eq:setup-swapshear}
\end{equation}

It is essential to understand $\sw$ as a symmetry of the setup rather than as a
gate of a fixed code. Comparing \eqref{eq:setup-swaplab} with
\eqref{eq:setup-lab} shows that $\sw$ itself lies in $\Nfull$ if and only if
$\CX=\CZ$, which is a strong and usually false assumption. The swap $\sw$ always
maps the data of the code with classical codes $(\CX,\CZ)$ onto the data of the
code with classical codes $(\CZ,\CX)$. The swap always maps the corresponding
groups $\Nfull$ onto one another as well. Consequently every statement made in
these appendices comes in a dual pair. One obtains the dual by
applying $\sw$ throughout and interchanging $\CX$ with $\CZ$, upper with lower,
and $\Xh$ with $\Zh$. This is why the proofs below establish only one side of
each such pair.

\begin{corollary}[The dual family]
\label{cor:dualfamily}
For $T\in\Symm{n}$, the $X$-diagonal circuit $\Lsh{T}$ preserves $\lab$, equivalently the
corresponding circuit is valid up to Pauli dressing, if and only if
\begin{equation}
\CZ T\subseteq\CX .
\label{eq:setup-dualcriterion}
\end{equation}
\end{corollary}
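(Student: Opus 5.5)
The plan is to reduce Corollary~\ref{cor:dualfamily} to Corollary~\ref{cor:validity} through the swap duality $\sw$, rather than redoing the computation of Lemma~\ref{lem:shearaction}. First, by Eq.~\eqref{eq:setup-swapshear}, we have $\Lsh{T}=\sw\,\Ush{T}\,\sw$. It is symplectic because $\sw\in\Sp{2n}$ and $\Ush{T}$ is symplectic for symmetric $T$ (Lemma~\ref{lem:shearaction}). Its circuit is the $H^{\otimes n}$-conjugate of the circuit \eqref{eq:setup-circuit}, so it is Clifford with this label action.

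Second, I will use $\sw^{2}=\Id$ to rewrite the preservation condition:
\begin{equation*}
\lab\,\Lsh{T}=\lab
\;\Longleftrightarrow\;
\lab\sw\,\Ush{T}\,\sw=\lab
\;\Longleftrightarrow\;
(\lab\sw)\,\Ush{T}=\lab\sw .
\end{equation*}
By Eq.~\eqref{eq:setup-swaplab}, $\lab\sw=(\CZ|0)\oplus(0|\CX)$. This is the label space of the CSS code with the roles of the classical codes exchanged, and it is still isotropic and split. Corollary~\ref{cor:validity} used only splitness and isotropy, so it applies to this swapped code with $\CX$ replaced by $\CZ$ and $\CZ$ replaced by $\CX$. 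It gives $(\lab\sw)\Ush{T}=\lab\sw$ if and only if $\CZ T\subseteq\CX$.

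Third, Theorem~\ref{thm:dressing} converts $\lab\,\Lsh{T}=\lab$ into validity up to Pauli dressing. That theorem holds for arbitrary stabilizer groups, so nothing extra is needed for this step.

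As a sanity check, one can also verify the result directly. $\Lsh{T}$ sends $(a|b)\mapsto(a+bT\,|\,b)$, so the $Z$-part is untouched. By splitness, preservation of $\lab$ then amounts to $bT\in\CX$ for all $b\in\CZ$, and the dimension argument upgrades the inclusion $\lab\Lsh{T}\subseteq\lab$ to equality.

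The only real obstacle is bookkeeping. One must confirm that Corollary~\ref{cor:validity} is being applied to a legitimate CSS label space, namely the swapped one. One must also confirm that conjugation by $\sw$ turns the setwise condition for $\lab$ into the setwise condition for $\lab\sw$, which needs $\sw^{-1}=\sw$. Both points are immediate from Eqs.~\eqref{eq:setup-swapblocks} and \eqref{eq:setup-swaplab}.
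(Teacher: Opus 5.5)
Your proposal is correct and matches the paper's proof: conjugate by $\sw$ using $\sw^{2}=\Id$ to turn preservation of $\lab$ by $\Lsh{T}$ into preservation of $\lab\sw=(\CZ|0)\oplus(0|\CX)$ by $\Ush{T}$. Then apply Corollary~\ref{cor:validity} to the swapped CSS code and convert to Pauli-dressed validity with Theorem~\ref{thm:dressing}. Your direct coordinate check is a valid extra confirmation but is not needed.
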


\begin{proof}
By Eq.~\eqref{eq:setup-swapshear} and $\sw^{2}=\Id$, the $X$-diagonal circuit preserves
$\lab$ if and only if the $Z$-diagonal circuit $\Ush{T}=\sw\Lsh{T}\sw$ preserves
$\lab\sw$. By Eq.~\eqref{eq:setup-swaplab} the space $\lab\sw$ is the label space of
the CSS code whose classical codes are $\CZ$ and $\CX$ in that order.
Corollary~\ref{cor:validity} applied to that CSS code therefore says that
$\Ush{T}$ preserves $\lab\sw$ if and only if $\CZ T\subseteq\CX$.
Theorem~\ref{thm:dressing}
converts preservation of $\lab$ into validity up to Pauli dressing.
\end{proof}

The valid $X$-diagonal circuits accordingly form the family
\begin{equation}
\Edn:=\bigl\{\,\Lsh{T}:T\in\Symm{n},\ \CZ T\subseteq\CX\,\bigr\},
\label{eq:setup-edn}
\end{equation}
again a subgroup of $\Nfull$, and an elementary abelian $2$-group for the same
reason as $\Eup$. Three remarks
about this family are worth recording. First, it is available on every CSS code,
since \eqref{eq:setup-dualcriterion} is a condition on $T$ and not on the code,
and $T=0$ always satisfies it. Second, it is the image of an upper family under
the duality. Conjugation by $\sw$ carries the family \eqref{eq:setup-eup} of the
code with classical codes $(\CZ,\CX)$ onto the family \eqref{eq:setup-edn} of the
code with classical codes $(\CX,\CZ)$. When $\CX=\CZ$ these two codes coincide,
so that $\Edn=\sw\Eup\sw$ is then literally the $H^{\otimes n}$-conjugate of
$\Eup$. For $\CX\neq\CZ$ the two families belong to two different codes, and no
such identification is available.

Third, and most importantly for the physical reading, the circuit realizing
$\Lsh{T}$ requires no code-preserving transversal Hadamard. By
\eqref{eq:setup-swapshear} that circuit may be written as
$H^{\otimes n}\Ush{T}H^{\otimes n}$. The two Hadamard layers are not
themselves valid gates, however, and need not be: conjugating the product
\eqref{eq:setup-circuit} by them simply rewrites it in the conjugate basis. That
rewriting produces a product of commuting gates. On the diagonal support of $T$
the product uses $\sqrt{X}$. On the off-diagonal support of $T$ it uses the
Hadamard conjugates of $\mathrm{CZ}$, which are $XX$-type two-qubit gates. The
result is again a product of commuting gates, of the same depth and locality as
\eqref{eq:setup-circuit}, whose validity criterion is
\eqref{eq:setup-dualcriterion}. These appendices never assume that a
code-preserving transversal Hadamard exists.

\subsection{Where the gate classes come from}
\label{app:setup-classes}

It is worth recording at this point where the two diagonal families come from. The
reason is that the same source supplies every other class of gates used below.

A polarization is precisely a pair of opposite Lagrangian subspaces. Both $\Xh$
and $\Zh$ are maximal isotropic --- the form \eqref{eq:setup-form} pairs $X$
labels only with $Z$ labels --- and $\Vsp=\Xh\oplus\Zh$. Now the stabilizer of a
Lagrangian subspace is a maximal parabolic subgroup of $\Sp{2n}$, the
\emph{Siegel} parabolic of that subspace. A polarization therefore determines two
opposite Siegel parabolics, $\Stb(\Zh)$ and $\Stb(\Xh)$. Their unipotent
radicals are exactly the two diagonal families. Indeed a symplectic map fixes $\Zh$
pointwise and acts trivially on $\Vsp/\Zh$ precisely when it is $\Ush{S}$ for
some symmetric $S$. The dual characterization holds for $\Xh$ and $\Lsh{T}$. One
reads both off \eqref{eq:setup-shearlabel}. The common Levi subgroup of the two
parabolics is $\Stb(\Xh)\cap\Stb(\Zh)\cong\GLg{n}$, realized by CNOT circuits.
The associated Weyl group is the monomial group of order $2^{n}n!$ generated by
the subset Hadamards and the qubit permutations.

Cutting the common Levi down to the code, exactly as
Corollaries~\ref{cor:validity} and~\ref{cor:dualfamily} cut down the two
radicals, produces the third family used below,
\begin{equation}
  \Lev:=\Bigl\{\begin{pmatrix}K&0\\ 0&K^{\itp}\end{pmatrix}
        :K\in\GLg{n}\Bigr\}\cap\Nfull .
  \label{eq:setup-lev}
\end{equation}
Such an element acts on labels by $(a|b)\mapsto(aK\,|\,bK^{\itp})$, so lying in
$\Nfull$ amounts to $\CX K\subseteq\CX$ and $\CZ K^{\itp}\subseteq\CZ$, the same
inclusion form as the two diagonal-circuit criteria \eqref{eq:setup-criterion} and
\eqref{eq:setup-dualcriterion}. Here, however, $K$ is invertible, so each
inclusion is automatically an equality. Both conditions are needed: the first
forces $K^{\itp}$ to preserve $\CX^{\perp}$, which contains $\CZ$ but in general
properly. In circuit terms $\Lev$ is the group of CNOT
circuits preserving both classical codes.

\begin{table*}[t]
\caption{\label{tab:setup-classes}\textit{The four gate classes and the
polarization.} The polarization $\Vsp=\Xh\oplus\Zh$ is a pair of opposite
Lagrangian subspaces, and each family of gates used in these appendices is one
of the four pieces of $\Sp{2n}$ that such a pair determines, cut down by the
requirement that the code be preserved and, in Appendix~\ref{app:depthone}, by a
support restriction. The same four pieces arise for any symplectic space
equipped with a polarization, and in Appendix~\ref{app:general} they are used a
second time on the logical space $\logsp$. Here
$\mathrm{CZ}^{H}:=H^{\otimes2}\,\mathrm{CZ}\,H^{\otimes2}$ is the
$H^{\otimes2}$-conjugate of $\mathrm{CZ}$, a two-qubit gate of $XX$ type.}
\begin{ruledtabular}
\begin{tabular}{llll}
piece determined by the polarization & subgroup of $\Sp{2n}$ & gate family &
physical gates \\
\colrule
Siegel radical of $\Stb(\Zh)$ &
$\{\Ush{S}:S\in\Symm{n}\}$ &
$\Eup$, Eq.~\eqref{eq:setup-eup} &
$\sqrt{Z}$ and $\mathrm{CZ}$ \\
opposite radical, of $\Stb(\Xh)$ &
$\{\Lsh{T}:T\in\Symm{n}\}$ &
$\Edn$, Eq.~\eqref{eq:setup-edn} &
$\sqrt{X}$ and $\mathrm{CZ}^{H}$ \\
common Levi, $\Stb(\Xh)\cap\Stb(\Zh)$ &
$\bigl\{\left(\begin{smallmatrix}K&0\\ 0&K^{\itp}\end{smallmatrix}\right):
K\in\GLg{n}\bigr\}$ &
$\Lev$, Eq.~\eqref{eq:setup-lev} &
$\mathrm{CNOT}$ \\
Weyl group, of order $2^{n}n!$ &
$\{\Hd{a}\pperm{\pi}:a\in\Ftwo^{n},\ \pi\in\Perm{n}\}$ &
$\PDgp$ (Appendix~\ref{app:permutations}) &
Hadamard and swap \\
\end{tabular}
\end{ruledtabular}
\end{table*}

Table~\ref{tab:setup-classes} collects the four pieces.
Every family of gates appearing in these appendices is one of these four,
intersected with $\Nfull$. Where a depth restriction is in force, the
intersection is taken with $\DM$ as well. Theorem~\ref{thm:sheargen}
says that the first two pieces already suffice for the whole of $\Nfull$ when
$r_{X},r_{Z}\ge1$. Theorem~\ref{thm:depthone} says that the first three suffice
at any fixed matching. In both statements the Weyl piece is redundant. That
redundancy is a statement about a \emph{fixed} set of qubits. On a fixed set of
qubits, a code-preserving Hadamard or swap, when one exists, is already a group
word in the other three classes. As soon as qubit permutations are admitted the
Weyl piece becomes indispensable, and Appendix~\ref{app:permutations} is devoted
to it. Theorem~\ref{thm:autnf} factors every element of the code's automorphism
group $\Gaut$ as a partial duality from $\PDgp$ followed by two transversal
diagonal circuits. In that factorization the $\PDgp$ factor is the one that carries the
permutation.

%% file: app_general.tex
\section{The physical Clifford group is generated by the two Siegel families}
\label{app:general}

Appendix~\ref{app:setup} reduced the question of which physical Clifford gates
preserve a stabilizer code to a question about labels alone. Specifically, a
Clifford admits a Pauli dressing making it preserve the code if and only if its
symplectic action preserves the label space. The code-preserving Clifford
group, taken modulo Pauli operators, is therefore exactly the setwise
stabilizer $\Nfull=\Stb_{\Sp{2n}}(\lab)$ of Theorem~\ref{thm:dressing}. The
same appendix identified the two elementary families that a CSS code always
supplies. The first family consists of circuits of
controlled-$Z$'s and phase gates whose label action is $\Ush{S}$
with $S$ symmetric and $\CX S\subseteq\CZ$ (Corollary~\ref{cor:validity}). The
second family consists of their $X$-basis mirrors, whose label action is
$\Lsh{T}$ with $T$ symmetric and $\CZ T\subseteq\CX$
(Corollary~\ref{cor:dualfamily}). We write $\Eup$ and $\Edn$ for these two
groups. Everything below is a statement about $\Nfull$, $\Eup$ and $\Edn$; Pauli
dressings are never mentioned again.

Throughout this appendix $\lab\subseteq\Vsp$ is an \emph{isotropic and split}
subspace, meaning that
\begin{equation}
  \sfm{\lab}{\lab}=0
  \qquad\text{and}\qquad
  \lab=(\lab\cap\Xh)\oplus(\lab\cap\Zh),
  \label{eq:general-split}
\end{equation}
and we let $\CX,\CZ\subseteq\Ftwo^{n}$ be the supports of the two summands,
$\lab\cap\Xh=(\CX|0)$ and $\lab\cap\Zh=(0|\CZ)$. We abbreviate
$r_X=\dim\CX$, $r_Z=\dim\CZ$, $r=r_X+r_Z=\dim\lab$ and $k=n-r$. Isotropy of a
split $\lab$ says precisely that $\sfm{(a|0)}{(0|b)}=a\cdot b$ vanishes for all
$a\in\CX$ and $b\in\CZ$, i.e.\ that $\CX\perp\CZ$. This orthogonality
$\CX\perp\CZ$ is the condition that makes a pair of classical codes into a CSS
code. Conversely, the label space of a CSS code is split isotropic. The two
diagonal families are then defined by the same formulas as in
Appendix~\ref{app:setup}. The result of this appendix is the following.

\begin{theorem}[Generation by diagonal circuits]
\label{thm:sheargen}
Let $\lab\subseteq\Vsp$ be isotropic and split with $r_X\geq1$ and $r_Z\geq1$.
Then
\begin{equation}
  \Nfull=\langle\Eup,\Edn\rangle .
  \label{eq:general-main}
\end{equation}
\end{theorem}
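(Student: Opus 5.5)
The plan is to reduce to a normal form of the code by a polarization-preserving change of coordinates, and then to exhibit inside $H:=\langle\Eup,\Edn\rangle$ generators for each of the three factors of the decomposition \eqref{eq:N-structure}, namely $\Urad$, $\GLg{r}$ and $\Sp{2k}$. Since $H\subseteq\Nfull$ trivially, only $\Nfull\subseteq H$ needs proof.

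\emph{Step 1: normalize the code.} Pick $K\in\GLg{n}$ such that $\CX K=\spano\{e_i:i\in\IX\}$ with $|\IX|=r_X$. Since $K^{\itp}$ carries $\CX^{\perp}$ onto the vectors vanishing on $\IX$, a further change of basis of the complement lets us also arrange $\CZ K^{\itp}=\spano\{e_i:i\in\IZ\}$, with $\IX,\IZ,\IL$ a partition of the qubits and $|\IL|=k$. Conjugation by $L=\bigl(\begin{smallmatrix}K&0\\0&K^{\itp}\end{smallmatrix}\right.$ $\left.\bigr)$ carries $\lab$ to this standard label space. It also carries $\Nfull$ to the standard stabilizer and $\Ush{S}\mapsto\Ush{K^{-1}SK^{\itp}}$, and similarly for $\Lsh{T}$. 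The validity criteria of Corollaries~\ref{cor:validity} and~\ref{cor:dualfamily} transform covariantly, so $L^{-1}\Eup L$ and $L^{-1}\Edn L$ are exactly the diagonal families of the standard code. Because $L$ is only a change of coordinates, it need not lie in $\Nfull$, and we may assume without loss of generality that the code is standard. Then $\Eup$ consists of the symmetric $S$ with $S_{\IX,\IX}=S_{\IX,\IL}=0$, while the blocks $S_{\IX,\IZ}$, $S_{\IZ,\IZ}$, $S_{\IZ,\IL}$, $S_{\IL,\IL}$ are free. Dually, $\Edn$ consists of the symmetric $T$ with $T_{\IZ,\IZ}=T_{\IZ,\IL}=0$ and the remaining blocks free.

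\emph{Step 2: the three factors.} (i) \emph{Logical factor.} The elements $\Ush{S}$ and $\Lsh{T}$ supported on $\IL\times\IL$ are the two opposite Siegel radicals of $\Sp{2k}$. These generate $\Sp{2k}$: they produce the subset Hadamards as $\Ush{a}\Lsh{a}\Ush{a}$, and CNOTs between logical coordinates as commutators of the kind described in (ii). Together with the Siegel parabolics this gives all of $\Sp{2k}$ by the Bruhat decomposition; for $k=1$ it can be checked directly, since two transvections generate $\Sp{2}\cong\Perm{3}$. (ii) \emph{Stabilizer relabeling.} The block $S_{\IX,\IZ}$ gives the transvections $(e_x|0)\mapsto(e_x|e_z)$, which add a $Z$-stabilizer to an $X$-stabilizer, and $T_{\IZ,\IX}$ gives the reverse moves. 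To mix $X$-stabilizers among themselves, I will use a pivot $z\in\IZ$, which exists because $r_Z\ge1$. The commutator of $\mathrm{CZ}_{xz}$ with $\mathrm{CZ}^{H}_{x'z}$, both of which are valid, is a Levi element $K=\Id+E_{xx'}$ possibly times a radical correction. Dually, a pivot $x\in\IX$, available because $r_X\ge1$, produces $\GLg{r_Z}$. Elementary matrices of $\GLg{r_X}\times\GLg{r_Z}$ together with the two off-block transvection families generate $\GLg{r}$ acting on $\lab$. (iii) \emph{Radical.} The remaining free blocks ($S_{\IZ,\IZ}$, $S_{\IZ,\IL}$, $T_{\IX,\IX}$, $T_{\IX,\IL}$) pad destabilizers with stabilizers and logicals with stabilizers. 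Commutators of these with the logical generators of (i) fill out the rest of $\Urad$, in particular the destabilizer-by-logical padding parameterized by $\radF$.

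\emph{Step 3: conclude.} Given $g\in\Nfull$, multiply by elements of $H$ to kill, in turn, its $\Sp{2k}$ component $\lgc$ and its $\GLg{r}$ component $\gau$. The remainder lies in $\Urad$, which by (iii) is contained in $H$. The main obstacle is step (ii) together with the bookkeeping in (iii). One must check that the pivot commutators really produce elementary relabelings, and that the radical corrections they leave behind lie in the part of $\Urad$ already generated. This is exactly the step that consumes $r_X,r_Z\ge1$. If it fails, a fallback is to compare $|H|$ with $|\Nfull|$ via Eq.~\eqref{eq:N-fibre}.
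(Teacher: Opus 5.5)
Your architecture is the paper's. Step~1 is the adapted basis of Sec.~\ref{app:general-basis}, and Step~3 is Corollary~\ref{cor:general-criterion}. Your pivot commutator in (ii) is exactly the bridge-index identity \eqref{eq:general-commutator}. The worry you flag there is unfounded. With $S=e_{xz}+e_{zx}$ and $T=e_{x'z}+e_{zx'}$ one has $ST=e_{xx'}$ and $STS=TST=0$. So $[\Ush{S},\Lsh{T}]=(\Ush{S}\Lsh{T})^{2}$ is exactly the Levi element with $K=\Id+e_{xx'}$, a pure relabeling with no radical correction. For (i), the paper uses transvections plus transitivity rather than Bruhat; either route works. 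Your commutator trick confined to $\IL$ needs a third logical index, but at $k=2$ the commutator $[\mathrm{CZ}_{12},\sqrt{X}_2]$ is a CNOT times $\sqrt{Z}_1$, which is harmless.

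The genuine gap is in (iii). In the adapted basis, your four free blocks supply $\radT$ only in the $(\IZ,\IZ)$ position (from $S_{\IZ\IZ}$) and the $(\IX,\IX)$ position (from $T_{\IX\IX}$). The $S_{\IZ\IL}$ and $T_{\IX\IL}$ elements have $\radT=0$. Nothing you list produces the cross block of $\radT$. This is the CNOT-type element $X_b\mapsto X_b+X_a$, $Z_a\mapsto Z_a+Z_b$ with $a\in\IX$, $b\in\IZ$, which pads each destabilizer with a stabilizer of the same Pauli type. Conjugation by the logical factor acts only on $\radF$: for $\ell=\Id\oplus\lgc\oplus\Id$ one has $\ell\,u(\radF,\radT)\,\ell^{-1}=u(\radF\lgc^{-1},\radT)$. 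So for $k\ge1$ this block can arise only through the $\radF\Jm\radF'^{\tp}$ cross terms of the multiplication law, which your sketch never invokes. At $k=0$, which the theorem allows, there are no logical generators at all and $\Urad=\Urad_0\cong\Symm{r}$. Your step (iii) then stops at $\Symm{r_X}\oplus\Symm{r_Z}$, missing $r_Xr_Z$ dimensions. The order-comparison fallback cannot fill this in, since $|H|$ is exactly what is unknown.

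The missing idea is to conjugate the radical by the \emph{relabeling} factor you already built in (ii), as the paper does:
\begin{equation*}
\ell\,u(\radF,\radT)\,\ell^{-1}=u(\gau^{\itp}\radF\lgc^{-1},\,\gau^{\itp}\radT\gau^{-1}).
\end{equation*}
The $\GLg{r}$-orbit of the rank-one element $u(0,e_b^{\tp}e_b)$ coming from one diagonal entry of $S_{\IZ\IZ}$ is then every $u(0,v^{\tp}v)$. These span $\Symm{r}$ via $e_{il}+e_{li}=(e_i+e_l)^{\tp}(e_i+e_l)+e_i^{\tp}e_i+e_l^{\tp}e_l$. The $\radF$ layer, modulo the central $\Urad_0$, is then the $\GLg{r}\times\Sp{2k}$-orbit of one $S_{\IZ\IL}$ entry.

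For $k\ge1$ you could instead use $[u(\radF,\cdot),u(\radF',\cdot)]=u(0,\radF\Jm\radF'^{\tp}+\radF'\Jm\radF^{\tp})$, with $\radF$ from $S_{\IZ\IL}$ and $\radF'$ from $T_{\IX\IL}$. Physically, $[\mathrm{CZ}_{bp},\mathrm{CZ}^{H}_{ap}]$ is exactly the CNOT with $K=\Id+e_{ba}$. This has to be stated, and it still does not cover $k=0$.
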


The hypotheses are worth emphasizing because of what is \emph{not} among them.
The argument uses no self-duality, no divisibility, no distance bound, and
indeed no reference to a code at all. The argument needs only that $\lab$ is
isotropic and split. ``CSS'' enters the proof solely as the splitness of
$\lab$, which is what lets a diagonal circuit see the two halves of $\lab$ separately. The
two-sidedness $r_X,r_Z\geq1$ enters the argument at exactly one point, which is
identified where it occurs.

By construction $\Nfull$ is the largest group of label actions available to a
code-preserving Clifford. Theorem~\ref{thm:sheargen} therefore says that every
other valid gate class is already a group word in the two diagonal families. This
applies in particular to the valid CNOT-type gates, that is to the Levi family
$\Lev$ of Eq.~\eqref{eq:setup-lev}. The same applies to the valid
permutation-automorphism gates of $\Gaut$. Adjoining either class to $\Eup$ and
$\Edn$ does not enlarge the generated group. This is a statement about label
actions, hence about gates modulo Pauli. The statement does not assert that a
CNOT circuit literally equals a product of diagonal and $X$-basis circuits as a
unitary. The statement asserts only that the circuit and the product agree up
to a Pauli factor, which is all that matters for code preservation. We recall
too from the discussion following Corollary~\ref{cor:dualfamily} that a circuit
realizing an $X$-diagonal circuit may be written $H^{\otimes n}\Ush{T}H^{\otimes n}$.
That circuit is a product of commuting $\sqrt{X}$- and $XX$-type gates. No
code-preserving transversal Hadamard is needed for $\Edn$ to be available.

Every $g\in\Nfull$ preserves $\lab$ and therefore also $\lab^{\perp}$. Every
such $g$ thus induces an action on the quotient $\lab^{\perp}/\lab$, a
space of dimension $2k$ on which the symplectic form descends nondegenerately.
This defines a homomorphism
\begin{equation}
  \lact:\Nfull\longrightarrow\Sp{2k},
  \label{eq:general-lact}
\end{equation}
which reads off the logical action of a code-preserving gate and forgets
everything invisible to the encoded qubits;
Section~\ref{app:general-stab} identifies $\lact$ with the projection onto one
factor of the block-diagonal subgroup $\GLg{r}\times\Sp{2k}$, and shows in
particular that it is surjective.
Applying $\lact$ to Eq.~\eqref{eq:general-main} gives
$\lact(\langle\Eup,\Edn\rangle)=\Sp{2k}$: under the hypotheses of
Theorem~\ref{thm:sheargen} the two diagonal families realize \emph{every} logical
Clifford operation on the $k$ encoded qubits. Logical completeness is therefore
not an additional theorem but the image of the present one.

The proof occupies Sections~\ref{app:general-basis}--\ref{app:general-proof}. In
outline: we build a symplectic basis adapted to $\lab$
(Section~\ref{app:general-basis}). We then determine $\Nfull$ completely in that
basis, finding it to be a maximal parabolic subgroup of $\Sp{2n}$ with an
explicit semidirect decomposition (Section~\ref{app:general-stab}). We compute
which diagonal circuits survive, block by block (Section~\ref{app:general-blocks}).
Finally we check that the surviving blocks generate first the block-diagonal
factor and then the unipotent radical (Section~\ref{app:general-proof}).

\subsection{An adapted symplectic basis}
\label{app:general-basis}

It is worth being explicit about what kind of object $\lab$ is, since the entire
argument is linear algebra performed on it. The label map sends $X\mapsto(1|0)$,
$Z\mapsto(0|1)$ and $Y\mapsto(1|1)$ on each qubit, and forgets phases. Under
that map, multiplication of Pauli operators becomes addition of vectors modulo
two, as one sees from the composition rule
$\Pau{a}{b}\Pau{a'}{b'}=(-1)^{b\cdot a'}\Pau{a+a'}{b+b'}$ of
Appendix~\ref{app:setup}. Consequently the image of the \emph{entire} stabilizer
group $\stabgp$ is the $\Ftwo$-linear span of the labels of any generating set.
That is, the multiplicative closure of the group is the additive closure of its
labels. Thus $\lab$ is an $\Ftwo$-subspace of $\Vsp$ --- a classical binary
code --- and not an algebra. The only piece of Pauli data not carried by the
linear structure of $\lab$ is commutation, which is recorded by the symplectic
form. Moreover, the statement that the operators of $\stabgp$ commute pairwise
is exactly the statement $\sfm{\lab}{\lab}=0$ appearing
in Eq.~\eqref{eq:general-split}.

We now choose coordinates in which $\lab$ is a coordinate subspace. Pick a basis
$u_1,\dots,u_n$ of $\Ftwo^{n}$ and let $w_1,\dots,w_n$ be its Euclidean dual
basis, $u_i\cdot w_j=\delta_{ij}$. Setting
\begin{equation}
  X_i:=(u_i|0),\qquad Z_j:=(0|w_j),
  \label{eq:general-symplbasis}
\end{equation}
we obtain a symplectic basis of $\Vsp$, since
$\sfm{X_i}{Z_j}=u_i\cdot w_j=\delta_{ij}$ while
$\sfm{X_i}{X_{i'}}=\sfm{Z_j}{Z_{j'}}=0$.

The point of splitness is that the $u_i$ may be chosen so that $\lab$ is spanned
by basis vectors. Since $\lab$ is isotropic and split we have $\CX\perp\CZ$,
that is $\CX\subseteq\CZ^{\perp}$. Choose a basis of $\CX$. Extend that basis
to a basis of the larger space $\CZ^{\perp}$, and then extend further to a basis
$u_1,\dots,u_n$ of all of $\Ftwo^{n}$. Index the first $r_X$ vectors by a set
$\IX$, the $r_Z$ vectors added in the last step by a set $\IZ$, and the
$k=n-r_X-r_Z$ vectors added in the middle step by a set $\IL$; after reordering
we may take
\begin{equation}
\begin{split}
  \IX&=\{1,\dots,r_X\},\qquad \IZ=\{r_X+1,\dots,r\},\\
  \IL&=\{r+1,\dots,n\},
\end{split}
\label{eq:general-indexsets}
\end{equation}
so that $\{1,\dots,n\}=\IX\sqcup\IZ\sqcup\IL$. By construction
$\CX=\spano(u_a:a\in\IX)$ and $\spano(u_i:i\notin\IZ)=\CZ^{\perp}$. The dual
basis then produces $\CZ$ with no further work: a vector
$v=\sum_{b\in\IZ}c_bw_b$ satisfies $v\cdot u_i=c_i$ for $i\in\IZ$ and
$v\cdot u_i=0$ for $i\notin\IZ$, whence
$\spano(w_b:b\in\IZ)\subseteq(\spano(u_i:i\notin\IZ))^{\perp}$; the two sides
have the same dimension $r_Z$, so
\begin{equation}
  \spano(w_b:b\in\IZ)=(\CZ^{\perp})^{\perp}=\CZ .
  \label{eq:general-dualbasis}
\end{equation}
In the resulting basis $\CX=\spano(u_a:a\in\IX)$ and $\CZ=\spano(w_b:b\in\IZ)$,
and correspondingly $\lab\cap\Xh=\spano(X_a:a\in\IX)$ and
$\lab\cap\Zh=\spano(Z_b:b\in\IZ)$.

The sets $\IX,\IZ,\IL$ are sets of
\emph{qubit indices}, subsets of $\{1,\dots,n\}$; they are not sets of
operators. At \emph{each} index $i$ the adapted basis contains \emph{two
distinct} vectors, the pure-$X$ vector $X_i$ and the pure-$Z$ vector $Z_i$,
which are labels of different Pauli operators. What distinguishes the three
index sets is not how many vectors sit at an index but which of the two vectors
belongs to $\lab$. Table~\ref{tab:general-adapted} records this distinction.
Thus each of $\IX,\IZ,\IL$ indexes a \emph{conjugate pair} of basis vectors. A
stabilizer and its destabilizer share only an index: as vectors $X_a\neq Z_a$,
and indeed
$\sfm{X_a}{Z_a}=1$, so they anticommute as operators.

\begin{table}[htb]
\caption{\label{tab:general-adapted}%
The two adapted basis vectors sitting at a single qubit index $i$, and the
subspace each belongs to, for $i$ in each of the three index
sets~\eqref{eq:general-indexsets}. An index labels a conjugate pair; a
stabilizer and its destabilizer share only that index and are different Paulis.}
\begin{tabular}{lll}
\toprule
index & $X_i$ & $Z_i$\\
\midrule
$a\in\IX$ & $X$-stabilizer $(\in\lab)$ & destabilizer $(\in\dst)$\\
$b\in\IZ$ & destabilizer $(\in\dst)$ & $Z$-stabilizer $(\in\lab)$\\
$p\in\IL$ & logical $X$ $(\in\logsp)$ & logical $Z$ $(\in\logsp)$\\
\bottomrule
\end{tabular}
\end{table}

Throughout, $\lab^{\perp}$ denotes the \emph{symplectic} perpendicular of
$\lab$,
\begin{equation}
  \lab^{\perp}=\bigl\{v\in\Vsp:\sfm{v}{\ell}=0\ \text{ for all }\ell\in\lab\bigr\},
  \label{eq:general-perpdef}
\end{equation}
equivalently the set of labels of the Pauli operators commuting with every
element of $\stabgp$. It is not the Euclidean orthogonality of
\eqref{eq:setup-css}, which relates the two classical codes and happens to be
written with the same symbol. Isotropy of $\lab$ is the statement
$\lab\subseteq\lab^{\perp}$.

With this convention the four subspaces we shall use are
\begin{subequations}
\label{eq:general-spaces}
\begin{align}
  \lab &=\spano\{X_a:a\in\IX\}\nonumber\\
  &\qquad\oplus\spano\{Z_b:b\in\IZ\},
  \label{eq:general-spaces-L}\\
  \lab^{\perp}&=\spano\{X_i:i\in\IX\cup\IL\}\nonumber\\
  &\qquad\oplus\spano\{Z_j:j\in\IZ\cup\IL\},
  \label{eq:general-spaces-Lperp}\\
  \dst &=\spano\{Z_a:a\in\IX\}\nonumber\\
  &\qquad\oplus\spano\{X_b:b\in\IZ\},
  \label{eq:general-spaces-D}\\
  \logsp &=\spano\{X_p,Z_p:p\in\IL\},
  \label{eq:general-spaces-Q}
\end{align}
\end{subequations}
of dimensions $r$, $r+2k$, $r$ and $2k$ respectively. The description of
$\lab^{\perp}$ is immediate: writing $v=\sum_ix_iX_i+\sum_jz_jZ_j$ we have
$\sfm{v}{X_a}=z_a$ and $\sfm{v}{Z_b}=x_b$, so $v\in\lab^{\perp}$ iff $z_a=0$ for
all $a\in\IX$ and $x_b=0$ for all $b\in\IZ$. The subspace $\dst$ is the chosen
\emph{destabilizer} complement, isotropic and paired with $\lab$ by the
symplectic form. The subspace $\logsp$ is the chosen \emph{logical} complement,
a concrete representative of the quotient $\lab^{\perp}/\lab$ on which the form
restricts nondegenerately. Counting dimensions, $r+2k+r=2n$, and one checks
directly from Table~\ref{tab:general-adapted} that the three subspaces intersect
trivially, so that
\begin{equation}
  \Vsp=\lab\oplus\logsp\oplus\dst .
  \label{eq:general-decomp}
\end{equation}

\subsection{The stabilizer of a split isotropic subspace}
\label{app:general-stab}

Every $g\in\Nfull$ preserves $\lab$ setwise and therefore also
$\lab^{\perp}$. It consequently induces two quotient actions. The first is a
\emph{stabilizer relabeling} $\gau:=g|_{\lab}\in\GLg{r}$, which changes the
chosen generating set of $\stabgp$ and is invisible to the encoded information.
We call $\gau$ simply the relabeling below. The second is a \emph{logical}
action $\lgc:=g|_{\lab^{\perp}/\lab}$, represented on $\logsp$. In the ordered
basis $(X_p)_{p\in\IL},(Z_p)_{p\in\IL}$ of $\logsp$ the symplectic form has Gram
matrix $\Jm$ of size $2k$, so $\lgc\in\Sp{2k}$. We now determine all of
$\Nfull$, not merely these two quotients.

Order the decomposition~\eqref{eq:general-decomp} as $\lab,\logsp,\dst$ and use
the bases listed in Eq.~\eqref{eq:general-spaces}. Order the basis of $\dst$ so
that $Z_a$ is dual to $X_a$ and $X_b$ is dual to $Z_b$. In these coordinates
the symplectic form has Gram matrix
\begin{equation}
  \begin{pmatrix}
    0 & 0 & \Id\\
    0 & \Jm & 0\\
    \Id & 0 & 0
  \end{pmatrix},
  \label{eq:general-gram}
\end{equation}
the three vanishing diagonal blocks recording that $\lab$ and $\dst$ are
isotropic and that $\logsp\perp\lab$ and $\logsp\perp\dst$, and the two identity
blocks recording the duality of $\lab$ and $\dst$. (Producing such a
decomposition uses only the isotropy of $\lab$. Splitness is not needed again
until Section~\ref{app:general-blocks}.)

\begin{lemma}[Block form of $\Nfull$]
\label{lem:general-blockform}
With rows and columns ordered $\lab,\logsp,\dst$, a matrix $g\in\Sp{2n}$ lies in
$\Nfull$ if and only if it has the form
\begin{equation}
  g=\begin{pmatrix}
      \gau & 0 & 0\\
      \lgc\Jm\radF^{\tp}\gau & \lgc & 0\\
      \radT & \radF & \gau^{\itp}
    \end{pmatrix},
  \label{eq:general-blockform}
\end{equation}
where $\gau\in\GLg{r}$ and $\lgc\in\Sp{2k}$ are arbitrary, $\radF$ is an
arbitrary $r\times 2k$ matrix, and the $r\times r$ matrix $\radT$ satisfies
\begin{equation}
  \gau^{\itp}\radT^{\tp}+\radT\gau^{-1}=\radF\Jm\radF^{\tp}.
  \label{eq:general-radconstraint}
\end{equation}
Moreover~\eqref{eq:general-radconstraint} is solvable for every $\radF$, its
solution set being an affine space of dimension $r(r+1)/2$.
\end{lemma}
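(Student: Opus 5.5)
The plan is to read off the zero pattern from the two subspaces that $g$ must preserve, and then impose the symplectic condition in the adapted coordinates of Eq.~\eqref{eq:general-gram}. We work with the right action, so the rows of $g$ are the images of the ordered basis vectors of $\lab,\logsp,\dst$. Write a general $g$ in $3\times3$ block form with blocks $g_{ij}$.

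The first step uses $\lab g=\lab$. The images of the basis of $\lab$ must lie in $\lab$, which forces $g_{12}=g_{13}=0$. Set $\gau:=g_{11}$; it is invertible because $g$ is. The second step uses $\lab^{\perp} g=\lab^{\perp}$. Isometries preserve perpendiculars, and $\lab^{\perp}=\lab\oplus\logsp$, so the rows indexed by $\logsp$ have no $\dst$-component, i.e.\ $g_{23}=0$. This gives the lower block-triangular shape
\begin{equation*}
  g=\begin{pmatrix}\gau&0&0\\ P&\lgc&0\\ \radT&\radF&\Delta\end{pmatrix}.
\end{equation*}
Conversely, any symplectic matrix of this shape satisfies $\lab g\subseteq\lab$, and hence $\lab g=\lab$ by invertibility. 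So membership in $\Nfull$ is exactly this shape together with the symplectic condition.

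The third step writes the symplectic condition as $g\,\Omega\, g^{\tp}=\Omega$, with $\Omega$ the Gram matrix~\eqref{eq:general-gram}, and compares blocks. The diagonal and upper blocks that vanish for shape reasons are automatic. The remaining blocks give:
\begin{itemize}
\item Block $(1,3)$: $\gau\Delta^{\tp}=\Id$, so $\Delta=\gau^{\itp}$.
\item Block $(2,2)$: $\lgc\Jm\lgc^{\tp}=\Jm$, so $\lgc\in\Sp{2k}$.
\item Block $(2,3)$: $\lgc\Jm\radF^{\tp}+P\Delta^{\tp}=0$, which gives $P=\lgc\Jm\radF^{\tp}\gau$.
\item Block $(3,3)$: $\Delta\radT^{\tp}+\radF\Jm\radF^{\tp}+\radT\Delta^{\tp}=0$, which is exactly Eq.~\eqref{eq:general-radconstraint} after substituting $\Delta=\gau^{\itp}$.
\end{itemize}
Since these equations are equivalent to $g\Omega g^{\tp}=\Omega$, the converse holds as well: every such matrix is symplectic and lies in $\Nfull$.

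For solvability, substitute $M:=\radT\gau^{-1}$. Then Eq.~\eqref{eq:general-radconstraint} reads $M+M^{\tp}=\radF\Jm\radF^{\tp}$. Over $\Ftwo$ the matrix $\Jm$ is alternating, since $x\Jm x^{\tp}=2x_1x_2=0$. Hence $\radF\Jm\radF^{\tp}$ is symmetric with zero diagonal, i.e.\ alternating. Every alternating matrix has the form $M_0+M_0^{\tp}$; take $M_0$ to be its strictly upper-triangular part. The solution set is therefore $M_0+\Symm{r}$, an affine space of dimension $r(r+1)/2$. Since $\radT=M\gau$ is a bijection, the same holds for $\radT$.

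The only real obstacle is bookkeeping. The row/right-action convention must be kept consistent so that the transposes and the factor $\gau$ versus $\gau^{-1}$ in $P$ land exactly as in Eq.~\eqref{eq:general-blockform}. To catch errors, I would check the $(2,3)$ block both via $g\Omega g^{\tp}$ and via $g^{\tp}\Omega g$.
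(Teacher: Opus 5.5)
Your proposal is correct and follows essentially the same route as the paper. You read off the zero pattern from preservation of $\lab$ and of $\lab^{\perp}=\lab\oplus\logsp$, then impose the blockwise form conditions against the Gram matrix~\eqref{eq:general-gram}. Those conditions fix $g_{33}=\gau^{\itp}$, $\lgc\in\Sp{2k}$, $g_{21}=\lgc\Jm\radF^{\tp}\gau$ and the constraint on $\radT$. Finally, the substitution $M=\radT\gau^{-1}$ reduces solvability to $M\mapsto M+M^{\tp}$, which maps onto the alternating matrices with kernel $\Symm{r}$. The only cosmetic differences are that you write the form condition as $g\Omega g^{\tp}=\Omega$ rather than as pairings of block rows, and you take the strictly upper- rather than lower-triangular preimage.
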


\begin{proof}
Vectors are rows and matrices act on the right. Writing
$v=(v_1,v_2,v_3)$ for the components of $v$ in $\lab\oplus\logsp\oplus\dst$, the
image $vg$ has $\lab$-component $v_1g_{11}+v_2g_{21}+v_3g_{31}$ and so on.
Preservation of $\lab$ forces $g_{12}=g_{13}=0$, since a vector $(v_1,0,0)$
must map into $\lab$. Preservation of $\lab^{\perp}=\lab\oplus\logsp$ then
forces $g_{23}=0$. Hence $g$ is block lower triangular, and we name the
surviving blocks $\gau:=g_{11}$, $\lgc:=g_{22}$, $\radT:=g_{31}$,
$\radF:=g_{32}$, leaving
$g_{21}$ and $g_{33}$ to be determined. That $\gau$ is invertible follows
because $g$ is and $\lab$ is $g$-invariant, and $\gau$ is by definition the
relabeling. Likewise $\lgc$ is the induced action on $\lab^{\perp}/\lab$.

Now impose that $g$ preserve the form~\eqref{eq:general-gram}. Writing
$\sfm{u}{v}=u_1\cdot v_3+u_2\Jm v_2^{\tp}+u_3\cdot v_1$ and denoting by
$R_1,R_2,R_3$ the three block rows of $g$, the condition is that
$\sfm{R_m}{R_{m'}}$ reproduce~\eqref{eq:general-gram}. Because the form is
symmetric, it suffices to impose the six conditions with $m\leq m'$, which we
take in turn. The conditions
$\sfm{R_1}{R_1}=0$ and $\sfm{R_1}{R_2}=0$ hold identically, since the second and
third components of $R_1$ vanish. The condition $\sfm{R_1}{R_3}=\Id$ reads
$\gau g_{33}^{\tp}=\Id$ and therefore fixes the contragredient block
$g_{33}=\gau^{\itp}$. The condition $\sfm{R_2}{R_2}=\Jm$ reads
$\lgc\Jm\lgc^{\tp}=\Jm$, i.e.\ $\lgc\in\Sp{2k}$. The condition
$\sfm{R_2}{R_3}=0$ reads $g_{21}g_{33}^{\tp}+\lgc\Jm\radF^{\tp}=0$. Since
$g_{33}^{\tp}=\gau^{-1}$ and we are in characteristic two, that condition
determines the lower-left block of the middle row,
$g_{21}=\lgc\Jm\radF^{\tp}\gau$, as displayed in Eq.~\eqref{eq:general-blockform}.
Finally $\sfm{R_3}{R_3}=0$ reads
$\radT\gau^{-1}+\radF\Jm\radF^{\tp}+\gau^{\itp}\radT^{\tp}=0$, which
is~\eqref{eq:general-radconstraint}. Conversely, a matrix of the displayed shape
satisfying these equations preserves both~\eqref{eq:general-gram} and $\lab$,
hence lies in $\Nfull$.

It remains to see that~\eqref{eq:general-radconstraint} can always be solved.
Substituting $Y:=\radT\gau^{-1}$ turns it into $Y+Y^{\tp}=\radF\Jm\radF^{\tp}$.
The right-hand side is symmetric, because $\Jm$ is, and has zero diagonal. The
$i$th diagonal entry of $\radF\Jm\radF^{\tp}$ is the symplectic form of the
$i$th row of $\radF$ with itself, which vanishes because that form is
alternating. The map $Y\mapsto Y+Y^{\tp}$ sends $r\times r$ matrices onto
exactly the symmetric matrices with zero diagonal. A preimage of such a $W$ is
the strictly lower triangular part of $W$. The kernel of the map is
$\Symm{r}$. Hence a solution $Y$ exists for every $\radF$, and the solutions
form a coset of $\Symm{r}$, of dimension
$r(r+1)/2$. Multiplying on the right by $\gau$ returns $\radT$.
\end{proof}

Setting $\gau=\Id$ and $\lgc=\Id$ in Eq.~\eqref{eq:general-blockform} isolates the
subgroup $\Urad$ of all matrices
\begin{equation}
  u(\radF,\radT):=
  \begin{pmatrix}
    \Id & 0 & 0\\
    \Jm\radF^{\tp} & \Id & 0\\
    \radT & \radF & \Id
  \end{pmatrix},
  \quad \radT+\radT^{\tp}=\radF\Jm\radF^{\tp},
  \label{eq:general-urad}
\end{equation}
with $\radF$ free, the displayed constraint
being~\eqref{eq:general-radconstraint} at $\gau=\Id$. By the dimension count in
Lemma~\ref{lem:general-blockform},
\begin{equation}
\begin{split}
  |\Urad|&=2^{2kr+r(r+1)/2},\\
  |\Nfull|&=|\Urad|\cdot|\GLg{r}|\cdot|\Sp{2k}| .
\end{split}
\label{eq:general-orders}
\end{equation}
Explicitly, over $\Ftwo$ one has $|\GLg{r}|=2^{r(r-1)/2}\prod_{i=1}^{r}(2^{i}-1)$
and $|\Sp{2k}|=2^{k^{2}}\prod_{i=1}^{k}(4^{i}-1)$, so the order is
\begin{equation}
  |\Nfull|=2^{n^{2}}\prod_{i=1}^{r}\bigl(2^{i}-1\bigr)
                    \prod_{i=1}^{k}\bigl(4^{i}-1\bigr),
  \label{eq:general-orderclosed}
\end{equation}
in the code parameters alone, $n=r+k$ being the number of qubits. As a check,
$r=0$ returns $|\Nfull|=|\Sp{2n}|$, consistent with $\lab=0$ being preserved by
all of $\Sp{2n}$. Likewise $k=0$ returns $2^{n^{2}}\prod_{i=1}^{n}(2^{i}-1)$,
the order of the Siegel parabolic that stabilizes a Lagrangian.

The block-diagonal matrices $\gau\oplus\lgc\oplus\gau^{\itp}$, which
are exactly the elements of Eq.~\eqref{eq:general-blockform} with $\radF=0$ and
$\radT=0$, form a subgroup isomorphic to $\GLg{r}\times\Sp{2k}$. Given any
$g\in\Nfull$ with data $(\gau,\lgc,\radF,\radT)$, multiply on the right by the
inverse of $\gau\oplus\lgc\oplus\gau^{\itp}$. The result is an element with
trivial relabeling and logical parts, that is an element of $\Urad$. A
block-diagonal element lying in $\Urad$ is the identity. Hence
\begin{equation}
  \Nfull=\Urad\rtimes\left(\GLg{r}\times\Sp{2k}\right).
  \label{eq:general-levi}
\end{equation}
The subgroup $\Urad$ is unipotent, normal, and a $2$-group, and it is the
unipotent radical of $\Nfull$, while the block-diagonal subgroup is a complement
to it; \eqref{eq:general-levi} exhibits $\Nfull$ as the maximal parabolic
subgroup of $\Sp{2n}$ stabilizing the isotropic subspace $\lab$, split over its
radical. The general theory of parabolic subgroups, their structure theory, and
the stabilizers of isotropic subspaces in the classical groups are treated in
the references below. The reader may consult Ref.~\cite{carter1972simple},
Ch.~8;
Ref.~\cite{malle2011linear}, \S\S12 and~17;
Ref.~\cite{taylor1992geometry}, Ch.~8; and
Ref.~\cite{grove2002classical}.

The consequence of Eq.~\eqref{eq:general-levi} that we shall use is the following.
That consequence is the reason the whole proof reduces to a finite list of block
computations.

\begin{corollary}[Block-diagonal-plus-radical criterion]
\label{cor:general-criterion}
A subgroup of $\Nfull$ that contains $\Urad$ and contains every block-diagonal
element $\gau\oplus\lgc\oplus\gau^{\itp}$ with $\gau\in\GLg{r}$ and
$\lgc\in\Sp{2k}$ is equal to $\Nfull$.
\end{corollary}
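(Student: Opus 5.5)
This follows directly from the semidirect decomposition \eqref{eq:general-levi}; no further structure theory is needed. Let $H\subseteq\Nfull$ be a subgroup containing $\Urad$ and every block-diagonal element $\gau\oplus\lgc\oplus\gau^{\itp}$. We must show that every $g\in\Nfull$ lies in $H$.

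First, read off the data $(\gau,\lgc,\radF,\radT)$ of $g$ from Lemma~\ref{lem:general-blockform}. Put $h:=\gau\oplus\lgc\oplus\gau^{\itp}$, which lies in $H$ by hypothesis. Second, compute $u:=gh^{-1}$ blockwise:
\begin{itemize}
\item the $(1,1)$ block is $\gau\gau^{-1}=\Id$;
\item the $(2,2)$ block is $\lgc\lgc^{-1}=\Id$;
\item the $(3,3)$ block is $\gau^{\itp}\gau^{\tp}=\Id$;
\item the three upper blocks stay zero, because $h^{-1}$ is block diagonal and $g$ is block lower triangular.
\end{itemize}
So $u$ is block lower unitriangular with respect to $\lab\oplus\logsp\oplus\dst$. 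Since $u\in\Nfull$, Lemma~\ref{lem:general-blockform} applied at $\gau=\Id$, $\lgc=\Id$ forces $u$ to have the shape \eqref{eq:general-urad}, with its constraint holding automatically. Hence $u\in\Urad\subseteq H$.

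Third, $g=uh$ is a product of two elements of $H$, so $g\in H$. Therefore $\Nfull\subseteq H$, and the reverse inclusion $H\subseteq\Nfull$ holds by assumption.

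The only point needing care is the second step: we must confirm that $gh^{-1}$ really lands in $\Urad$, rather than merely being lower triangular. This holds because the characterization in Lemma~\ref{lem:general-blockform} is an "if and only if" inside $\Sp{2n}$, and $gh^{-1}$ is symplectic and preserves $\lab$, being a product of two elements of $\Nfull$.

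Equivalently, $\Nfull=\Urad\cdot(\GLg{r}\times\Sp{2k})$ as a product of sets, and any subgroup containing both factors contains their product. No splitness of $\lab$ is used at this stage.
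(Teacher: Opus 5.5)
Your proposal is correct and follows the paper's own argument. The paper likewise right-multiplies $g$ by the inverse of $\gau\oplus\lgc\oplus\gau^{\itp}$ to obtain an element of $\Urad$, and then concludes from the factorization $g=uh$; your blockwise check that $gh^{-1}$ is block lower unitriangular just spells out that step.
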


\begin{proof}
By the factorization established just above, every $g\in\Nfull$ is a product of
an element of $\Urad$ and a block-diagonal element. A subgroup containing both
sets contains all such products.
\end{proof}

\subsection{Stabilizer-logical-destabilizer block form of $Z$-diagonal circuits}
\label{app:general-blocks}

We must now express the two diagonal families in the adapted basis. We then read
off which parameters of those families survive the requirement that $\lab$ be
preserved.

Two bookkeeping remarks come first. The parameter $S$ of a $Z$-diagonal circuit
$\Ush{S}$ is an $n\times n$ symmetric matrix indexed by \emph{qubit indices},
that is by $\IX\sqcup\IZ\sqcup\IL$ with block sizes $r_X,r_Z,k$. The row $i$ of
$S$ refers to $X_i$ and the column $j$ to $Z_j$. The parameter $S$ is therefore
\emph{not} indexed by the three subspaces $\lab,\logsp,\dst$ of
Section~\ref{app:general-stab}, even though both decompositions carry three
blocks. The two decompositions must never be identified position by position.
Indeed, by Table~\ref{tab:general-adapted} each qubit index carries a conjugate
pair. So in $S$ the $\IZ$-rows refer to the $X$-type \emph{destabilizers} $X_b$
and the $\IX$-columns to the $Z$-type destabilizers $Z_a$. That is how
destabilizers enter a diagonal-circuit parameter at all. Second, passing from the standard
basis to the adapted basis
of Eq.~\eqref{eq:general-symplbasis} is effected by the symplectic change of
coordinates $(x|z)\mapsto(xK|zK^{\itp})$. The matrix $K$ here has rows
$u_1,\dots,u_n$. Under that change of coordinates a $Z$-diagonal circuit remains an
$Z$-diagonal circuit, with parameter $S\mapsto KSK^{\tp}$. This is an invertible
congruence of $\Symm{n}$, so describing the valid parameters in the adapted
basis describes the whole family.

\begin{lemma}[Valid $Z$-diagonal circuits]
\label{lem:general-uppershears}
In the adapted basis, $\Ush{S}\in\Nfull$ if and only if the $(\IX,\IX)$ and
$(\IX,\IL)$ blocks of $S$ vanish, i.e.\ if and only if
\begin{equation}
  S=\begin{pmatrix}
      0 & S_{XZ} & 0\\
      S_{XZ}^{\tp} & S_{ZZ} & S_{ZL}\\
      0 & S_{ZL}^{\tp} & S_{LL}
    \end{pmatrix}
  \label{eq:general-shearblocks}
\end{equation}
with rows and columns ordered $\IX,\IZ,\IL$, where $S_{XZ}$ is an arbitrary
$r_X\times r_Z$ matrix, $S_{ZL}$ an arbitrary $r_Z\times k$ matrix, and
$S_{ZZ}\in\Symm{r_Z}$, $S_{LL}\in\Symm{k}$ arbitrary symmetric matrices.
Consequently
\begin{equation}
\begin{split}
  \log_2|\Eup|&=r_Xr_Z+r_Zk\\
    &\quad+\tfrac{1}{2}r_Z(r_Z+1)+\tfrac{1}{2}k(k+1).
\end{split}
\label{eq:general-shearcount}
\end{equation}
\end{lemma}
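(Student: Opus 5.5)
The plan is to reduce the statement to the validity criterion of Corollary~\ref{cor:validity}, transported to the adapted basis. By the second bookkeeping remark, the change of coordinates $(x|z)\mapsto(xK|zK^{\itp})$ carries $\Ush{S}$ to a $Z$-diagonal circuit with parameter $KSK^{\tp}$ and carries $\CX,\CZ$ to coordinate spans. Concretely, in the adapted basis $\lab\cap\Xh=\spano(X_a:a\in\IX)$ and $\lab\cap\Zh=\spano(Z_b:b\in\IZ)$. So I will work directly with $S$ indexed by $\IX\sqcup\IZ\sqcup\IL$, with row $i$ referring to $X_i$ and column $j$ to $Z_j$. In these coordinates the label map is $(x|z)\mapsto(x|z+xS)$, and the relevant ``$\CX$'' is the set of $x$ supported on $\IX$ while ``$\CZ$'' is the set of $z$ supported on $\IZ$.

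The criterion $\lab\,\Ush{S}=\lab$ then reads: for every $x$ supported on $\IX$, the vector $xS$ must be supported on $\IZ$. I will run the argument of Corollary~\ref{cor:validity} verbatim in this basis, using splitness. Taking $x$ to range over the unit vectors $e_a$ with $a\in\IX$, the condition says that row $a$ of $S$ vanishes on the columns in $\IX\cup\IL$. Hence the $(\IX,\IX)$ and $(\IX,\IL)$ blocks must be zero. The converse is immediate by linearity.

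Symmetry of $S$ then forces the transposed block $(\IL,\IX)$ to vanish as well. The remaining blocks are constrained only by symmetry. This gives the displayed form with $S_{XZ}$ and $S_{ZL}$ free, $S_{ZZ}$ and $S_{LL}$ symmetric, and the $(\IZ,\IX)$ and $(\IL,\IZ)$ blocks equal to the transposes.

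For the count, $\Eup$ is isomorphic to the additive group of admissible parameters, as noted after Eq.~\eqref{eq:setup-eup}. The congruence $S\mapsto KSK^{\tp}$ is a linear bijection, so the admissible space has the same dimension in either basis. Adding the free entries gives $r_Xr_Z+r_Zk+\tfrac12 r_Z(r_Z+1)+\tfrac12 k(k+1)$.

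The only delicate point is the basis bookkeeping: making sure the transported criterion really is ``rows indexed by $\IX$ map into $\IZ$-columns.'' This relies on the dual-basis identity \eqref{eq:general-dualbasis}, and it is exactly where the two index conventions (qubit index versus subspace) must not be confused. I would verify this first by checking the congruence formula $S\mapsto KSK^{\tp}$ directly from the matrix product $\mathrm{diag}(K,K^{\itp})^{-1}\,\Ush{S}\,\mathrm{diag}(K,K^{\itp})$, or its appropriately ordered version.
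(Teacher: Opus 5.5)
Your proposal is correct and matches the paper's proof. The paper also works directly in the adapted basis: it notes that $\Ush{S}$ fixes every $Z_b$ and sends $X_a\mapsto X_a+\sum_j S_{aj}Z_j$, then uses splitness ($\lab\cap\Zh=\spano(Z_b:b\in\mathcal{I}_Z)$) to force $S_{aj}=0$ for $j\notin\mathcal{I}_Z$, which is your unit-vector argument, and it finishes with the same symmetry and free-entry count. On your bookkeeping check: with row vectors acting on the right, the conjugation that yields the parameter $KSK^{\tp}$ is $P\,\Ush{S}\,P^{-1}$ with $P=\mathrm{diag}(K,K^{\itp})$, while the other order gives the inverse congruence $K^{-1}SK^{\itp}$, which is equally harmless for the argument.
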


\begin{proof}
The $Z$-diagonal circuit acts by $(x|z)\mapsto(x|z+xS)$. It therefore fixes every $Z_j$
and sends $X_i\mapsto X_i+\sum_jS_{ij}Z_j$. Since $\Ush{S}$ is invertible, it
preserves $\lab$ iff it maps the basis~\eqref{eq:general-spaces-L} of $\lab$
into $\lab$. The vectors $Z_b$, $b\in\IZ$, are fixed, so the only condition
comes from $X_a$ with $a\in\IX$: we need
$X_a+\sum_jS_{aj}Z_j\in\lab$, equivalently $\sum_jS_{aj}Z_j\in\lab$. This is a
pure-$Z$ vector, and here splitness enters decisively: $\lab\cap\Zh$ is spanned
by the $Z_b$ with $b\in\IZ$. So the condition is $S_{aj}=0$ for all $a\in\IX$
and all $j\notin\IZ$. By symmetry of $S$ the transposed blocks vanish
too, which is exactly~\eqref{eq:general-shearblocks}. The four remaining blocks
are unconstrained apart from the symmetry of $S_{ZZ}$ and $S_{LL}$, and distinct
parameters give distinct diagonal circuits, whence the
count~\eqref{eq:general-shearcount}.
\end{proof}

Diagonal circuit parameters add, in the sense that $\Ush{S}\Ush{S'}=\Ush{S+S'}$. Hence the
group $\Eup$ is elementary abelian, and every valid $Z$-diagonal circuit is a product of
four commuting diagonal circuits. We obtain the four diagonal circuits by turning on one of
$S_{XZ},S_{ZZ},S_{ZL},S_{LL}$ at a time, together with the transpose that
symmetry forces. It therefore suffices to locate each of these four one-block
diagonal circuits inside the block
form~\eqref{eq:general-blockform}. Reading off the images of the basis vectors
and consulting Table~\ref{tab:general-adapted} for the subspace each image
component belongs to, we obtain the following dictionary.

The block $S_{LL}$ sends $X_p\mapsto X_p+\sum_qS_{LL,pq}Z_q$ and fixes every
other basis vector, so it acts inside $\logsp$ alone. The resulting element is
block diagonal, with
$\lgc=\left(\begin{smallmatrix}\Id&S_{LL}\\0&\Id\end{smallmatrix}\right)$,
a $Z$-diagonal circuit of $\Sp{2k}$, and with $\gau=\Id$, $\radF=0$, $\radT=0$.
The block $S_{XZ}$ sends $X_a\mapsto X_a+\sum_bS_{XZ,ab}Z_b$, which stays inside
$\lab$, and $X_b\mapsto X_b+\sum_aS_{XZ,ab}Z_a$, which stays inside $\dst$. The
block $S_{XZ}$ is therefore also block diagonal, a pure relabeling element with
$\gau=\left(\begin{smallmatrix}\Id&S_{XZ}\\0&\Id\end{smallmatrix}\right)$ in the
basis $(X_a;Z_b)$ of $\lab$. That element carries the contragredient action on
$\dst$, and has $\lgc=\Id$. The block $S_{ZZ}$ sends
$X_b\mapsto X_b+\sum_{b'}S_{ZZ,bb'}Z_{b'}$ and nothing else, mapping $\dst$ into
$\lab$. The block $S_{ZZ}$ is thus the element $u(0,\radT)$ of $\Urad$.
Here $\radT$ is the $r\times r$ matrix with $S_{ZZ}$ in the $(\IZ,\IZ)$ position
and zeros elsewhere, and it is symmetric because $S_{ZZ}$ is. Finally the block
$S_{ZL}$ sends $X_b\mapsto X_b+\sum_pS_{ZL,bp}Z_p$, mapping $\dst$ into
$\logsp$, and $X_p\mapsto X_p+\sum_bS_{ZL,bp}Z_b$, mapping $\logsp$ into $\lab$.
The block $S_{ZL}$ is accordingly the element $u(\radF,0)$ of $\Urad$. Here
$\radF$ has the entries of $S_{ZL}$ in the rows indexed by the $X_b$ and in the
columns indexed by the $Z_p$.

This last case illustrates why the two block decompositions must be kept apart.
The block $S_{ZL}$ joins the qubit-index sets $\IZ$ and $\IL$. Suppose one
misread those index sets as the subspaces $\lab$ and $\logsp$. One would then
conclude that $S_{ZL}$ occupies the forbidden $(\lab,\logsp)$ position
of Eq.~\eqref{eq:general-blockform}, which vanishes identically for every element of
$\Nfull$. What the block $S_{ZL}$ actually does is add logical $Z_p$ components
to the destabilizers $X_b$ and stabilizer $Z_b$ components to the logical
vectors $X_p$. So that block occupies the allowed $(\dst,\logsp)$ and
$(\logsp,\lab)$ positions --- the parameter $\radF$ and the block
$\lgc\Jm\radF^{\tp}\gau$ that $\radF$ determines. (Consistently,
$\radT=0$ is permitted here: with $\radF$ supported in the $Z_p$-columns only,
$\radF\Jm\radF^{\tp}=0$, so~\eqref{eq:general-radconstraint} is satisfied.) The
one-block diagonal circuits therefore distribute as follows. The block $S_{LL}$ is a purely
logical diagonal circuit, and $S_{XZ}$ a purely relabeling element. The blocks $S_{ZZ}$ and
$S_{ZL}$ lie in $\Urad$. In particular the map sending a $Z$-diagonal circuit to its
relabeling component is already surjective onto the maps $\CX\to\CZ$, since
$S_{XZ}$ is unconstrained. Moreover, any one block may be turned on with the
other three set to zero.

For the $X$-diagonal circuits we do not repeat the computation. Conjugation by the swap
$\sw$ interchanges $Z$-diagonal and $X$-diagonal circuits, $\sw\Ush{S}\sw=\Lsh{S}$. That
conjugation also carries a split isotropic $\lab$ to the split isotropic
subspace $\lab\sw$. The two halves of $\lab\sw$ are those of $\lab$ with the
roles of $\CX$ and $\CZ$, hence of $\IX$ and $\IZ$, interchanged. Applying
Lemma~\ref{lem:general-uppershears} to $\lab\sw$ and conjugating back gives at
once: $\Lsh{T}\in\Nfull$ if and only if the
$(\IZ,\IZ)$ and $(\IZ,\IL)$ blocks of $T$ vanish,
\begin{equation}
  T=\begin{pmatrix}
      T_{XX} & T_{XZ} & T_{XL}\\
      T_{XZ}^{\tp} & 0 & 0\\
      T_{XL}^{\tp} & 0 & T_{LL}
    \end{pmatrix},
  \label{eq:general-lowerblocks}
\end{equation}
with $T_{XX}\in\Symm{r_X}$, $T_{LL}\in\Symm{k}$ and $T_{XZ}$, $T_{XL}$
arbitrary of sizes $r_X\times r_Z$ and $r_X\times k$; consequently
$\log_2|\Edn|=r_Xr_Z+\frac{1}{2}r_X(r_X+1)+r_Xk+\frac{1}{2}k(k+1)$. The mirrored
dictionary reads as follows. The block $T_{LL}$ gives the $X$-diagonal circuits
$\left(\begin{smallmatrix}\Id&0\\T_{LL}&\Id\end{smallmatrix}\right)$ of
$\Sp{2k}$ as block-diagonal elements. The block $T_{XZ}$ gives the
block-diagonal relabeling elements with
$\gau=\left(\begin{smallmatrix}\Id&0\\T_{XZ}^{\tp}&\Id\end{smallmatrix}\right)$,
that is the relabelings $\CZ\to\CX$. The blocks $T_{XX}$ and $T_{XL}$ lie in
$\Urad$, contributing respectively to $\radT$ and to $\radF$.

\subsection{Proof of the theorem}
\label{app:general-proof}

By Corollary~\ref{cor:general-criterion} it suffices to prove that
$\langle\Eup,\Edn\rangle$ contains the block-diagonal subgroup and the
unipotent radical $\Urad$. We treat the two in turn.

\begin{proof}[Proof of part (a)]
For the block-diagonal subgroup we must show the following. The group generated
by the two diagonal families contains the element
$\gau\oplus\lgc\oplus\gau^{\itp}$ for every $\gau\in\GLg{r}$ and every
$\lgc\in\Sp{2k}$. The block-diagonal subgroup is the
direct product $\GLg{r}\times\Sp{2k}$. By the dictionary of
Section~\ref{app:general-blocks} the generators used below are genuinely block
diagonal, with all other parameter blocks set to zero. Because that subgroup is
a direct product and those generators are block diagonal, it is enough to
realize each factor separately.

Consider first the logical factor. If $k=0$ there is nothing to prove, as
$\Sp{0}$ is trivial. If $k\geq1$, the $S_{LL}$ blocks of
Lemma~\ref{lem:general-uppershears} realize every $Z$-diagonal circuit
$\left(\begin{smallmatrix}\Id&S_{LL}\\0&\Id\end{smallmatrix}\right)$ of
$\Sp{2k}$ with $S_{LL}\in\Symm{k}$. Still assuming $k\geq1$, the $T_{LL}$
blocks of Eq.~\eqref{eq:general-lowerblocks} realize every $X$-diagonal circuit. These two
families are the unipotent radicals of the two opposite Siegel parabolic
subgroups of $\Sp{2k}$, in the sense of Section~\ref{app:setup-classes}. The
polarization of $\logsp$ determines those parabolic subgroups. We claim that
the two families generate $\Sp{2k}$.

Write $\Uupg$ and $\Udng$ for the two families. Recall that $\Sp{2k}$ is
generated by its symplectic transvections
$T_{u}:v\mapsto v+\sfm{v}{u}\,u$, one for each nonzero $u$~\cite[Ch.~III]{artin1957geometric}\cite[Ch.~8]{taylor1992geometry}. A
transvection with $u$ in the pure-$Z$ half of $\logsp$, say $u=(0|u_{0})$,
satisfies $\sfm{(x|z)}{u}=x\cdot u_{0}$ and therefore acts by
$(x|z)\mapsto(x\,|\,z+x\,u_{0}^{\tp}u_{0})$, so that
$T_{u}=\Ush{u_{0}^{\tp}u_{0}}\in \Uupg$. Dually $T_{u}\in \Udng$ when $u$ lies
in the pure-$X$ half. Next, $\langle \Uupg,\Udng\rangle$ is transitive on the
nonzero vectors of $\logsp$. The key point is that for a fixed nonzero $x$ the
vector $xS$ runs over all of $\Ftwo^{k}$ as $S$ runs over $\Symm{k}$: given a
target $y$ and an index $i$ with $x_{i}=1$, the matrix
\begin{equation}
  S=e_{i}^{\tp}y+y^{\tp}e_{i}+(x\cdot y)\,e_{i}^{\tp}e_{i}
  \label{eq:general-witness}
\end{equation}
is symmetric and satisfies $xS=y$, as one checks using $x\cdot e_{i}=1$. An
$Z$-diagonal circuit therefore carries $(x|z)$ with $x\neq0$ to any $(x|z')$, and an $X$-diagonal
circuit carries $(x|z')$ with $z'\neq0$ to any $(y|z')$. Alternating the two
diagonal circuits joins any nonzero vector to any other. Finally $gT_{u}g^{-1}=T_{ug^{-1}}$ for
$g\in\Sp{2k}$, since $\sfm{vg}{ug}=\sfm{v}{u}$. Conjugating the transvections
already in $\langle \Uupg,\Udng\rangle$ by that transitive action therefore
produces
$T_{u}$ for \emph{every} nonzero $u$, and these generate $\Sp{2k}$. Hence
$\Id\oplus\lgc\oplus\Id\in\langle\Eup,\Edn\rangle$ for every $\lgc\in\Sp{2k}$.

Consider next the relabeling factor. Order the basis of $\lab$ as $(X_a)_{a\in\IX}$
followed by $(Z_b)_{b\in\IZ}$, so that $\GLg{r}$ is written in $(r_X,r_Z)$-block
form. By Section~\ref{app:general-blocks} the $S_{XZ}$ blocks realize
\emph{every} relabeling of the form $X_a\mapsto X_a+\sum_bS_{XZ,ab}Z_b$. These
relabelings are the full unipotent radical
$\left(\begin{smallmatrix}\Id&*\\0&\Id\end{smallmatrix}\right)$ of the
$(r_X,r_Z)$-block parabolic subgroup of $\GLg{r}$. The $T_{XZ}$ blocks
realize the opposite radical
$\left(\begin{smallmatrix}\Id&0\\ *&\Id\end{smallmatrix}\right)$. Over $\Ftwo$
the group $\GLg{r}$ coincides with the special linear group and is generated by
the elementary transvections $\Id+e_{ij}$ with $i\neq j$. Here $e_{ij}$ denotes
the matrix unit with a single one in position $(i,j)$.
Those transvections whose two indices lie in different blocks are precisely the
one-entry $S_{XZ}$ and $T_{XZ}$ generators just exhibited. For a transvection
whose two indices $i\neq l$ both lie in the $\IX$-block, choose any index $j$ in
the $\IZ$-block and use the \emph{group} commutator identity
\begin{equation}
  [\Id+e_{ij},\,\Id+e_{jl}]=\Id+e_{il},
  \qquad i,j,l \text{ distinct},
  \label{eq:general-commutator}
\end{equation}
in which $[x,y]:=xyx^{-1}y^{-1}$. One checks \eqref{eq:general-commutator}
directly. Since $i\neq j$ we have $e_{ij}^{2}=0$, so $(\Id+e_{ij})^{2}=\Id$ and
each of the two factors is its own inverse; hence $[x,y]=(xy)^{2}$ for
$x=\Id+e_{ij}$ and $y=\Id+e_{jl}$. Now $e_{ij}e_{jl}=e_{il}$ while every other
product of two of the three matrix units $e_{ij},e_{jl},e_{il}$ vanishes,
because $i,j,l$ are distinct. Writing $xy=\Id+N$ with $N=e_{ij}+e_{jl}+e_{il}$
we therefore get $N^{2}=e_{il}$ and $(xy)^{2}=\Id+N^{2}=\Id+e_{il}$, as
claimed. (The ring commutator $xy-yx$ equals $e_{il}$ rather than $\Id+e_{il}$,
and would not be an element of $\GLg{r}$.)
Both factors on the left of Eq.~\eqref{eq:general-commutator} are off-block
transvections, hence lie in $\langle\Eup,\Edn\rangle$ as block-diagonal
elements. The commutator of those two factors is again block diagonal, and its
relabeling component is $\Id+e_{il}$ because the relabeling map
$\Nfull\to\GLg{r}$ is a homomorphism. The same argument with the roles of the
blocks exchanged produces the transvections with both indices in the
$\IZ$-block. That argument uses an index in the $\IX$-block as the intermediate
one. This is one place where the hypotheses $r_X\geq1$ and $r_Z\geq1$ are
used, and they are used exactly to supply these two bridge indices. Hence
$\gau\oplus\Id\oplus\gau^{\itp}\in\langle\Eup,\Edn\rangle$ for every
$\gau\in\GLg{r}$, and multiplying the two families of block-diagonal elements
gives the whole block-diagonal subgroup.
\end{proof}

\begin{proof}[Proof of part (b)]
The unipotent radical carries a two-step central filtration, which we now
describe. Multiplying two elements~\eqref{eq:general-urad} gives
\begin{equation}
\begin{split}
  &u(\radF,\radT)\,u(\radF',\radT')\\
  &\qquad=u\!\left(\radF+\radF',\;\radT+\radT'+\radF\Jm\radF'^{\tp}\right),
\end{split}
\label{eq:general-mult}
\end{equation}
as one checks by multiplying the block matrices: the middle row of the product
is $(\Jm\radF^{\tp}+\Jm\radF'^{\tp},\Id,0)$ and its bottom row is
$(\radT+\radF\Jm\radF'^{\tp}+\radT',\radF+\radF',\Id)$. Setting $\radF=0$
in Eq.~\eqref{eq:general-urad} leaves the subgroup
$\Urad_0:=\{u(0,\radT):\radT\in\Symm{r}\}$, which by Eq.~\eqref{eq:general-mult} is
central in $\Urad$. The map $u(\radF,\radT)\mapsto\radF$ is a homomorphism onto
the additive group $\Homo(\dst,\logsp)$ of $r\times2k$ matrices with kernel
$\Urad_0$. Thus there is an exact sequence
\begin{equation}
  1\longrightarrow\Urad_0\longrightarrow\Urad
   \longrightarrow\Homo(\dst,\logsp)\longrightarrow1 .
  \label{eq:general-exact}
\end{equation}
When $k=0$ the right-hand term is trivial and $\Urad=\Urad_0$.

The block-diagonal subgroup acts on the two layers by conjugation. A direct computation
with Eq.~\eqref{eq:general-blockform} and~\eqref{eq:general-urad} gives, for
$\ell=\gau\oplus\lgc\oplus\gau^{\itp}$,
\begin{equation}
  \ell\,u(\radF,\radT)\,\ell^{-1}
  =u\!\left(\gau^{\itp}\radF\lgc^{-1},\;\gau^{\itp}\radT\gau^{-1}\right),
  \label{eq:general-conj}
\end{equation}
where the consistency of the middle row uses $\Jm\lgc^{\itp}=\lgc\Jm$, itself a
restatement of $\lgc\Jm\lgc^{\tp}=\Jm$. Since $\Urad$ is normal in $\Nfull$ and
part~(a) placed the entire block-diagonal subgroup inside $\langle\Eup,\Edn\rangle$, the
right-hand side of Eq.~\eqref{eq:general-conj} lies in
$\langle\Eup,\Edn\rangle\cap\Urad$ whenever $u(\radF,\radT)$ does.

We first show $\Urad_0\subseteq\langle\Eup,\Edn\rangle$. By the dictionary of
Section~\ref{app:general-blocks}, turning on a single diagonal entry of the
block $S_{ZZ}$ produces an element $u(0,\radT)$. Here $\radT$ is a nonzero
symmetric matrix of rank one. Over $\Ftwo$ a symmetric rank-one matrix is
necessarily of the form $v^{\tp}v$ with $v\neq0$: writing it as $v^{\tp}w$ and
imposing symmetry forces $v$ and $w$ to be proportional, hence equal. The
action~\eqref{eq:general-conj} on this layer is the congruence
$v^{\tp}v\mapsto(v\gau^{-1})^{\tp}(v\gau^{-1})$, and $\GLg{r}$ is transitive on
the nonzero vectors of $\Ftwo^{r}$. The $\GLg{r}$-orbit of our element
therefore consists of \emph{all} nonzero symmetric rank-one matrices.
These span $\Symm{r}$: the diagonal unit $e_{ii}$ equals $e_i^{\tp}e_i$ and is
rank one, while an off-diagonal symmetric pair satisfies
\begin{equation}
  e_{il}+e_{li}=(e_i+e_l)^{\tp}(e_i+e_l)+e_i^{\tp}e_i+e_l^{\tp}e_l ,
  \label{eq:general-rankone}
\end{equation}
a sum of three rank-one symmetric matrices. As $\Urad_0$ is a group under
addition of the parameter $\radT$, by Eq.~\eqref{eq:general-mult}, we conclude
$\Urad_0\subseteq\langle\Eup,\Edn\rangle$.

Suppose now $k\geq1$. Turning on a single entry of the block $S_{ZL}$ produces
an element of $\langle\Eup,\Edn\rangle\cap\Urad$. The image of that element in
$\Homo(\dst,\logsp)$ is a nonzero rank-one matrix $v^{\tp}w$. Here $v$ is the
indicator of one $X_b$ row and $w$ the indicator of one $Z_p$ column. The
action~\eqref{eq:general-conj} on this layer sends
$v^{\tp}w\mapsto(v\gau^{-1})^{\tp}(w\lgc^{-1})$. The group $\GLg{r}$ is
transitive on the nonzero vectors of $\Ftwo^{r}$, and $\Sp{2k}$ is transitive on
the nonzero vectors of $\Ftwo^{2k}$. The orbit is thus all of the nonzero
rank-one $r\times2k$ matrices. Every matrix is the sum of its single-entry matrices, each of which has
rank one, so the rank-one matrices span $\Homo(\dst,\logsp)$. The image of
$\langle\Eup,\Edn\rangle\cap\Urad$ in Eq.~\eqref{eq:general-exact} is a subgroup
containing all rank-one matrices and is therefore everything.

Consequently $\langle\Eup,\Edn\rangle$ contains the kernel $\Urad_0$
of Eq.~\eqref{eq:general-exact} and meets every coset of that kernel, so it contains
$\Urad$. When $k=0$ the quotient in Eq.~\eqref{eq:general-exact} is trivial and the
previous paragraph is vacuous, but $\Urad=\Urad_0$ and the argument is already
complete. Together with part~(a) and Corollary~\ref{cor:general-criterion} this
gives $\langle\Eup,\Edn\rangle=\Nfull$, which is~\eqref{eq:general-main}.
\end{proof}

%% file: app_depthone.tex
\section{Depth-one layers at a fixed matching}
\label{app:depthone}

The cheapest gates a code can be asked to support are those that act in a single
layer. A single layer is a collection of one- and two-qubit Cliffords with
pairwise disjoint supports, applied simultaneously. Such a layer costs one unit
of circuit depth no matter how many qubits it touches, so it is the natural unit
in which to ask what a code can do fault-tolerantly. This
appendix determines the group of all such layers exactly. Fix which qubits are
paired with which. Three families then generate the layers that preserve the
code. One can write down each family by inspection: (i) the $Z$-diagonal circuits;
(ii) the $X$-diagonal circuits; (iii) the Levi (linear) part. There is no fourth
family and no residue.
In particular the partial dualities $\PDgp$ are redundant. One might expect to
need them, because a Hadamard is neither a diagonal circuit nor Levi. Every valid depth-one
partial duality is nevertheless a group word in the other three families.

Throughout, $\lab=(\CX|0)\oplus(0|\CZ)\subseteq\Vsp$ is a CSS-form subspace, so
that $\lab$ is split by the polarization $\Vsp=\Xh\oplus\Zh$. Orthogonality
$\CX\perp\CZ$ is what makes $\lab$ the label space of a stabilizer code, and it
is used in the logical corollary below. Orthogonality is not used anywhere in
the linear algebra, which is a statement about split subspaces and nothing more.
We keep the standing block convention throughout: a symplectic matrix is written
$G=\left(\begin{smallmatrix}A&B\\C&D\end{smallmatrix}\right)$ and acts on row
vectors from the right, $(x|z)G=(xA+zC\,|\,xB+zD)$.

\subsection{The setting and the statement}
\label{sec:d1-setting}

Let $\Mt$ be a matching of the $n$ qubits, that is, a set of disjoint
pairs, possibly empty. Its \emph{cells} are the pairs of $\Mt$ together with the
unmatched qubits, each taken as a singleton. The cells thus partition the qubits
into blocks of size one or two, and $\Mt=\emptyset$ is the case in which every
cell is a singleton. We write $\cell$ for a cell and $w_{\cell}\in\{1,2\}$
for its width. We let $m_1$ and $m_2$ denote the number of cells of width one
and two respectively, so that $n=m_1+2m_2$. Let $\DM$ be the subgroup of
$\Sp{2n}$ consisting of the $\Mt$-block-diagonal matrices. These are the
matrices carrying an arbitrary $\Sp{4}$ block on each matched pair and an
arbitrary $\Sp{2}$ block on each unmatched singleton. Concretely, in
the $\Xh|\Zh$ convention, $g\in\DM$ means that each of the four blocks
$A,B,C,D$ is block-diagonal for the cells of $\Mt$. Now reorder the coordinates
of $\Vsp$ so that the $X$ and $Z$ labels of each qubit sit together.
Equivalently, in those coordinates $g\in\DM$ means that $g$ is block-diagonal
with one $\Sp{4}$ block per pair and one
$\Sp{2}$ block per singleton. This is exactly what \emph{depth one 2-local}
means: such a $g$ is a layer of gates, each supported on at most two qubits,
with disjoint supports. The object of study is the two-fold transversal slice
\begin{equation}
  \NM \;:=\; \DM\cap\Nfull
  \label{eq:d1-NM}
\end{equation}
of the code-preserving group $\Nfull=\Stb_{\Sp{2n}}(\lab)$. Its three
subfamilies are the $\Mt$-block-diagonal parts of the families of
Appendix~\ref{app:general},
\begin{subequations}
\label{eq:d1-families}
\begin{align}
  \EupM &= \Eup\cap\DM, \label{eq:d1-familiesa}\\
  \EdnM &= \Edn\cap\DM, \label{eq:d1-familiesb}\\
  \LevM &= \Lev\cap\DM . \label{eq:d1-familiesc}
\end{align}
\end{subequations}
Explicitly, $\EupM$ consists of the $Z$-diagonal circuits $\Ush{S}$ whose parameter is
symmetric and $\Mt$-block-diagonal and satisfies $\CX S\subseteq\CZ$; dually
$\EdnM$ consists of the $X$-diagonal circuits $\Lsh{T}$ with $T$ symmetric,
$\Mt$-block-diagonal and $\CZ T\subseteq\CX$; and $\LevM$ consists of the Levi
gates \eqref{eq:setup-lev} whose matrix $K$ is in addition $\Mt$-block-diagonal.
All three have the same shape, an intersection of subgroups of $\Sp{2n}$, and
this settles their basic properties immediately. $\Eup$, $\Edn$ and $\Lev$ are
subgroups, as is $\DM$, so each family is a subgroup. Moreover, since all three
of $\Eup$, $\Edn$ and $\Lev$ lie in $\Nfull$, each family is contained in
$\DM\cap\Nfull=\NM$.

\begin{theorem}[Fixed-matching generation]
\label{thm:depthone}
For every CSS-form $\lab$ and every matching $\Mt$,
\begin{equation}
  \NM=\big\langle\,\EupM,\ \EdnM,\ \LevM\,\big\rangle .
  \label{eq:d1-thm}
\end{equation}
\end{theorem}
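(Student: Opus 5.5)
The inclusion $\supseteq$ is immediate, since each of $\EupM,\EdnM,\LevM$ lies in $\NM$. For $\subseteq$ I would follow the reduction sketched in Section~\ref{sec:depthone-fixed}. Start from an arbitrary $G=\left(\begin{smallmatrix}A&B\\C&D\end{smallmatrix}\right)\in\NM$ and multiply it on the left and right by elements of $\EupM$ and $\EdnM$ until it reaches a terminal form whose membership in $\langle\EupM,\EdnM,\LevM\rangle$ is evident. The crucial point is that the moves must be \emph{code-dependent yet automatically valid}. Any fixed matrix like $\Ush{B}$ need not satisfy $\CX B\subseteq\CZ$, so I would build the move parameters from products of the blocks of $G$ that are forced to be admissible.

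The first step is to produce these automatically valid parameters. For $G\in\NM$, code preservation gives $\CX A\subseteq\CX$, $\CX B\subseteq\CZ$, $\CZ C\subseteq\CX$, $\CZ D\subseteq\CZ$. The same inclusions hold for $G^{-1}=\left(\begin{smallmatrix}D^{\tp}&B^{\tp}\\C^{\tp}&A^{\tp}\end{smallmatrix}\right)$, since $G^{-1}\in\NM$. Composing these inclusions, each of $AB^{\tp}$, $D^{\tp}B$ and the symmetric part $B+B^{\tp}$ (where applicable) maps $\CX$ into $\CZ$. They are symmetric by Eqs.~\eqref{eq:setup-blockeqs}--\eqref{eq:setup-blockeqs2}, and they are $\Mt$-block-diagonal because $A,B,D$ are. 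This yields the space $\shp{G}$ of admissible $Z$-diagonal parameters, and dually $\shm{G}$ from $CD^{\tp}$, $C^{\tp}A$, $C+C^{\tp}$. One check needs care: $B+B^{\tp}$ is admissible only in combination with the others. I would verify each candidate against Corollaries~\ref{cor:validity} and~\ref{cor:dualfamily} directly, using both readings of the symplectic identities. A combination with coefficient vector $\eps\in\Ftwo^{3}$ is then a legal move $\Ush{\cdot}$ or $\Lsh{\cdot}$ in $\EupM$ or $\EdnM$.

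Next, because every move and $G$ itself are $\Mt$-block-diagonal, the action decomposes cell by cell. The formulas defining the moves are computed cellwise from the blocks $g_{\cell}\in\Sp{2}$ or $\Sp{4}$. Over the $6$ elements of $\Sp{2}$ and the $720$ elements of $\Sp{4}$ I would exhaustively tabulate, for each $g_{\cell}$ and each $\eps$, the effect of the local formula. I would then show that some sequence of at most three moves brings each cell to a terminal form: either Levi type ($B_{\cell}=C_{\cell}=0$), or of the shape $\Word=\Ush{B}\Lsh{C}\Ush{B}$ with $B,C$ cellwise diagonal-type. I also need terminality to be absorbing, meaning a move chosen to fix one cell leaves already terminal cells terminal. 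The caveat is that a move is global, with one $\eps$ shared across all cells. So I would arrange the formulas so that on terminal blocks they vanish or preserve terminality, and then clear cells one at a time.

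Once $G$ is brought to a blockwise-terminal $\Gterm$, I would split it as $\Gterm=\Lterm\cdot\Word$ with the Levi factor $\Lterm$ separated out. I then need to show that $\Word$'s parameters are themselves admissible, and hence that $\Lterm\in\LevM$. Admissibility should follow by reading the parameters off the blocks of $\Gterm\in\NM$ via the previous paragraph's inclusions. Undoing the moves then writes $G$ as a word in $\EupM,\EdnM,\LevM$. The main obstacle I expect is the finite case analysis over $\Sp{4}$ together with the uniformity issue: the $\eps$ chosen for one cell must not spoil others, which is why the cell-by-cell absorbing property must be verified. I would attempt this by computer enumeration of the $720$ blocks, organized by the double cosets of the local Levi and radical subgroups to make it human-checkable.
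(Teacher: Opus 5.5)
Your proposal is correct and matches the paper's proof essentially step for step. The ingredients are the same: the six automatically valid parameters $AB^{\tp},D^{\tp}B,B+B^{\tp}$ and $CD^{\tp},C^{\tp}A,C+C^{\tp}$; an exhaustive check on $\Sp{2}$ and $\Sp{4}$; terminality defined as the vanishing of all six formulas, which makes it absorbing so cells can be cleared one at a time; and the assembly $\Gterm=\Lterm\,\Ush{B}\Lsh{C}\Ush{B}$, with $B,C$ read off $\Gterm$ and $\Lterm\in\LevM$ by closure. Your validity argument composes the inclusions for $G$ and $G^{-1}$ (e.g.\ $\CX AB^{\tp}\subseteq\CX B^{\tp}\subseteq\CZ$), which is the paper's conjugated-projector computation unpacked; also, $B+B^{\tp}$ is admissible on its own, not only in combination with the others.
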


Every valid depth-one Clifford layer is therefore a group word in valid diagonal circuits and
Levi layers \emph{on the same matching}. By the definitions
\eqref{eq:d1-families} all three families already lie inside $\NM$, so no
intersection with $\NM$ need be written in Eq.~\eqref{eq:d1-thm}. We note also
what the theorem does not claim: it is a statement about generation. The
lengths of the group words play no role in the theorem, although the proof below
happens to be constructive.

The theorem fixes a matching, but the gates it produces generate a group that
depends on no single one. Write $\Ndep:=\langle\bigcup_{\Mt}\NM\rangle$ for the
subgroup of $\Nfull$ generated by all the two-fold transversal slices at once. Equivalently,
$\Ndep$ is generated by every code-preserving Clifford that is $2$-local in some
pairing of the qubits.
$\Ndep$ is the join of those slices in the subgroup lattice, sitting in the chain
$\Ntr\le\NM\le\Ndep\le\Nfull$. Modulo Pauli operators, $\Ndep$ is the group of
code-preserving actions reachable by a finite-depth circuit whose every layer is
$2$-local. The logical image of $\Ndep$ is what such circuits achieve on the
encoded qubits.
Whether $\Ndep$ is all of $\Nfull$ is a subtler question than
Theorem~\ref{thm:depthone}, and is not needed below. Section~\ref{sec:d1-dense}
shows that the depth-one diagonal circuits, even taken over all matchings, span only a proper
subspace of the dense family. The inclusion $\Ndep\le\Nfull$ therefore need not
be an equality. What the next corollary records is that the logical image of
$\Ndep$ is already generated by the logical images of the three depth-one
families.

Passing to logical actions needs no new argument.

\begin{corollary}[Depth-one logical generation]
\label{cor:depthone}
Let $\lab$ in addition be isotropic, so that
$\lact:\Nfull\twoheadrightarrow\Sp{2k}$ is defined. Then
\begin{equation}
  \lact(\Ndep)=\Big\langle\bigcup\nolimits_{\Mt}
     \lact\big(\EupM\cup\EdnM\cup\LevM\big)\Big\rangle ,
  \label{eq:d1-log}
\end{equation}
the union being over all matchings of the $n$ qubits.
\end{corollary}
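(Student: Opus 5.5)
The plan is to deduce the corollary directly from Theorem~\ref{thm:depthone} and the fact that $\lact$ is a group homomorphism. No new structural input is needed.

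First, recall that $\Ndep=\langle\bigcup_{\Mt}\NM\rangle$ by definition. For any homomorphism $\phi$ and any subset $X$ of the domain, $\phi(\langle X\rangle)=\langle\phi(X)\rangle$. The inclusion $\supseteq$ holds because $\phi(\langle X\rangle)$ is a subgroup containing $\phi(X)$. The inclusion $\subseteq$ holds because every element of $\langle X\rangle$ is a word $x_1^{\pm1}\cdots x_m^{\pm1}$, and its image is the word $\phi(x_1)^{\pm1}\cdots\phi(x_m)^{\pm1}$. Applying this to $\lact$, which is a homomorphism by Eq.~\eqref{eq:general-lact} and is defined because $\lab$ is isotropic, gives
\[
\lact(\Ndep)=\Bigl\langle\bigcup\nolimits_{\Mt}\lact(\NM)\Bigr\rangle .
\]

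Second, for each fixed matching $\Mt$, Theorem~\ref{thm:depthone} gives $\NM=\langle\EupM,\EdnM,\LevM\rangle$. Applying the same identity again yields
\[
\lact(\NM)=\bigl\langle\lact(\EupM\cup\EdnM\cup\LevM)\bigr\rangle .
\]
Substituting into the first display, each $\lact(\NM)$ is generated by $\lact(\EupM\cup\EdnM\cup\LevM)$. A subgroup generated by a union of subgroups equals the subgroup generated by the union of their generating sets. This gives Eq.~\eqref{eq:d1-log}. For the reverse containment, note that each family lies in $\NM\le\Ndep$, so its image lies in $\lact(\Ndep)$.

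There is essentially no obstacle, since all the work sits in Theorem~\ref{thm:depthone}. The only point needing care is bookkeeping. $\lact$ must be defined on all of $\Nfull$, which is where isotropy enters; Theorem~\ref{thm:depthone} itself needs only splitness. Also, the union over matchings is finite, although finiteness is not even required for the argument.
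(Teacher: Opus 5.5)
Your proposal is correct and follows essentially the same route as the paper: use that $\lact$ is a homomorphism to write $\lact(\Ndep)=\bigl\langle\bigcup_{\Mt}\lact(\NM)\bigr\rangle$, then apply Theorem~\ref{thm:depthone} slice by slice to express each $\lact(G)$, $G\in\NM$, as a word in images of $\EupM$, $\EdnM$, $\LevM$. Your identity $\phi(\langle X\rangle)=\langle\phi(X)\rangle$ is just a more formal version of the paper's ``corresponding group word'' step.
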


\begin{proof}
Since $\lact$ is a homomorphism, $\lact(\Ndep)=\big\langle\bigcup_{\Mt}\lact(\NM)\big\rangle$,
so it is enough to treat one slice at a time. The inclusion $\supseteq$ is
trivial. For $\subseteq$, fix $\Mt$ and $G\in\NM$. By
Theorem~\ref{thm:depthone}, $G$ is a group word in elements of $\EupM$, $\EdnM$
and $\LevM$. Since $\lact$ is a homomorphism, $\lact(G)$ is the corresponding
group word in the images of those elements, all of which lie in the right-hand
side. Taking the union over all matchings gives \eqref{eq:d1-log}.
\end{proof}

\subsection{Relation to the dense families}
\label{sec:d1-dense}

Theorem~\ref{thm:depthone} looks like a restriction of the generation theorem of
Appendix~\ref{app:general}. It is worth saying precisely why
Theorem~\ref{thm:depthone} is not a corollary of that theorem. That theorem works
in the unrestricted group $\Nfull$ and uses the \emph{dense} diagonal families
$\Eup$ and $\Edn$, which correspond to circuits of arbitrary depth. Here we are
confined to a fixed $\Mt$ and may use only the $\Mt$-block-diagonal part of each
of those families. The relation between the two settings is
\begin{equation}
  \NM = \Nfull\cap\DM\,\qquad
  \EupM=\Eup\cap\DM,
  \label{eq:d1-slice}
\end{equation}
and similarly $\EdnM=\Edn\cap\DM$: thus $\NM$ is the two-fold transversal slice of $\Nfull$
and $\EupM$ is the $\Mt$-block-diagonal part of $\Eup$.

Because $\Ush{S}\Ush{S'}=\Ush{S+S'}$, the family $\Eup$ is canonically an
$\Ftwo$-vector space, namely the space of admissible parameters
$\{S\in\Symm{n}:\CX S\subseteq\CZ\}$, and $\EupM$ is the subspace of those
parameters that are $\Mt$-block-diagonal. One may then ask whether the depth-one
diagonal circuits, taken over \emph{all} matchings at once, already see the whole of $\Eup$:
\begin{equation}
  \spano\bigcup_{\Mt}\EupM \;\subseteq\; \Eup .
  \label{eq:d1-span}
\end{equation}
The inclusion \eqref{eq:d1-span} is strict in general. On the doubly-even
self-dual $\qcode{6,2,2}$ code of Sec.~\ref{sec:d1-example} below, with
$\Mt=\{\{0,1\},\{2,3\},\{4,5\}\}$, the three parameter spaces in
\eqref{eq:d1-span} have computed dimensions
\begin{equation}
  \begin{gathered}
  \dim\EupM = 4,\qquad \dim\Eup = 14,\\
  \dim\spano\bigcup\nolimits_{\Mt}\EupM = 13 :
  \end{gathered}
  \label{eq:d1-dims}
\end{equation}
a single matching sees very little, and even the span over all matchings falls
one dimension short of the dense family. The corresponding computed gaps
$\dim\Eup-\dim\spano\bigcup_{\Mt}\EupM$ are $0$ for
$\qcode{4,2,2}$ and $4$ for the Steane code $\qcode{7,1,3}$. Thus
Theorem~\ref{thm:depthone} neither follows from nor implies
Theorem~\ref{thm:sheargen}: the two use different stocks of generators, and the
depth-one stock is genuinely poorer.

\subsection{Architecture of the proof}
\label{sec:d1-architecture}

The strategy is \emph{reduction by multiplication}. Take an arbitrary $G\in\NM$
and multiply $G$ on the left and on the right by valid diagonal circuits on the
same matching. Stop when what is left is manifestly a product of such layers. Two
obstacles stand in the way, and there is one lemma for each.

The first is \emph{supply}. One cannot multiply by a diagonal circuit without knowing that
some particular diagonal circuit is valid, and validity depends on the code. There is no
a-priori stock of moves. Lemma~\ref{lem:auto} manufactures the moves out of $G$
itself. Six explicit matrices built from $G$'s own blocks are automatically
symmetric, automatically supported on the same matching, and automatically valid
for the same code. These properties hold under no hypothesis on the code
whatsoever. This lemma drives the argument. Every later step uses the output of
Lemma~\ref{lem:auto}.

The second obstacle is \emph{termination}, and termination for many cells at
once. Lemma~\ref{lem:local} answers the question that algebra alone does not
settle: whether those moves get anywhere. Restricted to a single cell the state
space is finite: $|\Sp{2}|=6$ and $|\Sp{4}|=720$, since two-locality caps cells
at width two. The question is therefore decidable, and the answer is that every
block reaches a \emph{terminal} block in at most three moves. This is the one
place where a machine check enters, and it enters because the question is
finite, not because the algebra is hard.
Lemma~\ref{lem:restrict} then records why a local move is always induced by a
global one. Lemma~\ref{lem:global-reduction} globalizes: since terminality
is absorbing, cells can be processed one at a time with the number of finished
cells strictly increasing. Finally Lemma~\ref{lem:terminal} classifies the
terminal blocks and shows that each is already Levi or already a product
$\Ush{B}\Lsh{C}\Ush{B}$ of diagonal circuits.

The theorem concerns a
\emph{group} of gates, but the proofs take place in the larger associative
algebra
\begin{equation}
  \Endl = \{f:\Vsp\to\Vsp \ \text{linear},\ f(\lab)\subseteq\lab\},
  \label{eq:d1-endl}
\end{equation}
of which $\NM$ is only the invertible, symplectic, $\Mt$-supported part. The
algebra $\Endl$ is closed under composition and addition, and --- because
$G(\lab)=\lab$ exactly --- conjugation by any $G\in\NM$ is an automorphism of
$\Endl$: if $f(\lab)\subseteq\lab$ then
$\lab\,GfG^{-1}=\big((\lab G)f\big)G^{-1}\subseteq
(\lab)G^{-1}=\lab$. Intermediate objects may therefore be singular,
non-symplectic, and not gates at all; only the conclusions are statements about
gates. Lemma~\ref{lem:auto} exploits exactly this freedom.

\subsection{Validity and automatically valid diagonal circuits}
\label{sec:d1-supply}

For $G=\left(\begin{smallmatrix}A&B\\C&D\end{smallmatrix}\right)$, preservation
of the CSS-form space $\lab$ is equivalent to the four inclusions
\begin{equation}
  \begin{gathered}
  \CX A\subseteq\CX,\qquad \CX B\subseteq\CZ,\\
  \CZ C\subseteq\CX,\qquad \CZ D\subseteq\CZ ,
  \end{gathered}
  \label{eq:d1-valid}
\end{equation}
obtained by feeding $(c|0)$ with $c\in\CX$ and $(0|d)$ with $d\in\CZ$ into $G$
and demanding that the images lie in $\lab$. Inclusions suffice: $G$ is
invertible, so $\lab G$ has the same dimension as $\lab$, and
$\lab G\subseteq\lab$ forces $\lab G=\lab$. We call a layer satisfying
\eqref{eq:d1-valid} \emph{valid}. Symplecticity, $G\Jm G^{\tp}=\Jm$, expands to
\begin{equation}
  \begin{pmatrix} BA^{\tp}+AB^{\tp} & BC^{\tp}+AD^{\tp}\\
                  DA^{\tp}+CB^{\tp} & DC^{\tp}+CD^{\tp}\end{pmatrix}
  =\begin{pmatrix}0&\Id\\ \Id&0\end{pmatrix},
  \label{eq:d1-gjg}
\end{equation}
whose diagonal blocks say exactly that $AB^{\tp}$ and $CD^{\tp}$ are symmetric,
and whose off-diagonal blocks say $AD^{\tp}+BC^{\tp}=\Id$. Dually
$G^{\tp}\Jm G=\Jm$ makes $A^{\tp}C$ and $B^{\tp}D$ symmetric and gives
$A^{\tp}D+C^{\tp}B=\Id$. Collecting,
\begin{equation}
  \begin{gathered}
  AB^{\tp},\ CD^{\tp},\ A^{\tp}C,\ B^{\tp}D \ \ \text{symmetric},\\
  AD^{\tp}+BC^{\tp}=\Id,\qquad A^{\tp}D+C^{\tp}B=\Id ,
  \end{gathered}
  \label{eq:d1-sympl}
\end{equation}
and, since $G^{-1}=\Jm G^{\tp}\Jm$ over $\Ftwo$,
\begin{equation}
  G^{-1}=\begin{pmatrix}D^{\tp}&B^{\tp}\\ C^{\tp}&A^{\tp}\end{pmatrix}.
  \label{eq:d1-inv}
\end{equation}

We shall manufacture diagonal circuits out of $G$. A diagonal circuit
$\Ush{S}=\left(\begin{smallmatrix}\Id&S\\0&\Id\end{smallmatrix}\right)$ is
symplectic \emph{if and only if} $S=S^{\tp}$, so symmetry of a manufactured
parameter is met
automatically. Note that $AB^{\tp}$ is symmetric for \emph{every} symplectic $G$,
with no assumption that $A$ or $B$ be individually symmetric. This symmetry is
the content of the diagonal blocks of Eq.~\eqref{eq:d1-gjg}. Likewise
$D^{\tp}B=(B^{\tp}D)^{\tp}$
and $C^{\tp}A=(A^{\tp}C)^{\tp}$ are symmetric by Eq.~\eqref{eq:d1-sympl}, while
$B+B^{\tp}$ and $C+C^{\tp}$ are symmetric by construction. Symmetry is preserved
by $\Ftwo$-linear combination, so any span of these matrices lies in $\Symm{n}$.

\begin{lemma}[Automatic parameters]
\label{lem:auto}
Let $G\in\NM$. Then every matrix in the two spans
\begin{equation}
  \begin{gathered}
  \shp{G}=\spano\{AB^{\tp},\,D^{\tp}B,\,B+B^{\tp}\},\\
  \shm{G}=\spano\{CD^{\tp},\,C^{\tp}A,\,C+C^{\tp}\}
  \end{gathered}
  \label{eq:d1-spans}
\end{equation}
is symmetric and $\Mt$-block-diagonal, and
\begin{equation}
  \begin{gathered}
  \CX S\subseteq\CZ \quad\text{for}\ S\in\shp{G},\\
  \CZ T\subseteq\CX \quad\text{for}\ T\in\shm{G}.
  \end{gathered}
  \label{eq:d1-shvalid}
\end{equation}
Consequently $\Ush{S}\in\EupM$ and $\Lsh{T}\in\EdnM$ for all such $S$ and $T$:
each is a valid diagonal-circuit supported on $\Mt$.
\end{lemma}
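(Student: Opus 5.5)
The proof reads the three properties off the validity inclusions~\eqref{eq:d1-valid}, applied to both $G$ and $G^{-1}$. The point is that $G^{-1}$ is again valid and supplies the transposed blocks. Symmetry of every element of $\shp{G}$ and $\shm{G}$ was already established in the paragraph preceding the lemma, via~\eqref{eq:d1-sympl}, and symmetry is preserved under $\Ftwo$-linear combination. So only block-diagonality and the validity inclusions~\eqref{eq:d1-shvalid} remain.

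For block-diagonality, recall that $G\in\DM$ means each of $A,B,C,D$ is block-diagonal for the cells of $\Mt$, all with the same partition. Block-diagonal matrices for a fixed partition form an algebra closed under products, sums and transposition. Hence $AB^{\tp}$, $D^{\tp}B$, $B+B^{\tp}$, $CD^{\tp}$, $C^{\tp}A$, $C+C^{\tp}$, and all their spans, are $\Mt$-block-diagonal.

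For validity, I use that $\NM$ is a group, so $G^{-1}\in\NM$. By~\eqref{eq:d1-inv} its blocks are $D^{\tp},B^{\tp},C^{\tp},A^{\tp}$. Applying~\eqref{eq:d1-valid} to $G^{-1}$ gives the transposed inclusions
\begin{equation*}
  \CX D^{\tp}\subseteq\CX,\quad \CX B^{\tp}\subseteq\CZ,\quad \CZ C^{\tp}\subseteq\CX,\quad \CZ A^{\tp}\subseteq\CZ .
\end{equation*}
Each generator is then handled by chaining inclusions:
\begin{itemize}
\item $\CX AB^{\tp}\subseteq\CX B^{\tp}\subseteq\CZ$;
\item $\CX D^{\tp}B\subseteq\CX B\subseteq\CZ$;
\item $\CX(B+B^{\tp})\subseteq\CX B+\CX B^{\tp}\subseteq\CZ$.
\end{itemize}
Dually:
\begin{itemize}
\item $\CZ CD^{\tp}\subseteq\CX D^{\tp}\subseteq\CX$;
\item $\CZ C^{\tp}A\subseteq\CX A\subseteq\CX$;
\item $\CZ(C+C^{\tp})\subseteq\CX$.
\end{itemize}
The sets $\{S:\CX S\subseteq\CZ\}$ and $\{T:\CZ T\subseteq\CX\}$ are linear subspaces, so the inclusions extend to the full spans $\shp{G}$ and $\shm{G}$.

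Finally, Corollaries~\ref{cor:validity} and~\ref{cor:dualfamily} convert these inclusions into $\Ush{S}\in\Eup$ and $\Lsh{T}\in\Edn$. Both are also in $\DM$ by block-diagonality, hence in $\EupM$ and $\EdnM$. The only step that is not immediate is obtaining the transposed inclusions, and the symplectic inverse formula~\eqref{eq:d1-inv} settles it. I do not expect a genuine obstacle.
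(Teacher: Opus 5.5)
Your proof is correct, and it reaches the validity inclusions by a more elementary path than the paper's, though the underlying mathematics is the same.

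The paper gets $\CX AB^{\tp}\subseteq\CZ$ and $\CZ CD^{\tp}\subseteq\CX$ by stepping into the algebra $\Endl$. It observes that the non-gate $G\PX G^{-1}$ preserves $\lab$, because $\PX$ does by splitness, and then reads off the off-diagonal blocks. If you trace $(c|0)$ through $G$, then $\PX$, then $G^{-1}$, you get exactly your chain: first $cA\in\CX$, then $(cA)B^{\tp}\in\CZ$. The projector step is where the paper consumes splitness; you consume it instead in the block characterization \eqref{eq:d1-valid}.

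For $D^{\tp}B$ and $C^{\tp}A$, the paper applies the same projector argument to $G^{-1}$. That is your chaining with the roles of $G$ and $G^{-1}$ interchanged. The treatment of $B+B^{\tp}$ and $C+C^{\tp}$, of symmetry, and of block-diagonality is the same in both.

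Your version is shorter and never leaves block inclusions of gates. The paper's projector packaging is not needed for this lemma. The paper uses it later to interpret the four product parameters as off-diagonal blocks of transported projectors, and hence terminal blocks as those whose image splitting is compatible with $\Xh\oplus\Zh$.
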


Each span is spanned by three matrices, hence has dimension at most three and
contains at most eight elements, at most seven of them nonzero. The subscript
records upper versus lower; the content of the lemma is that the assignment
$G\mapsto\shp{G},\shm{G}$ manufactures valid diagonal circuits out of an
arbitrary valid layer.

\begin{proof}
Let $\PX=\left(\begin{smallmatrix}\Id&0\\0&0\end{smallmatrix}\right)$ be the
projector onto the $X$-half, $(x|z)\mapsto(x|0)$. Since it is the only non-gate
in the argument, three points about it deserve to be made explicitly.

First, $\PX$ is idempotent of rank exactly $n$, its image being the $X$-half.
It is therefore singular and not symplectic: it is not a gate, and is never
claimed to be one. The rank is $n$ by definition, for every $n$; no genericity or
specialization is involved. In addition, $\PX$ depends only on the fixed
polarization $\Vsp=\Xh\oplus\Zh$, not on the code, on $\Mt$, or on $G$.

Second, only one property of $\PX$ is used, namely $\PX(\lab)\subseteq\lab$.
This is ``maps into'', not ``maps onto''; for a singular map the two differ, and
only the weak form is true or needed. It holds exactly because $\lab$ is split:
$(c|d)\mapsto(c|0)\in(\CX|0)\subseteq\lab$. For a non-split label space it
fails: the $Y$-type space $\lab=\spano\{(1|1)\}$ has
$\PX(1|1)=(1|0)\notin\lab$. This is therefore the step at which the CSS
hypothesis is consumed, exactly as in Appendix~\ref{app:general}.

Third, by Eq.~\eqref{eq:d1-endl} we have $\PX\in\Endl$, and conjugation by $G$ is an
automorphism of $\Endl$, so the conjugate constructed below lies in $\Endl$ too.
Its singularity is not a defect but the point of the construction. Conjugating an
\emph{invertible} element of the group would merely return another gate and yield
no new information. Conjugating a \emph{projector}, by contrast, is what exposes
the block products $AB^{\tp}$ and $CD^{\tp}$ as fresh data. This is the one place
where the argument steps outside the group and into the algebra.

Now compute, using Eq.~\eqref{eq:d1-inv},
\begin{equation}
  \begin{split}
  G\,\PX\,G^{-1}
  &=\begin{pmatrix}A&0\\ C&0\end{pmatrix}
    \begin{pmatrix}D^{\tp}&B^{\tp}\\ C^{\tp}&A^{\tp}\end{pmatrix}\\
  &=\begin{pmatrix}AD^{\tp}&AB^{\tp}\\ CD^{\tp}&CB^{\tp}\end{pmatrix}.
  \end{split}
  \label{eq:d1-conj}
\end{equation}
By the previous paragraph this matrix maps $\lab$ into $\lab$. Applying it to
$(c|0)$ with $c\in\CX$ gives $(cAD^{\tp}\,|\,cAB^{\tp})\in\lab$, and applying it
to $(0|d)$ with $d\in\CZ$ gives $(dCD^{\tp}\,|\,dCB^{\tp})\in\lab$. Reading off
the components that must lie in $\CZ$ and in $\CX$ respectively,
\begin{equation}
  \CX\,AB^{\tp}\subseteq\CZ,\qquad \CZ\,CD^{\tp}\subseteq\CX .
  \label{eq:d1-probe}
\end{equation}

Since $\NM$ is a group, $G^{-1}\in\NM$ as well, and \eqref{eq:d1-inv} exhibits
its four blocks as $D^{\tp},B^{\tp},C^{\tp},A^{\tp}$ in the roles of $A,B,C,D$.
Applying \eqref{eq:d1-probe} to $G^{-1}$ therefore gives
$\CX D^{\tp}B\subseteq\CZ$ and $\CZ C^{\tp}A\subseteq\CX$, which is the validity
of the second parameter on each side.

Finally, \eqref{eq:d1-valid} applied to $G$ and to $G^{-1}$ gives the four
inclusions
\begin{equation}
  \begin{gathered}
  \CX B\subseteq\CZ,\qquad \CX B^{\tp}\subseteq\CZ,\\
  \CZ C\subseteq\CX,\qquad \CZ C^{\tp}\subseteq\CX ,
  \end{gathered}
  \label{eq:d1-plus}
\end{equation}
so that $\CX(B+B^{\tp})\subseteq\CZ$ and $\CZ(C+C^{\tp})\subseteq\CX$: the third
parameter on each side is valid too. Validity is closed under binary sums, so
\eqref{eq:d1-shvalid} holds throughout the spans. Symmetry of the six matrices
was established before the statement. Each of the six is a sum of products of
the $\Mt$-block-diagonal matrices $A,B,C,D$, hence is $\Mt$-block-diagonal
itself. A symmetric $\Mt$-block-diagonal parameter satisfying
$\CX S\subseteq\CZ$ is precisely an element of $\EupM$ by
\eqref{eq:d1-familiesa}, and dually for $\EdnM$.
\end{proof}

\paragraph*{The two dualities.} Two symmetries of the construction account for
the six parameters, and both were used above. Inversion, $G\mapsto G^{-1}$ of
\eqref{eq:d1-inv}, turns the pair $AB^{\tp},CD^{\tp}$ into $D^{\tp}B,C^{\tp}A$.
The $X\leftrightarrow Z$ swap, conjugation by $\sw$, sends
$\left(\begin{smallmatrix}A&B\\C&D\end{smallmatrix}\right)\mapsto
\left(\begin{smallmatrix}D&C\\B&A\end{smallmatrix}\right)$ and hence exchanges
the two spans,
\begin{equation}
  \shp{G}=\shm{\sw G\sw},
  \label{eq:d1-sigma}
\end{equation}
so that Lemma~\ref{lem:auto} is $\sw$-equivariant and only its upper half ever
needs proving. The remaining two parameters, $B+B^{\tp}$ and $C+C^{\tp}$, come
from neither symmetry: they are read off the validity conditions
\eqref{eq:d1-plus} for $G$ and for $G^{-1}$ directly. Those two parameters are
what remove the nonsymmetric residue the other four leave behind.

\subsection{Exact local reduction}
\label{sec:d1-local}

\paragraph*{Global versus local blocks.} Globally, $G$ is $2n\times 2n$ and
$G=\left(\begin{smallmatrix}A&B\\C&D\end{smallmatrix}\right)$ with each block an
$n\times n$ matrix; each is $\Mt$-block-diagonal, $A=\bigoplus_{\cell}A_{\cell}$
with $A_{\cell}$ of size $w_{\cell}\times w_{\cell}$. Locally, the block of $G$
at the cell $\cell$ is
\begin{equation}
  g_{\cell}=\begin{pmatrix}A_{\cell}&B_{\cell}\\ C_{\cell}&D_{\cell}\end{pmatrix}
  \in\Sp{2w_{\cell}},\quad w_{\cell}\in\{1,2\}.
  \label{eq:d1-gb}
\end{equation}
A move will select one of the six formulas of Eq.~\eqref{eq:d1-spans} and evaluate it
on the \emph{global} $A,B,C,D$. Because these are $\Mt$-block-diagonal, the entry
$(AB^{\tp})_{ij}=\sum_k A_{ik}B_{jk}$ forces $k\in \cell\cap \cell'$ for
$i\in \cell$, $j\in \cell'$, which is empty unless $\cell=\cell'$. Hence the
resulting parameter satisfies
\begin{equation}
  \begin{gathered}
  S=\bigoplus\nolimits_{\cell}\text{formula}(g_{\cell}),\\
  S[\cell,\cell']=0\quad\text{for}\ \cell\neq \cell' .
  \end{gathered}
  \label{eq:d1-blockdiag}
\end{equation}
Each cell's diagonal block of $S$ is that one formula evaluated on that cell's
own current block, and no entry of $S$ ever links two different cells. Inside a
pair cell $\cell'$, the off-diagonal \emph{entry} of $S[\cell',\cell']$ --- the
$\mathrm{CZ}$ between $\cell'$'s two qubits --- is fixed entirely by
$g_{\cell'}$. This is exactly why the other cells are affected: the formula chosen to
advance one cell is in general nonzero when evaluated on the others.

\paragraph*{Basic moves.} For a local block
$g=\left(\begin{smallmatrix}A&B\\C&D\end{smallmatrix}\right)\in\Sp{2w}$,
$w\in\{1,2\}$ --- here $A,B,C,D$ are the local $w\times w$ blocks, the cell
subscript having been dropped --- we call any left or right multiplication by
\begin{equation}
  \begin{gathered}
  \Ush{S},\quad 0\neq S\in\shp{g},\\
  \Lsh{T},\quad 0\neq T\in\shm{g}
  \end{gathered}
  \label{eq:d1-moves}
\end{equation}
a \emph{move}. A move is thus specified by a family ($Z$- or
$X$-diagonal), a nonzero parameter in the relevant span, and a side (left or right).
It is convenient to index the parameter by a coefficient vector: writing
\begin{equation}
  \begin{aligned}
  \Spz(G,\eps)&:=\eps_1 AB^{\tp}+\eps_2 D^{\tp}B+\eps_3(B+B^{\tp}),\\
  \Spx(G,\eps)&:=\eps_1 CD^{\tp}+\eps_2 C^{\tp}A+\eps_3(C+C^{\tp}),
  \end{aligned}
  \label{eq:d1-eps}
\end{equation}
with $\eps\in\Ftwo^{3}$,
we have $\shp{G}=\{\Spz(G,\eps):\eps\in\Ftwo^{3}\}$ and
$\shm{G}=\{\Spx(G,\eps):\eps\in\Ftwo^{3}\}$, and a move is the choice of a
family, a nonzero $\eps$-value of the parameter, and a side. The same notation
applied to a local block $g_{\cell}$ gives $\Spzx(g_{\cell},\eps)$.

\paragraph*{The reduction as a walk.} It helps to picture what follows as a walk
on a graph. The vertices of that graph are the blocks, that is the elements of
$\Sp{2w}$, and its edges are the moves. The reduction is then a walk that has to
reach the terminal blocks. One feature of this graph is unusual and is the source
of every difficulty below. The moves available at a block are manufactured from
that block itself, so the edges leaving a vertex change as the walk proceeds.
This is not a graph with a fixed set of steps, which is why reachability has to
be decided rather than read off.

The selected nonzero parameter stays a valid move: the layer stays inside
$\NM$, so Lemma~\ref{lem:auto} applies again at each step. A move is by
definition indexed by a \emph{nonzero} parameter, so the walk halts exactly when
all six parameters vanish.

\paragraph*{Terminal blocks.} Call $g$ \emph{terminal} if all six generators of
\eqref{eq:d1-spans} vanish on it:
\begin{equation}
  \begin{gathered}
  AB^{\tp}=D^{\tp}B=B+B^{\tp}=0,\\
  CD^{\tp}=C^{\tp}A=C+C^{\tp}=0 .
  \end{gathered}
  \label{eq:d1-term}
\end{equation}
A terminal block is fixed by every move, since every parameter available
at it is zero. Terminality is precisely the property of being invisible to every
possible move. By exhaustive enumeration we show that every non-terminal block
admits a move that decreases its distance to a terminal block. Every block of
width one and every block of width two therefore reaches a terminal block.

\begin{lemma}[Local reduction]
\label{lem:local}
Every element of $\Sp{2}$ reaches a terminal block in at most one move;
there are two terminal blocks. Every element of $\Sp{4}$ reaches a terminal block
in at most three moves; there are sixteen terminal blocks.
\end{lemma}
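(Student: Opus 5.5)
Lemma~\ref{lem:local} is a statement about single blocks $g\in\Sp{2w}$ with no reference to the code. The moves at $g$ are defined purely from the local blocks $A,B,C,D$ of $g$ through $\Spz(g,\eps)$ and $\Spx(g,\eps)$ of Eq.~\eqref{eq:d1-eps}. The proof is therefore a finite reachability computation on the directed "move graph" on $\Sp{2w}$. The plan is to do $w=1$ by hand and $w=2$ by an exhaustive, machine-checkable enumeration, organized so that a reader can reproduce it.

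\emph{Width one.} Here $A,B,C,D\in\Ftwo$ with $AD+BC=1$, and $B+B^{\tp}=C+C^{\tp}=0$ automatically. Terminality~\eqref{eq:d1-term} therefore reduces to $AB=DB=CD=CA=0$. Combined with $AD+BC=1$, this leaves exactly $B=C=0$, $A=D=1$ (the identity) and $A=D=0$, $B=C=1$ (the Hadamard $\sw$ on one qubit). That gives the two terminal blocks. The four remaining elements of $\Sp{2}\cong\Perm{3}$ are handled individually.
\begin{itemize}
\item $\Ush{1}$ has $AB^{\tp}=1$. Multiplying by the move $\Ush{1}$ gives the identity, since diagonal circuits are involutions.
\item $\Lsh{1}$ is dual to $\Ush{1}$ under $\sw$, using Eq.~\eqref{eq:d1-sigma}, and is handled the same way.
\item $\left(\begin{smallmatrix}1&1\\1&0\end{smallmatrix}\right)$ has $AB^{\tp}=1$. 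Left multiplication by $\Ush{1}$ gives $\left(\begin{smallmatrix}0&1\\1&0\end{smallmatrix}\right)$, which is terminal.
\item $\left(\begin{smallmatrix}0&1\\1&1\end{smallmatrix}\right)$ is handled by the $\sw$-dual computation.
\end{itemize}
So every element of $\Sp{2}$ is at most one move from a terminal block, with the side of the move chosen appropriately.

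\emph{Width two.} First I would characterize the terminal set analytically as a sanity check. Every Levi block $\left(\begin{smallmatrix}K&0\\0&K^{\itp}\end{smallmatrix}\right)$ with $K\in\GLg{2}$ satisfies~\eqref{eq:d1-term}, which gives $6$ terminal blocks. Products of such blocks with the one-qubit and two-qubit Hadamard-type blocks ($\sw$ on one or both qubits, suitably dressed so that the $B$ and $C$ blocks satisfy $AB^{\tp}=0$, $B=B^{\tp}$, and so on) should supply the rest, for a total of $16$. This count is confirmed by direct enumeration of the $720$ elements.

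Next I would build the move graph. For each $g\in\Sp{4}$, compute the spans $\shp{g}$ and $\shm{g}$, each with at most seven nonzero parameters. The out-neighbours of $g$ are $\Ush{S}g$, $g\Ush{S}$, $\Lsh{T}g$ and $g\Lsh{T}$. Define the distance by the least fixed point of $d(g)=0$ for terminal $g$ and $d(g)=1+\min_{g\to g'}d(g')$ otherwise. Compute it by backward breadth-first search from the terminal set, and check that all $720$ vertices receive a finite value with $\max d=3$.

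To make the check transparent and reduce the work, I would quotient by the symmetries that commute with the move rule.
\begin{itemize}
\item Conjugation by $\sw$ exchanges the two spans, by Eq.~\eqref{eq:d1-sigma}.
\item Inversion, Eq.~\eqref{eq:d1-inv}, maps the span set of $g$ onto that of $g^{-1}$ and turns left moves into right moves.
\item Conjugation by the qubit swap inside the cell preserves both the move rule and the terminal set.
\end{itemize}
This leaves a few dozen orbit representatives, each of which can be listed with an explicit move sequence.

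The main obstacle is that the edge set is state-dependent: the moves at $g$ are manufactured from $g$. This means there is no a priori monotone quantity, and cycles among non-terminal blocks could in principle trap a naive greedy walk. The backward fixed-point computation above sidesteps this, because it certifies existence of a shortest path rather than following a rule. What still needs care is making that verification convincing, either as a tabulated certificate listing each orbit representative and a move word of length at most three, or via a short, auditable script.
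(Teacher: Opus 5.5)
Your proposal is correct and follows the paper's proof. The paper also proves Lemma~\ref{lem:local} by an exact finite check: it builds the state-dependent move graph on $\Sp{2w}$ (at most $28$ moves per block), reverses it, and runs one breadth-first search from the terminal set, finding $2+4$ blocks at $w=1$ and $16+260+406+38$ at $w=2$, with a replayable certificate. Your hand treatment of $w=1$ is correct, and your symmetry quotient is a valid optional refinement, since $\sw$-conjugation, inversion, and the in-cell swap are automorphisms of the move graph that fix the terminal set; however, those three alone still leave about $140$ orbits rather than a few dozen, and the paper simply searches all $720$ blocks directly.
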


Define the \emph{distance} of
$g$ to be the least number of moves carrying it into the terminal set. What
must be established is that each non-terminal element admits a
distance-decreasing move, and this is checked for every one of the $720+6$
elements.

It is worth pausing on what a terminal block is. The answer says
something about the construction of Lemma~\ref{lem:auto} rather than about any
particular code. Everything that construction can see is the $X/Z$ splitting: the
four parameters $AB^{\tp}$, $CD^{\tp}$, $D^{\tp}B$ and $C^{\tp}A$ are, by
\eqref{eq:d1-conj}, nothing but the off-diagonal blocks of the projector $\PX$
carried along by $g$ and by $g^{-1}$. A projector has vanishing off-diagonal
blocks exactly when it commutes with $\PX$. Those four parameters therefore
vanish exactly when the splitting $\Vsp=\Xh\oplus\Zh$ and its image under $g$ are
compatible. Compatible here means that $\Vsp$ is the direct sum of the four
intersections $\Xh\cap\Xh g$, $\Xh\cap\Zh g$, $\Zh\cap\Xh g$ and
$\Zh\cap\Zh g$. A gate that moves the splitting to one refining it in this way
respects the very structure out of which the moves were manufactured, and so is
invisible to all four of those parameters. At width two there are $48$ such
blocks: six that preserve the two halves, six that interchange them, and
thirty-six that cut each half in two.

The remaining two parameters look at $g$ itself rather than at the transported
projector. The blocks $B$ and $C$ already satisfy the validity conditions
\eqref{eq:d1-valid}, but it is a Clifford gate only when its corresponding matrix $S$ or $T$ is symmetric,
so the blocks $B$ and $C$ are not usable as they stand. What the construction
can use is the symmetrizations of $B$ and $C$, which are valid for the same
reason and symmetric by construction. Those vanish exactly when $B$ and $C$ were
already legal diagonal-circuit parameters. This is what separates the $16$ genuine
terminals of
Table~\ref{tab:d1-terminals} from the $48$: a block is terminal when it both
carries the splitting to a compatible one \emph{and} leaks between the two halves
only through blocks that are themselves already moves. Dropping
$B+B^{\tp}$ and $C+C^{\tp}$ would leave $32$ blocks that are not terminal but
have no move available, and the reduction would genuinely stall on them.

\begin{table*}[t]
\caption{\label{tab:d1-terminals}\textit{The sixteen terminal blocks at width
two.} These are the elements of $\Sp{4}$ on which all six parameters of
Lemma~\ref{lem:auto} vanish, so that no move can move them. Each is
written as a gate acting on two qubits, meaning its label action; Pauli factors
are ignored throughout, and $\mathrm{CZ}^{H}:=H^{\otimes2}\,\mathrm{CZ}\,H^{\otimes2}$
is the $H^{\otimes2}$-conjugate of $\mathrm{CZ}$. Only the images that differ from their
argument are listed. The blocks fall into three groups according to
$\ranko B$, which by Lemma~\ref{lem:terminal} equals $\ranko C$ and counts how
much of the splitting $\Vsp=\Xh\oplus\Zh$ the gate interchanges: none of it for
the six Levi blocks, a single dimension for the next six, and all of it for the
last four. Together the sixteen generate a subgroup of $\Sp{4}$ of order $72$,
isomorphic to $\Sp{2}\wr\Perm{2}$, which is the stabilizer of a decomposition of
$\Ftwo^{4}$ into two orthogonal symplectic planes and is conjugate in $\Sp{4}$ to
the group generated by the single-qubit Cliffords together with the swap; for the
classification of the subgroups of the two-qubit Clifford group see
Ref.~\cite{kubischta2025classification}.}
\begin{ruledtabular}
\footnotesize
\begin{tabular}{clccl}
$\ranko B$ & gate & $B$ & $C$ & action on labels\\
\colrule
$0$ & $\Id$ & $\left(\begin{smallmatrix}0&0\\0&0\end{smallmatrix}\right)$ & $\left(\begin{smallmatrix}0&0\\0&0\end{smallmatrix}\right)$ & identity\\
$0$ & $\mathrm{CNOT}_{12}$ & $\left(\begin{smallmatrix}0&0\\0&0\end{smallmatrix}\right)$ & $\left(\begin{smallmatrix}0&0\\0&0\end{smallmatrix}\right)$ & $X_1\!\mapsto\!X_1X_2,\ Z_2\!\mapsto\!Z_1Z_2$\\
$0$ & $\mathrm{CNOT}_{21}$ & $\left(\begin{smallmatrix}0&0\\0&0\end{smallmatrix}\right)$ & $\left(\begin{smallmatrix}0&0\\0&0\end{smallmatrix}\right)$ & $X_2\!\mapsto\!X_1X_2,\ Z_1\!\mapsto\!Z_1Z_2$\\
$0$ & $\mathrm{SWAP}$ & $\left(\begin{smallmatrix}0&0\\0&0\end{smallmatrix}\right)$ & $\left(\begin{smallmatrix}0&0\\0&0\end{smallmatrix}\right)$ & $X_1\!\mapsto\!X_2,\ X_2\!\mapsto\!X_1,\ Z_1\!\mapsto\!Z_2,\ Z_2\!\mapsto\!Z_1$\\
$0$ & $\mathrm{CNOT}_{12}\,\mathrm{CNOT}_{21}$ & $\left(\begin{smallmatrix}0&0\\0&0\end{smallmatrix}\right)$ & $\left(\begin{smallmatrix}0&0\\0&0\end{smallmatrix}\right)$ & $X_1\!\mapsto\!X_2,\ X_2\!\mapsto\!X_1X_2,\ Z_1\!\mapsto\!Z_1Z_2,\ Z_2\!\mapsto\!Z_1$\\
$0$ & $\mathrm{CNOT}_{12}\,\mathrm{SWAP}$ & $\left(\begin{smallmatrix}0&0\\0&0\end{smallmatrix}\right)$ & $\left(\begin{smallmatrix}0&0\\0&0\end{smallmatrix}\right)$ & $X_1\!\mapsto\!X_1X_2,\ X_2\!\mapsto\!X_1,\ Z_1\!\mapsto\!Z_2,\ Z_2\!\mapsto\!Z_1Z_2$\\
\colrule
$1$ & $H_1$ & $\left(\begin{smallmatrix}1&0\\0&0\end{smallmatrix}\right)$ & $\left(\begin{smallmatrix}1&0\\0&0\end{smallmatrix}\right)$ & $X_1\!\mapsto\!Z_1,\ Z_1\!\mapsto\!X_1$\\
$1$ & $H_2$ & $\left(\begin{smallmatrix}0&0\\0&1\end{smallmatrix}\right)$ & $\left(\begin{smallmatrix}0&0\\0&1\end{smallmatrix}\right)$ & $X_2\!\mapsto\!Z_2,\ Z_2\!\mapsto\!X_2$\\
$1$ & $\mathrm{CZ}^{H}H_1\,\mathrm{CZ}^{H}$ & $\left(\begin{smallmatrix}1&0\\0&0\end{smallmatrix}\right)$ & $\left(\begin{smallmatrix}1&1\\1&1\end{smallmatrix}\right)$ & $X_1\!\mapsto\!X_2Z_1,\ Z_1\!\mapsto\!X_1X_2,\ Z_2\!\mapsto\!X_1X_2Z_1Z_2$\\
$1$ & $\mathrm{CZ}^{H}H_2\,\mathrm{CZ}^{H}$ & $\left(\begin{smallmatrix}0&0\\0&1\end{smallmatrix}\right)$ & $\left(\begin{smallmatrix}1&1\\1&1\end{smallmatrix}\right)$ & $X_2\!\mapsto\!X_1Z_2,\ Z_1\!\mapsto\!X_1X_2Z_1Z_2,\ Z_2\!\mapsto\!X_1X_2$\\
$1$ & $\mathrm{CZ}\,H_1\,\mathrm{CZ}$ & $\left(\begin{smallmatrix}1&1\\1&1\end{smallmatrix}\right)$ & $\left(\begin{smallmatrix}1&0\\0&0\end{smallmatrix}\right)$ & $X_1\!\mapsto\!Z_1Z_2,\ X_2\!\mapsto\!X_1X_2Z_1Z_2,\ Z_1\!\mapsto\!X_1Z_2$\\
$1$ & $\mathrm{CZ}\,H_2\,\mathrm{CZ}$ & $\left(\begin{smallmatrix}1&1\\1&1\end{smallmatrix}\right)$ & $\left(\begin{smallmatrix}0&0\\0&1\end{smallmatrix}\right)$ & $X_1\!\mapsto\!X_1X_2Z_1Z_2,\ X_2\!\mapsto\!Z_1Z_2,\ Z_2\!\mapsto\!X_2Z_1$\\
\colrule
$2$ & $H_1H_2$ & $\left(\begin{smallmatrix}1&0\\0&1\end{smallmatrix}\right)$ & $\left(\begin{smallmatrix}1&0\\0&1\end{smallmatrix}\right)$ & $X_1\!\mapsto\!Z_1,\ X_2\!\mapsto\!Z_2,\ Z_1\!\mapsto\!X_1,\ Z_2\!\mapsto\!X_2$\\
$2$ & $\mathrm{CZ}\,\mathrm{CZ}^{H}\mathrm{CZ}$ & $\left(\begin{smallmatrix}0&1\\1&0\end{smallmatrix}\right)$ & $\left(\begin{smallmatrix}0&1\\1&0\end{smallmatrix}\right)$ & $X_1\!\mapsto\!Z_2,\ X_2\!\mapsto\!Z_1,\ Z_1\!\mapsto\!X_2,\ Z_2\!\mapsto\!X_1$\\
$2$ & $\mathrm{CZ}\,\mathrm{CZ}^{H}\mathrm{CZ}\,\mathrm{CNOT}_{12}$ & $\left(\begin{smallmatrix}1&1\\1&0\end{smallmatrix}\right)$ & $\left(\begin{smallmatrix}0&1\\1&1\end{smallmatrix}\right)$ & $X_1\!\mapsto\!Z_1Z_2,\ X_2\!\mapsto\!Z_1,\ Z_1\!\mapsto\!X_2,\ Z_2\!\mapsto\!X_1X_2$\\
$2$ & $\mathrm{CZ}\,\mathrm{CZ}^{H}\mathrm{CZ}\,\mathrm{CNOT}_{21}$ & $\left(\begin{smallmatrix}0&1\\1&1\end{smallmatrix}\right)$ & $\left(\begin{smallmatrix}1&1\\1&0\end{smallmatrix}\right)$ & $X_1\!\mapsto\!Z_2,\ X_2\!\mapsto\!Z_1Z_2,\ Z_1\!\mapsto\!X_1X_2,\ Z_2\!\mapsto\!X_1$\\
\end{tabular}
\end{ruledtabular}
\end{table*}

For the same reason it is not enough merely to \emph{list} the terminal blocks,
which is Lemma~\ref{lem:terminal}'s task. A list of terminals says nothing about
whether some non-terminal element is stuck, able to move but only ever to other
non-terminals.

Two facts splice Lemmas~\ref{lem:auto} and~\ref{lem:local} together. First, the
two notions of move coincide: Lemma~\ref{lem:local} permits any nonzero parameter
in the \emph{local} span. A global move, by contrast, selects a binary
combination of the three generators. Since the formulas commute with restriction,
that global choice restricts at the cell to the same combination of the local
generators, and as $\eps$ ranges over $\Ftwo^{3}$ these restrictions exhaust the
local span. This is Lemma~\ref{lem:restrict} below. Second, the layer never
leaves $\NM$, since only valid $\Mt$-supported gates are ever multiplied in and
$\NM$ is a group. Lemma~\ref{lem:auto} may therefore be re-applied at every
step.

Before the proof we isolate the two cases that need no enumeration at all.

\begin{lemma}[Two closed-form cases]
\label{lem:strataC}
Let $g=\left(\begin{smallmatrix}A&B\\C&D\end{smallmatrix}\right)\in\Sp{2w}$.
If $C=0$, then $S:=D^{\tp}B=A^{-1}B$ is symmetric, lies in $\shp{g}$, and the
single move $g\mapsto g\,\Ush{S}$ lands on a Levi block, which is terminal.
Dually, if $B=0$, then $T:=C^{\tp}A=A^{\tp}C$ is symmetric, lies in $\shm{g}$,
and the single move $g\mapsto g\,\Lsh{T}$ lands on a Levi block.
\end{lemma}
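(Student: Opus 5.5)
The plan is to work directly with the symplectic block identities \eqref{eq:d1-sympl} and \eqref{eq:d1-inv}, treat the case $C=0$ in full, and obtain the case $B=0$ from the $X\leftrightarrow Z$ swap via \eqref{eq:d1-sigma}.

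First, with $C=0$ the relation $AD^{\tp}+BC^{\tp}=\Id$ reduces to $AD^{\tp}=\Id$. Hence $A$ is invertible and $D^{\tp}=A^{-1}$, so $D^{\tp}B=A^{-1}B$. This matrix is the generator $D^{\tp}B$ of $\shp{g}$ (take $\eps=(0,1,0)$ in \eqref{eq:d1-eps}), and it is symmetric because $B^{\tp}D$ is symmetric by \eqref{eq:d1-sympl}. So $S$ is an admissible move parameter. If $S=0$ then $B=0$, $g$ is already Levi, and no move is needed; I would read ``single move'' as ``at most one'' in this degenerate case.

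Second, I compute the product
\[
g\,\Ush{S}=\begin{pmatrix}A&B\\0&D\end{pmatrix}\begin{pmatrix}\Id&S\\0&\Id\end{pmatrix}=\begin{pmatrix}A&AS+B\\0&D\end{pmatrix}.
\]
Since $AS=AA^{-1}B=B$, the upper-right block is $B+B=0$ over $\Ftwo$. The result is therefore $\mathrm{diag}(A,D)$ with $D=A^{\itp}$, which is a Levi block.

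Third, I check that a Levi block is terminal by evaluating the six parameters in \eqref{eq:d1-term}. Each of $AB^{\tp}$, $D^{\tp}B$, $B+B^{\tp}$, $CD^{\tp}$, $C^{\tp}A$ and $C+C^{\tp}$ contains a factor or summand equal to $B$ or $C$, and both vanish.

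For the dual case $B=0$, I would either invoke conjugation by $\sw$ through \eqref{eq:d1-swapblocks} and \eqref{eq:d1-sigma}, or compute directly. Directly, $A^{\tp}D=\Id$ together with the symmetry of $A^{\tp}C$ gives $T=C^{\tp}A=A^{\tp}C$, which is symmetric and equals the generator $C^{\tp}A$ of $\shm{g}$. Then
\[
g\,\Lsh{T}=\begin{pmatrix}A&0\\C+DT&D\end{pmatrix},
\]
and $DT=DA^{\tp}C=C$ because $DA^{\tp}=(AD^{\tp})^{\tp}=\Id$. The lower-left block $C+C$ therefore vanishes.

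The only real care point is index bookkeeping: verifying $DA^{\tp}=\Id$ from $A^{\tp}D=\Id$, which holds because one-sided inverses of square matrices are two-sided, and confirming that the chosen parameter is literally one of the span generators. Beyond that the argument is routine.
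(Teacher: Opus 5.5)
Your proposal is correct and follows the paper's proof essentially step for step: you derive $AD^{\tp}=\Id$ from $C=0$, compute $g\,\Ush{S}$ with $AS+B=0$, note that a block with $B=C=0$ is terminal, and do the dual case by the same direct computation. The one small difference is that you read the symmetry of $S=D^{\tp}B$ straight off the symmetry of $B^{\tp}D$ in \eqref{eq:d1-sympl}, whereas the paper gets it from the symmetry of $AB^{\tp}=AS^{\tp}A^{\tp}$ together with the invertibility of $A$; both routes are valid, and yours is slightly shorter.
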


\begin{proof}
Suppose $C=0$. The relation $AD^{\tp}+BC^{\tp}=\Id$ of Eq.~\eqref{eq:d1-sympl}
becomes $AD^{\tp}=\Id$, so $A$ and $D$ are invertible with $D=A^{\itp}$, whence
$S=D^{\tp}B=A^{-1}B$ and $B=AS$. Now $AB^{\tp}=AS^{\tp}A^{\tp}$ is symmetric by
\eqref{eq:d1-sympl}, and $(AS^{\tp}A^{\tp})^{\tp}=ASA^{\tp}$, so
$AS^{\tp}A^{\tp}=ASA^{\tp}$ and, $A$ being invertible, $S^{\tp}=S$. Thus
$S\in\shp{g}$, and the move is a move whenever $S\neq0$; if $S=0$ then
$B=0$ and $g$ is already Levi. Computing,
\begin{equation}
  \begin{split}
  g\,\Ush{S}&=\begin{pmatrix}A&B\\0&D\end{pmatrix}
              \begin{pmatrix}\Id&S\\0&\Id\end{pmatrix}\\
            &=\begin{pmatrix}A&AS+B\\0&D\end{pmatrix}
             =\begin{pmatrix}A&0\\0&D\end{pmatrix},
  \end{split}
  \label{eq:d1-2a}
\end{equation}
since $AS+B=A(A^{-1}B)+B=B+B=0$. A block with $B=C=0$ satisfies all six
conditions \eqref{eq:d1-term}, hence is terminal.

The second statement is the image of the first under the $X\leftrightarrow Z$
swap \eqref{eq:d1-sigma}, and we record the direct argument as well. If $B=0$
then $A^{\tp}D+C^{\tp}B=\Id$ gives $A^{\tp}D=\Id$, so $D=A^{\itp}$, and
$T=C^{\tp}A=(A^{\tp}C)^{\tp}=A^{\tp}C$ is symmetric by Eq.~\eqref{eq:d1-sympl}. Then
$g\,\Lsh{T}$ has lower-left block $C+DT=C+A^{\itp}A^{\tp}C=C+C=0$ and unchanged
$B=0$, so it is Levi and terminal.
\end{proof}

These closed-form rules cover every block with $\ranko B=0$ or $\ranko C=0$.
Those blocks are $90$ of the $720$ elements at $w=2$, all of them at distance at
most one. The remaining $630$, together with all six elements of $\Sp{2}$, are
certified elementwise.

\begin{table*}[t]
\caption{\label{tab:d1-strata}
The local groups, with their blocks grouped by $(\ranko B,\ranko C)$. ``Case''
records whether the group of blocks is handled in closed form by
Lemma~\ref{lem:strataC} (symbolic) or by the finite certificate of
Lemma~\ref{lem:local} (numerical). In the distance column the entry $d{:}m$
means that $m$ of those blocks lie at distance $d$ from the terminal
set. The last column records the type of the terminal
blocks reached, ``Levi'' meaning $B=C=0$ and $\Word$ meaning
$\Ush{B}\Lsh{C}\Ush{B}$ in the sense of Lemma~\ref{lem:terminal}; a block of
type $\Word$ is thus a product of diagonal circuits, and a Levi block contains none.}
\footnotesize
\begin{tabular}{ccrllll}
\toprule
$\ranko B$ & $\ranko C$ & \# & case & rule / bound & distances & landing\\
\midrule
\multicolumn{7}{c}{$w=1$: $\Sp{2}$, order $6$, two terminal blocks, max distance $1$}\\
\midrule
$0$&$0$&$1$&symbolic& already terminal & $0{:}1$ & Levi $1$\\
$0$&$1$&$1$&symbolic& $g\mapsto g\Lsh{C^{\tp}A}$ & $1{:}1$ & Levi $1$\\
$1$&$0$&$1$&symbolic& $g\mapsto g\Ush{D^{\tp}B}$ & $1{:}1$ & Levi $1$\\
$1$&$1$&$3$&numerical& $\leq1$ move & $0{:}1,\ 1{:}2$ & $\Word$ $3$\\
\midrule
\multicolumn{7}{c}{$w=2$: $\Sp{4}$, order $720$, sixteen terminal blocks, max distance $3$}\\
\midrule
$0$&$0$&$6$&symbolic& already terminal & $0{:}6$ & Levi $6$\\
$0$&$1$&$18$&symbolic& $g\mapsto g\Lsh{C^{\tp}A}$ & $1{:}18$ & Levi $18$\\
$0$&$2$&$24$&symbolic& $g\mapsto g\Lsh{C^{\tp}A}$ & $1{:}24$ & Levi $24$\\
$1$&$0$&$18$&symbolic& $g\mapsto g\Ush{D^{\tp}B}$ & $1{:}18$ & Levi $18$\\
$1$&$1$&$126$&numerical& $\leq3$ moves & $0{:}6,\ 1{:}48,\ 2{:}54,\ 3{:}18$ & Levi $18$, $\Word$ $108$\\
$1$&$2$&$144$&numerical& $\leq3$ moves & $1{:}36,\ 2{:}102,\ 3{:}6$ & Levi $36$, $\Word$ $108$\\
$2$&$0$&$24$&symbolic& $g\mapsto g\Ush{D^{\tp}B}$ & $1{:}24$ & Levi $24$\\
$2$&$1$&$144$&numerical& $\leq3$ moves & $1{:}36,\ 2{:}102,\ 3{:}6$ & Levi $54$, $\Word$ $90$\\
$2$&$2$&$216$&numerical& $\leq3$ moves & $0{:}4,\ 1{:}56,\ 2{:}148,\ 3{:}8$ & Levi $48$, $\Word$ $168$\\
\bottomrule
\end{tabular}
\end{table*}

\begin{proof}[Proof of Lemma~\ref{lem:local}]
This is an exact finite verification. Enumerate all binary $2w\times 2w$ matrices
and retain precisely those satisfying $g\Jm g^{\tp}=\Jm$: there are six for $w=1$
and $720$ for $w=2$. For every retained matrix, enumerate the at most seven
nonzero members of each of the two spans in Eq.~\eqref{eq:d1-moves}. Take each such
member on both the left and the right. This gives at most $28$ moves out of each
block, and hence a directed graph on the group. The edges of that graph run from
a block to its images. The moves are genuinely directed, since the two spans are
recomputed from the current block. The search therefore runs in three steps:
\begin{enumerate}
\item Construct the whole edge set.
\item Reverse every edge.
\item Run one breadth-first search on the reversed graph, started from the set
  of states satisfying \eqref{eq:d1-term}.
\end{enumerate}
Step three assigns to every element its distance to that set in a single pass.
One forward search per element is not needed. The search reaches the whole
group, with exact distance distributions
\begin{equation}
  \begin{aligned}
  w=1:&\ \ 2\ \text{terminal},\ 4\ \text{at distance }1;\\
  w=2:&\ \ 16\ \text{terminal},\ 260\ \text{at distance }1,\\[-2pt]
      &\ \ 406\ \text{at distance }2,\ 38\ \text{at distance }3 .
  \end{aligned}
  \label{eq:d1-dist}
\end{equation}
The sums are $2+4=6$ and $16+260+406+38=720$, the full groups.
The maxima, one and three, are the move bounds in the statement. A certificate,
available at Ref.~\cite{albert2026certificate}, records a canonical shortest move
sequence for every
one of the $726$ blocks. The certificate also provides a script that replays each
of those sequences. The script re-derives the six parameters at every step and
checks that the recorded one lies in the relevant span.
\end{proof}

It is convenient to group the blocks of $\Sp{2w}$ by the pair
$(\ranko B,\ranko C)$, which is the relative-position invariant of the two
Lagrangians: $\dim\big(\Xh g\cap\Xh\big)=w-\ranko B$ and
$\dim\big(\Zh g\cap\Zh\big)=w-\ranko C$. Table~\ref{tab:d1-strata} collects both widths.
Two structural facts are visible in the table.
Terminality implies $\ranko B=\ranko C$. The sixteen terminal blocks
of $\Sp{4}$ consist of six Levi blocks at $(0,0)$, six rank-one blocks at
$(1,1)$, and four blocks with $C=B^{-1}$ at $(2,2)$. Hence the pair of ranks already
determines the type of the target. Furthermore the closed-form blocks always land
on Levi blocks, while terminals of type $\Word$ are reached only from blocks with
$\ranko B,\ranko C\geq1$.

It remains to connect the local moves just certified with the global gates one is
actually allowed to apply.

\begin{lemma}[Restriction]
\label{lem:restrict}
Let $G\in\NM$ have cell blocks $g_{\cell}$. Then for every cell $\cell$ and every
$\eps\in\Ftwo^{3}$,
\begin{equation}
  \begin{gathered}
  \Spz(G,\eps)_{\cell}=\Spz(g_{\cell},\eps),\\
  \Spx(G,\eps)_{\cell}=\Spx(g_{\cell},\eps).
  \end{gathered}
  \label{eq:d1-restrict}
\end{equation}
Hence restriction carries $\shp{G}$ onto $\shp{g_{\cell}}$ and $\shm{G}$ onto
$\shm{g_{\cell}}$: every move available at $g_{\cell}$ is the restriction
of a move on $G$, obtained by re-evaluating the same $\eps$ on the whole
layer, and by Lemma~\ref{lem:auto} that global move is a valid diagonal circuit
on $\Mt$.
\end{lemma}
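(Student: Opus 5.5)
The plan is a direct computation with the block structure of $G$. Since $G\in\NM\subseteq\DM$, each of its four blocks is $\Mt$-block-diagonal: $A=\bigoplus_{\cell}A_{\cell}$, and likewise for $B$, $C$, $D$. First I will record that products, transposes and sums of $\Mt$-block-diagonal matrices are again $\Mt$-block-diagonal. The diagonal block of the result at cell $\cell$ is then computed cellwise:
\[
(XY)_{\cell}=X_{\cell}Y_{\cell},\qquad (X^{\tp})_{\cell}=(X_{\cell})^{\tp},\qquad (X+Y)_{\cell}=X_{\cell}+Y_{\cell}.
\]
The first identity is exactly the index computation behind Eq.~\eqref{eq:d1-blockdiag}: in $(XY)_{ij}=\sum_k X_{ik}Y_{kj}$ with $i,j\in\cell$, only $k\in\cell$ contribute. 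The transpose identity holds because the cells are symmetric index sets, so transposition acts cell by cell.

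Applying these three rules to the defining formulas \eqref{eq:d1-eps} gives the identities \eqref{eq:d1-restrict}. For example,
\[
(\eps_1AB^{\tp}+\eps_2D^{\tp}B+\eps_3(B+B^{\tp}))_{\cell}=\eps_1A_{\cell}B_{\cell}^{\tp}+\eps_2D_{\cell}^{\tp}B_{\cell}+\eps_3(B_{\cell}+B_{\cell}^{\tp}).
\]
The right-hand side is $\Spz(g_{\cell},\eps)$ by definition, since $g_{\cell}$ has blocks $A_{\cell},B_{\cell},C_{\cell},D_{\cell}$ as in Eq.~\eqref{eq:d1-gb}. The lower family $\Spx$ is handled identically, or is obtained by the swap duality \eqref{eq:d1-sigma}.

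Surjectivity then follows at once. Every element of $\shp{g_{\cell}}$ has the form $\Spz(g_{\cell},\eps)$ for some $\eps\in\Ftwo^{3}$, and by the identity just proved it equals the restriction of $\Spz(G,\eps)\in\shp{G}$. The same holds for $\shm{}$. So a local move, given by a family, a nonzero $\eps$-parameter and a side, is the restriction of the global move with the same data. This global parameter is nonzero because its restriction to $\cell$ is nonzero. Multiplying $G$ on that side by $\Ush{\Spz(G,\eps)}$ acts on cell $\cell$ as the local move, because products of block-diagonal matrices multiply cellwise. Finally, Lemma~\ref{lem:auto} guarantees that $\Ush{\Spz(G,\eps)}\in\EupM$ and $\Lsh{\Spx(G,\eps)}\in\EdnM$, so the global move is a valid diagonal circuit on $\Mt$.

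I do not expect a real obstacle. The only point needing care is the transpose step, and in particular for $D^{\tp}B$ and $C^{\tp}A$: one must check that transposition does not mix cells. It does not, because the partition into cells is the same on rows and on columns.
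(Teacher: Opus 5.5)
Your proof is correct and follows the paper's argument essentially verbatim. Block-diagonality makes restriction to a cell multiplicative, additive and compatible with transposition, so linearity in $\eps$ gives \eqref{eq:d1-restrict}; surjectivity follows because $\eps$ ranges over all of $\Ftwo^{3}$, and a nonzero local parameter forces a nonzero global one. Your extra remark, that the global diagonal circuit acts on $\cell$ exactly as the local move, is sound; the paper records it in the discussion after Lemma~\ref{lem:global-reduction} rather than inside this proof.
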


\begin{proof}
The blocks $A,B,C,D$ are $\Mt$-block-diagonal, so products and sums of them are
too. Restriction to a cell is multiplicative and additive on those blocks:
$(XY^{\tp})_{\cell}=X_{\cell}Y_{\cell}^{\tp}$ and
$(X+X^{\tp})_{\cell}=X_{\cell}+X_{\cell}^{\tp}$. Each of the three global
generators therefore restricts to its local counterpart,
\begin{equation}
  \begin{gathered}
  (AB^{\tp})_{\cell}=A_{\cell}B_{\cell}^{\tp},\qquad
  (D^{\tp}B)_{\cell}=D_{\cell}^{\tp}B_{\cell},\\
  (B+B^{\tp})_{\cell}=B_{\cell}+B_{\cell}^{\tp},
  \end{gathered}
  \label{eq:d1-restrict3}
\end{equation}
and dually for the three generators of $\shm{G}$. Restriction is
$\Ftwo$-linear, so it commutes with forming the $\eps$-combination, which is
\eqref{eq:d1-restrict}; surjectivity follows because $\eps$ ranges over all of
$\Ftwo^{3}$ on both sides. Finally a move requires a nonzero parameter,
and $\Spz(G,\eps)_{\cell}\neq0$ forces $\Spz(G,\eps)\neq0$, so the induced
global move is itself a legitimate move.
\end{proof}

\subsection{Global reduction}
\label{sec:d1-global-reduction}

We now implement the above local (single-block) moves as parts of global moves 
to map any layer $G\in\NM$ to a terminal layer.

\begin{lemma}[Global reduction]
\label{lem:global-reduction}
Every $G\in\NM$ can be carried to a layer $\Gterm\in\NM$ all of whose cell blocks
are terminal, by left and right multiplication with elements of $\EupM$ and
$\EdnM$. At most $m_1+3m_2$ such multiplications are required.
\end{lemma}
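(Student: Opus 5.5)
The plan is to process the cells of $\Mt$ one at a time, maintaining the invariant that the current layer stays in $\NM$ and that the set of cells whose block is terminal only grows. Start with $G_{0}=G$, which lies in $\NM$. At any stage with current layer $G_{t}\in\NM$, pick a cell $\cell$ whose block $g_{\cell}$ is not terminal. By Lemma~\ref{lem:local} there is a move at $g_{\cell}$, specified by a family, a side, and a nonzero coefficient vector $\eps$, that strictly decreases the distance of $g_{\cell}$ to the terminal set. By Lemma~\ref{lem:restrict}, the same $\eps$ evaluated on the whole layer gives a parameter $\Spz(G_{t},\eps)$ or $\Spx(G_{t},\eps)$ that is nonzero, restricts at $\cell$ to the local parameter, and by Lemma~\ref{lem:auto} defines an element of $\EupM$ or $\EdnM$. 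Multiplying $G_{t}$ on the chosen side by that element gives $G_{t+1}\in\NM$, because $\NM$ is a group and the multiplier lies in it. By Eq.~\eqref{eq:d1-blockdiag}, the product is computed cell by cell, so the block at $\cell$ is exactly the local image and its distance drops by one.

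The key point is that this global move does not undo finished cells. At every cell $\cell'$ whose block is already terminal, all six generators of Eq.~\eqref{eq:d1-spans} vanish by Eq.~\eqref{eq:d1-term}. By Lemma~\ref{lem:restrict} the global parameter therefore restricts to zero at $\cell'$, so the multiplier acts as the identity there and the terminal block is unchanged. Cells that are neither finished nor the current target may change arbitrarily, but this is harmless: we never rely on their distance, only on the count of finished cells.

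To bound the number of multiplications, keep working on the same cell $\cell$ until it becomes terminal. The block of $\cell$ changes only through the moves aimed at it, because between those moves we do not touch any other cell. So $\cell$ is finished after at most its distance many moves: at most $1$ for a singleton and at most $3$ for a pair, by Lemma~\ref{lem:local}. Once $\cell$ is finished it stays terminal forever, by the absorption argument above. We then pick the next unfinished cell, reading its current block afresh; its distance is again at most $1$ or $3$ whatever earlier moves did to it. Summing over cells gives at most $m_{1}+3m_{2}$ multiplications, and the final layer $\Gterm$ has all blocks terminal.

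The main obstacle is exactly this interaction between cells. A formula chosen to advance one cell is generally nonzero on the others, so the argument has to be organized so that only finished cells need protecting. Terminality gives precisely that protection, because a terminal block makes every generator vanish.
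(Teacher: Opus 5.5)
Your proposal is correct and follows the paper's proof essentially step for step. You lift a distance-decreasing local move from Lemma~\ref{lem:local} to a valid global diagonal circuit via Lemmas~\ref{lem:auto} and~\ref{lem:restrict}, and you finish one cell at a time. You then use the vanishing of all six parameters on terminal blocks to freeze finished cells, so the induction is on the number of terminal cells and the total count is at most $m_1+3m_2$.
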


\begin{proof}
Choose a cell $\cell$ whose block is not terminal, and a shortest local path from
$g_{\cell}$ to a terminal block, supplied by Lemma~\ref{lem:local}. Each edge of
that path specifies a side, a family, and a coefficient vector $\eps$. That
vector selects a binary combination of the three corresponding formulas in
\eqref{eq:d1-spans}.
Evaluate that same combination on the whole current layer. By
Lemma~\ref{lem:auto} the resulting global diagonal circuit is valid and $\Mt$-supported,
that is, it lies in $\EupM$ or $\EdnM$. By Lemma~\ref{lem:restrict} the
restriction of that diagonal circuit to $\cell$ is exactly the desired local move.
Multiplying by that diagonal circuit on the prescribed side therefore advances $\cell$ one
step along its path. The other cells may
move arbitrarily. After at most one such step if $\cell$ is a singleton, and at
most three if $\cell$ is a pair, the block $g_{\cell}$ is terminal. The counts
one and three are the bounds of Lemma~\ref{lem:local}.

Once $g_{\cell}$ is terminal, every formula in Eq.~\eqref{eq:d1-spans} vanishes on
that block by Eq.~\eqref{eq:d1-term}. For a terminal $g_{\cell}$, therefore,
$S_{\cell}=0$ for every binary combination of those formulas, and every later
move restricts to the identity on $\cell$. A finished cell can never be
disturbed again. Processing the cells one at a time, the number of terminal
cells strictly increases at the end of each phase and never decreases. Under
this cell-by-cell processing, every cell is terminal after at most one phase per
cell.
Since the layer is multiplied only by elements of $\EupM$ and $\EdnM$, both
contained in the group $\NM$, the result $\Gterm$ lies in $\NM$. Counting a
singleton cell at one move and a pair cell at three gives the bound
\begin{equation}
  \text{total moves}\ \leq\ m_1+3m_2 .
  \label{eq:d1-bound}
\end{equation}
\end{proof}

Three features of this argument are worth isolating, since none of them is a
matter of choice. Because the $Z$-diagonal circuits supplied by
Lemma~\ref{lem:auto} are themselves $\Mt$-block-diagonal, multiplication acts
blockwise:
$\big(\Ush{S}G\big)\big|_{\cell}=\Ush{S_{\cell}}\,g_{\cell}$, with $S_{\cell}$
the same formula evaluated on $g_{\cell}$. What one cannot do is choose the
$S_{\cell}$ independently. A move fixes one binary combination of the three
formulas, evaluated globally, and every cell then receives whatever that
combination does there. Steering one cell along its shortest path therefore drags
the others, which is why a naive simultaneous induction fails. This is also why
Lemma~\ref{lem:local} had to be established for the whole of $\Sp{2w}$ rather
than for some reachable subset. However badly a cell has been dragged, it is
still at most three moves from terminal. A formula chosen to advance one cell
can push another from distance two to distance three, and explicit examples do
this. The induction is on the number of terminal cells, which is monotone
because terminality is absorbing, and on nothing finer. Over the cells
that are neither finished nor being processed no control is needed at all.

\subsection{Terminal layers factor}
\label{sec:d1-terminal}

Reaching a canonical form is worthless unless the canonical form itself factors.
This canonical form does factor. The following is a statement about a single
local block $g_{\cell}\in\Sp{2w_{\cell}}$, so its $A,B,C,D$ are the local
$w_{\cell}\times w_{\cell}$ blocks with the cell subscript dropped. The assembly
afterwards works with the global $\Gterm$.

\begin{lemma}[Terminal normal form]
\label{lem:terminal}
Let $g=\left(\begin{smallmatrix}A&B\\C&D\end{smallmatrix}\right)$ be terminal,
with local blocks. At width one, either $g=\Id$ or $g=\Ush{1}\Lsh{1}\Ush{1}$. At
width two, either $B=C=0$, in which case $g$ is Levi, or
\begin{equation}
  g=\Ush{B}\Lsh{C}\Ush{B}
  \label{eq:d1-uvu}
\end{equation}
with $B$ and $C$ symmetric.
\end{lemma}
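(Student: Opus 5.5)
The plan is to multiply out the candidate word, compare it with $g$ block by block, and settle whatever the terminal identities \eqref{eq:d1-term} do not force by appeal to the finite classification behind Lemma~\ref{lem:local}. For symmetric $B,C$, multiplying the elementary matrices gives
\begin{equation}
  \Ush{B}\Lsh{C}\Ush{B}=\begin{pmatrix}\Id+BC & BCB\\ C & \Id+CB\end{pmatrix}.
\end{equation}
So the claim \eqref{eq:d1-uvu} amounts to three identities: $A=\Id+BC$, $D=\Id+CB$ and $B=BCB$.

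The first step extracts as much as possible from terminality plus symplecticity \eqref{eq:d1-sympl}. Terminality gives $B=B^{\tp}$ and $C=C^{\tp}$ directly, together with $AB=0$, $BD=0$, $CD^{\tp}=0$ and $CA=0$. Multiplying $A^{\tp}D+CB=\Id$ on the left by $B$ and using $BA^{\tp}=(AB^{\tp})^{\tp}=0$ yields $BCB=B$. Dually, $CBC=C$. Hence $BC$ and $CB$ are idempotents, and $\ranko B=\ranko(BC)=\ranko C$, which explains the equal-rank pattern noted after Table~\ref{tab:d1-strata}. Symplecticity also gives $AD^{\tp}=\Id+BC$ and $A^{\tp}D=\Id+CB$. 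So the off-diagonal blocks of the candidate word already agree with those of $g$, and the diagonal blocks agree "up to the factors $D^{\tp}$ and $A^{\tp}$".

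The second step handles width one completely by hand. If $B=0$, then $C^{\tp}A=0$ together with $AD=1$ forces $C=0$, so $g=\Id$. If $B=1$, then $AB=0$ and $BD=0$ give $A=D=0$, and $AD+BC=1$ gives $C=1$. Then $g$ is the Hadamard, which equals $\Ush{1}\Lsh{1}\Ush{1}$ by the displayed product.

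At width two, the case $B=0$ makes $C=0$ by the same argument as at width one, so $g$ is Levi. If $\ranko B=2$, then $AB=0$ and $BD=0$ force $A=D=0$, and then $BC=\Id$ and $CB=\Id$. Hence $\Id+BC=0=A$ and $\Id+CB=0=D$, so \eqref{eq:d1-uvu} holds.

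The main obstacle is the rank-one case. The identities above do not by themselves pin $A$ down to $\Id+BC$, since a Levi factor could in principle survive on the complement of $\mathrm{im}\,B$. I would first try to close this gap algebraically. The idea is to decompose $\Ftwo^{2}=\ker(BC)\oplus\mathrm{im}(BC)$ and use $CA=0$ and $AB=0$ to show that $A$ vanishes on $\mathrm{im}(BC)$ and acts as the identity on $\ker(BC)$. The ingredient for the latter is $D^{\tp}A=\Id+BC$, transposed from the second symplectic identity; one would have to argue that it forces $A$ to restrict to the identity there. If that fails, I would fall back on the exhaustive enumeration already performed for Lemma~\ref{lem:local}. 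The terminal set at width two consists of exactly the sixteen blocks of Table~\ref{tab:d1-terminals}. Checking directly that each of the six rank-one entries satisfies $A=\Id+BC$ and $D=\Id+CB$ finishes the proof, since $B=BCB$ was established above for all terminal blocks.
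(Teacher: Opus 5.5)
Your proposal is correct. For ranks $0$ and $2$ it coincides with the paper's cases (i) and (ii). It differs from the paper in two places: the preliminaries and the rank-one case.

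\emph{Preliminaries.} You derive $BCB=B$, $CBC=C$ and $\ranko B=\ranko(BC)=\ranko C$ uniformly, at any width, from terminality together with $A^{\tp}D+C^{\tp}B=\Id$. The paper instead obtains $BCB=B$ only inside case (iii). There it comes from the parametrization $B=uu^{\tp}$, $C=vv^{\tp}$ and the derived fact $u^{\tp}v=1$. In the paper, equality of ranks is a by-product of the case split rather than an input to it.

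\emph{Rank-one case.} The paper's argument is self-contained and algebraic. It shows that $\{X: Xu=0,\ v^{\tp}X=0\}$ is one-dimensional and spanned by $\Id+uv^{\tp}=\Id+BC$. Then $A\neq0$ forces $A=\Id+BC$, and dually $D=\Id+CB$. Your fallback instead checks the six rank-one rows of Table~\ref{tab:d1-terminals}. That is valid, but it makes the lemma depend on the machine enumeration behind Lemma~\ref{lem:local}, namely that those sixteen blocks are all the terminal ones. The paper's proof avoids that dependence.

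Your algebraic route is in fact only one observation short of closing.
\begin{itemize}
\item $CA=0$ puts $\mathrm{im}\,A$ inside $\ker C$, and $\ker C=\ker(BC)$ because $\ranko(BC)=\ranko C$.
\item $AB=0$ gives $A\,BC=0$, so $A$ vanishes on $\mathrm{im}(BC)$.
\item $D^{\tp}A=\Id+BC$ makes $A$ injective on $\ker(BC)$.
\end{itemize}
So $A$ maps $\ker(BC)$ injectively into itself. At width two with rank one, $\ker(BC)$ is a line over $\Ftwo$, so $A$ fixes it pointwise. Since $BC$ is idempotent, $\Ftwo^{2}=\mathrm{im}(BC)\oplus\ker(BC)$, and therefore $A=\Id+BC$.

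The same argument gives $D=\Id+CB$, using $DC=0$ (the transpose of $CD^{\tp}=0$), $BD=0$, and $A^{\tp}D=\Id+CB$.

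This version is shorter than the paper's rank-one computation. It also shows exactly where width two enters: on a higher-dimensional $\ker(BC)$, an invertible Levi factor could survive.
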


\begin{proof}
At width one, all quantities are scalars, $\left(\begin{smallmatrix}a&b\\c&d
\end{smallmatrix}\right)$ with $ad+bc=1$, and the conditions \eqref{eq:d1-term}
reduce to $ab=db=cd=ca=0$, the two symmetry conditions being vacuous. If $b=0$
then $ad=1$, so $a=d=1$, and $ca=0$ forces $c=0$, giving $g=\Id$. If $b=1$ then
$ab=0$ and $db=0$ give $a=d=0$, so $bc=1$ forces $c=1$, and the remaining
conditions hold; thus $g=\left(\begin{smallmatrix}0&1\\1&0\end{smallmatrix}
\right)=\Ush{1}\Lsh{1}\Ush{1}$. These are the two terminal blocks of $\Sp{2}$.

Now let $w=2$. Terminality \eqref{eq:d1-term} gives $B=B^{\tp}$ and $C=C^{\tp}$,
whence $AB^{\tp}=0$ reads $AB=0$, $D^{\tp}B=0$ transposes to $BD=0$, and
$C^{\tp}A=0$ reads $CA=0$; combining with $CD^{\tp}=0$ and with the symplectic
relation $AD^{\tp}+BC^{\tp}=\Id$ of Eq.~\eqref{eq:d1-sympl}, we obtain
\begin{equation}
  \begin{gathered}
  B=B^{\tp},\quad C=C^{\tp},\quad AD^{\tp}+BC=\Id,\\
  AB=0,\quad BD=0,\quad CD^{\tp}=0,\quad CA=0 .
  \end{gathered}
  \label{eq:d1-13}
\end{equation}
A nonzero $2\times2$ matrix has rank one or two, and rank two means invertible,
so the following three cases are exhaustive.

\emph{Case (i): $B=0$ or $C=0$.} If $B=0$ then $AD^{\tp}=\Id$, so $A$ and $D$ are
invertible, and $CA=0$ forces $C=0$. If $C=0$ then again $AD^{\tp}=\Id$, so $D$
is invertible, and $BD=0$ forces $B=0$. Either way $B=C=0$ and $g$ is Levi.

\emph{Case (ii): $B,C\neq0$ with at least one of them invertible.} If $B$ is
invertible, $AB=0$ gives $A=0$ and $BD=0$ gives $D=0$, so $AD^{\tp}+BC=\Id$
becomes $BC=\Id$, that is $C=B^{-1}$. If instead $C$ is invertible, $CA=0$ gives
$A=0$ and $CD^{\tp}=0$ gives $D=0$, and again $BC=\Id$. In both subcases
\begin{equation}
  \begin{split}
  \Ush{B}\Lsh{C}\Ush{B}
  &=\begin{pmatrix}\Id+BC&BCB\\ C&\Id+CB\end{pmatrix}\\
  &=\begin{pmatrix}0&B\\ C&0\end{pmatrix}=g ,
  \end{split}
  \label{eq:d1-antidiag}
\end{equation}
where the first equality is the direct computation
$\Ush{B}\Lsh{C}=\left(\begin{smallmatrix}\Id+BC&B\\ C&\Id\end{smallmatrix}
\right)$ followed by right multiplication by $\Ush{B}$, using
$(\Id+BC)B+B=BCB$, and the second uses $BC=CB=\Id$ and $BCB=B$. This is
\eqref{eq:d1-uvu}.

\emph{Case (iii): $B,C\neq0$ with both singular,} hence both of rank exactly
one. Every nonzero singular symmetric $2\times2$ binary matrix has the form
$vv^{\tp}$ for a nonzero column $v$. Writing the matrix as
$\left(\begin{smallmatrix}\alpha&\beta\\ \beta&\gamma\end{smallmatrix}\right)$,
singularity reads $\alpha\gamma=\beta$ over $\Ftwo$. The nonzero solutions of
$\alpha\gamma=\beta$ are the three matrices realized by $v=e_1$, $v=e_2$ and
$v=e_1+e_2$. (The remaining
four nonzero symmetric matrices are invertible and belong to case (ii); there are
seven nonzero symmetric $2\times2$ matrices in all.) So write
\begin{equation}
  B=uu^{\tp},\qquad C=vv^{\tp}
  \label{eq:d1-rankone}
\end{equation}
for nonzero columns $u,v$. Then $AB=0$ reads $(Au)u^{\tp}=0$, that is $Au=0$;
similarly $CA=0$ gives $v^{\tp}A=0$, $BD=0$ gives $u^{\tp}D=0$, and $CD^{\tp}=0$
gives $Dv=0$. If $u^{\tp}v=0$ then $BC=u(u^{\tp}v)v^{\tp}=0$, so
$AD^{\tp}=\Id$ and $A$ is invertible, contradicting $Au=0$ with $u\neq0$. Hence
$u^{\tp}v=1$, and consequently
\begin{equation}
  \begin{gathered}
  BCB=u(u^{\tp}v)(v^{\tp}u)u^{\tp}=uu^{\tp}=B,\\
  CBC=C,\qquad BC=uv^{\tp},\qquad CB=vu^{\tp}.
  \end{gathered}
  \label{eq:d1-bcb}
\end{equation}
Consider now the space of $2\times2$ matrices $X$ with $Xu=0$ and
$v^{\tp}X=0$. The condition $Xu=0$ says every row of $X$ lies in the
one-dimensional space $u^{\perp}$, so $X=yz^{\tp}$ with $z$ spanning $u^{\perp}$
and $y\in\Ftwo^{2}$ arbitrary. The further condition
$v^{\tp}X=(v^{\tp}y)z^{\tp}=0$ restricts $y$ to the one-dimensional space
$v^{\perp}$. The space is therefore
one-dimensional, and since $(\Id+uv^{\tp})u=u+u(v^{\tp}u)=0$ and
$v^{\tp}(\Id+uv^{\tp})=v^{\tp}+(v^{\tp}u)v^{\tp}=0$ while $\Id+uv^{\tp}\neq0$, it
is spanned by $\Id+uv^{\tp}=\Id+BC$. Now $A$ lies in that space, and
$AD^{\tp}=\Id+BC\neq0$ forces $A\neq0$, so
\begin{equation}
  A=\Id+BC .
  \label{eq:d1-Aform}
\end{equation}
The same argument applied to the conditions $Dv=0$ and $u^{\tp}D=0$ gives
$D=\Id+vu^{\tp}=\Id+CB$. Substituting into the first equality of
\eqref{eq:d1-antidiag} and using $BCB=B$ from Eq.~\eqref{eq:d1-bcb},
\begin{equation}
  \begin{split}
  \Ush{B}\Lsh{C}\Ush{B}
  &=\begin{pmatrix}\Id+BC&BCB\\ C&\Id+CB\end{pmatrix}\\
  &=\begin{pmatrix}A&B\\ C&D\end{pmatrix}=g ,
  \end{split}
  \label{eq:d1-case3}
\end{equation}
which proves \eqref{eq:d1-uvu} in the last case.
\end{proof}

\paragraph*{Assembly.} Apply Lemma~\ref{lem:terminal} simultaneously to all cells
of the blockwise-terminal layer
$\Gterm=\left(\begin{smallmatrix}A&B\\C&D\end{smallmatrix}\right)$ produced by
Lemma~\ref{lem:global-reduction}. The blocks of $\Gterm$ are now the global
$n\times n$ matrices $B=\bigoplus_{\cell}B_{\cell}$ and
$C=\bigoplus_{\cell}C_{\cell}$. Every local
$B_{\cell}$ and $C_{\cell}$ is symmetric, so the global $B$ and $C$ are
symmetric. Validity of $\Gterm$, that is \eqref{eq:d1-valid} applied to it,
contains in particular
\begin{equation}
  \CX B\subseteq\CZ,\qquad \CZ C\subseteq\CX ,
  \label{eq:d1-BC}
\end{equation}
so $\Ush{B}\in\EupM$ and $\Lsh{C}\in\EdnM$. It is worth being explicit about why.
For $\Ush{B}=\left(\begin{smallmatrix}\Id&B\\0&\Id\end{smallmatrix}\right)$ the
four conditions of Eq.~\eqref{eq:d1-valid} reduce to $\CX B\subseteq\CZ$ alone. The
other three are $\CX\Id\subseteq\CX$, $\CZ\cdot0\subseteq\CX$ and
$\CZ\Id\subseteq\CZ$. Moreover, $\CX B\subseteq\CZ$ is \emph{literally} one of
the four conditions already satisfied by $\Gterm$. The diagonal-circuit part of a valid
layer thus inherits validity with no further argument, and dually for $\Lsh{C}$.

The inclusion $\CX B\subseteq\CZ$ holds for \emph{every} valid layer,
terminal or not. One might wonder why Lemma~\ref{lem:auto} was needed at all,
rather than simply taking the parameter $B$ from the outset. The answer
is that validity is not sufficient: $\Ush{B}$ is a gate only if it is
symplectic, that is only if $B=B^{\tp}$, and for a general valid layer the block
$B$ is not symmetric. Here it is symmetric, precisely because $\Gterm$ is
terminal and terminality includes $B+B^{\tp}=C+C^{\tp}=0$.

Put
\begin{equation}
  \begin{split}
  \Word&:=\Ush{B}\Lsh{C}\Ush{B}\\
       &\;=\begin{pmatrix}\Id+BC&BCB\\ C&\Id+CB\end{pmatrix}\!,
  \end{split}
  \label{eq:d1-W}
\end{equation}
the second equality being the computation in Eq.~\eqref{eq:d1-antidiag}, valid for any
symmetric $B$ and $C$. Since all three factors are $\Mt$-block-diagonal, so is
$\Word$, and its restriction to a cell is the corresponding local product. On a
terminal cell of type $\Word$ we get
$\Word|_{\cell}=\Ush{B_{\cell}}\Lsh{C_{\cell}}\Ush{B_{\cell}}=g_{\cell}$ by
Lemma~\ref{lem:terminal}; on a Levi terminal cell, $B_{\cell}=C_{\cell}=0$ and
$\Word|_{\cell}=\Id$. Note that $\Word$ is an \emph{involution}: applying the
inversion formula \eqref{eq:d1-inv} to Eq.~\eqref{eq:d1-W} and using $B=B^{\tp}$,
$C=C^{\tp}$ returns $\Word$ itself, since $(\Id+CB)^{\tp}=\Id+BC$ and
$(BCB)^{\tp}=BCB$. Hence $\Word^{-1}=\Word$ and no inverse is needed anywhere in
the assembly. Define
\begin{equation}
  \Lterm:=\Gterm\,\Word^{-1}=\Gterm\,\Word .
  \label{eq:d1-Lstar}
\end{equation}
Restricting \eqref{eq:d1-Lstar} to a cell and using the two cases just computed
shows that $\Lterm$ is blockwise Levi. On a Levi cell $\Word|_{\cell}=\Id$, so
$\Lterm|_{\cell}=g_{\cell}$ is the original Levi block; on a cell of type $\Word$
we have $\Word|_{\cell}=g_{\cell}$, so
$\Lterm|_{\cell}=g_{\cell}g_{\cell}^{-1}=\Id$. Each kind of cell is thus
transparent to the factor it does not need. The two kinds in general both occur
in the same layer, and one might expect to need a separate treatment per cell.
One does not, because the single global group word $\Word$ already does the right
thing everywhere at once. Consequently $\Lterm$
has the global Levi form $\left(\begin{smallmatrix}K&0\\0&K^{\itp}\end{smallmatrix}
\right)$ with $K=\bigoplus_{\cell}K_{\cell}$, where $K_{\cell}$ is the original
Levi block on a Levi cell and the identity elsewhere.

That $\Lterm$ is a legitimate gate is seen differently from the way one sees it
for $\Ush{B}$ and $\Lsh{C}$. The difference is worth spelling out. Being of
Levi form, $\Lterm$ must satisfy
\begin{equation}
  \CX K\subseteq\CX,\qquad \CZ K^{\itp}\subseteq\CZ ,
  \label{eq:d1-levi}
\end{equation}
which together with the Levi form is exactly the membership $\Lterm\in\LevM$
required by Eq.~\eqref{eq:d1-thm}. But \eqref{eq:d1-levi} is \emph{derived}, not read
off: one cannot quote $\CX A\subseteq\CX$ from $\Gterm$, because $K\neq A$. The
matrix $K$ agrees with $A$ only on the Levi cells and is the identity on the
cells of type $\Word$. What gives \eqref{eq:d1-levi} is closure. The elements
$\Gterm$, $\Ush{B}$ and $\Lsh{C}$ all lie in $\NM$, which is a group, so $\Word$
and $\Word^{-1}$ do too, hence so does $\Lterm=\Gterm\Word^{-1}$. In addition, a
valid layer of Levi form satisfies \eqref{eq:d1-levi} by the definition
\eqref{eq:d1-valid} of validity. In the
same vein, Lemma~\ref{lem:terminal} factors the \emph{abstract local matrix}
$g_{\cell}$. But the local gate $\Ush{B_{\cell}}$ appearing there is in general
not a valid gate. Here $\Ush{B_{\cell}}$ means $B_{\cell}$ placed in cell
$\cell$ with zeros elsewhere. What is valid is the global $\Ush{B}$ assembled
from all the $B_{\cell}$ at once. Lemma~\ref{lem:terminal} is used only to prove
the identity
$\Word|_{\cell}=g_{\cell}$; it never supplies a gate.

Rearranging \eqref{eq:d1-Lstar} we obtain the terminal factorization
\begin{equation}
  \Gterm=\Lterm\,\Ush{B}\,\Lsh{C}\,\Ush{B},
  \label{eq:d1-factor}
\end{equation}
a product of four valid depth-one layers on $\Mt$: one from $\LevM$,
two from $\EupM$ and one from $\EdnM$.

\begin{proof}[Proof of Theorem~\ref{thm:depthone}]
The inclusion $\supseteq$ in Eq.~\eqref{eq:d1-thm} is immediate, all three families
being subgroups of $\NM$. For $\subseteq$, let $G\in\NM$. By
Lemma~\ref{lem:global-reduction} there are elements
$X_1,\dots,X_p\in\EupM\cup\EdnM$ and $Y_1,\dots,Y_q\in\EupM\cup\EdnM$ with
$p+q\leq m_1+3m_2$ such that
$\Gterm=X_p\cdots X_1\,G\,Y_1\cdots Y_q$ is blockwise terminal, and
$\Gterm\in\NM$. By Eq.~\eqref{eq:d1-factor},
$\Gterm=\Lterm\Ush{B}\Lsh{C}\Ush{B}$ with $\Lterm\in\LevM$, $\Ush{B}\in\EupM$
and $\Lsh{C}\in\EdnM$. Therefore
\begin{equation}
  \begin{split}
  G=X_1^{-1}\cdots X_p^{-1}\;&\Lterm\,\Ush{B}\,\Lsh{C}\,\Ush{B}\\
    &\times Y_q^{-1}\cdots Y_1^{-1},
  \end{split}
  \label{eq:d1-assembly}
\end{equation}
which exhibits $G$ as a group word in the three families. (Over $\Ftwo$ every diagonal circuit is
its own inverse, $\Ush{S}^{-1}=\Ush{S}$, so the inverses in
\eqref{eq:d1-assembly} may be dropped, but nothing depends on that.) Since $G$
was arbitrary, \eqref{eq:d1-thm} holds.
\end{proof}

The argument is constructive and gives a group word of length at most
$(m_1+3m_2)+4$, the sandwich of Lemma~\ref{lem:global-reduction} together with the four
factors of Eq.~\eqref{eq:d1-factor}. Nothing above or below depends on that count,
and no synthesis routine is claimed. The theorem itself is a generation statement
in which group-word length plays no role.

\subsection{A worked example: the \texorpdfstring{$\qcode{6,2,2}$}{[[6,2,2]]} code}
\label{sec:d1-example}

A single small code shows every mechanism in the proof at once: dragging;
freezing; block-diagonality; the impossibility of acting cell by cell; and the
looseness of the move bound. Take the doubly-even self-dual
$\mathrm{CSS}(C,C)$ with
\begin{equation}
  \CX=\CZ=C=\spano\{111100,\ 110011\},
  \label{eq:d1-622code}
\end{equation}
a connected $\qcode{6,2,2}$: both generators have weight four and share qubits $0$
and $1$, so the code is indecomposable. Fix the perfect matching
$\Mt=\{\{0,1\},\{2,3\},\{4,5\}\}$, three pair cells $\cell_1,\cell_2,\cell_3$, so
$m_1=0$ and $m_2=3$. At this matching
\begin{equation}
  |\NM|=3072
  \quad\text{inside}\quad
  |\DM|=720^{3},
  \label{eq:d1-622count}
\end{equation}
that is, $3072$ of the $373\,248\,000$ depth-one layers are valid. Exactly $15$
valid $Z$-diagonal circuits, $15$ valid $X$-diagonal circuits and $7$ nontrivial Levi elements are
available. Validity is therefore a strong constraint at each matching.

Take the element $G\in\NM$ whose three cell blocks, written in the ordered basis
$(x_i,x_j\,|\,z_i,z_j)$ of each cell, are
\begin{equation}
  g_{\cell_1}\!=\!g_{\cell_2}\!=\!\!
  \begin{pmatrix}0&0&0&1\\0&0&1&0\\0&1&0&1\\1&0&1&0\end{pmatrix}\!\!,
  \ \
  g_{\cell_3}\!=\!\!
  \begin{pmatrix}1&1&0&1\\1&1&1&0\\1&0&0&1\\0&1&1&0\end{pmatrix}\!\!,
  \label{eq:d1-622start}
\end{equation}
with local distances $1$, $1$ and $2$ respectively. The reduction of
Lemma~\ref{lem:global-reduction}, run with cells processed in the order
$\cell_1,\cell_2,\cell_3$, takes three moves, all of them $Z$-diagonal circuits. We call
the parameters of those moves $S^{(1)},S^{(2)},S^{(3)}$ and list them in
Table~\ref{tab:d1-622run}.

\begin{table*}[t]
\caption{\label{tab:d1-622run}
The three moves reducing \eqref{eq:d1-622start} to a blockwise-terminal layer.
Each row gives the working cell, the family and formula selected, the side on
which the resulting gate is multiplied, the global parameter as a direct sum of
its three cell blocks, the physical circuit it realizes, and the resulting vector
of local distances.}
\footnotesize
\begin{tabular}{clllll}
\toprule
move & working cell & formula, side & global parameter & circuit & distances\\
\midrule
start & --- & --- & --- & --- & $(1,1,2)$\\
$S^{(1)}$ & $\cell_1$ & $D^{\tp}B$, right
  & $\Id_2\oplus \Id_2\oplus \Id_2=\Id_6$
  & $\prod_{i=0}^{5}\sqrt{Z}_i$ & $(0,0,2)$\\
$S^{(2)}$ & $\cell_3$ & $AB^{\tp}$, left
  & $0\oplus0\oplus\left(\begin{smallmatrix}1&1\\1&1\end{smallmatrix}\right)$
  & $\sqrt{Z}_4\,\sqrt{Z}_5\,\mathrm{CZ}_{45}$ & $(0,0,1)$\\
$S^{(3)}$ & $\cell_3$ & $D^{\tp}B$, right
  & $0\oplus0\oplus\left(\begin{smallmatrix}1&1\\1&1\end{smallmatrix}\right)$
  & $\sqrt{Z}_4\,\sqrt{Z}_5\,\mathrm{CZ}_{45}$ & $(0,0,0)$\\
\bottomrule
\end{tabular}
\end{table*}

Several features of the general argument are visible in this one run.

\emph{Dragging.} The cell $\cell_2$ is never given a phase of its own. The first
move, chosen to advance $\cell_1$, has global parameter $\Id_6$, whose
restriction to $\cell_2$ is the nonzero matrix $\Id_2$. That move therefore acts
on $\cell_2$ as well, and clears it as a side effect. The distance at $\cell_2$
drops from $1$ to $0$ without any move having been selected for it. The same move
also alters the block at $\cell_3$, though without changing its distance.

\emph{Freezing.} On moves $S^{(2)}$ and $S^{(3)}$ the global parameter restricts to
\emph{zero} on the finished cells $\cell_1$ and $\cell_2$. This is not arranged.
All six formulas of Eq.~\eqref{eq:d1-spans} vanish identically on a terminal block,
so every binary combination of them does too. On such a terminal block the
corresponding gate restricts to the identity.

\emph{The move bound is loose.} Three moves are used against the guarantee
$m_1+3m_2=9$ of Eq.~\eqref{eq:d1-bound}, because one global formula can advance
several cells at once. The bound is a worst case, not a prediction.

The resulting terminal layer has cell blocks
\begin{equation}
  g^{\star}_{\cell_1}\!=\!g^{\star}_{\cell_2}\!=\!\!
  \begin{pmatrix}0&0&0&1\\0&0&1&0\\0&1&0&0\\1&0&0&0\end{pmatrix}\!\!,
  \ \
  g^{\star}_{\cell_3}\!=\!\!
  \begin{pmatrix}0&0&1&0\\0&0&0&1\\1&0&0&0\\0&1&0&0\end{pmatrix}\!\!,
  \label{eq:d1-622term}
\end{equation}
all three of type $\Word$ in the sense of Lemma~\ref{lem:terminal}: each has
$A_{\cell}=D_{\cell}=0$ with $B_{\cell}$ invertible and
$C_{\cell}=B_{\cell}^{-1}$, this being case (ii) of that lemma. The global
parameters are therefore
\begin{equation}
  B=C=
  \begin{pmatrix}0&1\\1&0\end{pmatrix}\oplus
  \begin{pmatrix}0&1\\1&0\end{pmatrix}\oplus
  \begin{pmatrix}1&0\\0&1\end{pmatrix},
  \label{eq:d1-622BC}
\end{equation}
both symmetric as terminality requires, so that $\Ush{B}$ is the circuit
$\mathrm{CZ}_{01}\,\mathrm{CZ}_{23}\,\sqrt{Z}_4\,\sqrt{Z}_5$ and $\Lsh{C}$ is its
mirror image in the $X$ basis. The group word $\Word=\Ush{B}\Lsh{C}\Ush{B}$ is an
involution. Since no cell is Levi one has $\Lterm=\Gterm\Word=\Id$ here: the
whole terminal layer is carried by the $\Word$ group word. Undoing the three moves,
which are involutions as well, gives the identity
\begin{equation}
  G=\Ush{S^{(2)}}\Ush{B}\Lsh{C}\Ush{B}\Ush{S^{(3)}}\Ush{S^{(1)}},
  \label{eq:d1-622word}
\end{equation}
a group word of six valid diagonal circuits on the same matching $\Mt$, in
accordance with Eq.~\eqref{eq:d1-assembly}. Here $S^{(2)}=S^{(3)}$, so the group word uses
only four distinct parameters.

Two phenomena that the argument allows do \emph{not} occur for this input. It
is worth saying so, since a reader might otherwise take the example for the
general case. No cell is ever dragged to a strictly \emph{worse} distance here.
In general a formula chosen to advance one cell can increase another cell's
distance. The induction is therefore on the number of terminal cells rather
than on any aggregate of distances. No mixed terminal layer occurs: for this
code and this matching every terminal cell reached is of type $\Word$. The Levi
factor is trivial for this matching. Mixed terminal
layers, with some cells Levi and some of type $\Word$, are handled by the
assembly of Sec.~\ref{sec:d1-terminal} without
any per-cell case distinction.

%% file: app_transversal.tex
\section{Three-block decomposition for transversal gates}
\label{app:transversal}

The depth-one theorem of Appendix~\ref{app:depthone} holds for an arbitrary partial
matching $\Mt$ of the $n$ qubits. The content of that theorem grows with the size of the
cells that $\Mt$ is allowed to use. It is worth descending to the finest partition, in
which no two qubits are matched at all and every cell is a singleton. The layers
available there are the strictly transversal ones --- a single-qubit Clifford on each
qubit. In this case, the generation statement can be sharpened to a decomposition of
\emph{fixed length}, valid uniformly in $n$ and in the code. That decomposition, and the
exact criterion for when it can be shortened, are the subject of this appendix.

Throughout, $\lab=(\CX|0)\oplus(0|\CZ)\subseteq\Vsp$ is a CSS-form label space,
vectors are rows, and matrices act on the right. Only this splitting of $\lab$
is used below. In particular, orthogonality $\CX\perp\CZ$ is what makes $\lab$ a
stabilizer label space, but it is never invoked in any of the arguments below.
No argument below uses any distance hypothesis, and none assumes that $\CX$ and
$\CZ$ are nonzero.

Set $\Mt=\emptyset$: no two qubits are matched, so every cell is a singleton. The corresponding
block-diagonal subgroup $\DM$ of $\Sp{2n}$ is then a product of one copy of $\Sp{2}$ per
qubit, so that the transversal group,
\begin{equation}
  \Ntr=\DM\cap\Nfull=\Sp{2}^{n}\cap\Nfull ,
  \label{eq:trans-ntr}
\end{equation}
consists of exactly those tuples of single-qubit Cliffords that preserve $\lab$. The
three families of Appendix~\ref{app:depthone} specialize as follows. An element of
$\LevM$ is one of the Levi gates \eqref{eq:setup-lev} whose $K$ is in addition
$\Mt$-block-diagonal. At $\Mt=\emptyset$ each diagonal block of $K$ is a single
invertible element of $\Ftwo$, so $K=\Id$. At $\Mt=\emptyset$ the Levi slice $\LevM$ is
the trivial group, for the elementary reason that $\GLg{1}$ is trivial. The
two slices $\EupM$ and $\EdnM$ become the transversal families
$\Eupo$ and $\Edno$. The parameters of $\Eupo$ and $\Edno$ are diagonal matrices, and
are therefore automatically symmetric. Writing a diagonal parameter as the
vector on its diagonal, in the width-1 shorthand $\Ush{a}:=\Ush{\diago a}$ and
$\Lsh{a}:=\Lsh{\diago a}$, these families are
\begin{equation}
  \begin{aligned}
    \Eupo&=\{\Ush{a}:a\in\Pup\},\\
    \Edno&=\{\Lsh{b}:b\in\Pdn\},
  \end{aligned}
  \label{eq:trans-families}
\end{equation}
where the two \emph{parameter codes} are
\begin{equation}
  \begin{aligned}
    \Pup&=\{a\in\Ftwo^{n}:a\pw\CX\subseteq\CZ\},\\
    \Pdn&=\{b\in\Ftwo^{n}:b\pw\CZ\subseteq\CX\}.
  \end{aligned}
  \label{eq:trans-parcodes}
\end{equation}
In circuit terms the two families are the transversal layers of
Appendix~\ref{app:setup}: $\Ush{a}$ is a $\sqrt{Z}$ on each qubit of $\suppo(a)$ and
the identity elsewhere, and dually $\Lsh{b}$ is a $\sqrt{X}$ on each qubit of
$\suppo(b)$. The parameter codes thus record which subsets of qubits may carry a phase
gate, respectively its conjugate-basis partner, without disturbing the code.
For fixed $w$ the map $a\mapsto a\pw w$ is $\Ftwo$-linear in $a$. Each defining condition
in Eq.~\eqref{eq:trans-parcodes} is the requirement that a linear image of $a$ land in a
subspace. Hence $\Pup$ and $\Pdn$ are binary linear codes, obtained from generator
matrices of $\CX$ and $\CZ$ by Gaussian elimination. Since $\Ush{a}\Ush{a'}=\Ush{a+a'}$
and $\Lsh{b}\Lsh{b'}=\Lsh{b+b'}$, the parametrizations in Eq.~\eqref{eq:trans-families} are
group isomorphisms from $(\Pup,+)$ and $(\Pdn,+)$. It follows that $\Eupo$ and
$\Edno$ are elementary abelian $2$-groups whose orders are read off two ranks.

It is convenient to have the block form of a transversal element on record. Write
\begin{equation}
  G=\begin{pmatrix}A&B\\C&D\end{pmatrix},
  \label{eq:trans-blocks}
\end{equation}
where all four blocks are diagonal because $G$ is transversal, and identify each block
with the vector in $\Ftwo^{n}$ on its diagonal, so that matrix products of blocks become
pointwise products of vectors. Since rows act on the right,
$(x|z)G=(xA+zC\,|\,xB+zD)$. A vector $(w|0)$ with $w\in\CX$ is then sent to
$(A\pw w\,|\,B\pw w)$, and a vector $(0|w)$ with $w\in\CZ$ is sent to
$(C\pw w\,|\,D\pw w)$. Applying the requirement $\lab G\subseteq\lab$ to spanning
vectors of the two summands of $\lab$, membership $G\in\Ntr$ is therefore exactly
\begin{equation}
  \begin{aligned}
    A\pw\CX&\subseteq\CX, &\qquad B\pw\CX&\subseteq\CZ,\\
    C\pw\CZ&\subseteq\CX, &\qquad D\pw\CZ&\subseteq\CZ ,
  \end{aligned}
  \label{eq:trans-valid}
\end{equation}
together with symplecticity, which in this identification is the single pointwise
relation $A\pw D+B\pw C=\ones$.

With the three families identified, the fixed-matching theorem specializes at once.

\begin{corollary}[Transversal generation]
\label{cor:transgen}
For every CSS-form label space $\lab$,
\begin{equation}
  \Ntr=\langle\,\Eupo,\Edno\,\rangle .
  \label{eq:trans-gen}
\end{equation}
\end{corollary}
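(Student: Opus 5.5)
This corollary is the specialization of Theorem~\ref{thm:depthone} to the empty matching $\Mt=\emptyset$. We will not rerun the reduction argument; we only identify the three families at this matching.

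First, at $\Mt=\emptyset$ every cell is a singleton, so $\DM=\Sp{2}^{n}$ and $\NM=\Ntr$ by Eq.~\eqref{eq:trans-ntr}. Theorem~\ref{thm:depthone} then gives
\begin{equation}
  \Ntr=\big\langle\,\Eup_{\emptyset},\ \Edn_{\emptyset},\ \Lev_{\emptyset}\,\big\rangle .
\end{equation}
This holds for every CSS-form $\lab$, with no hypothesis that $\CX$ or $\CZ$ be nonzero, because Theorem~\ref{thm:depthone} carries no such hypothesis.

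Second, we dispose of the Levi slice. An element of $\Lev_{\emptyset}$ has the form $\left(\begin{smallmatrix}K&0\\0&K^{\itp}\end{smallmatrix}\right)$ with $K$ block-diagonal for singleton cells, so $K$ is diagonal and invertible over $\Ftwo$. Hence $K=\Id$, because $\GLg{1}$ is trivial, and $\Lev_{\emptyset}=\{\Id\}$ contributes nothing.

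Third, we identify the diagonal slices with Eq.~\eqref{eq:trans-families}. A symmetric $\emptyset$-block-diagonal $S$ is exactly a diagonal matrix $\diago a$. For a diagonal matrix, $cS=a\pw c$ for every $c\in\CX$, so the validity condition $\CX S\subseteq\CZ$ of Corollary~\ref{cor:validity} becomes $a\pw\CX\subseteq\CZ$, that is, $a\in\Pup$. The dual statement $\Edn_{\emptyset}=\{\Lsh{b}:b\in\Pdn\}$ follows from Corollary~\ref{cor:dualfamily} in the same way. Substituting these identifications into the displayed equation yields Eq.~\eqref{eq:trans-gen}.

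None of these steps is a genuine obstacle. All the substance sits inside Theorem~\ref{thm:depthone}, specifically the width-one part of Lemma~\ref{lem:local}. The one point to be careful about is that the Levi family really is trivial at the empty matching, so that no CNOT generator is silently dropped. As an optional sanity check, the width-one terminal blocks are $\Id$ and $\Ush{1}\Lsh{1}\Ush{1}$ (Lemma~\ref{lem:terminal}), and both are words in diagonal layers, consistent with the absence of a Levi factor.
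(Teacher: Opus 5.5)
Your proposal is correct and is essentially the paper's own proof. You specialize Theorem~\ref{thm:depthone} to $\Mt=\emptyset$ and note that the Levi slice collapses to $\{\Id\}$ because $\GLg{1}$ is trivial. You then identify the two diagonal slices with $\Eupo=\{\Ush{a}:a\in\Pup\}$ and $\Edno=\{\Lsh{b}:b\in\Pdn\}$, which is exactly what the paper does.
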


\begin{proof}
Apply Theorem~\ref{thm:depthone} at $\Mt=\emptyset$: it gives
$\NM=\langle\EupM,\EdnM,\LevM\rangle$. By the specialization just described,
$\NM=\Ntr$, $\EupM=\Eupo$, $\EdnM=\Edno$, and $\LevM=\{\Id\}$, which is exactly
\eqref{eq:trans-gen}.
\end{proof}

Only the width-one case of Lemma~\ref{lem:local} is used here. At width one every block
is a scalar, so $B+B^{\tp}$ and $C+C^{\tp}$ vanish identically and terminality reduces
to $AB=DB=CD=CA=0$. At width one the two terminal blocks are $\Id$ and $\sw$. Each of
the four remaining elements of $\Sp{2}$ has a nonzero parameter and is cleared by one
move. The finite verification behind Theorem~\ref{thm:depthone} therefore enters this
appendix only through a six-element check.

Equation \eqref{eq:trans-gen} says that every transversal gate is \emph{some}
group word in valid diagonal circuits. Equation \eqref{eq:trans-gen} says nothing at all
about how long that group word must be: a generation statement carries no bound
on group-word length. A priori the length could grow with $n$ or with the
dimensions of $\CX$ and $\CZ$. The point of the following theorem is precisely
that the length grows with neither. Three letters always suffice, in a fixed
alternating pattern.

\begin{theorem}[Transversal decomposition]
\label{thm:transnf}
For every CSS-form label space $\lab$, every $G\in\Ntr$ can be written
\begin{equation}
  G=\Ush{a}\,\Lsh{b}\,\Ush{a'},\quad a,a'\in\Pup,\; b\in\Pdn ;
  \label{eq:trans-nf}
\end{equation}
equivalently, as a product of subsets of $\Sp{2n}$,
\begin{equation}
  \Ntr=\Eupo\cdot\Edno\cdot\Eupo .
  \label{eq:trans-nfset}
\end{equation}
\end{theorem}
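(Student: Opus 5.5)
The plan is to verify the explicit factorization announced in Eq.~\eqref{eq:T-closed}. Given $G\in\Ntr$ with diagonal blocks $A,B,C,D\in\Ftwo^{n}$, I set
\begin{equation}
  b=C,\qquad a=A\pw B+B\pw C,\qquad a'=B\pw C+B\pw C\pw D .
\end{equation}
I then show two things: that $\Ush{a}\Lsh{b}\Ush{a'}=G$, and that $a,a'\in\Pup$ and $b\in\Pdn$. Corollary~\ref{cor:transgen} only guarantees that $G$ is \emph{some} group word. I will not try to shorten that word. Instead I work directly with the block form~\eqref{eq:trans-blocks}, where every product is pointwise and every qubit decouples.

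\emph{Step 1 (the identity).} Per qubit, the product is
\begin{equation}
  \begin{pmatrix}1&a\\0&1\end{pmatrix}\begin{pmatrix}1&0\\b&1\end{pmatrix}\begin{pmatrix}1&a'\\0&1\end{pmatrix}=\begin{pmatrix}1+ab&(1+ab)a'+a\\ b&1+ba'\end{pmatrix}.
\end{equation}
I split on the value of $C_i$, using $A_iD_i+B_iC_i=1$.
\begin{itemize}
\item If $C_i=0$, then $A_i=D_i=1$. The formulas give $a_i=B_i$ and $a'_i=0$, and the product matches $G$ on that qubit.
\item If $C_i=1$, then $B_i=1+A_iD_i$. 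The formulas reduce to $a_i=1+A_i$ and $a'_i=1+D_i$. The upper-right entry is then $A_i(1+D_i)+1+A_i=1+A_iD_i=B_i$, as required.
\end{itemize}
This step is routine and confirms~\eqref{eq:trans-nf} at the level of matrices.

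\emph{Step 2 (the middle factor and one outer factor).} The middle factor is immediate: $b=C$ lies in $\Pdn$ because $C\pw\CZ\subseteq\CX$ is one of the conditions in~\eqref{eq:trans-valid}. Next, the identity of Step 1 rearranges to $\Ush{a'}=\Lsh{b}\Ush{a}G$, using that every diagonal circuit is an involution. So once $a\in\Pup$ is known, $\Ush{a'}$ is a product of elements of $\Ntr$ and hence lies in $\Ntr$. A transversal $Z$-diagonal circuit lies in $\Nfull$ exactly when its parameter is in $\Pup$, by Corollary~\ref{cor:validity} specialized to diagonal $S$. This gives $a'\in\Pup$ for free.

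\emph{Step 3 (the main obstacle: $a\in\Pup$).} Everything therefore reduces to showing $a\pw w\in\CZ$ for every $w\in\CX$. I first collect the pointwise inclusions available. From $G$ and from $G^{-1}$, whose blocks per qubit are $(D,B,C,A)$ by~\eqref{eq:d1-inv}, both $A$ and $D$ preserve $\CX$ and $\CZ$. Also $B\pw\CX\subseteq\CZ$ and $C\pw\CZ\subseteq\CX$. Writing
\begin{equation}
  a\pw w=B\pw(A\pw w)+(B\pw C)\pw w,
\end{equation}
the first term is in $\CZ$. The difficulty is the second term, since no listed inclusion controls $C$ on $\CX$ directly. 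The first thing I would try is to manufacture further inclusions from closure of $\Ntr$: for instance, the elements $G\,\Ush{B\pw D}$ and $\Lsh{C}\,G$ lie in $\Ntr$, and their blocks are pointwise polynomials in $A,B,C,D$. Applying~\eqref{eq:trans-valid} to such elements should give additional inclusions, and I would look for one of the form $(B\pw C)\pw\CX\subseteq\CZ$ or $(A+C)\pw B\pw\CX\subseteq\CZ$. If that fails, the fallback is to adjust the choice of $a$ qubit by qubit within the freedom left by Step 1: on qubits with $C_i=0$, only $a_i+a'_i=B_i$ is fixed. I would use this freedom, together with the unique normal form $\Hd{B\pw C}\Ush{B\pw D}\Lsh{A\pw C}$ of Proposition~\ref{prop:permkernel}, to pick a parameter whose membership in $\Pup$ is evident, and then apply Step 2 to obtain the other outer parameter.
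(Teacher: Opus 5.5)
\textbf{Gap: the crucial step $a\in\Pup$ is left unproved.} Your Steps 1 and 2 are correct. The coordinatewise check of the closed form is right, $b=C\in\Pdn$ is immediate, and deducing $a'\in\Pup$ from $a\in\Pup$ via $\Ush{a'}=\Lsh{b}\Ush{a}G$ is a clean shortcut. But Step 3 carries all the content of the theorem, and you leave it open. After reducing to $(B\pw C)\pw\CX\subseteq\CZ$, you only list things you would try. The fallback, re-choosing $a$ on qubits with $C_i=0$, does not touch $\suppo(C)$. There $a_i=1+A_i$ is forced by the factorization, and that is exactly where $B\pw C$ is supported. So the difficulty remains, and as written neither $a\in\Pup$ nor the theorem is established.

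\textbf{How to close it.} The missing idea is the paper's product lemma, Lemma~\ref{lem:product}. You never need $C$ to act on $\CX$, because idempotence lets you sandwich $C$ between two copies of $B$. Take $w\in\CX$. Then $B\pw w\in\CZ$ since $B\in\Pup$. Next, $C\pw B\pw w\in\CX$ since $C\in\Pdn$. Then $B\pw C\pw B\pw w\in\CZ$, and this vector equals $B\pw C\pw w$ because $B\pw B=B$.

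Your first suggested probe would also have worked had you carried it out. First, $B\pw D\in\Pup$, because $D$ preserves $\CX$ (as you derived from $G^{-1}$). Second, the upper-right block of $G\,\Ush{B\pw D}\in\Ntr$ is $B\pw(\ones+A\pw D)=B\pw B\pw C=B\pw C$, by pointwise symplecticity. Validity of that element then gives exactly the missing inclusion.

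\textbf{Comparison with the paper's route.} With this line added, your argument is complete, and it differs from the paper's main proof. The paper shows that $\Eupo\cdot\Edno\cdot\Eupo$ is closed under multiplication, via the exchange identity \eqref{eq:trans-exchange} together with Lemma~\ref{lem:product}, and then invokes Corollary~\ref{cor:transgen}. It records your closed form \eqref{eq:trans-explicit} only as a remark. Your direct verification bypasses the generation corollary and produces the factors explicitly. The paper's closure argument instead supplies a rewriting rule that shortens arbitrary diagonal-circuit words.
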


Normal forms of this genre are standard. Every element of the unconstrained
symplectic group factors into Siegel, Levi and Weyl letters. Templates of the same
shape compile logical Clifford circuits
(cf.~\cite{dehaene2003clifford,rengaswamy2018synthesis,chen2024tailoring,popov2026optimized}).
Theorem~\ref{thm:transnf} adds that all three letters are transversal, valid for
the code, and diagonal circuits.

The inclusion $\supseteq$ in Eq.~\eqref{eq:trans-nfset} is immediate, since each of the three
factors already lies in $\Ntr$ by Eq.~\eqref{eq:trans-families}; the content is the inclusion
$\subseteq$. The proof occupies Secs.~\ref{sec:trans-product}
and~\ref{sec:trans-closure}. Corollary~\ref{cor:transgen} already supplies group words of
unbounded length. Given those group words, it is enough to show that the three-block set
is closed under multiplication, and one exchange identity does that.
Section~\ref{sec:trans-sharp} then shows that the length three is sharp, and identifies
exactly the codes for which it drops to at most two.

\subsection{Pointwise products}
\label{sec:trans-product}

The proof rests on one closure property of the parameter codes. The two conditions in
\eqref{eq:trans-parcodes} move vectors in opposite directions between $\CX$ and $\CZ$.
Applying the two conditions alternately therefore returns a vector to the code it started
in. A parameter that is valid in both senses thus pushes each code into the intersection
$\CX\cap\CZ$.

\begin{lemma}[Product lemma]
\label{lem:product}
If $a\in\Pup$ and $b\in\Pdn$, then
\begin{equation}
  a\pw b\pw\CX\subseteq\CX\cap\CZ ,\quad
  a\pw b\pw\CZ\subseteq\CX\cap\CZ ;
  \label{eq:trans-prodstrong}
\end{equation}
in particular $a\pw b\in\Pup\cap\Pdn$.
\end{lemma}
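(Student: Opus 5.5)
The plan is to chase a single codeword through the two defining conditions of Eq.~\eqref{eq:trans-parcodes}, using only that the pointwise product $\pw$ is commutative and associative. Fix $a\in\Pup$ and $b\in\Pdn$.

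\emph{Images of $\CX$.} Take $w\in\CX$. Since $a\in\Pup$, we have $a\pw w\in\CZ$. Applying the condition for $b\in\Pdn$ to this element gives $b\pw(a\pw w)\in\CX$. Reordering the factors, $b\pw w$ need not lie anywhere useful on its own, so we reach $\CZ$ by an intermediate step. Since $a\pw w\in\CZ$, we get $b\pw a\pw w\in\CX$. From $a\pw(b\pw a\pw w)\in\CZ$ and $a\pw a=a$, this yields $a\pw b\pw w\in\CZ$. Hence $a\pw b\pw w\in\CX\cap\CZ$.

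\emph{Images of $\CZ$.} This is the mirror argument. Take $w\in\CZ$. Since $b\in\Pdn$, $b\pw w\in\CX$, and then $a\pw b\pw w\in\CZ$ because $a\in\Pup$. Applying $b$ once more and using $b\pw b=b$ gives $a\pw b\pw w\in\CX$. So $a\pw b\pw w\in\CX\cap\CZ$ here too.

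The ``in particular'' follows immediately: $(a\pw b)\pw\CX\subseteq\CZ$ places $a\pw b$ in $\Pup$, and $(a\pw b)\pw\CZ\subseteq\CX$ places it in $\Pdn$.

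The only delicate point is that the direct composite $a\pw b\pw w$ lands in just one of the two codes after the two natural steps. The idempotence $c\pw c=c$ of binary vectors is what supplies membership in the other code. I expect no further obstacle, since everything is linear and coordinatewise.
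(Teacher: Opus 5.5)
Your proof is correct and follows the paper's argument exactly. For $w\in\CX$ you run the chain $a\pw w\in\CZ$, $b\pw a\pw w\in\CX$, $a\pw b\pw a\pw w\in\CZ$, and for $w\in\CZ$ its mirror, then close both with commutativity and idempotence of $\pw$; the stray sentence about $b\pw w$ in the first case is redundant and can simply be deleted.
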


\begin{proof}
Let $w\in\CX$. Since $a\in\Pup$ we have $a\pw w\in\CZ$. Since $b\in\Pdn$ we may apply the
second condition of Eq.~\eqref{eq:trans-parcodes} to this vector, giving
$b\pw a\pw w\in\CX$. Applying $a\in\Pup$ once more to that vector gives
$a\pw b\pw a\pw w\in\CZ$. Pointwise multiplication is commutative and idempotent,
$a\pw a=a$, so the last two vectors are both equal to $a\pw b\pw w$. The two memberships
just derived therefore read $a\pw b\pw w\in\CX$ and $a\pw b\pw w\in\CZ$, which is the
first half of Eq.~\eqref{eq:trans-prodstrong}. The same chain run in the other direction
proves the second half: for $w\in\CZ$ we get $b\pw w\in\CX$, then $a\pw b\pw w\in\CZ$,
then $b\pw a\pw b\pw w\in\CX$. That last vector is again $a\pw b\pw w$ by commutativity
and idempotence.

Finally, \eqref{eq:trans-prodstrong} contains $a\pw b\pw\CX\subseteq\CZ$ and
$a\pw b\pw\CZ\subseteq\CX$, which are the defining conditions
$a\pw b\in\Pup$ and $a\pw b\in\Pdn$.
\end{proof}

\subsection{Closure of the three-block set}
\label{sec:trans-closure}

The obstruction to shortening group words in $\Eupo\cup\Edno$ is that valid
$Z$-diagonal and $X$-diagonal circuits need not commute. There is therefore in general no
two-factor exchange rule of the form $\Lsh{b}\Ush{a}=\Ush{\cdot}\Lsh{\cdot}$.
There is, however, an exchange rule with three factors on each side, and its
right-hand side is again in the required upper--lower--upper pattern. For all
$a,b,c\in\Ftwo^{n}$,
\begin{equation}
  \begin{aligned}
    \Lsh{b}\,\Ush{a}\,\Lsh{c}
      &=\Ush{a+a\pw b+a\pw b\pw c}\\
      &\quad\cdot\Lsh{b+c+a\pw b\pw c}\\
      &\quad\cdot\Ush{a\pw b}.
  \end{aligned}
  \label{eq:trans-exchange}
\end{equation}

Proving this involves checking eight
cases.

\begin{proof}[Proof of Eq.~\eqref{eq:trans-exchange}]
Fix a coordinate $i$ and drop it from the notation, so that $a,b,c\in\Ftwo$ and every
factor is one of the six elements of $\Sp{2}$.

If $a=0$ the left-hand side is $\Lsh{b}\Lsh{c}=\Lsh{b+c}$. On the right-hand side the
first and third exponents, $a+a\pw b+a\pw b\pw c$ and $a\pw b$, both vanish, and the
middle exponent is $b+c+a\pw b\pw c=b+c$. At $a=0$ the right-hand side is
$\Lsh{b+c}$ as well.

If $a=1$ there are four subcases, according to the pair $(b,c)$. The left-hand sides are,
respectively,
\begin{equation}
  \begin{aligned}
    &\Ush{1},\qquad \Ush{1}\Lsh{1},\\
    &\Lsh{1}\Ush{1},\qquad \Lsh{1}\Ush{1}\Lsh{1},
  \end{aligned}
  \label{eq:trans-fourwords}
\end{equation}
for $(b,c)=(0,0),(0,1),(1,0),(1,1)$, since a factor with exponent $0$ is the identity. On
the right-hand side the three exponents
$\bigl(a+a\pw b+a\pw b\pw c,\;b+c+a\pw b\pw c,\;a\pw b\bigr)$ evaluate to $(1,0,0)$,
$(1,1,0)$, $(0,1,1)$ and $(1,1,1)$ in the four subcases, giving
$\Ush{1}$, $\Ush{1}\Lsh{1}$, $\Lsh{1}\Ush{1}$ and $\Ush{1}\Lsh{1}\Ush{1}$. The first
three agree with Eq.~\eqref{eq:trans-fourwords} exactly. In the fourth we need
\begin{equation}
  \Lsh{1}\Ush{1}\Lsh{1}=\Ush{1}\Lsh{1}\Ush{1},
  \label{eq:trans-braid}
\end{equation}
which holds because over $\Ftwo$ both sides equal the swap $\sw$:
\begin{equation}
  \begin{aligned}
    \begin{pmatrix}1&1\\0&1\end{pmatrix}
    \begin{pmatrix}1&0\\1&1\end{pmatrix}
    \begin{pmatrix}1&1\\0&1\end{pmatrix}
    &=\begin{pmatrix}0&1\\1&0\end{pmatrix},\\[2pt]
    \begin{pmatrix}1&0\\1&1\end{pmatrix}
    \begin{pmatrix}1&1\\0&1\end{pmatrix}
    \begin{pmatrix}1&0\\1&1\end{pmatrix}
    &=\begin{pmatrix}0&1\\1&0\end{pmatrix}.
  \end{aligned}
  \label{eq:trans-swapword}
\end{equation}
This exhausts the eight cases.
\end{proof}

Identity \eqref{eq:trans-exchange} is only useful if it preserves validity, and this is
where Lemma~\ref{lem:product} enters. Suppose $a\in\Pup$ and $b,c\in\Pdn$, so that the
left-hand side of Eq.~\eqref{eq:trans-exchange} is a product of three valid gates. Applying
Lemma~\ref{lem:product} to $a\in\Pup$ and $b\in\Pdn$ gives $a\pw b\in\Pup\cap\Pdn$, and a
second application, now to $a\pw b\in\Pup$ and $c\in\Pdn$, gives
\begin{equation}
  a\pw b\in\Pup\cap\Pdn,\qquad a\pw b\pw c\in\Pup\cap\Pdn .
  \label{eq:trans-twoprods}
\end{equation}
Since $\Pup$ and $\Pdn$ are linear, the first exponent in Eq.~\eqref{eq:trans-exchange} is a
sum of the three vectors $a$, $a\pw b$ and $a\pw b\pw c$, all in $\Pup$, and so lies in
$\Pup$; the second is a sum of $b$, $c$ and $a\pw b\pw c$, all in $\Pdn$, and so lies in
$\Pdn$; the third is $a\pw b\in\Pup$. Every factor on the right of
\eqref{eq:trans-exchange} is thus a valid gate of the required type.

\begin{proof}[Proof of Theorem~\ref{thm:transnf}]
The three-block set $\Eupo\cdot\Edno\cdot\Eupo$ is contained in $\Ntr$, since each of its
three factors is. We show that it is in fact a subgroup of $\Ntr$.

It contains the identity, which is the group word with $a=b=a'=0$. It is closed under
inversion: over $\Ftwo$ each diagonal circuit is an involution, $\Ush{a}^{-1}=\Ush{a}$ and
$\Lsh{b}^{-1}=\Lsh{b}$, so reversing a three-block group word inverts it,
\begin{equation}
  \bigl(\Ush{a}\,\Lsh{b}\,\Ush{a'}\bigr)^{-1}=\Ush{a'}\,\Lsh{b}\,\Ush{a},
  \label{eq:trans-inverse}
\end{equation}
and the right-hand side is again of three-block form.

For closure under multiplication, take two elements of $\Eupo\cdot\Edno\cdot\Eupo$ and
multiply them. Their two adjacent inner upper factors combine by
$\Ush{a'}\Ush{a''}=\Ush{a'+a''}$, so the product has the five-block form
\begin{equation}
  \Ush{a}\,\Lsh{b}\,\Ush{c}\,\Lsh{d}\,\Ush{e},
  \label{eq:trans-fiveblock}
\end{equation}
with $a,c,e\in\Pup$ and $b,d\in\Pdn$. Apply \eqref{eq:trans-exchange} to the middle three
factors $\Lsh{b}\Ush{c}\Lsh{d}$, whose parameters satisfy the hypotheses of the validity
discussion above. This replaces them by a valid group word of the pattern
upper--lower--upper, after which \eqref{eq:trans-fiveblock} has the pattern
upper--upper--lower--upper--upper. Combining the two adjacent upper factors at each end
returns a group word of the three-block pattern, with all parameters valid.

So $\Eupo\cdot\Edno\cdot\Eupo$ is a subgroup of $\Ntr$. It contains $\Eupo$ (take
$b=a'=0$) and $\Edno$ (take $a=a'=0$), hence contains $\langle\Eupo,\Edno\rangle$, which
is all of $\Ntr$ by Corollary~\ref{cor:transgen}. Combined with the inclusion
$\Eupo\cdot\Edno\cdot\Eupo\subseteq\Ntr$ noted at the outset, this gives
\eqref{eq:trans-nfset}, and \eqref{eq:trans-nf} is the elementwise restatement.
\end{proof}

The argument is effective as well as existential. Corollary~\ref{cor:transgen} expresses a
given $G$ as some group word in valid diagonal circuits. The two rewriting steps just used are
merging adjacent like factors and applying \eqref{eq:trans-exchange} to a
lower--upper--lower segment. These two steps shorten any group word in valid diagonal circuits to
three blocks, and each step produces parameters that are again valid. The triple
$(a,b,a')$ is thus produced by rewriting rather than in closed
form, but a closed form is available too: one checks coordinatewise, using
$A\pw D+B\pw C=\ones$, that
\begin{equation}
  b=C,\qquad a=A\pw B+B\pw C,\qquad a'=B\pw C+B\pw C\pw D
  \label{eq:trans-explicit}
\end{equation}
also satisfies $G=\Ush{a}\Lsh{b}\Ush{a'}$, the three parameters being valid by
\eqref{eq:trans-valid} together with Lemma~\ref{lem:product}.

\subsection{Sharpness of the block length}
\label{sec:trans-sharp}

Three blocks cannot in general be reduced to two, and the obstruction is visible in the
parameter codes alone.

\begin{proposition}[Two blocks suffice exactly when the parameter codes meet trivially]
\label{prop:twoblock}
For every CSS-form label space $\lab$,
\begin{equation}
  \Ntr=\Eupo\cdot\Edno
  \quad\Longleftrightarrow\quad
  \Pup\cap\Pdn=\{0\}.
  \label{eq:trans-twoblock}
\end{equation}
\end{proposition}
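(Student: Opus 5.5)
The plan is to prove the two directions separately, working coordinatewise with the diagonal-vector identification of Eq.~\eqref{eq:trans-blocks}, so that every transversal element is described by four vectors $A,B,C,D$ with $A\pw D+B\pw C=\ones$.

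For ($\Leftarrow$), assume $\Pup\cap\Pdn=\{0\}$. By Lemma~\ref{lem:product}, $a\pw b\in\Pup\cap\Pdn$ for all $a\in\Pup$ and $b\in\Pdn$, so $a\pw b=0$. In other words, valid $Z$-diagonal and $X$-diagonal transversal layers have disjoint supports. Coordinatewise, $\Ush{a}$ and $\Lsh{b}$ then act on different qubits and commute. Now take any $G\in\Ntr$ and write it as $G=\Ush{a}\Lsh{b}\Ush{a'}$ using Theorem~\ref{thm:transnf}. Commuting $\Lsh{b}$ past $\Ush{a'}$ gives $G=\Ush{a+a'}\Lsh{b}\in\Eupo\cdot\Edno$. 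The reverse inclusion $\Eupo\cdot\Edno\subseteq\Ntr$ is immediate.

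For ($\Rightarrow$), assume there is some $0\neq t\in\Pup\cap\Pdn$ and consider $\Hd{t}=\Ush{t}\Lsh{t}\Ush{t}$. This element lies in $\Ntr$, because its factors are valid. On the coordinates in $\suppo(t)$ it equals the swap $\sw$ by Eq.~\eqref{eq:trans-braid}, so there $A=D=0$ and $B=C=1$. I then show that no product $\Ush{a}\Lsh{b}$ can equal it. A direct $2\times 2$ computation gives, at each coordinate,
\begin{equation}
  \Ush{a}\Lsh{b}=\begin{pmatrix}1+ab&a\\ b&1\end{pmatrix},
\end{equation}
so the $D$-entry is always $1$. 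On $\suppo(t)$, however, $\Hd{t}$ has $D=0$. Hence $\Hd{t}\notin\Eupo\cdot\Edno$ (indeed not even in the unconstrained product of the two families), and equality fails.

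I expect no real obstacle. The only point needing care is that the ($\Leftarrow$) direction relies on Lemma~\ref{lem:product} to convert trivial intersection into disjointness of supports, which is what makes the two kinds of layers commute. The other direction is a single-coordinate observation: the $D$ block is identically $1$ on $\Eupo\cdot\Edno$.
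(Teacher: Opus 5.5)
Your proof is correct and follows the paper's argument. For ($\Leftarrow$), Lemma~\ref{lem:product} forces the supports to be disjoint, so the layers commute; for ($\Rightarrow$), the subset Hadamard $\Hd{t}$ is the witness, and its lower-right block $\ones+t\neq\ones$ rules out any two-factor product. The differences are minor shortcuts. You commute inside the three-block form of Theorem~\ref{thm:transnf} instead of rearranging arbitrary words supplied by Corollary~\ref{cor:transgen}. You also obtain $\Hd{t}\in\Ntr$ directly from $\Hd{t}=\Ush{t}\Lsh{t}\Ush{t}$ with valid factors, instead of checking the four conditions of Eq.~\eqref{eq:trans-valid}.
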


\begin{proof}
Assume first that $\Pup\cap\Pdn=\{0\}$. By Lemma~\ref{lem:product}, $a\pw b\in\Pup\cap\Pdn$
for every $a\in\Pup$ and $b\in\Pdn$, so $a\pw b=0$ for every such pair. Thus every valid
$Z$-diagonal circuit and every valid $X$-diagonal circuit have disjoint supports. Two transversal gates
supported on disjoint sets of qubits commute, so $\Ush{a}$ and $\Lsh{b}$ commute for all
$a\in\Pup$, $b\in\Pdn$. Consequently every group word in $\Eupo\cup\Edno$ may be
rearranged so that all upper letters precede all lower letters. Each of the two resulting
runs collapses to a single letter, because $\Eupo$ and $\Edno$ are groups. Hence
$\langle\Eupo,\Edno\rangle=\Eupo\cdot\Edno$, and Corollary~\ref{cor:transgen} identifies
the left-hand side with $\Ntr$.

For the converse we argue contrapositively: assume there exists a nonzero
$t\in\Pup\cap\Pdn$ and exhibit an element of $\Ntr$ outside $\Eupo\cdot\Edno$. Apply
Lemma~\ref{lem:product} with $a=b=t$; since $t\pw t=t$, its conclusion
\eqref{eq:trans-prodstrong} reads
\begin{equation}
  t\pw\CX\subseteq\CX\cap\CZ,\qquad t\pw\CZ\subseteq\CX\cap\CZ .
  \label{eq:trans-hadvalid}
\end{equation}
Consider the subset Hadamard $\Hd{t}$, acting as $\sw$ on $\suppo(t)$ and as the identity
elsewhere. Its blocks in the notation \eqref{eq:trans-blocks} are $A=D=\ones+t$ and
$B=C=t$, so the four conditions \eqref{eq:trans-valid} for $\Hd{t}$ read
$(\ones+t)\pw\CX\subseteq\CX$, $t\pw\CX\subseteq\CZ$, $t\pw\CZ\subseteq\CX$ and
$(\ones+t)\pw\CZ\subseteq\CZ$. The two middle ones are contained in
\eqref{eq:trans-hadvalid}. Since $w+t\pw w\in\CX$ together with $w\in\CX$ gives
$t\pw w\in\CX$, the first is equivalent to $t\pw\CX\subseteq\CX$, which is again part of
\eqref{eq:trans-hadvalid}, and symmetrically for the fourth. Hence $\Hd{t}\in\Ntr$.

On the other hand $\Hd{t}$ is not a two-block product. A product of one upper and one
$X$-diagonal circuit is
\begin{equation}
  \Ush{S}\,\Lsh{T}=\begin{pmatrix}\ones+S\pw T & S\\ T & \ones\end{pmatrix},
  \label{eq:trans-twofactor}
\end{equation}
whose lower-right block is $\ones$ for every choice of the parameters, whereas
the lower-right block of $\Hd{t}$ is $\ones+t$, which differs from $\ones$ on
$\suppo(t)\neq\emptyset$. Equivalently, at any coordinate of $\suppo(t)$ the
local gate is $\sw$, and by Eq.~\eqref{eq:trans-swapword} this is a group word of
length three and not a product of at most two of the local letters
$\Ush{1},\Lsh{1}$. Either way $\Hd{t}\notin\Eupo\cdot\Edno$, so
$\Ntr\neq\Eupo\cdot\Edno$.
\end{proof}

The computation in the proof characterizes validity of a subset Hadamard in both
directions. Validity of $\Hd{a}$ forces $a\pw\CX\subseteq\CZ$ and $a\pw\CZ\subseteq\CX$,
that is $a\in\Pup\cap\Pdn$. Conversely, \eqref{eq:trans-hadvalid} follows from
$a\in\Pup\cap\Pdn$. We may therefore record the criterion in the form it is
usually wanted:
\begin{equation}
  \Hd{a}\in\Ntr\qquad\Longleftrightarrow\qquad a\in\Pup\cap\Pdn ,
  \label{eq:trans-hadcrit}
\end{equation}
so that Proposition~\ref{prop:twoblock} says the block length drops from three to two
exactly when the code admits no valid subset Hadamard other than the identity.

It is worth recording alongside \eqref{eq:trans-hadcrit} what such a Hadamard costs. It
is not a fourth kind of generator: by Eq.~\eqref{eq:trans-swapword}, applied at each
coordinate of $\suppo(a)$ and with all three factors equal to the identity off that
support,
\begin{equation}
  \Hd{a}=\Ush{a}\,\Lsh{a}\,\Ush{a},
  \label{eq:trans-had}
\end{equation}
so a valid subset Hadamard is already a three-block group word in the two diagonal families, and
by Eq.~\eqref{eq:trans-hadcrit} its validity is precisely the pair of conditions that make
$\Ush{a}$ and $\Lsh{a}$ valid. This is the mechanism behind
Proposition~\ref{prop:twoblock}: the $X\leftrightarrow Z$ mixing that a subset Hadamard
supplies is not a separate ingredient. It is what the third block is for.

Structurally, a valid subset Hadamard says more than this: a valid partial
subset Hadamard cannot occur unless the code splits along its support
(cf.~\cite{guyot2026addressability}).

\begin{proposition}[A partial subset Hadamard splits the code]
\label{prop:trans-split}
Let $t\in\Pup\cap\Pdn$, and write $R=\suppo(t)$ with $R^{c}$ for the complementary
set of qubits. Then both classical codes split along the partition $R\sqcup R^{c}$,
\begin{equation}
\begin{aligned}
  \CX&=(t\pw\CX)\oplus\bigl((\ones+t)\pw\CX\bigr),\\
  \CZ&=(t\pw\CZ)\oplus\bigl((\ones+t)\pw\CZ\bigr),
\end{aligned}
  \label{eq:trans-codesplit}
\end{equation}
the first summand of each supported on $R$ and the second on $R^{c}$. The label
space is therefore a direct sum $\lab=\lab_{R}\oplus\lab_{R^{c}}$ of parts
supported on $R$ and on $R^{c}$, so the stabilizer code is the tensor product of
its restrictions to the two sets of qubits. The partition is nontrivial exactly
when $0\neq t\neq\ones$.
\end{proposition}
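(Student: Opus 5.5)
The plan is to apply Lemma~\ref{lem:product} with $a=b=t$. Since $t\pw t=t$, it gives
\begin{equation}
  t\pw\CX\subseteq\CX\cap\CZ,\qquad t\pw\CZ\subseteq\CX\cap\CZ,
\end{equation}
as recorded in Eq.~\eqref{eq:trans-hadvalid}. In particular $t\pw\CX\subseteq\CX$ and $t\pw\CZ\subseteq\CZ$, so pointwise multiplication by $t$ is a linear endomorphism of $\CX$ and of $\CZ$. This endomorphism is idempotent, because $t\pw t\pw w=t\pw w$.

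Next I would use the elementary splitting of a vector space under an idempotent. For $w\in\CX$ write
\begin{equation}
  w=t\pw w+(\ones+t)\pw w .
\end{equation}
The first summand lies in $\CX$ by the previous step, so the second, being $w$ minus the first, also lies in $\CX$. Both $t\pw\CX$ and $(\ones+t)\pw\CX$ are therefore subspaces of $\CX$, and they sum to $\CX$. The first is supported on $R$ and the second on $R^{c}$, so they meet only in $0$ and the sum is direct. The identical argument applied to $\CZ$ gives the second line of Eq.~\eqref{eq:trans-codesplit}.

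Setting
\begin{equation}
  \lab_{R}=(t\pw\CX\,|\,0)\oplus(0\,|\,t\pw\CZ),
\end{equation}
and defining $\lab_{R^{c}}$ with $\ones+t$ in place of $t$, the splitness \eqref{eq:setup-split} of $\lab$ gives $\lab=\lab_{R}\oplus\lab_{R^{c}}$. Each summand is the label space of a CSS code on its qubit subset: orthogonality is inherited from $\CX\perp\CZ$, since both summands are subspaces of $\CX$ and $\CZ$. So the stabilizer group is generated by operators supported on $R$ together with operators supported on $R^{c}$, which means it is the tensor product of the two restricted codes.

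Finally, the partition $R\sqcup R^{c}$ is nontrivial exactly when both parts are nonempty, that is, when $t\neq0$ and $t\neq\ones$. There is no real obstacle in this argument. The only point needing care is that the invariances $t\pw\CX\subseteq\CX$ and $t\pw\CZ\subseteq\CZ$ are not part of the definition of $\Pup\cap\Pdn$; they come from the chained application in Lemma~\ref{lem:product}.
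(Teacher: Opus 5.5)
Your proof is correct and follows the paper's argument. Both apply Lemma~\ref{lem:product} with $a=b=t$ to obtain $t\pw\CX\subseteq\CX$ and $t\pw\CZ\subseteq\CZ$, split each code with the complementary coordinate projectors $w\mapsto t\pw w$ and $w\mapsto(\ones+t)\pw w$, and then assemble $\lab=\lab_{R}\oplus\lab_{R^{c}}$ from splitness. The only cosmetic difference is that you get directness of the sum from disjoint supports, while the paper notes that multiplication by $t$ fixes $t\pw\CX$ and annihilates $(\ones+t)\pw\CX$, which is the same fact.
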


\begin{proof}
Applying Lemma~\ref{lem:product} with $a=b=t$ and using $t\pw t=t$ gives
$t\pw\CX\subseteq\CX\cap\CZ$ and $t\pw\CZ\subseteq\CX\cap\CZ$. Write $P$ and $Q$
for multiplication by $t$ and by $\ones+t$. The maps $P$ and $Q$ are complementary
coordinate projectors, with images supported on $R$ and on $R^{c}$. Each of $P$ and $Q$
is idempotent, their product is zero, and $P+Q=\Id$. The inclusion just derived gives
$P\CX\subseteq\CX$, and hence $Q\CX=(\Id+P)\CX\subseteq\CX$ as well, since for
$x\in\CX$ both $x$ and $Px$ lie in $\CX$. So $P$ and $Q$ each carry $\CX$ into itself.
Because they also sum to the identity, they restrict to complementary projectors
\emph{on} $\CX$: every $x\in\CX$ is $Px+Qx$ with both terms in $\CX$. Moreover
$P\CX\cap Q\CX=0$,
because $P$ fixes $P\CX$ and annihilates $Q\CX$. Hence $\CX=P\CX\oplus Q\CX$, the
first line of Eq.~\eqref{eq:trans-codesplit}, and the same argument gives the second.
The two summands of $\lab$ are then supported on the disjoint qubit sets $R$ and
$R^{c}$, on which the symplectic form \eqref{eq:setup-form} pairs nothing. The sum is
therefore orthogonal, and the code is the claimed tensor product.
\end{proof}

The proposition turns the criterion \eqref{eq:trans-hadcrit} into a statement
about the code itself. Call a CSS code \emph{connected}, equivalently
\emph{indecomposable}, if no nontrivial partition $R\sqcup R^{c}$ of the qubits
splits it as $\lab=\lab_{R}\oplus\lab_{R^{c}}$ with the summands supported on $R$
and $R^{c}$. This definition is the generating-set-independent form of Tanner-graph
connectedness.

\begin{corollary}
\label{cor:trans-connected}
On a connected CSS code $\Pup\cap\Pdn\subseteq\{0,\ones\}$, and $\ones$ belongs to
it if and only if $\CX=\CZ$. By Eq.~\eqref{eq:trans-hadcrit} the only valid
permutation-free subset Hadamards are then the identity and, in the self-dual case
$\CX=\CZ$, the global Hadamard $H^{\otimes n}$.
\end{corollary}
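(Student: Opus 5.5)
Fix $t\in\Pup\cap\Pdn$. Proposition~\ref{prop:trans-split} says that $\CX$ and $\CZ$ split along $\suppo(t)\sqcup\suppo(t)^{c}$. On a connected code this partition must be trivial. Hence $\suppo(t)=\emptyset$ or $\suppo(t)=\{1,\dots,n\}$, that is, $t\in\{0,\ones\}$, which proves $\Pup\cap\Pdn\subseteq\{0,\ones\}$.

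For the second claim I characterize when $\ones\in\Pup\cap\Pdn$. Because $\ones\pw w=w$, the condition $\ones\in\Pup$ reads $\CX\subseteq\CZ$, and $\ones\in\Pdn$ reads $\CZ\subseteq\CX$. Together these give $\ones\in\Pup\cap\Pdn$ if and only if $\CX=\CZ$. This is a direct unwinding of the definition~\eqref{eq:trans-parcodes}. It does not need connectedness.

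The statement about subset Hadamards then follows from the criterion~\eqref{eq:trans-hadcrit}: $\Hd{a}\in\Ntr$ exactly when $a\in\Pup\cap\Pdn$. By the first paragraph this forces $a=0$ or $a=\ones$. The case $a=0$ gives the identity. The case $a=\ones$ is available exactly when $\CX=\CZ$, and then $\Hd{\ones}$ is the full $\sw$, which is $H^{\otimes n}$ modulo Pauli operators.

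I expect no real obstacle. The only care point is matching definitions. The splitting in Proposition~\ref{prop:trans-split} is given as $\lab=\lab_R\oplus\lab_{R^c}$ with summands supported on $R$ and $R^{c}$, which is exactly the decomposition the definition of connectedness forbids for nontrivial $R$. Degenerate cases such as $\CX=0$ should be checked against the definition, but the argument above does not use nonzero codes.
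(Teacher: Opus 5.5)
Your proposal is correct and follows the paper's proof essentially step for step. Proposition~\ref{prop:trans-split} rules out any $t\in\Pup\cap\Pdn$ with $0\neq t\neq\ones$. Unwinding the definitions with $\ones\pw w=w$ gives $\ones\in\Pup\cap\Pdn$ exactly when $\CX=\CZ$, and reading \eqref{eq:trans-hadcrit} at $a=0$ and $a=\ones$ gives the final sentence.
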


\begin{proof}
A vector $t\in\Pup\cap\Pdn$ with $0\neq t\neq\ones$ would split the code by
Proposition~\ref{prop:trans-split}, against connectedness, so
$\Pup\cap\Pdn\subseteq\{0,\ones\}$. Here $\ones\in\Pup$ reads
$\CX=\ones\pw\CX\subseteq\CZ$ and $\ones\in\Pdn$ reads $\CZ\subseteq\CX$, so
$\ones\in\Pup\cap\Pdn$ if and only if $\CX=\CZ$; the last sentence is
\eqref{eq:trans-hadcrit} read on $a=0$ and $a=\ones$.
\end{proof}

The self-dual case genuinely occurs: the connected $\qcode{4,2,2}$ code with
$\CX=\CZ=\spano\{\ones\}$ carries the permutation-free duality $H^{\otimes4}$. By
Proposition~\ref{prop:twoblock} the transversal group of that code needs three blocks
and not two.

The results above concern the permutation-free group $\Ntr$.
Appendix~\ref{app:permutations} extends \eqref{eq:trans-nfset} to the group $\Gaut$
obtained by adjoining qubit permutations. It also compares the resulting normal form
with the set product of $\PDgp$ with the two diagonal families.

%% file: app_permutations.tex
\section{Adding qubit permutations}
\label{app:permutations}

Every gate up to this point treats the $n$ physical qubits as labeled and
immovable. Such a gate may act on the qubits one at a time, or two at a time
along a fixed matching, but it may never move a qubit to a different position.
For a fixed hardware layout this is the right convention. It is not, however,
the convention under which codes are catalogued. A code-automorphism routine
reports the larger group. It consists of the operations that act by a
single-qubit Clifford on each qubit and then relabel the qubits. The standard
databases store the order of that group as the
automorphism-group size of a stabilizer code. This appendix determines that
group completely for a CSS code. It shows three things:
\begin{enumerate}
\item The group is an exact three-fold product of two diagonal families and one
  family of Hadamard-plus-permutation elements.
\item The factorization of a given element is unique, and a closed formula
  reads it off the local blocks of that element.
\item One ingredient of the group is the only irreducible cost in computing
  it. We identify that ingredient.
\end{enumerate}

We write
\begin{equation}
  \Gaut \;=\; \bigl(\Sp{2}\wr\Perm{n}\bigr)\cap\Stb(\lab),
  \label{eq:perm-gaut}
\end{equation}
which is the same as $(\Sp{2}\wr\Perm{n})\cap\Nfull$, and we adopt throughout the
convention that the pair $(g;\pi)$ with $g\in\Sp{2}^{n}$ and
$\pi\in\Perm{n}$ denotes the matrix product $\pperm{\pi}\,g$. The matrix
$\pperm{\pi}$ acts in the same way on the two halves of the polarization. We
therefore also write $u\pperm{\pi}$ for the induced permutation of the
coordinates of a vector $u\in\Ftwo^{n}$. Recall that our vectors are rows, and
that the action is defined from the right.
Entrywise, $(\pperm{\pi})_{i,\pi(i)}=1$ and all other entries vanish, so
that the label $u\pperm{\pi}$ carries the coordinate $u_{i}$ to position
$\pi(i)$.
The group $\Gaut$ is the physical automorphism group of the code modulo Pauli
operators. This group consists of the single-qubit Cliffords and qubit
permutations that preserve the code, and it is a subgroup of the full
code-preserving group $\Nfull$. The kernel of $\Gaut$ on the permutation side,
\begin{equation}
  \Ntr \;=\; \Gaut\cap\Sp{2}^{n},
  \label{eq:perm-kernel}
\end{equation}
is the transversal group already met in Appendix~\ref{app:transversal}; in the notation of
Appendix~\ref{app:depthone} it is the two-fold transversal slice $\NM$ at the empty matching, all of
whose cells are singletons. Thus $\Gaut$ is exactly what one obtains by
adjoining qubit permutations to the transversal theory. The question of this
appendix is how much the adjunction really adds.

A local tuple $g\in\Sp{2}^{n}$ is a block matrix
\begin{equation}
  g=\begin{pmatrix} A & B\\ C& D\end{pmatrix},
  \qquad A\pw D + B\pw C=\ones ,
  \label{eq:perm-localsp}
\end{equation}
in which all four blocks are diagonal; we identify each block with its diagonal in
$\Ftwo^{n}$, so that all products of blocks appearing below are pointwise. The
right-hand condition in Eq.~\eqref{eq:perm-localsp} is pointwise symplecticity. Because
rows act on the right, $g\in\Stb(\lab)$ is equivalent to the four inclusions
\begin{equation}
\begin{aligned}
  A\pw\CX&\subseteq\CX, &\qquad B\pw\CX&\subseteq\CZ,\\
  C\pw\CZ&\subseteq\CX, &\qquad D\pw\CZ&\subseteq\CZ .
\end{aligned}
  \label{eq:perm-valid}
\end{equation}
Each of these conditions is linear in the diagonal it constrains, so it is enough to
impose it on a generating set of the CSS code. For each such generator $u$ the
map $v\mapsto v\pw u$ is linear. Membership of the result in the target code is
the vanishing of a fixed linear functional on that result. Consequently the
parameter codes
\begin{equation}
\begin{aligned}
  \Pup&=\{a\in\Ftwo^{n} : a\pw\CX\subseteq\CZ\},\\
  \Pdn&=\{a\in\Ftwo^{n} : a\pw\CZ\subseteq\CX\}
\end{aligned}
  \label{eq:perm-params}
\end{equation}
of Appendix~\ref{app:transversal}, whose associated gate families are
$\Eupo=\{\Ush{a}:a\in\Pup\}$ and $\Edno=\{\Lsh{a}:a\in\Pdn\}$, are binary linear
spaces obtainable by Gaussian elimination, and so is the space of parameters of
valid subset Hadamards.

The new ingredient is the family of \emph{partial dualities}
\begin{equation}
  \PDgp=\bigl\{(\Hd{a};\pi)\in\Gaut \;:\; a\in\Ftwo^{n},\ \pi\in\Perm{n}\bigr\},
  \label{eq:perm-pd}
\end{equation}
the elements of $\Gaut$ whose local part consists of Hadamards only. This is a
subgroup. Indeed $\Hd{a}\Hd{a'}=\Hd{a+a'}$, and conjugating a subset Hadamard by a
qubit permutation only moves its support,
$\pperm{\tau}^{-1}\Hd{a}\pperm{\tau}=\Hd{a\pperm{\tau}}$, so that
\begin{equation}
  \bigl(\pperm{\sigma}\Hd{a}\bigr)\bigl(\pperm{\tau}\Hd{a'}\bigr)
  \;=\;\pperm{\sigma}\pperm{\tau}\;\Hd{a\pperm{\tau}+a'},
  \label{eq:perm-weylcomp}
\end{equation}
whose right-hand side is again a permutation matrix times a subset Hadamard, and
similarly $(\pperm{\sigma}\Hd{a})^{-1}=\pperm{\sigma}^{-1}\Hd{a\pperm{\sigma}^{-1}}$.
The family $\PDgp$ has two familiar endpoints. When $a=0$ the element is a pure
permutation preserving $\CX$ and $\CZ$ separately, the Levi-type case. When
$a=\ones$ the element is a full $\Zh\leftrightarrow\Xh$ duality, a permutation
that carries $\CX$ onto $\CZ$ and $\CZ$ onto $\CX$. Neither endpoint family
contains the other in general. The theorem below shows that the intermediate
values of $a$ are exactly what the two diagonal families need to close up into the
whole group.

Finally, write $\rhom(\Gaut)\le\Perm{n}$ for the group of qubit permutations
that occur in $\Gaut$. This group is the image of $\Gaut$ under the homomorphism
$\Sp{2}\wr\Perm{n}\to\Perm{n}$ that forgets the single-qubit gate on each qubit
and keeps only the permutation. The kernel of its restriction to $\Gaut$ is
$\Ntr$, and Section~\ref{app:perm-exact} develops the resulting exact sequence.

\begin{theorem}[Normal form for the automorphism group]
\label{thm:autnf}
Let $\lab=(\CX|0)\oplus(0|\CZ)$ be a label space presented in CSS form. Then
\begin{equation}
  \Gaut=\PDgp\cdot\Eupo\cdot\Edno
  \label{eq:perm-setproduct}
\end{equation}
as a product of sets. Every element of $\Gaut$ factors as
\begin{equation}
\begin{gathered}
  (g;\pi)=(\Hd{a};\pi)\,\Ush{q}\,\Lsh{p},\\
  a=B\pw C,\quad q=B\pw D,\quad p=A\pw C,
\end{gathered}
  \label{eq:perm-nf}
\end{equation}
where $A,B,C,D$ are the local blocks of $g$ as in Eq.~\eqref{eq:perm-localsp}, and this
factorization is unique once the two diagonal-circuit supports are required to be disjoint,
$q\pw p=0$. For generation it suffices to adjoin to $\langle\Eupo,\Edno\rangle$ one
partial duality lifting each element of a generating set of $\rhom(\Gaut)$.
\end{theorem}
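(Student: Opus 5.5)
We will verify the factorization directly, coordinate by coordinate, and then derive validity of each factor.

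\emph{Step 1: the local identity.} Since $(g;\pi)=\pperm{\pi}g$ and $(\Hd{a};\pi)=\pperm{\pi}\Hd{a}$, the claim reduces to the purely local identity
\begin{equation}
g=\Hd{B\pw C}\,\Ush{B\pw D}\,\Lsh{A\pw C}
\end{equation}
in $\Sp{2}^{n}$. This can be checked one coordinate at a time, over the six elements of $\Sp{2}$. When $bc=0$ we have $ad=1$, so $a=d=1$ and $g=\Ush{b}\Lsh{c}$; this holds because a product with $bc=0$ is $\left(\begin{smallmatrix}1&b\\ c&1\end{smallmatrix}\right)$, with $q=b$ and $p=c$. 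When $b=c=1$, the local factor $\Hd{1}=\sw$ has to be followed by $\Ush{d}\Lsh{a}$ with $ad=0$. Multiplying out $\sw\left(\begin{smallmatrix}1&d\\ a&1\end{smallmatrix}\right)=\left(\begin{smallmatrix}a&1\\ 1&d\end{smallmatrix}\right)$ recovers $g$. Disjointness $q\pw p=B\pw C\pw A\pw D=0$ holds because $ad+bc=1$ forbids $ad=bc=1$.

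\emph{Step 2: uniqueness.} Suppose $(g;\pi)$ factors as $\pperm{\pi'}\Hd{a}\Ush{q}\Lsh{p}$ with $q\pw p=0$. The permutation part is then forced to be $\pi'=\pi$. Each coordinate of $\Hd{a}\Ush{q}\Lsh{p}$ is one of six distinct matrices, so the case analysis of Step 1 shows that $(a_i,q_i,p_i)$ is determined by the local block. The triples with $q_ip_i=0$ number exactly six, matching $|\Sp{2}|=6$, so the map from triples to local blocks is a bijection.

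\emph{Step 3: validity of the factors (main obstacle).} We must show $q\in\Pup$, $p\in\Pdn$, and $(\Hd{a};\pi)\in\Gaut$. The last follows from the first two, since $(\Hd{a};\pi)=(g;\pi)\Lsh{p}\Ush{q}$ and $\Gaut$ is a group.

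The difficulty is that $g$ itself need not preserve $\lab$; only $\pperm{\pi}g$ does. The conditions \eqref{eq:perm-valid} therefore hold only with permuted codes. Writing $\CX'=\CX\pperm{\pi}$ and $\CZ'=\CZ\pperm{\pi}$, we have $A\pw\CX'\subseteq\CX$, $B\pw\CX'\subseteq\CZ$, and so on.

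I would try the following. Apply the inverse element $(g;\pi)^{-1}=g^{-1}\pperm{\pi}^{-1}$, whose local blocks are $D,B,C,A$ by \eqref{eq:d1-inv}. This gives the reverse inclusions $D\pw\CX\subseteq\CX'$, $B\pw\CX\subseteq\CZ'$, $C\pw\CZ\subseteq\CX'$ and $A\pw\CZ\subseteq\CZ'$. Chaining the two sets of inclusions:
\begin{itemize}
\item for $w\in\CX$, $D\pw w\in\CX'$, and then $B\pw D\pw w\in\CZ$, so $q\in\Pup$;
\item dually, for $w\in\CZ$, $A\pw w\in\CZ'$, and then $C\pw A\pw w\in\CX$, so $p\in\Pdn$.
\end{itemize}
This mirrors Lemma~\ref{lem:product}, with $\pi$ absorbed by going out and back. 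I expect this step to need care with the direction of $\pperm{\pi}$ acting on the pointwise products. Pointwise multiplication commutes with permutation only after relabeling the diagonal, so each inclusion should be written out explicitly once.

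\emph{Step 4: consequences.} The set identity \eqref{eq:perm-setproduct} follows from Steps 1–3, together with the evident inclusion $\supseteq$.

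For the generation statement, let $\rho(\Gaut)=\langle\pi_1,\dots,\pi_m\rangle$ and pick lifts $h_j\in\PDgp$ over each $\pi_j$, which exist by Step 3. Any $x\in\Gaut$ has $\rho(x)$ equal to a word in the $\pi_j$. The corresponding word $w$ in the $h_j$ then satisfies $w^{-1}x\in\ker\rho=\Ntr$, and $\Ntr=\langle\Eupo,\Edno\rangle$ by Corollary~\ref{cor:transgen}.
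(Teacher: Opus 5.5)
Your proposal is correct and follows the paper's proof. You check $g=\Hd{B\pw C}\,\Ush{B\pw D}\,\Lsh{A\pw C}$ coordinatewise, and you validate $q$ and $p$ by going out to $\lab\pperm{\pi}$ with $g^{-1}$ and back with $g$. That is exactly the computation $(u|0)\,g^{-1}\PX g$ which the paper packages as the conjugated projector in Lemma~\ref{lem:permmaster}. Since $\CX\pperm{\pi}$ enters only as a subspace, the care you anticipate about permutations and pointwise products is not needed. Your group-closure step for the Hadamard factor also matches the paper.

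Your uniqueness argument by the $6=6$ count is an equally valid minor variant of the paper's multiplied-out block formulas. The same holds for your direct word-lifting proof of generation, which replaces the paper's order count via the exact sequence.
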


The only structural hypothesis is the displayed CSS presentation of $\lab$. No
distance assumption is used, neither $\CX$ nor $\CZ$ is required to be nonzero,
and $k=0$ is allowed. Orthogonality $\CX\perp\CZ$ is of course needed for $\lab$
to be the label space of a stabilizer code, but no step of the argument invokes
it. What cannot be dropped is that the presentation is really CSS. Consider a
code that is merely local-Clifford equivalent to a CSS code, and that is stored
in a frame carrying $Y$-type generators. That code must first be rotated into
CSS form. Otherwise the blocks in Eq.~\eqref{eq:perm-nf} are not the blocks of
\eqref{eq:perm-localsp}.

The proof occupies the next three subsections. The permutation part is separated
off first, by an exact sequence that is pure wreath-product
formalism~\cite{dixon1996permutation}. The whole mathematical content then sits
in a single lemma about local tuples, which is the width-one case of the
automatic-validity lemma of Appendix~\ref{app:depthone}.

\subsection{The permutation exact sequence}
\label{app:perm-exact}

The base group $\Sp{2}^{n}$ is normal in $\Sp{2}\wr\Perm{n}$. The projection
$\rhom$ that forgets the local gates and keeps the permutation is a homomorphism
onto its image. Intersecting with $\Stb(\lab)$ therefore yields
\begin{equation}
\begin{gathered}
  1\longrightarrow \Ntr\longrightarrow \Gaut
   \stackrel{\rhom}{\longrightarrow}\rhom(\Gaut)\longrightarrow 1,\\
  |\Gaut|=|\Ntr|\cdot|\rhom(\Gaut)| .
\end{gathered}
  \label{eq:perm-exact}
\end{equation}
The image is characterized without reference to the local gates at all:
\begin{multline}
  \rhom(\Gaut)=\bigl\{\pi\in\Perm{n} :
     \lab\pperm{\pi}\ \text{is local-Clifford}\\
     \text{equivalent to}\ \lab\bigr\},
  \label{eq:perm-rhoimage}
\end{multline}
because a lift of $\pi$ is exactly a local tuple $g$ with
$\lab\pperm{\pi}g=\lab$. Thus $\rhom(\Gaut)$ is the group of permutations that
preserve the code up to single-qubit Cliffords. It is called the
\emph{Clifford-twisted automorphism group} of the
code~\cite{hao2021investigations}. The group $\Ntr$ consists of the
single-qubit Cliffords that preserve the code outright. The two factors on the
right of Eq.~\eqref{eq:perm-exact} are of quite different computational character, a
point we return to at the end of Sec.~\ref{app:perm-nf}.

It is worth recording what the lifting problem looks like for one fixed permutation,
before we prove that it is always solvable by Hadamards alone. Let $\pi$ be given and
ask for which subsets, that is for which indicator vectors $a\in\Ftwo^{n}$, the element
$(\Hd{a};\pi)=\pperm{\pi}\Hd{a}$ lies in $\Gaut$. Applying $\pperm{\pi}\Hd{a}$ to
$(u|0)$ with $u\in\CX$ gives $(u\pperm{\pi}\pw(\ones+a)\,|\,u\pperm{\pi}\pw a)$, and
applying it to $(0|w)$ with $w\in\CZ$ gives
$(w\pperm{\pi}\pw a\,|\,w\pperm{\pi}\pw(\ones+a))$. Hence $(\Hd{a};\pi)\in\Gaut$ if and
only if, for all $u\in\CX$ and all $w\in\CZ$,
\begin{subequations}
\label{eq:perm-affine}
\begin{align}
  u\pperm{\pi}\pw(\ones+a)&\in\CX, &
  u\pperm{\pi}\pw a&\in\CZ,
  \label{eq:perm-affine-x}\\
  w\pperm{\pi}\pw(\ones+a)&\in\CZ, &
  w\pperm{\pi}\pw a&\in\CX .
  \label{eq:perm-affine-z}
\end{align}
\end{subequations}
For fixed $\pi$ this is an affine-linear system in $a$: the conditions involving $a$
alone are linear, and those involving $\ones+a$ are linear up to the constant terms
$u\pperm{\pi}$ and $w\pperm{\pi}$. It is solved by Gaussian elimination once $\pi$ is
known.  

At the two endpoint values noted above the system reads as expected.
The case $a=0$ demands $\CX\pperm{\pi}\subseteq\CX$ and $\CZ\pperm{\pi}\subseteq\CZ$, a permutation
symmetry of the two classical codes separately --- that is, $\pi\in\mathrm{Aut}(\CX)$ and
$\pi\in\mathrm{Aut}(\CZ)$.
The other extreme, $a=\ones$, demands
$\CX\pperm{\pi}\subseteq\CZ$ and $\CZ\pperm{\pi}\subseteq\CX$, a full duality.  

Nothing so far guarantees that \eqref{eq:perm-affine} is consistent for a given
$\pi\in\rhom(\Gaut)$. That consistency is the content of Lemma~\ref{lem:pureH}
below. Consistency is not obvious, since a priori a permutation could be
realizable only with the help of diagonal circuits or three-cycles in its local dressing.

\subsection{The master lemma}
\label{app:perm-master}

The whole theorem rests on the following statement about a single local tuple.
It says that a local tuple carrying one CSS-form space onto another is, up to
diagonal circuits that are automatically valid, nothing but a subset Hadamard. It also says
that the subset in question is computed by one pointwise product of blocks.

\begin{lemma}[Master lemma at width one]
\label{lem:permmaster}
Let $\lab_{1}$ and $\lab_{0}$ be subspaces of $\Vsp$ that are both in CSS form, and let
$g\in\Sp{2}^{n}$ be a local tuple with $\lab_{1}g=\lab_{0}$. Write $A,B,C,D$ for the
blocks of $g$ as in Eq.~\eqref{eq:perm-localsp} and set
\begin{equation}
  a=B\pw C,\qquad q=B\pw D,\qquad p=A\pw C .
  \label{eq:perm-abc}
\end{equation}
Then $q\pw p=0$, the identity
\begin{equation}
  g=\Hd{a}\,\Ush{q}\,\Lsh{p}=\Hd{a}\,\Lsh{p}\,\Ush{q}
  \label{eq:perm-master-fact}
\end{equation}
holds in $\Sp{2}^{n}$, and $\Ush{q}\Lsh{p}\in\Stb(\lab_{0})$; writing
$\lab_{0}=(\CX|0)\oplus(0|\CZ)$, the two diagonal-circuit parameters obey
\begin{equation}
  q\pw\CX\subseteq\CZ,\qquad p\pw\CZ\subseteq\CX,
  \label{eq:perm-shearvalid}
\end{equation}
that is $q\in\Pup$ and $p\in\Pdn$. Consequently $\lab_{1}\Hd{a}=\lab_{1}g=\lab_{0}$.
\end{lemma}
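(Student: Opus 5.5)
The plan separates the purely local algebra from the code-dependent validity statements. Since $g\in\Sp{2}^{n}$ is a tuple of $2\times2$ blocks, the factorization identity \eqref{eq:perm-master-fact} and $q\pw p=0$ are coordinatewise statements about the six elements of $\Sp{2}$. Validity of the diagonal circuits, by contrast, will come from the width-one case of the automatic-parameter construction of Lemma~\ref{lem:auto}. Here it is applied to a map between two different CSS-form spaces rather than to a single stabilizer.

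\emph{Local identity.} Fix a coordinate and write the block as $\left(\begin{smallmatrix}\alpha&\beta\\ \gamma&\delta\end{smallmatrix}\right)$ with $\alpha\delta+\beta\gamma=1$. Then $q\pw p=\beta\delta\alpha\gamma$, and this vanishes because $\alpha\delta$ and $\beta\gamma$ cannot both equal $1$. Consequently at most one of $\Ush{q},\Lsh{p}$ is nontrivial at each coordinate. They therefore commute, which gives the second equality in \eqref{eq:perm-master-fact}. To prove $g=\Hd{a}\Ush{q}\Lsh{p}$, run through the six elements of $\Sp{2}$:
\begin{itemize}
\item $\Id$ gives $a=q=p=0$;
\item $\Ush{1}$ gives $q=1$;
\item $\Lsh{1}$ gives $p=1$;
\item $\sw$ gives $a=1$;
\item the two order-three elements $\left(\begin{smallmatrix}0&1\\1&1\end{smallmatrix}\right)$ and $\left(\begin{smallmatrix}1&1\\1&0\end{smallmatrix}\right)$ give $a=1$ together with exactly one of $q,p$ equal to $1$.
\end{itemize}
In each case one checks directly that $\sw\Ush{1}$ or $\sw\Lsh{1}$ reproduces the block. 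This is a six-case table, just like the width-one analysis in Lemma~\ref{lem:terminal}.

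\emph{Validity of $q$ and $p$.} I will mimic the proof of Lemma~\ref{lem:auto}. Since $\lab_{1}g=\lab_{0}$, the map $g^{-1}\PX g$ sends $\lab_{0}$ into $\lab_{0}$, because $\PX$ preserves the split space $\lab_{1}$. Using \eqref{eq:d1-inv}, $g^{-1}=\left(\begin{smallmatrix}D&B\\ C&A\end{smallmatrix}\right)$ pointwise, so
\[
g^{-1}\PX g=\begin{pmatrix}D\pw A&D\pw B\\ C\pw A&C\pw B\end{pmatrix}.
\]
Feeding $(c|0)$ with $c\in\CX$ into this map forces $D\pw B\pw\CX\subseteq\CZ$, which is $q\in\Pup$. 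Feeding $(0|d)$ with $d\in\CZ$ forces $C\pw A\pw\CZ\subseteq\CX$, which is $p\in\Pdn$. (Conjugating by $g$ on the other side instead would produce parameters valid for $\lab_{1}$. It is the orientation $g^{-1}\PX g$ that lands on $\lab_{0}$, and I expect fixing this orientation to be the main point requiring care.) By Corollaries~\ref{cor:validity} and~\ref{cor:dualfamily}, this gives $\Ush{q},\Lsh{p}\in\Stb(\lab_{0})$, so their product lies in $\Stb(\lab_{0})$ as well.

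\emph{Conclusion.} Finally, $\lab_{1}\Hd{a}=\lab_{1}g\,(\Ush{q}\Lsh{p})^{-1}=\lab_{0}(\Lsh{p}\Ush{q})=\lab_{0}$, using that diagonal circuits are involutions and that the two factors commute.
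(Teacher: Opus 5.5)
Your proposal is correct and follows the paper's proof essentially step for step. Validity of $q$ and $p$ against $\lab_{0}$ comes from the same conjugated idempotent $g^{-1}\PX g$ with the pointwise inverse, and the factorization comes from the same six-element coordinatewise check. Your one variation, deriving $q\pw p=0$ directly from $A\pw D+B\pw C=\ones$ instead of reading it off the table, is a clean shortcut.
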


\begin{proof}
We first establish \eqref{eq:perm-shearvalid}, which is where the CSS hypothesis on
$\lab_{1}$ enters. A subspace of $\Vsp$ is in CSS form precisely when it is preserved by
the projector $\PX=\left(\begin{smallmatrix}\ones&0\\0&0\end{smallmatrix}\right)$ onto
the $\Xh$-half. Suppose a subspace is carried into itself by $\PX$. Then that
subspace is the direct sum of its images under $\PX$ and under the complementary
projector $\left(\begin{smallmatrix}0&0\\0&\ones\end{smallmatrix}\right)$ onto
the $\Zh$-half. Those images are the pure-$\Xh$ and pure-$\Zh$ parts of that
subspace. Conversely a CSS-form subspace is $\PX$-invariant, since $\PX$ sends
$(c|d)$ to $(c|0)$, which lies in it. Consider the conjugated idempotent
\begin{equation}
  e:=g^{-1}\PX\,g .
  \label{eq:perm-idem}
\end{equation}
It preserves $\lab_{0}$: using $\lab_{0}g^{-1}=\lab_{1}$ and then the CSS form of
$\lab_{1}$,
\begin{equation}
  \lab_{0}\,e=\lab_{1}\PX g\subseteq\lab_{1}g=\lab_{0}.
  \label{eq:perm-idem-pres}
\end{equation}
Over $\Ftwo$ the pointwise symplecticity condition in Eq.~\eqref{eq:perm-localsp} gives the
inverse in closed form,
\begin{equation}
  g^{-1}=\begin{pmatrix} D& B\\ C& A\end{pmatrix},
  \label{eq:perm-ginv}
\end{equation}
as one checks by multiplying out and using $A\pw D+B\pw C=\ones$ together with
commutativity of the diagonal blocks. Substituting \eqref{eq:perm-ginv} into
\eqref{eq:perm-idem} and using $\PX=\left(\begin{smallmatrix}\ones&0\\0&0\end{smallmatrix}\right)$,
\begin{equation}
\begin{aligned}
  e&=\begin{pmatrix} D&0\\ C&0\end{pmatrix}
      \begin{pmatrix} A&B\\ C&D\end{pmatrix}\\[2pt]
   &=\begin{pmatrix} A\pw D & B\pw D\\ A\pw C & B\pw C\end{pmatrix}
    =\begin{pmatrix} \ones+a & q\\ p& a\end{pmatrix}.
\end{aligned}
  \label{eq:perm-e}
\end{equation}
Now read off \eqref{eq:perm-idem-pres} on the two halves of $\lab_{0}$. For $u\in\CX$
one has $(u|0)e=(u\pw(\ones+a)\,|\,u\pw q)\in\lab_{0}$, whence $u\pw q\in\CZ$; for
$w\in\CZ$ one has $(0|w)e=(w\pw p\,|\,w\pw a)\in\lab_{0}$, whence $w\pw p\in\CX$. This
is \eqref{eq:perm-shearvalid}.

Next we prove the purely matrix-theoretic identity \eqref{eq:perm-master-fact}.
Every matrix appearing in it is diagonal, so the assertion is coordinatewise. At
a single coordinate $g$ is one of the six elements of $\Sp{2}$. Tabulating them
settles both claims at once:
\begin{center}
\begin{tabular}{ccccl}
\hline\hline
$g$ & $a$ & $q$ & $p$ & $\Hd{a}\Ush{q}\Lsh{p}$\\
\hline
$\left(\begin{smallmatrix}1&0\\0&1\end{smallmatrix}\right)$ & $0$ & $0$ & $0$ & $\Id$\\
$\left(\begin{smallmatrix}1&1\\0&1\end{smallmatrix}\right)$ & $0$ & $1$ & $0$ & $\Ush{1}$\\
$\left(\begin{smallmatrix}1&0\\1&1\end{smallmatrix}\right)$ & $0$ & $0$ & $1$ & $\Lsh{1}$\\
$\left(\begin{smallmatrix}0&1\\1&0\end{smallmatrix}\right)$ & $1$ & $0$ & $0$ & $\Hd{1}$\\
$\left(\begin{smallmatrix}0&1\\1&1\end{smallmatrix}\right)$ & $1$ & $1$ & $0$ & $\Hd{1}\Ush{1}$\\
$\left(\begin{smallmatrix}1&1\\1&0\end{smallmatrix}\right)$ & $1$ & $0$ & $1$ & $\Hd{1}\Lsh{1}$\\
\hline\hline
\end{tabular}
\end{center}
In each row $a=B\pw C$, $q=B\pw D$ and $p=A\pw C$ are read off the blocks of
$g$, and the last column reproduces $g$. Moreover, in each row $q$ and $p$ are
never both $1$, so $q\pw p=0$. The six rows exhaust $\Sp{2}$, so both statements
hold at every coordinate and therefore globally.
Disjointness of the supports of $q$ and $p$ makes the two diagonal circuits commute,
$\Ush{q}\Lsh{p}=\Lsh{p}\Ush{q}$, which is the second equality in
\eqref{eq:perm-master-fact}.

Finally, a $Z$-diagonal circuit whose parameter is the diagonal matrix $\diago q$ sends $(u|w)$ to
$(u|w+u\pw q)$, and an $X$-diagonal circuit with parameter $p$ sends $(u|w)$ to
$(u+w\pw p|w)$. So \eqref{eq:perm-shearvalid} says exactly that $\Ush{q}$ and
$\Lsh{p}$ preserve $\lab_{0}$. Hence $\Ush{q}\Lsh{p}\in\Stb(\lab_{0})$, and
applying \eqref{eq:perm-master-fact} to
$\lab_{1}$,
\begin{equation}
  \lab_{0}=\lab_{1}g=\lab_{1}\Hd{a}\,\Ush{q}\Lsh{p},
  \label{eq:perm-lastline}
\end{equation}
so that $\lab_{1}\Hd{a}=\lab_{0}\bigl(\Ush{q}\Lsh{p}\bigr)^{-1}=\lab_{0}$.
\end{proof}

It is worth separating the two halves of this proof, because they have different scopes.
The factorization \eqref{eq:perm-master-fact} is an identity between matrices: it holds
for every local tuple whatsoever, with no hypothesis on any subspace. The CSS forms of
$\lab_{1}$ and $\lab_{0}$ give only the validity statement \eqref{eq:perm-shearvalid}.
That statement promotes the two diagonal circuits from arbitrary elements of
$\Sp{2}^{n}$ to gates of the code.

The validity half of Lemma~\ref{lem:permmaster} is the width-one specialization
of the automatic-validity lemma, Lemma~\ref{lem:auto} of
Appendix~\ref{app:depthone}. That specialization is the same projector
computation, performed with $\PX$ conjugated by the gate. The diagonal circuits that
computation manufactures are automatically valid for the same reason. The
factorization \eqref{eq:perm-master-fact} is not a specialization of that lemma
but an additional statement, proved coordinatewise below.
At general width the lemma produces the two parameter spaces $\shp{G}$ and $\shm{G}$ of
\eqref{eq:d1-spans}, namely $\shp{G}=\spano\{AB^{\tp},D^{\tp}B,B+B^{\tp}\}$ and
$\shm{G}=\spano\{CD^{\tp},C^{\tp}A,C+C^{\tp}\}$, six matrices in all. At width one the
blocks are diagonal, hence symmetric, so $B+B^{\tp}$ and $C+C^{\tp}$ vanish identically
and the six collapse to the four pointwise products
\begin{equation}
\begin{aligned}
  A\pw B,\ \ D\pw B &\qquad\text{(upper)},\\
  C\pw D,\ \ C\pw A &\qquad\text{(lower)}.
\end{aligned}
  \label{eq:perm-fourparams}
\end{equation}
The parameters $q=B\pw D$ and $p=A\pw C$ used above are two of these four. The
remaining two, $A\pw B$ and $C\pw D$, are what the same computation manufactures
when the roles of the two spaces are interchanged. The interchange puts
$g\PX g^{-1}$ in place of $g^{-1}\PX g$. The resulting diagonal circuits are valid for the
source space $\lab_{1}$ rather than for the target $\lab_{0}$. They are not
needed here.

Two features of the lemma deserve emphasis, because they are what makes the resulting
normal form canonical. First, the Hadamard support produced by the lemma is
$\suppo(B\pw C)$ and \emph{not} the complement of $\suppo(A)$. In the table above,
$a=B\pw C$ is nonzero on exactly the last three rows. $A$ vanishes on only the first
two of those three rows. The canonical Weyl support therefore contains the naive one,
and is strictly larger as soon as some coordinate carries the last row $\Hd{1}\Lsh{1}$.
Second, the last two rows exhibit the two three-cycles of $\Sp{2}$ as one Hadamard and
one diagonal circuit apiece. This is why no three-cycle ever survives into the normal form.
This is also why the local dressing of a permutation can always be taken to
consist of Hadamards and diagonal circuits alone.

\subsection{The normal form and its uniqueness}
\label{app:perm-nf}

We first take the two spaces of Lemma~\ref{lem:permmaster} to be equal, which resolves
the permutation-free part of $\Gaut$.

\begin{proposition}[Exact product and counting for the transversal group]
\label{prop:permkernel}
Let $\lab$ be in CSS form. Every $g\in\Ntr$ factors as
\begin{equation}
\begin{gathered}
  g=\Hd{a}\,\Ush{q}\,\Lsh{p},\\
  a=B\pw C,\quad q=B\pw D,\quad p=A\pw C,
\end{gathered}
  \label{eq:perm-kernelnf}
\end{equation}
with $\Hd{a}\in\Ntr$, $\Ush{q}\in\Eupo$, $\Lsh{p}\in\Edno$ and $q\pw p=0$. Hence
\begin{equation}
  \Ntr=\{\Hd{a}: a\in\Pup\cap\Pdn\}\cdot\Eupo\cdot\Edno
  \label{eq:perm-kernelproduct}
\end{equation}
as a product of sets. The factorization \eqref{eq:perm-kernelnf} is the unique one with
disjoint diagonal-circuit supports, and consequently
\begin{multline}
  |\Ntr|=\#\bigl\{(a,q,p)\in(\Pup\cap\Pdn)\times\Pup\times\Pdn\\
  \text{such that } q\pw p=0\bigr\}.
  \label{eq:perm-count}
\end{multline}
\end{proposition}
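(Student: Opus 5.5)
The plan is to specialize Lemma~\ref{lem:permmaster} to the case $\lab_{1}=\lab_{0}=\lab$ and then settle uniqueness coordinatewise. For $g\in\Ntr$ we have $\lab g=\lab$, so the lemma applies with both spaces equal to $\lab$. It gives the factorization $g=\Hd{a}\Ush{q}\Lsh{p}$ with $a=B\pw C$, $q=B\pw D$, $p=A\pw C$ and $q\pw p=0$. It also gives $q\in\Pup$ and $p\in\Pdn$, that is $\Ush{q}\in\Eupo$ and $\Lsh{p}\in\Edno$, and it gives $\lab\Hd{a}=\lab$, so $\Hd{a}\in\Ntr$. By the Hadamard criterion \eqref{eq:trans-hadcrit}, this last membership is equivalent to $a\in\Pup\cap\Pdn$.

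These facts yield the inclusion $\subseteq$ in \eqref{eq:perm-kernelproduct}. The reverse inclusion is immediate: each of the three factors lies in $\Ntr$, since $\Hd{a}\in\Ntr$ for $a\in\Pup\cap\Pdn$ by \eqref{eq:trans-hadcrit}, and $\Ntr$ is a group.

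For uniqueness, suppose $g=\Hd{a}\Ush{q}\Lsh{p}$ for arbitrary vectors $a,q,p\in\Ftwo^{n}$ with $q\pw p=0$. All factors are diagonal (transversal), so the equation can be read one coordinate at a time in $\Sp{2}$. At a single coordinate, the admissible triples $(a,q,p)$ with $qp=0$ are exactly $2\cdot 3=6$ in number. The table in the proof of Lemma~\ref{lem:permmaster} shows that these six triples produce the six distinct elements of $\Sp{2}$. Hence the map $(a,q,p)\mapsto\Hd{a}\Ush{q}\Lsh{p}$ is a bijection at each coordinate, and therefore injective globally. It follows that any factorization with disjoint supports must be the one read off the blocks. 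The uniqueness holds among all vectors, so it holds in particular within $(\Pup\cap\Pdn)\times\Pup\times\Pdn$.

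The count \eqref{eq:perm-count} follows directly. The map from triples $(a,q,p)\in(\Pup\cap\Pdn)\times\Pup\times\Pdn$ with $q\pw p=0$ to $\Ntr$ is well defined by the preceding paragraphs, surjective by existence, and injective by uniqueness.

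The only step needing care is the claim $\Hd{a}\in\Ntr$. The lemma already supplies $\lab\Hd{a}=\lab$, so I expect no real obstacle here beyond invoking \eqref{eq:trans-hadcrit} to translate this into membership of $a$ in $\Pup\cap\Pdn$.
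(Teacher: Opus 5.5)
Your proposal is correct and follows the paper's proof. You specialize Lemma~\ref{lem:permmaster} to $\lab_{1}=\lab_{0}=\lab$, convert $\Hd{a}\in\Ntr$ into $a\in\Pup\cap\Pdn$ via the subset-Hadamard criterion, and settle uniqueness coordinatewise. The one cosmetic difference is in the uniqueness step: the paper multiplies out $\Hd{a}\Ush{s}\Lsh{v}$ for arbitrary, possibly overlapping, $s,v$ and inverts the resulting block formulas under $s\pw v=0$, whereas you note that the six disjoint-support triples at a coordinate map bijectively onto the six elements of $\Sp{2}$ via the lemma's table; both yield the same injectivity.
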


\begin{proof}
Apply Lemma~\ref{lem:permmaster} with $\lab_{1}=\lab_{0}=\lab$, which is legitimate
because $g\in\Ntr$ means precisely $\lab g=\lab$. It gives
\eqref{eq:perm-kernelnf} together with $q\in\Pup$, $p\in\Pdn$ and $q\pw p=0$; and since
$\Ush{q}\Lsh{p}\in\Ntr$, also
$\Hd{a}=g\bigl(\Ush{q}\Lsh{p}\bigr)^{-1}\in\Ntr$. A valid subset Hadamard has blocks
$A=D=\ones+a$ and $B=C=a$, so the second and third conditions of Eq.~\eqref{eq:perm-valid}
read $a\pw\CX\subseteq\CZ$ and $a\pw\CZ\subseteq\CX$, that is $a\in\Pup\cap\Pdn$.
Conversely every $a\in\Pup\cap\Pdn$ gives a valid $\Hd{a}$; this is the identification of
the valid subset-Hadamard family established in Appendix~\ref{app:transversal}, where it
is obtained from the pointwise product lemma. The two inclusions together
identify the first factor in Eq.~\eqref{eq:perm-kernelproduct} with
$\{\Hd{a}:a\in\Pup\cap\Pdn\}$. Elementwise validity of the three factors
upgrades \eqref{eq:perm-kernelnf} from a statement about individual elements to
the set equality \eqref{eq:perm-kernelproduct}.

For uniqueness we multiply a general group word out. Let $a,s,v\in\Ftwo^{n}$ be arbitrary,
with no disjointness assumed, and let $g=\Hd{a}\Ush{s}\Lsh{v}$. Working coordinatewise
one finds for the blocks of $g$
\begin{equation}
\begin{gathered}
  B\pw C=a+s\pw v,\qquad B\pw D=s,\\
  A\pw C=v\pw(\ones+s).
\end{gathered}
  \label{eq:perm-multout}
\end{equation}
Thus two distinct group words can represent the same element only where the two diagonal-circuit supports
overlap; the smallest instance is the coordinatewise coincidence
$\Ush{1}\Lsh{1}=\Hd{1}\Ush{1}$, both sides being the three-cycle $(0,1,1,1)$, which is
\eqref{eq:perm-multout} with $(a,s,v)=(0,1,1)$ and $(a,s,v)=(1,1,0)$ giving the same
triple of blocks. Imposing $s\pw v=0$ removes the slack: \eqref{eq:perm-multout} then
returns $(a,s,v)=(B\pw C,\,B\pw D,\,A\pw C)$, so the group word is recovered from the element
it represents. Hence the assignment $g\mapsto(a,q,p)$ of
\eqref{eq:perm-kernelnf} is a bijection from $\Ntr$ onto the set of triples
counted in Eq.~\eqref{eq:perm-count}. This bijection proves both the uniqueness
claim and the counting formula.
\end{proof}

It is instructive to evaluate \eqref{eq:perm-count} on the Steane code, for which
$\CX=\CZ$ is the $[7,3,4]$ simplex code all of whose nonzero words have weight four. For
$a\in\Pup=\Pdn$ and a nonzero $u\in\CX$ the vector $a\pw u$ lies in $\CX$ and is
supported inside $\suppo(u)$, hence is either $0$ or $u$. So for such $a$ and
$u$ the support $\suppo(a)$ either contains $\suppo(u)$ or is disjoint from it.
Any two distinct weight-four words meet in exactly two coordinates, so the two
alternatives cannot both occur for different $u$. Since the weight-four words
cover all seven coordinates we are left with $a\in\{0,\ones\}$ --- as
Corollary~\ref{cor:trans-connected} gives at once, the simplex code being
connected.
All three spaces $\Pup$, $\Pdn$ and $\Pup\cap\Pdn$ therefore equal $\{0,\ones\}$, and
\eqref{eq:perm-count} counts $2\cdot3=6$ triples, the factor three excluding only
$q=p=\ones$. This is consistent with $\Ntr=\langle\Eupo,\Edno\rangle$ being the diagonal
copy of $\Sp{2}$, of order six.

The permutation part now follows from the same lemma, applied with the two spaces
distinct.

\begin{lemma}[Pure-Hadamard lift]
\label{lem:pureH}
Let $(g;\pi)\in\Gaut$ and let $a=B\pw C$ be read off the blocks of $g$. Then
$(\Hd{a};\pi)\in\PDgp$. In particular every $\pi\in\rhom(\Gaut)$ admits a lift whose
local part consists of Hadamards only, and the system \eqref{eq:perm-affine} is
consistent for every such $\pi$.
\end{lemma}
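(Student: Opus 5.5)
The plan is to apply the master lemma, Lemma~\ref{lem:permmaster}, to the local part of $(g;\pi)$ with the two CSS-form spaces taken to be $\lab_{1}=\lab\pperm{\pi}$ and $\lab_{0}=\lab$.

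First I check the setup. By the convention of Eq.~\eqref{eq:perm-gaut}, $(g;\pi)=\pperm{\pi}g$, so membership in $\Gaut$ reads $\lab\pperm{\pi}g=\lab$. The subspace $\lab\pperm{\pi}$ is again in CSS form. The reason is that $\pperm{\pi}$ acts by the same coordinate permutation on both halves of the polarization, so it sends $(\CX|0)\oplus(0|\CZ)$ to $(\CX\pperm{\pi}|0)\oplus(0|\CZ\pperm{\pi})$. Equivalently, $\pperm{\pi}$ commutes with the projector $\PX$. Hence the hypotheses of Lemma~\ref{lem:permmaster} hold with $\lab_{1}=\lab\pperm{\pi}$, $\lab_{0}=\lab$, and the local tuple $g$.

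Next I invoke the lemma's final conclusion, $\lab_{1}\Hd{a}=\lab_{1}g=\lab_{0}$, with $a=B\pw C$. This gives $\lab\pperm{\pi}\Hd{a}=\lab$, which says exactly that $\pperm{\pi}\Hd{a}=(\Hd{a};\pi)$ lies in $\Stb(\lab)$. Since $\pperm{\pi}\Hd{a}$ also lies in $\Sp{2}\wr\Perm{n}$, it belongs to $\Gaut$. Its local part consists of Hadamards only, so it lies in $\PDgp$ by Eq.~\eqref{eq:perm-pd}.

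For the remaining claims, take any $\pi\in\rhom(\Gaut)$. By definition some $(g;\pi)\in\Gaut$ exists, and the argument above produces the pure-Hadamard lift $(\Hd{B\pw C};\pi)$. Consistency of the affine system \eqref{eq:perm-affine} then follows, because that system was derived as the exact membership criterion for $(\Hd{a};\pi)\in\Gaut$, and $a=B\pw C$ solves it.

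The only step needing care is the ordering convention: the permutation must act before $g$, so that it is $\lab\pperm{\pi}$ rather than $\lab\pperm{\pi}^{-1}$ that plays the role of $\lab_{1}$. I expect no real obstacle, since all the substantive content is in Lemma~\ref{lem:permmaster}.
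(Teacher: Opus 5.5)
Your proof is correct and is essentially the paper's own argument: you set $\lab_{1}=\lab\pperm{\pi}$ and $\lab_{0}=\lab$, observe that $\lab_{1}$ is again in CSS form because $\pperm{\pi}$ acts identically on both halves of the polarization, and read off $\lab\pperm{\pi}\Hd{a}=\lab$ from Lemma~\ref{lem:permmaster}. Your explicit handling of the ``in particular'' clauses fills in what the paper leaves implicit: you pick a witness $(g;\pi)$ for each $\pi\in\rhom(\Gaut)$, and you get consistency of \eqref{eq:perm-affine} from its role as the exact membership criterion.
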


\begin{proof}
By definition $(g;\pi)=\pperm{\pi}g$ stabilizes $\lab$, that is
$\lab\pperm{\pi}g=\lab$. Put $\lab_{1}:=\lab\pperm{\pi}$ and $\lab_{0}:=\lab$. A qubit
permutation acts on the two halves of the polarization separately, so
$\lab_{1}=(\CX\pperm{\pi}|0)\oplus(0|\CZ\pperm{\pi})$ is again in CSS form, and by
construction $\lab_{1}g=\lab_{0}$. Lemma~\ref{lem:permmaster} therefore applies and
yields $\lab\pperm{\pi}\Hd{a}=\lab_{1}\Hd{a}=\lab_{0}=\lab$, which says exactly that
$\pperm{\pi}\Hd{a}=(\Hd{a};\pi)$ stabilizes $\lab$.
\end{proof}

Note that the diagonal-circuit parameters produced alongside the lift are validated against
$\lab_{0}=\lab$ and not against the permuted space. Indeed, the dressing
$\Ush{q}\Lsh{p}$ is a valid gate of the code itself with trivial permutation
part, so it lies in the kernel $\Ntr$ of $\rhom$. Its parameters $q$ and $p$ are
read off the local witness $g$ and so do depend on $\pi$ through that choice.
What does not depend on $\pi$ is the space against which those parameters are
validated. This is what allows the permutation to be compensated by the Hadamard
factor alone.

We now turn to proving the main theorem of this section.

\begin{proof}[Proof of Theorem~\ref{thm:autnf}]
Let $(g;\pi)\in\Gaut$ and let $a,q,p$ be as in Eq.~\eqref{eq:perm-abc}. Lemma~\ref{lem:pureH}
gives $(\Hd{a};\pi)\in\PDgp$, and Lemma~\ref{lem:permmaster}, applied as in the proof of
that lemma with $\lab_{1}=\lab\pperm{\pi}$ and $\lab_{0}=\lab$, gives $q\in\Pup$,
$p\in\Pdn$, $q\pw p=0$ and
$g=\Hd{a}\Ush{q}\Lsh{p}$. Multiplying on the left by $\pperm{\pi}$,
\begin{equation}
\begin{aligned}
  (g;\pi)&=\pperm{\pi}g=\pperm{\pi}\Hd{a}\,\Ush{q}\Lsh{p}\\
         &=(\Hd{a};\pi)\,\Ush{q}\,\Lsh{p},
\end{aligned}
  \label{eq:perm-thmproof}
\end{equation}
which is \eqref{eq:perm-nf}. As every factor on the right lies in $\Gaut$, the reverse
inclusion in Eq.~\eqref{eq:perm-setproduct} is trivial, so \eqref{eq:perm-setproduct} holds.
For uniqueness, suppose $(g;\pi)=(\Hd{a'};\pi')\Ush{q'}\Lsh{p'}$ with
$(\Hd{a'};\pi')\in\PDgp$ and $q'\pw p'=0$. The two diagonal circuits have trivial
permutation part, so comparing permutation parts in the wreath product forces
$\pi'=\pi$, hence $g=\Hd{a'}\Ush{q'}\Lsh{p'}$. Then \eqref{eq:perm-multout} with
$q'\pw p'=0$ gives $a'=B\pw C=a$, $q'=B\pw D=q$ and $p'=A\pw C=p$.

It remains to prove the generation statement. Let $\pi_{1},\dots,\pi_{m}$ generate
$\rhom(\Gaut)$ and choose, by Lemma~\ref{lem:pureH}, one partial duality
$w_{i}=(\Hd{a_{i}};\pi_{i})$ lifting each $\pi_{i}$. Set
$\Gaux=\langle\Eupo,\Edno,w_{1},\dots,w_{m}\rangle\subseteq\Gaut$. Its image under
$\rhom$ contains every $\pi_{i}$ and therefore equals $\rhom(\Gaut)$. Its kernel
contains $\langle\Eupo,\Edno\rangle$, which is all of $\Ntr$ by
Theorem~\ref{thm:transnf}. Hence
\begin{equation}
\begin{aligned}
  |\Gaux|&=|\Gaux\cap\Ntr|\cdot|\rhom(\Gaux)|\\
          &=|\Ntr|\cdot|\rhom(\Gaut)|=|\Gaut|,
\end{aligned}
  \label{eq:perm-gencount}
\end{equation}
using Eq.~\eqref{eq:perm-exact} twice, so $\Gaux=\Gaut$. This proves the last assertion of the theorem.
\end{proof}

The theorem also makes visible which ingredient of the automorphism group is expensive
to compute. The kernel $\Ntr$ is cheap: the two parameter codes $\Pup$ and
$\Pdn$ are nullspaces of the explicit linear systems \eqref{eq:perm-params}. The
lift datum $a(\pi)=B\pw C$ of any witness is a closed formula requiring no
search over local dressings. Everything expensive sits in $\rhom(\Gaut)$, whose
determination by Eq.~\eqref{eq:perm-rhoimage} is the computation of the code
automorphisms modulo local
Cliffords, a graph canonical-form problem. No factorization of $\Gaut$ can remove this,
since $\rhom(\Gaut)$ is a quotient of $\Gaut$ and must be produced by any description of
the group. What the theorem achieves is an almost complete separation. The one
exception is that the kernel $\Ntr$ may itself carry a subset Hadamard $\Hd{a}$
with $a\in\Pup\cap\Pdn$, a duality that no permutation accompanies. On a
connected code Corollary~\ref{cor:trans-connected} confines that duality to the
global $H^{\otimes n}$, and only when $\CX=\CZ$. Apart from that one self-dual
exception the entire duality content is one pure-Hadamard lift per generator of
$\rhom(\Gaut)$, and everything else is linear algebra. A pure-permutation lift,
the case $a(\pi)=0$, exists only for permutations
preserving $\CX$ and $\CZ$ separately. A genuinely dual permutation, for
example a fold along a $\Zh\leftrightarrow\Xh$ duality, requires the Hadamard.
The Hadamard gives exactly the content that the diagonal families cannot give.

%% file: app_twolocalaut.tex
\section{A two-fold automorphism witness}
\label{app:twolocalaut}

This appendix exhibits an explicit depth-one two-local layer that preserves the $\qcode{9,3,2:186339}$~\cite{cross2025small,crossqiskitqec2025} code only when compensated by a permutation, and whose logical action lies outside the group generated by $\Ndep$ and $\Gaut$.

The code's stabilizer is generated by
$X_{1}X_{6}X_{7}$, $X_{0}X_{1}X_{2}X_{3}$, $X_{0}X_{4}X_{5}X_{6}$,
$X_{0}X_{2}X_{5}X_{8}$, $Z_{1}Z_{2}Z_{4}Z_{5}Z_{7}$, and
$Z_{0}Z_{3}Z_{6}Z_{7}Z_{8}$.  Take
$\Mt=\{\{0,3\},\{2,7\},\{4,8\},\{5,6\}\}$, with singleton $1$, and
$\pi=(0\,1)(2\,3)(4\,7\,5)(6\,8)$.  In the coordinate order
$(X_i,X_j\mid Z_i,Z_j)$ for each listed pair, let $g\in\DM$ have the trivial
blocks $g_{\{0,3\}}=I_{4}$ and $g_{\{1\}}=I_{2}$ together with
\begin{equation}
g_{\{2,7\}}=
\begin{pmatrix}1&0&0&0\\1&1&0&0\\0&0&1&1\\0&0&0&1\end{pmatrix},\qquad
g_{\{4,8\}}=
\begin{pmatrix}0&1&0&0\\1&0&0&0\\0&0&0&1\\0&0&1&0\end{pmatrix},
\label{eq:ext-perm-witness}
\end{equation}
and
\begin{equation}
g_{\{5,6\}}=
\begin{pmatrix}1&1&0&0\\1&0&0&0\\0&0&0&1\\0&0&1&1\end{pmatrix}.
\label{eq:ext-perm-witness-b}
\end{equation}
Neither $\pperm{\pi}$ nor $g$ preserves the code, whereas their product
$h=(g;\pi)=\pperm{\pi}g$ does.  This can be certified without enumerating the
physical group~(a machine-checkable certificate is provided in the
repository of Ref.~\cite{albert2026certificate}): generate $\Ndep$ from all $9\cdot7!!=945$ maximum matchings and
use the logical basis
$(\bar X_1,\bar X_2,\bar X_3)=(X_4X_5,X_4X_6X_7,X_6X_8)$ and
$(\bar Z_1,\bar Z_2,\bar Z_3)=(Z_0Z_3Z_5,Z_1Z_3Z_7,Z_2Z_3Z_8)$.
An exact Schreier--Sims~\cite{sims1970computational} calculation gives
$|\lact(\langle\Ndep,\Gaut\rangle)|=23\,040$, while
\begin{equation}
\lact(h)=
\begin{pmatrix}
0&1&0&0&0&0\\
1&1&1&0&0&0\\
1&0&0&0&0&0\\
0&0&0&0&1&1\\
0&0&0&0&0&1\\
0&0&0&1&0&1
\end{pmatrix}.
\label{eq:ext-perm-witness-logical}
\end{equation}
This matrix satisfies $\lact(h)\notin\lact(\langle\Ndep,\Gaut\rangle)$, and
adjoining it fills the logical group,
$|\langle\lact(\Ndep),\lact(\Gaut),\lact(h)\rangle|=1\,451\,520=|\Sp{6}|$.
Thus the group in Eq.~\eqref{eq:ext-perm-group} is strictly larger than
$\langle\Ndep,\Gaut\rangle$ for this code.